\newenvironment{proofof}[1]{\par
  \pushQED{\qed}%
  \normalfont \topsep6\p@\@plus6\p@\relax
  \trivlist
  \item[\hskip\labelsep
\emph{Proof of #1\@addpunct{.}}]\ignorespaces
}{%
  \popQED\endtrivlist\@endpefalse
}
\newcommand{\rodnote}[1]{\footnote{\color{blue}Ryan: {#1}}}
\newcommand{\rasnote}[1]{\footnote{\color{red}Rocco: {#1}}}
\newcommand{\kanenote}[1]{{{#1}}}
\newcommand{\MollifierChecks}{\textsc{MollifierChecks}}
\newcommand{\AnalysisChecks}{\textsc{AnalysisChecks}}
\newcommand{\epsprg}{\eps_{\textnormal{PRG}}}
\newcommand{\len}{L}
\newcommand{\degree}{d}
\newcommand{\noisegeneric}{\lambda}
\newcommand{\noisefixed}{\ul{\lambda}}
\newcommand{\hypergeneric}{R}
\newcommand{\hypergenerictwo}{{S}}
\newcommand{\hyperfixed}{{\ul{R}}}
\newcommand{\wise}{k_{\mathrm{indep}}}
\newcommand{\allsoftchecks}{\mathrm{Mollifier}_{\p}}
\newcommand{\Indplus}{I_+} \newcommand{\Indminus}{I_-} \newcommand{\Indplusorminus}{I_\pm}
\newcommand{\stats}{\mathfrak{S}}
\newcommand{\ineq}{\textsc{Ineq}}
\newcommand{\softness}{\delta}
\newcommand{\restr}[3]{{#1}_{#2 \mid #3}}
\newcommand{\sqrtnoise}{\sqrt{\noisegeneric}}
\newcommand{\bq}{\boldsymbol{q}}
\newcommand{\anticoncgap}{\hat{\lambda}}
\newcommand{\horizclose}{\delta_{\mathrm{horz}}}
\newcommand{\horizanal}{\delta_{a}} \newcommand{\horzanal}{\horizanal}
\newcommand{\taylor}{T}
\newcommand{\softcheck}{\mathrm{SoftCheck}}
\newcommand{\chk}{\textsc{Check}}
\newcommand{\st}{s}
\newcommand{\bst}{\bs}
\newcommand{\nderiv}[3]{\mathop{\triangle_{#1,#2}^{(#3)}}}
\newcommand{\zoom}[3]{#1_{#2 \mid #3}}
\newcommand{\HV}{\mathbf{HyperVar}} \newcommand{\HyperVar}{\HV} \newcommand{\HVar}{\HV}
\newcommand{\sign}{\mathrm{sign}}
\newcommand{\W}{\mathrm{W}}
\newcommand{\U}{\mathrm{U}}
\newcommand{\bup}{{\boldsymbol{\upsilon}}}
\newcommand{\Stab}{\mathrm{Stab}}
\newcommand{\maxcol}{D}
\newcommand{\Prx}{\mathop{{\bf Pr}\/}}
\newcommand{\p}{p}
\renewcommand{\eps}{\varepsilon}
\renewcommand{\epsilon}{\varepsilon}
\def\colorful{0}
\newcommand{\violet}[1]{{\color{violet}{#1}}}
\newcommand{\red}[1]{{\color{red} {#1}}}
\newcommand{\gray}[1]{{\color{gray}{#1}}}
\newcommand{\violet}[1]{{{#1}}}
\newcommand{\red}[1]{{{#1}}}
\newcommand{\gray}[1]{{{#1}}}
\begin{document}

\title{Fooling Gaussian PTFs via Local Hyperconcentration\footnote{A preliminary version of this paper~\cite{OST20} appeared in the proceedings of the 52nd Annual ACM Symposium on Theory of Computing (STOC 2020).}\vspace*{15pt}}


\author{\hspace{-30pt}Ryan O'Donnell \vspace{3pt}\\
\hspace{-38pt} \small{\sl Carnegie Mellon University} \and \hspace{5pt}Rocco A. Servedio \vspace{3pt} \\ \small{\sl Columbia University} \vspace*{25pt} \\ with an appendix by \vspace{15pt} \\
Daniel Kane \vspace{3pt} \\
\small{\sl University of California, San Diego} \vspace*{15pt}
\and Li-Yang Tan \vspace{3pt}  \\ \small{\sl Stanford University}\vspace*{15pt}
}

\date{\small{\today}}
\maketitle



\begin{abstract}
We give a pseudorandom generator that fools degree-$d$ polynomial threshold functions over $n$-dimensional Gaussian space with seed length
$\violet{\poly(d)}\cdot \log n$.   All previous generators had a seed length with at least a $2^d$ dependence on $d$.

The key new ingredient  is \violet{a} {\sl Local Hyperconcentration Theorem}, which shows that every degree-$d$ Gaussian polynomial is hyperconcentrated almost everywhere at scale \violet{$d^{-O(1)}$}.
\end{abstract}

\thispagestyle{empty}
\newpage 
\thispagestyle{empty}

\tableofcontents










\thispagestyle{empty}
\newpage 

\setcounter{page}{1}


\section{Introduction}

This paper is about pseudorandom generators (PRGs) for polynomial threshold functions (PTFs) over Gaussian space.  Let us explain what this means.  Let $\mathscr{C}$ be a class of functions from $\R^n$ to $\R.$  A distribution $\mathcal{G}$ over $\R^n$ is an \emph{$\epsilon$-PRG for $\mathscr{C}$ over Gaussian space} if for every function $f \in \mathscr{C}$,
\[
    \abs*{\Ex_{\bz\sim\calG}[f(\bz)] - \Ex_{\bx\sim\normal(0,1)^n}[f(\bx)] } \le \eps,
\]
where $\normal(0,1)^n$ is the standard $n$-dimensional Gaussian distribution.   We equivalently say that $\calG$ \emph{$\eps$-fools $\mathscr{C}$ over Gaussian space}. If a draw $\bz \sim \calG$ can be deterministically generated from a source of $s$~independent uniformly random bits, we say that the \emph{seed length} of $\bz$ is~$s$. If furthermore the generation can be performed by a computationally efficient  algorithm, we say the PRG is \emph{explicit}.


 A \emph{degree-$d$ polynomial threshold function} (PTF) is a function
$f(x) = \sign(p(x))$ where $p : \R^n \to \R$ is a real polynomial of total degree at most $d$.  Now we can state the main theorem of this paper:

\begin{theorem}
\label{thm:main}
For all $n, d \in \N$ and $\eps \in (0,1)$, there is an explicit PRG with seed length $\violet{\poly(d/\eps)}\cdot \log n$
that $\eps$-fools the class of all degree-$d$ PTFs over $n$-dimensional Gaussian space.
\end{theorem}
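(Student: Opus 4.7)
The plan is to combine the Local Hyperconcentration Theorem with a mollifier-based noise-decomposition argument. Given $f=\sign(p)$ for a degree-$d$ Gaussian polynomial $p$, I would first replace $\sign$ by a smooth mollifier $\psi_\delta:\R\to[-1,1]$ whose $k$th derivative has $L^\infty$-norm $O_k(\delta^{-k})$. By Carbery--Wright anti-concentration for Gaussian polynomials, choosing $\delta = \poly(\eps/d)$ introduces only $\eps/3$ error in $\Ex[\sign(p(\bx))]\approx\Ex[\psi_\delta(p(\bx))]$. Thus the task reduces to fooling $\Ex[\psi_\delta(p(\bx))]$, which is a smooth functional.

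To do this, I would construct the PRG as a sum of pseudorandom Gaussian chunks, $\bx = \sqrt{\lambda_1}\bz_1+\cdots+\sqrt{\lambda_L}\bz_L$, with $\sum_i\lambda_i=1$, $L=O(\log(d/\eps))$, the $\lambda_i$'s decreasing geometrically to a tiny floor, and each $\bz_i$ drawn independently from a base PRG that fools \emph{low-degree} (say constant-degree) Gaussian polynomials on $\R^n$ to error $\eps/L$ with $\poly(d/\eps)\cdot\log n$ seed length (e.g., bounded-independence Gaussians, which are classical). The overall seed length is $L$ times this, which still fits within $\poly(d/\eps)\cdot \log n$.

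The analysis would then proceed by a hybrid argument, replacing $\bz_1,\dots,\bz_L$ one at a time by true Gaussians. After the first $i{-}1$ chunks have been replaced, consider the partial function $g(z) := \psi_\delta(p(u+\sqrt{\lambda_i}\,z))$, where $u$ denotes the (random) sum of the other chunks. Applying Local Hyperconcentration to $p$ at $u$ at scale $\lambda_i$, for all but an $\eps/L$-fraction of outer fixings the polynomial $z\mapsto p(u+\sqrt{\lambda_i}\,z)$ is concentrated in a range of size $d^{-O(1)}\sqrt{\lambda_i}$ around its mean. Composing with the $O(1/\delta)$-Lipschitz $\psi_\delta$, this lets me approximate $g$ to error $\ll \eps/L$ by a \emph{constant-degree} polynomial $T_g$ --- a truncated Taylor expansion of $\psi_\delta\circ p$ in $z$ around the mean --- and the base PRG fools $T_g$ by design. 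Summing $L$ hybrid costs yields total error $\eps$.

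The main obstacle, I expect, is making this ``polynomialization'' step rigorous: I must choose $\delta$, the noise scales $\lambda_i$, and the hyperconcentration scale $d^{-O(1)}$ consistently so that the Taylor truncation has constantly (or logarithmically) many terms. If the required truncation degree grew polynomially in $d$, the base PRG would have to fool polynomials of unbounded degree, defeating the whole point. This is precisely the regime where hyperconcentration's $d^{-O(1)}$ scale (versus the naive Lipschitz scale $d^{-1/2}$) is essential, and is what should enable the $\poly(d)$ rather than $2^d$ dependence in the final seed length; carefully balancing these scales, and handling the exceptional $\eps/L$-fraction of ``non-typical'' outer fixings via a separate concentration bound, is where I would expect the main technical work to reside.
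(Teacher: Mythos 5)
Your proposal diverges from the paper's actual proof at several fundamental points, and at least three of them look like genuine gaps rather than omitted details.

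\textbf{The mollifier width is wrong.} You propose replacing $\sign$ by $\psi_\delta$ with $\delta = \poly(\eps/d)$ and cite Carbery--Wright to bound the mollification error by $\eps/3$. But Carbery--Wright for degree-$d$ Gaussian polynomials (\Cref{thm:CW}) only gives $\Pr[|p(\bx)| < (\delta/Cd)^d \cdot \|p\|_2] \leq \delta$. To make the sign and the mollifier agree except with probability $\eps$, you would need the transition width of $\psi_\delta$ to be $\approx (\eps/d)^d \|p\|_2$, not $\poly(\eps/d)$; this is exactly the exponential-in-$d$ barrier that doomed the Meka--Zuckerman bound. The paper avoids this entirely by \emph{not} mollifying the sign function in the Meka--Zuckerman way: it instead multiplies $\sign(p)$ by a smoothed indicator $\allsoftchecks$ of local hyperconcentration events (\Cref{sec:defining-allsoftchecks}), and the mollification error (\Cref{thm:hard-to-soft}) is the probability that a check fails --- which is bounded via the Local Hyperconcentration Theorem, not via Carbery--Wright applied to $p$.

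\textbf{The PRG structure does not support the hybrid argument.} You take $L = O(\log(d/\eps))$ geometrically decreasing noise chunks. But hyperconcentration of $\zoom{p}{\lambda_i}{u}$ is only available when $\lambda_i \le d^{-O(1)}$. The coarse chunks near the top of a geometric ladder summing to $1$ have $\lambda_i = \Theta(1)$, and at that scale $z \mapsto p(u + \sqrt{\lambda_i}z)$ is nowhere near locally constant, so no constant-degree Taylor truncation can be accurate. The paper's PRG (\Cref{eqn:foolio}) uses $L = 1/\lambda = \poly(d/\eps)$ \emph{equal} chunks all at the same small scale $\lambda$. In the replacement step (\Cref{cor:softerror}), the untouched chunks are absorbed into a rescaled vector $\bv$ so that \emph{every} replacement step is a zoom at the single fixed small scale $\lambda$, which is where hyperconcentration actually holds.

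\textbf{The ``exceptional outer fixings'' cannot be handled as you suggest.} You plan to apply Local Hyperconcentration to the outer point $u$ and absorb the $\eps/L$ measure of bad $u$'s into the error. But in a hybrid argument the outer point is a mixture of true Gaussian chunks and \emph{pseudorandom} chunks, and the Local Hyperconcentration Theorem is only proved for a true Gaussian $\bx \sim \normal(0,1)^n$. You have no bound on the probability that the pseudorandom outer point lands in the bad set, nor any bound on the Taylor error when it does. This is precisely the problem that motivates Kane's (and this paper's) idea of mollifying the polynomial: by building the hyperconcentration checks \emph{into} the mollifier, the paper ensures that whenever any analysis check fails the whole mollified function is nearly $0$ (\Cref{lem:page13}), so the bad fixings contribute only negligible error even under arbitrary pseudorandom conditioning. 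The replacement step (\Cref{thm:hybridstep}) is then proved for \emph{every} $x \in \R^n$, not just ``good'' ones. Your proposal, which keeps $\sign$ mollification only, has no such safety valve; the whole grid of $\st_{i,j}$ statistics, the $\MollifierChecks$ and $\AnalysisChecks$, the noise insensitivity extension lemma, and the hypervariance bounds of \Cref{sec:hypervariance-bound,sec:magic-lemma} are all machinery built to make that safety valve work without incurring $2^{O(d)}$ losses. Your last paragraph correctly senses that ``this is where I would expect the main technical work to reside,'' but the resolution requires changing the structure of the argument, not just tuning scales.
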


The \violet{polynomial} dependence on $d$ here is a substantial improvement over previous PRGs, all of which had at least $2^{O(d)}$ dependence or worse.  We view this as notable, as there are few prior works concerning structural properties of $n$-dimensional Gaussian or Boolean PTFs that are nontrivial for $d \gg \log n$.


%


\subsection{Prior work}
\label{sec:prior}

There has been significant work on PRGs for PTFs.   Their study was initiated by Meka and Zuckerman~\cite{MZstoc10,MZ13}, who gave a PRG with seed length\footnote{They state $O(1/\eps)^{O(d)} \cdot \log n$ just after \cite[Thm.~5.18]{MZ13}, but they have appear to have dropped a factor of~$d$ when citing their Thm.~5.2 at the end of Lem.~5.20's proof. Correcting this leads to the seed length $(d/\eps)^{O(d)} \cdot \log n$.} $(d/\eps)^{O(d)} \cdot \log n$ that fools degree-$d$ PTFs over the more general setting of \emph{Boolean space}, $\{\pm 1\}^n$.  PRGs over Boolean space can be shown to also yield PRGs over Gaussian space, thanks to the fact that $\bx_1 + \cdots + \bx_m$ has a nearly Gaussian distribution when $\bx \sim \{\pm 1\}^m$ is uniformly random (see the discussion in \Cref{sec:bits-gaussians}), and the fact that degree-$d$ PTFs are closed under taking linear combinations of inputs.  Since the work of~\cite{MZstoc10,MZ13}, there have been several works that focus just on fooling PTFs over Gaussian space, which we now discuss.

First, Kane~\cite{Kane11ccc} showed that limited independence (see \Cref{def:wise-independence}) suffices to fool Gaussian PTFs.  The amount of independence required was $O_d(\eps^{-2^{O(d)}})$, which translates into $O_d(\eps^{-2^{O(d)}}) \cdot \log n$ in seed length.  Using a different generator (one that is not based only on limited independence), Kane~\cite{kane11focs} then gave a PRG for Gaussian PTFs with seed length $2^{O(d)} \cdot \poly(1/\eps)\cdot \log n$.  Note that this seed length strictly improves upon that in~\cite{MZ13}, albeit only in the Gaussian setting.

Towards further improving the seed length dependence on~$\eps$, Kane~\cite{Kane12} gave a PRG with seed length $A(d,1/c)\cdot (1/\eps)^{2+c} \cdot \log n$ for any $c > 0$, where $A(\cdot,\cdot)$ is a variant of the Ackermann function.\footnote{In fact, it seems that correcting a typo in \cite[Proof~of~Prop.~12]{Kane11ccc}, where a ``$\theta$'' factor should be ``$\theta^2$'', already leads to seed length $2^{O_c(d)} \cdot (1/\eps)^{2+c} \cdot \log n$.} This was  improved to $A(d,1/c)\cdot (1/\eps)^c \cdot \log n$ in~\cite{Kane14-subpoly}; while the seed length now has subpolynomial dependence on $1/\eps$, its dependence on $d$ limits its applicability to PTFs of constant (or very slightly superconstant) degree.

For degree-$2$ PTFs, Kane gives a PRG with seed length $O(\log^6(1/\eps) \log\log(n/\eps)\log n)$~\cite{Kane15}; Diakonikolas, Kane, and Nelson \cite{DKN10} showed that $\tilde{O}(1/\eps^9)$-wise independence suffices to fool degree-$2$ PTFs over both Boolean and Gaussian space.  For degree-$1$ PTFs (i.e.~halfspaces), the current best PRG is due to Kothari and Meka \cite{KothariMeka15}, who achieve a near-optimal seed length of $O(\log(1/\eps)\log\log(1/\eps) + \log n)$.

Summarizing the prior state of the art, previous PRGs were either specific to $d = 1,2$, or else had seed length with at least an exponential dependence on $d$.   Consequently, there were no PRGs that could fool PTFs of degree $d = \log n$, even just to constant accuracy~$\eps$.   \Cref{thm:main} therefore represents the first PRG that is able to fool PTFs of degree $d \ge \log n$; our seed length remains nontrivial for $d$ as large as \violet{$n^{\Omega(1)}$}. Please see~\Cref{prior}.

\vspace{5pt}
\begin{table}[h!]
  \captionsetup{width=.95\linewidth}
\renewcommand{\arraystretch}{1.7}
\centering
\begin{tabular}{|c|l|l|}
\hline
  Reference   & \multicolumn{1}{c|}{Seed length}  & \multicolumn{1}{c|}{Allowable / nontrivial  range of $d$'s}\\ \hline
\cite{DKN10} & $\tilde{O}(1/\eps^9) \cdot \log n$  & $d \le 2$ \\ [.2em] \hline
\cite{MZ13,MZstoc10} & $(d/\eps)^{O(d)} \cdot \log n$  & $d \le O(\log n / \log \log n)$ \\ [.2em] \hline
\cite{Kane11ccc} &
$O_d\big(\eps^{-2^{O(d)}}\big)\cdot \log n$\quad &
 $d \le \text{slightly superconstant}$\\ [.2em] \hline
\cite{kane11focs} &
$2^{O(d)}\cdot \poly(1/\eps)\cdot \log n$ &
 $d \le O(\log n)$ \\ [.2em] \hline
\cite{Kane12}  &
$A(d,\frac1{c})\cdot (1/\eps)^{2+c} \cdot \log n$ \text{ for any $c > 0$}  & $d \le \text{slightly superconstant}$~ \\ [.2em]  \hline
\cite{Kane14-subpoly}  &
$A(d,\frac1{c})\cdot (1/\eps)^c \cdot \log n$ \text{ for any $c > 0$}  & $d \le \text{slightly superconstant}$ \\ [.2em] \hline
\cite{Kane15} &
$O(\log^6(1/\eps) \log\log(n/\eps)\log n)$ & $d \le 2$ \\ [.2em] \hline
\cite{KothariMeka15} &
$O(\log(1/\eps)\log\log(1/\eps) + \log n)$ & $d=1$  \\ [.2em]  \hline \hline
  {\bf This work}  & $\violet{\poly(d/\eps)} \cdot \log n$  & $d\le \violet{n^{\Omega(1)}}$ \\ [.2em] \hline
\end{tabular}
\caption{Our work and the prior results on fooling degree-$d$ Gaussian PTFs.  The last column indicates the range of values of $d$'s for which the seed length of the corresponding PRG is nontrivial (i.e.~$o(n)$). The generators of \cite{MZ13,DKN10} work for the more general setting of Boolean space, and~\cite{DKN10,Kane11ccc}'s analyses show that limited independence suffices.}
\label{prior}
\vspace{-7pt}
\end{table}

\subsection{Motivations}

\paragraph{Geometric content.}  We now give a geometric perspective on the problem of constructing PRGs for Gaussian PTFs.  Suppose one is given a set~$F \subseteq \R^n$ and one wishes to approximately compute its Gaussian volume, $\Pr_{\bx \sim \normal(0,1)^n}[\bx \in F]$. There is an obvious Monte Carlo approach: picking $O(1/\eps^2)$ Gaussian vectors~$\bx$ at random and outputting the fraction that fall into~$F$ will, with high probability, give an $\eps$-accurate estimate.    Our question is to what extent randomness is necessary for this problem.

The extent to which derandomization is possible depends on the ``complexity'' of the sets~$F$ we allow.  If $F$ is only given via a black-box membership oracle then no derandomization is possible.  So we need to assume an ``explicit description'' of $F$ is given, and in this paper we focus on the case that $F$ is the set of points satisfying a polynomial inequality of degree at most~$d$ (i.e., $F$ is the set indicated by a degree-$d$ PTF).  Thus the $d = 1$ case allows halfspaces, the $d = 2$ case allows ellipsoids and hyperboloids, etc.  
For illustrative purposes, \Cref{fig:ptf-eg} shows an example with $n = 2$ and $d = 5$, although we generally think of $n \gg d$.

\begin{figure}
  \captionsetup{width=.95\linewidth}
\centering
\includegraphics[width=.4\textwidth]{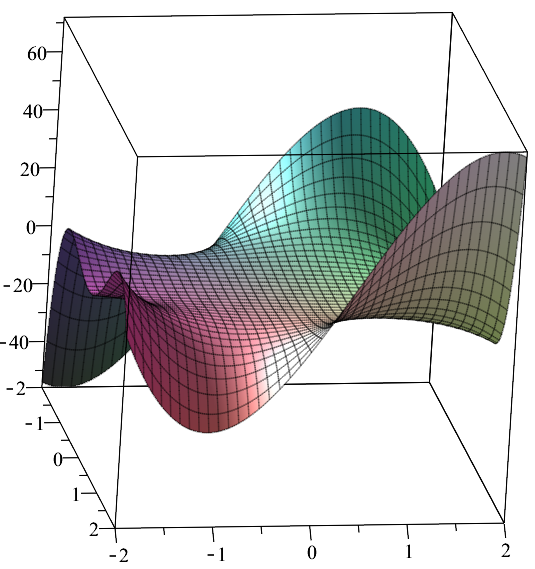} \quad \includegraphics[width=.4\textwidth]{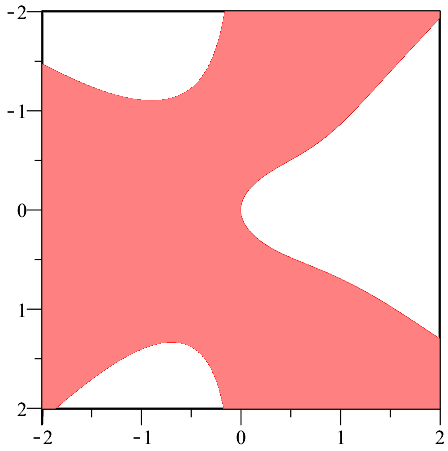}
\caption{On the left, a plot of the degree-$5$ polynomial $p(x,y) = x^2y^3-2x^3y^2+3xy^4-x+2y^2$.  On the right, the threshold set $F = \{(x,y) \in \R^2 : p(x,y) \geq 0\}$.}
 \label{fig:ptf-eg}
\end{figure}

One natural approach to this  volume-approximation problem is the following:  First, define some kind of explicit (nonrandom) finite ``grid'' of discrete points in~$\R^n$; second, show that the Gaussian volume of any degree-$d$ PTF set~$F$ is closely approximated by the fraction of grid points in~$F$.  A naive gridding scheme would use at least an exponential-in-$n$ number of grid points (even for $d = O(1)$); the question is whether we can use a subexponential-in-$n$ number of gridpoints, when $d \ll n$.  Our \Cref{thm:main} provides such a solution; by enumerating all seeds (essentially, taking the support of~$\calG$), we get an explicit set of just \violet{$n^{\poly(d)}$} ``grid points'' that gives a high-quality volume approximation for any degree-$d$ polynomial threshold set; this is nontrivial for $d$ up to some $\violet{n^{\Omega(1)}}$.  Also note that this kind of ``PRG solution'' is stronger than just being an ``volume-approximation'' algorithm of the type ``given~$F$, approximate $\text{vol}(F)$''; as it is PRG-based, it gives \emph{one} fixed, deterministic ``grid'' that simultaneously works to approximate the volume of all degree-$d$ polynomial threshold sets~$F$. 

\paragraph{Boolean complexity theory.} As mentioned earlier, the problem of PRGs (or deterministic volume approximation) for Gaussian polynomial threshold functions is a special case of the problem of PRGs (or approximate-counting) for \emph{Boolean} polynomial threshold functions.  This, in turn, is a very special case of the problem of derandomization for general Boolean circuits.  Recall that the $\mathsf{BPP}$ vs.~$\mathsf{P}$ problem is roughly equivalent to asking whether there is a deterministic polynomial-time algorithm that, given the explicit description of a subset $F \subseteq \{0,1\}^n$ in the form of a $\poly(n)$-gate Boolean circuit~$C$ computing the indicator function of $F$, computes a $0.1$-accurate approximation to its ``volume'', $\Pr_{\bx \sim \{0,1\}^n}[C(\bx) = 1]$.  Given how far we are from answering this question, the field of pseudorandomness has focused on special classes of circuits, of restricted depth and gate-types; the case of Boolean PTFs corresponds to depth-$2$ circuits~$C$ with a threshold gate on top and AND gates of width at most~$d$ at the bottom.



\subsection{Our key new tool: the Local Hyperconcentration Theorem}
For large~$d$, the best prior PRG for degree-$d$ Gaussian PTFs is Kane's~\cite{kane11focs}, which has seed length $2^{O(d)}/\poly(\eps) \cdot \log n$.  In this section we describe the most important new ingredient we introduce to Kane's framework, which lets us reduce the seed length's dependence on $d$ down to~\violet{$\poly(d)$}.  In the next section we will give an overview of the constructions of \cite{MZstoc10,MZ13,kane11focs}, putting our new tool into context.

We call our main new tool the {\sl Local Hyperconcentration Theorem}.  To explain it, suppose $p : \R^n \to \R$ is a degree-$d$ polynomial.  Since $p$ has high degree, it might fluctuate quite wildly near a given point~$x \in \R^n$, causing $\sign(p(x))$ to rapidly switch between~$\pm 1$ in small neighborhoods.  However, we might hope that for \emph{most} points~$x$, the value of~$p$ in a local neighborhood of~$x$ is almost always within a $1\pm \delta$ multiplicative factor of~$p(x)$, and hence is almost always of constant sign.

The right definition of a ``local neighborhood of~$x$'' is to choose a small scale parameter~$\lambda > 0$, and then to consider a Gaussian~$\wt{\bx}$, centered at $\sqrt{1-\lambda} x$, with variance~$\lambda$ in each coordinate.\footnote{The $\sqrt{1-\lambda}$ factor is included so that when we look at a \emph{typical} $\bx$ chosen from $\normal(0,1)^n$, the resulting ``random point in the neighborhood''~$\wt{\bx}$ also has distribution~$\normal(0,1)^n$.}  Now if $\Var[p(\wt{\bx})] \ll \E[p(\wt{\bx})]^2$, we may say that~$p$ is \emph{(multiplicatively) concentrated} in this $\lambda$-local neighborhood of~$x$; and indeed, the second moment method (Chebyshev's inequality) tells us that~$p(\wt{\bx})$ almost always has the same sign (namely, the sign of~$\E[p(\wt{\bx})]$).  The most important ingredient in Kane's work, \cite[Cor.~10+Lem.~11]{kane11focs}, 
establishes this sort of result:
\begin{theorem}[The key technical theorem of \cite{kane11focs}, simplified]
\label{thm:kane-technical}
    Let $p : \R^n \to \R$ be a degree-$d$ polynomial. Provided $\lam \leq 2^{-O(d)}$, with high probability over $\bx \sim \normal(0,1)^n$ we have
    \begin{equation} \label[ineq]{ineq:kane-var}
        \Var[p(\wt{\bx})] \ll \E[p(\wt{\bx})]^2, \quad \text{where } \wt{\bx} \sim \sqrt{1-\lambda}\bx + \normal(0,\lambda)^n.
    \end{equation}
%
%
\end{theorem}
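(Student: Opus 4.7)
The plan is to bound $v(\bx) := \Var_{\wt{\bx}\mid\bx}[p(\wt{\bx})]$ from above and $q(\bx) := \E_{\wt{\bx}\mid\bx}[p(\wt{\bx})]$ from below for typical~$\bx$, using Hermite analysis plus Gaussian concentration for $v$ and Carbery--Wright anti-concentration for $q$, and then compare.

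The first observation is that, with $\rho = \sqrt{1-\lambda}$, the distribution of $\wt{\bx}$ given $\bx$ is exactly the Ornstein--Uhlenbeck noisy copy at correlation $\rho$, so $q(\bx) = (U_\rho p)(\bx)$. Expanding $p = \sum_\alpha \widehat{p}(\alpha)\,H_\alpha$ in the Hermite basis yields $U_\rho p = \sum_\alpha \rho^{|\alpha|}\widehat{p}(\alpha)\,H_\alpha$, and by the tower rule
\[
\E_{\bx}[v(\bx)] \;=\; \|p\|_2^2-\|U_\rho p\|_2^2 \;=\; \sum_\alpha \bigl(1-(1-\lambda)^{|\alpha|}\bigr)\widehat{p}(\alpha)^2 \;\le\; \lambda d\,\|p\|_2^2,
\]
since every $|\alpha|\le d$. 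Normalize $\|p\|_2 = 1$. Because $v$ is itself a nonnegative polynomial of degree $\le 2d$ in $\bx$ with mean $\le \lambda d$, Gaussian hypercontractivity for degree-$2d$ polynomials gives the tail $\Pr[v(\bx) > K\lambda d] \le \exp(-\Omega(K^{1/(2d)}))$, so $v(\bx) \le \lambda d\cdot(\log(1/\delta_1))^{O(d)}$ outside a bad set of mass~$\delta_1$.

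For the denominator, $q$ is a degree-$d$ polynomial with $\|q\|_2^2 = \sum_\alpha (1-\lambda)^{|\alpha|}\widehat{p}(\alpha)^2 \ge (1-\lambda)^d \ge 1/2$ (assuming $\lambda d \le 1/2$), so the Carbery--Wright anti-concentration inequality applies: $\Pr[|q(\bx)|\le t] \le O(d)\,t^{1/d}$ for all $t>0$. Choosing $t = (\delta_2/(Cd))^d$ yields $q(\bx)^2 \ge (\delta_2/(Cd))^{2d}$ outside a set of mass~$\delta_2$. Union-bounding the two bad events, for the surviving $\bx$ we have
\[
\frac{v(\bx)}{q(\bx)^2} \;\le\; \lambda d\cdot (\log(1/\delta_1))^{O(d)}\cdot (Cd/\delta_2)^{2d},
\]
which is $o(1)$ for any fixed constants $\delta_1,\delta_2$ provided $\lambda \le d^{-\Theta(d)}$, establishing the simplified theorem.

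The main obstacle, and the point that controls whether one obtains the stated $\lambda \le 2^{-O(d)}$ or only the weaker $\lambda \le d^{-O(d)} = 2^{-O(d\log d)}$, is the factor $(Cd)^{-2d}$ in the Carbery--Wright lower bound on $q(\bx)^2$: this $d^{O(d)}$ loss is forced by the $t^{1/d}$ exponent of general polynomial anti-concentration (tight on $p = x_1^d$). Closing the gap likely requires either (a) merging the bad sets for $v$ and $q$ so that only a single anti-concentration event is needed (exploiting that $q$ is very close to $p(\rho\,\cdot)$, so that the locations where $q$ is small coincide with those where $p$ is small and hence $v$ is also small), or (b) a regularity / critical-index decomposition of $p$ that invokes sharper anti-concentration on the ``regular'' part and handles ``$x_1^d$-type'' coordinates separately.
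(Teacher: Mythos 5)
Your outline is natural and correct as far as it goes, but as you yourself observe, it only establishes the weaker bound $\lambda \leq d^{-O(d)} = 2^{-O(d\log d)}$, not the $\lambda \leq 2^{-O(d)}$ that the theorem claims. Concretely: normalizing $\|p\|_2 = 1$, your first step gives $v(\bx) \leq 2^{O(d)}\lambda$ with high probability (applying \Cref{thm:aobf-9.22} and then \Cref{thm:tail-bound} to the nonnegative degree-$2d$ polynomial $v$; incidentally the exponent there is $K^{1/d}$, not $K^{1/(2d)}$, though this does not change the qualitative picture). Your second step, Carbery--Wright applied to $q = \U_\rho p$, yields only $q(\bx)^2 \geq d^{-O(d)}$ outside a constant-probability bad set, and that is the sole source of the superfluous $d^{\Theta(d)}$ factor. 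So the diagnosis in your last paragraph is precisely right: the bottleneck is invoking $n$-variate anticoncentration on $q$ as an event decoupled from the variance bound.

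What Kane actually does is different and is worth contrasting. He does not run a separate anticoncentration estimate on $\U_\rho p$. Instead his Lemma~9 (\Cref{lem:kane-lemma9-alt} in this paper) proves the direct multiplicative statement that $\zoom{p}{\lambda}{\bx}(\by) \approx_{\nu} p(\bx)$ with $\nu = O(d^2/\beta)\cdot\sqrt{\lambda}$, except with probability $\beta$ over independent Gaussians $\bx,\by$, and this already holds for $\lambda$ as large as $1/\poly(d)$. The proof crucially uses rotational invariance: conditioned on the two-dimensional span of $\bx$ and $\by$, the noisy point $\sqrt{1-\lambda}\bx + \sqrt{\lambda}\by$ is a rotation of $\bx$ by angle $\Theta(\sqrt{\lambda})$, which reduces the question to anticoncentration of a univariate degree-$d$ polynomial, and that single univariate event replaces your separate events for $v$ and $q$. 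The $2^{O(d)}$ factor in \Cref{thm:kane-technical} then enters only at the subsequent step of converting the statement ``$\zoom{p}{\lambda}{\bx}(\by)$ is multiplicatively close to $p(\bx)$ with high probability over $\by$'' into a bound on $\Var[\zoom{p}{\lambda}{\bx}]$, which requires a hypercontractive bootstrap and costs a factor of $e^{O(d)}$ as per \Cref{thm:hypercon} and its corollaries. So your hypothesis (a) --- merging the bad sets --- is the right instinct, but the mechanism that makes it work is the rotational reduction to one dimension, not a pointwise matching of the small-$q$ and small-$v$ regions, and your hypothesis (b) about a regularity decomposition is not the route taken. (To be fair to you, this theorem is imported from \cite{kane11focs} rather than proved in the present paper, so you would not have been able to reverse-engineer the argument from the text here; but the $2^{-O(d)}$ versus $d^{-O(d)}$ distinction is real, and it is exactly the hypercontractivity-versus-Carbery--Wright tradeoff you flagged.)
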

We may say that Kane shows degree-$d$ polynomials have \emph{local concentration at scale $\lambda = 2^{-O(d)}$}, almost everywhere.  The value $L = 1/\lambda = 2^{O(d)}$ ends up becoming the dominant factor in Kane's PRG's seed length.  At a high level, this is because the PRG has the form $\bz  =  \bw_1 + \bw_2 + \cdots + \bw_{\len}$, where the $\bw_i$'s are independent random vectors with $O(d)$-wise-independent $\normal(0,\lambda)^n$ distributions.\\

By way of contrast, our new Local Hyperconcentration Theorem (stated in simplified form below) shows \emph{local \textit{\textbf{hyper}}concentration at scale $\lambda =\violet{d^{-O(1)}}$}.  For a high level sketch of the proof, see \Cref{sec:HZL-idea}.
\begin{theorem}[Simplified Local Hyperconcentration Theorem, see \Cref{thm:RZL} \violet{and~\Cref{main theorem}}]
\label{thm:local-hyperconcentration-informal}
    Let $p : \R^n \to \R$ be a degree-$d$ polynomial. Provided $\lam \leq \violet{d^{-O(1)}}$, with high probability over $\bx \sim \normal(0,1)^n$ we have
    \begin{equation}    \label[ineq]{ineq:our-hypercon}
        \HV_\hypergeneric[p(\wt{\bx})] \ll \E[p(\wt{\bx})]^2, \quad \text{where } \wt{\bx} \sim \sqrt{1-\lambda}\bx + \normal(0,\lambda)^n,
    \end{equation}
    for any large constant~$R$ (indeed, for any $R \leq \violet{\poly(d)}$).
\end{theorem}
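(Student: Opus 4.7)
My plan is to fix $\bx$ and analyze $p(\wt{\bx})$ as a polynomial in the independent Gaussian noise $\bg$ (where $\wt{\bx}=\sqrt{1-\lambda}\bx+\sqrt{\lambda}\bg$), decomposing it into Hermite levels with respect to $\bg$: $q(\bg)=\sum_{k=0}^{d} q_k(\bg)$. The zeroth level is $q_0=\E[p(\wt{\bx})\mid\bx]=(\U_{\sqrt{1-\lambda}}p)(\bx)$, the Ornstein--Uhlenbeck smoothing of~$p$, and by Taylor expansion around $\sqrt{1-\lambda}\bx$ each level $q_k$ equals $\lambda^{k/2}$ times a multivariate Hermite polynomial in $\bg$ whose coefficients come from the $k$-th derivative tensor $\nabla^{k}p(\sqrt{1-\lambda}\bx)$. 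Any reasonable notion of ``hypervariance at scale $R$'' should then reduce, up to combinatorial factors, to controlling the Hermite tail $\sum_{k\geq R}\lambda^{k}\,\|\nabla^{k}p(\sqrt{1-\lambda}\bx)\|_F^2$; this is where I would focus the rest of the proof.

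The technical core is two estimates I would aim to establish jointly with probability $1-o(1)$ over $\bx\sim\normal(0,1)^n$. First, a \emph{derivative-size bound}: Gaussian integration by parts gives an identity of the form $\E_\bx\|\nabla^{k}p(\bx)\|_F^2=k!\sum_{j\geq k}\binom{j}{k}\|p_j\|_2^2\leq\poly(d)^{k}\|p\|_2^{2}$, and since $\|\nabla^{k}p\|_F^2$ is itself a nonnegative Gaussian polynomial of degree at most $2d$, Markov applied to a few low-order moments (hypercontractivity) upgrades this to pointwise control on the numerator up to mild losses. Second, a \emph{denominator anticoncentration}: Carbery--Wright applied to the degree-$d$ polynomial $\U_{\sqrt{1-\lambda}}p$ yields $(\U_{\sqrt{1-\lambda}}p)(\bx)^{2}=\Omega(\|p\|_2^{2})$ on a large-probability event, using that $\|\U_{\sqrt{1-\lambda}}p\|_2\geq\tfrac{1}{2}\|p\|_2$ whenever $\lambda d\ll 1$.

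Combining on the good event, $\|q_k\|_{L^2(\bg)}^{2}\leq(\lambda\cdot\poly(d))^{k}\|p\|_2^{2}$, so the tail $\sum_{k\geq R}$ is geometrically bounded by $(\lambda\cdot\poly(d))^{R}\|p\|_2^{2}$, while the denominator $q_0^{2}$ is $\Omega(\|p\|_2^{2})$; taking $\lambda\leq d^{-C}$ for a sufficiently large constant $C$ then makes the ratio $o(1)$, and a union bound over the two exceptional events keeps the overall failure probability $o(1)$.

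The step I expect to be the main obstacle is calibrating the Carbery--Wright loss against the derivative tail: because Carbery--Wright carries a factor of $d$ in its exponent, converting a high-probability denominator bound into a usable numerator-to-denominator ratio must be done carefully to avoid pushing $\lambda$ back to exponentially small scales. The reason the plan can hope to achieve $\lambda=d^{-O(1)}$ rather than Kane's $2^{-O(d)}$ (\Cref{thm:kane-technical}) is that Kane's second-moment argument effectively sums over \emph{all} $d$ Hermite levels --- each picking up an $O(d)$ multiplicative loss --- for a compounding factor of $d^{O(d)}$. The hypervariance-at-level-$R$ perspective short-circuits this: only the tail $k\geq R$ enters, so the $\lambda^{k}$ factors geometrically damp the $\poly(d)^{k}$ growth already after $O(1)$ levels.
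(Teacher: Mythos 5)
The proposal has a genuine gap, and it is exactly where you flagged the obstacle: the Carbery--Wright step cannot deliver the claimed "$(\U_{\sqrt{1-\lambda}}p)(\bx)^2=\Omega(\|p\|_2^2)$ on a large-probability event." Carbery--Wright (\Cref{thm:CW}) applied to a degree-$d$ polynomial $r$ gives $|r(\bx)|\ge(\delta/(Cd))^d\|r\|_2$ with probability $1-\delta$, which after squaring is a factor of $(\delta/(Cd))^{2d}\|p\|_2^2$, an \emph{exponentially small} fraction of $\|p\|_2^2$ for any fixed constant $\delta$. Since the rest of your argument bounds the numerator tail by $(R^2\lambda\cdot\poly(d))\|p\|_2^2$, the ratio you control is $\lesssim R^2\lambda\poly(d)\cdot(Cd/\delta)^{2d}$, so making it small requires $\lambda\le 2^{-\Theta(d\log d)}$ --- worse than Kane's $2^{-O(d)}$, not the claimed $d^{-O(1)}$. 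The intuition at the end ("only the tail $k\ge R$ enters, so the damping kicks in after $O(1)$ levels") also misreads the definition: $\HV_R[g]=\sum_{k\ge1}R^{2k}\|g^{=k}\|_2^2$ weights \emph{all} levels starting at $k=1$, so the $k=1$ term alone already requires the denominator comparison, and the exponential Carbery--Wright loss sits there with nothing to cancel it.

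Both proofs in the paper avoid this precisely by never comparing the local denominator to the global $\|p\|_2^2$ in a single step. The original \Cref{thm:HZL} decomposes the $\lambda$-zoom into $\log d$ weaker zooms (\Cref{thm:HZL-one-hit}); at each stage Carbery--Wright is applied only to the low-degree part $g^{\le k}$ of a polynomial already known to be $(k,S,1)$-attenuated, with $k$ halving at each stage, so the exponent $2k$ in the Carbery--Wright loss shrinks geometrically and is matched against a numerator ("attrition") that also shrinks as $\lambda^{k/2}$. Kane's improved \Cref{main theorem} is even more directly a local comparison: \Cref{strong anticoncentration lemma} controls $\E_\bup|D_\by g_\bup(\bx)|^2$ against $\E_\bup|g_\bup(\bx)|^2$ \emph{at the same random point} with only a $\poly(d)$ loss per derivative, and the zoom multiplies the $k$-th term of the random derivative sequence by $\lambda^{2k}$, so the tension in \Cref{sequence increase proposition} vs.\ \Cref{derivative sequence corollary} resolves at $\lambda\sim d^{-O(1)}$. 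To salvage your plan you would need exactly such a lemma --- "the $k$-th derivative of $p$ at $\bx$ is typically not much larger than $p(\bx)$, losing only $\poly(d)$ per iteration" --- which is what Kane's appendix supplies; without it, the global $\|p\|_2^2$ normalization reimports the $\exp(\Theta(d))$ factor you were trying to avoid.

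A smaller technical issue: your level-$k$ identification of the Hermite weight of $p(\sqrt{1-\lambda}\bx+\sqrt\lambda\bg)$ with $\lambda^{k}\|\nabla^kp(\sqrt{1-\lambda}\bx)\|_F^2$ is not exact --- Taylor expansion produces monomials in $\bg$, and Hermitizing them mixes levels. The correct pointwise statement is given in the paper as \Cref{prop:2}; the expectation version \Cref{cor:attenuate0} (binomial redistribution of Hermite weight) is what the paper actually uses. This is reparable bookkeeping and not the core problem, but it is another place where your "$\leq\poly(d)^k\|p\|_2^2$" bound would need care.
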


\violet{ 
\begin{remark} 
The conference version of this paper~\cite{OST20} proved a quantitatively weaker version of the Local Hyperconcentration Theorem, showing local hyperconcentration at scale $\lambda = d^{-O(\log d)}$.  This led to a seed length of $(d/\eps)^{O(\log d)} \cdot \log n$.  Subsequently, Kane improved the Local Hyperconcentration Theorem to show local hyperconcentration at scale $\lambda = d^{-O(1)}$, which yields the current seed length of $\poly(d/\eps)\cdot \log n$.  Kane's proof is given in~\Cref{app:kane} and subsumes~\Cref{sec:anticoncentration} of this paper. 
After the initial appearance of this work on the ArXiV \cite{OSTK21arxiv-v1} we were informed by R. Meka that he and Z. Kelley had independently and concurrently obtained results similar to the main result of this work \cite{KelleyMeka21}.
\end{remark} 
}

We will define ``hypervariance'' $\HV_\hypergeneric[\cdot]$ later
(see \Cref{def:hypervar}); here we only note that it is a stronger notion than variance, in the sense that $\HV_\hypergeneric[p]$ is always at least as  large as $\Var[p]$ for all $\hypergeneric \ge 1.$
Whenever the theorem's conclusion holds for an outcome $x$ of $\bx,$ the value of $p$ in the $\lambda$-local neighborhood of $x$ is ``hyperconcentrated'' (see \Cref{lem:attenuated-hyperconcentrated}), meaning that for any large constant~$q$,
\[
    \E[|p(\wt{\bx}) - \mu|^q] \ll |\mu|^q, \quad \text{where $\mu = \E[p(\wt{\bx})]$.}
\]
The case $q = 2$ here is precisely the ``concentration'' conclusion in the theorem of \cite{kane11focs}. Our hyperconcentration is a stronger conclusion: e.g., taking $q = 4$ lets us use the ``fourth moment method'', and in fact we'll eventually use $q = 8$.

To summarize, our theorem has two important improvements over \cite{kane11focs}.\ignore{\rasnote{Here and in an number of places in the intro I did trivial rephrasings so as to refer to ``\cite{kane11focs}'' rather than ``Kane''. This is b/c with Kane's theorem in the appendix, the references to ``Kane'' I thought might be confusing/ambiguous.}}  First, it shows concentration at a much larger scale, $\lambda = \violet{d^{-O(1)}}$, rather than $2^{-O(d)}$.  This crucially gives us the \emph{potential} to get our seed's dependence on~$d$ to be $1/\lambda = \violet{\poly(d)}$.  This is far from automatic, though, because there are several other places in the \cite{kane11focs} construction that ``lose'' a factor of~$2^{O(d)}$.  In all but one of these cases\footnote{Namely, our ``noise insensitivity extension lemma'' \Cref{lem:magicker-lemma}, where we eliminate a factor of $2^{O(d)}$ from the analogous result of Kane~\cite[Cor.~16]{kane11focs}.}, it's because in \cite{kane11focs} the variance bound \Cref{ineq:kane-var} is bootstrapped using the \emph{hypercontractivity inequality}  in order to get control over~$p$'s behavior in various local neighborhoods.  This hypercontractive inequality for degree-$d$ polynomials inherently loses $2^{O(d)}$ factors (see \Cref{thm:hypercon}).  By contrast, since our theorem already establishes the stronger \emph{hyperconcentration} conclusion \Cref{ineq:our-hypercon} (this is the second key improvement, bounding hypervariance rather than variance), we are able to provide argumentation that eliminates all of these $2^{O(d)}$ factors.

\ignore{
\gray{
It should be noted that our \Cref{thm:local-hyperconcentration-informal} has one disadvantage compared to Kane's \Cref{thm:kane-technical}; the ``with high probability over~$\bx$'' in our theorem is worse than in Kane's, and this leads to our seed length's worse dependence on~$\eps$.}

\gray{ 
We conclude this section by conjecturing that our Local Hyperconcentration Theorem remains true even at the larger scale of $\noisegeneric = 1/\poly(d)$.  If this were proven, it would immediately yield PRGs for Gaussian PTFs  with seed length $\poly(d)\cdot \log n$.   In fact, it seems possible to us that our Local Hyperconcentration Theorem holds not only in the context of degree-$d$ polynomials on Gaussian space, but even in the more general context of degree-$d$ polynomials on Boolean space $\{\pm 1\}^n$.
   If this could be proven, it seems
   hopeful that our PRG would also extend to yield Boolean PRGs.}

}

\subsection{Overview of the PRG framework we use}

We use the same PRG for Gaussian PTFs as in the prior works of Meka--Zuckerman PRG~\cite{MZstoc10,MZ13} and Kane~\cite{Kane11ccc}, namely
\begin{equation}    \label{eqn:foolio}
    \bz =  \sqrt{\noisegeneric}\bz_1 + \sqrt{\noisegeneric} \bz_2 + \cdots + \sqrt{\noisegeneric} \bz_{\len},
\end{equation}
where the key parameter $\noisegeneric$ is a small function of $d$ and $\epsilon$, where $\len = 1/\noisegeneric$, and where $\bz_1, \dots, \bz_{\len}$ are independent random vectors, each having an $O(d)$-wise independent $n$-dimensional Gaussian distribution.  This leads to a seed length of essentially $O(d^2 \len \cdot \log n)$ (see \Cref{thm:kanes-prg}), and hence all the effort goes into finding the \emph{largest} $\lambda = \lambda(d,\epsilon)$ such that \Cref{eqn:foolio} $\eps$-fools degree-$d$ Gaussian PTFs.

Here we review the Meka--Zuckerman and Kane works; our own analysis is heavily based on Kane's framework.

\paragraph{Meka--Zuckerman.}  The work of Meka and Zuckerman~\cite{MZstoc10} gave PRGs for degree-$d$ Gaussian PTFs with seed length $(d/\eps)^{O(d)} \cdot \log n$.  In fact, they also extended their results to Boolean PTFs, but we do not review that extension here.  At a high level, their construction followed a basic two-part paradigm used both in the proof of Central Limit Theorems and in PRG construction: \emph{mollification + local low-degree behavior}.  To explain this, recall that we are trying to design a PRG~$\calG$ with
\[
    \abs*{\E_{\bz \sim \calG}[f(\bz)] - \E_{\bx \sim \normal(0,1)^n}[f(\bx)]} \leq \eps,
\]
where $f = \sign(p)$, with $p$ a degree-$d$ polynomial.  Suppose first that we did not have the discontinuous ``$\sign$'' function, but rather we just wanted the above inequality for $f = p$.  In that case, it would suffice for the components of the random vector $\bz \sim \calG$ to be ``$d$-wise independent,'' and in fact this would achieve $\eps = 0$.  Furthermore, there are standard techniques to produce an appropriate ``$d$-wise independent'' $\calG$ with seed length $\poly(d) \cdot \log n$, which would be an excellent bound for us.

Of course, when we return to the actual scenario of $f = \sign(p)$, the function $f$ is not even a polynomial, let alone a low-degree one.  The \emph{mollification} portion of Meka and Zuckerman's work is to replace the $\sign$ function with a smooth approximator~$\Phi$, which is equal to $\sign$ outside some interval $[-\lambda, \lambda]$.  Because the $\sign$ function is scale-invariant ($\sign(ty) = \sign(y)$ for $t > 0$), we may normalize $p$ so that its variance $\Var[p(\bx)]$ is~$1$.  Then one chooses the parameter $\lambda = \Theta(\eps/d)^d$.  The smooth mollifier $\Phi$ will have derivatives of all orders, with $\taylor$th derivative $\Phi^{(\taylor)}$ bounded in magnitude by $O(1/\lambda^\taylor)$.  The replacement of~$\sign$ by~$\Phi$ leads to a \emph{mollication error} of $O(\eps)$, essentially due to the well-known anticoncentration bound for degree-$d$ Gaussian polynomials due to Carbery and Wright~\cite{CW:01}: $\Pr[|p(\bx)| \leq (\eps/d)^d] \leq O(\eps)$.  (Note also that thanks to a trick, this only needs to hold for~$\bx$, and not the pseudorandom~$\bz$.)  With the mollifier in place, Meka and Zuckerman can try to bound
\[
    \abs*{\E_{\bz \sim \calG}[\Phi(p(\bz))] - \E_{\bx \sim \normal(0,1)^n}[\Phi(f(\bx))]} \leq O(\eps).
\]
Now although $\Phi$ is not a polynomial, it is ``locally a low-degree polynomial'' (say, of degree~$4$), thanks to Taylor's theorem.  The error in this statement scales like the $4$th derivative bound $\|\Phi^{(4)}\|_\infty \leq \poly(1/\lambda)$, times the ``locality scale''.  Thus as long as we substitute $O(1)$-wise independent Gaussians for true Gaussians at a ``scale'' of $\lambda^{\Theta(1)}$, we will not incur more than $O(\eps)$ error.
This sort of argumentation allows Meka and Zuckerman to show that the PRG in \Cref{eqn:foolio} $\eps$-fools degree-$d$ PTFs with  $\lambda = \Theta(\eps/d)^d$, which leads to their seed length of $(d/\eps)^{O(d)} \cdot \log n$.

\paragraph{Kane.}  To repeat, our PRG analysis closely follows the structure of Kane's, which we now describe. Kane~\cite{kane11focs} shows that the PRG in \Cref{eqn:foolio} succeeds with the  improved (larger) value of $\lambda = 2^{-O(d)} \cdot \poly(\eps)$, leading to his seed length of $2^{O(d)} \cdot \poly(1/\eps) \cdot \log n$.  His ``local concentration theorem'' (\Cref{thm:kane-technical}) plays a central role in this, but he still needs to develop a complex framework (which we also employ) in order to complete the analysis.

Kane's \Cref{thm:kane-technical} allows him to begin a new strategy for designing~$\calG$; rather than mollifying the $\sign$ function and taking $p(\bx)$ as a ``black box'' random variable, Kane instead mollifies the polynomial~$p$ itself. Roughly speaking, Kane's strategy begins by replacing $p$ with $p \cdot \text{Check}_1$, where $\text{Check}_1(\wt{x})$ is a smoothed indicator function for the event that \Cref{ineq:kane-var} holds at~$\wt{x}$.  The ``with high probability over~$\bx$'' in Kane's \Cref{thm:kane-technical} is in fact probability $1-\eps$ provided $\lambda \leq 2^{-O(d)}\cdot \poly(\eps)$, and this implies that the replacement of $p$ by $p \cdot \text{Check}_1$ only incurs error~$\eps$.  Now we may hope that the construction from \Cref{eqn:foolio} will work; roughly, this requires that in a $\noisegeneric$-scale neighborhood of \emph{every} point~$x$, say $\wt{\bx} = \sqrt{1-\noisegeneric} x + \sqrt{\noisegeneric} \bw$, the function $\sign(p) \cdot \text{Check}_1$ is essentially determined by low-degree moments of~$\bw$.  There are two cases. If $x$ is well into the region where $\text{Check}_1(\wt{x})$ is $0$, then $p \cdot \text{Check}_1$ is  essentially~$0$ and $\sign(p) \cdot \text{Check}_1$ is essentially constant.  Otherwise, if $x$ is near the region where  $\text{Check}_1(\wt{x})$ is~$1$, then by definition $\Var[p(\wt{\bx})]$ is very small.  Thus $p$ is not varying very much in a neighborhood of~$x$, and Taylor's theorem will tell us that low-degree moments suffice to essentially determine $p \cdot \text{Check}_1$ in this neighborhood of~$x$.

There are two catches here.  First, the use of Taylor's theorem out to, say, degree~$4$ forces one to bound not just the expected \emph{squared} deviation of~$p$ from $|p(x)|$ in the $\noisegeneric$-neighborhood of~$\wt{x}$; it requires one to control, say, the \emph{$4$th-power} deviation.  This is where  Kane uses the standard hypercontractivity-based fact that higher-power deviations can be controlled by the $2$nd-power deviation (i.e., $\Var[p(\wt{\bx})]$) at the expense of $2^{O(d)}$ losses. Kane is losing such factor anyway, since he takes $\lambda = 2^{-O(d)} \cdot (1/\eps)$.  (This is one place where our analysis takes advantage of the local \textit{\textbf{hyper}}concentration we prove in \Cref{thm:local-hyperconcentration-informal}.)

The second catch is that Taylor's theorem needs to be applied not just to $p$ but to $\text{Check}_1$ itself.  Now $\text{Check}_1$ is concerned with the variance of $p$ in a $\sqrt{\noisegeneric}$-neighborhood of~$x$.  In order to control the Taylor error here, one needs to control the \emph{variance of the variance}!  Kane handles this by further mollifying~$p$.  He uses a \emph{generalization} of \Cref{thm:kane-technical} to show that at most points~$x$, the variance of the variance in the neighborhood of~$x$ is small.  (We must prove a similar generalization of our Local Hyperconcentration Theorem; see \Cref{thm:HZL}.) Thus $p$ can be further mollified to $p \cdot \text{Check}_1 \cdot \text{Check}_2$ at only small loss.  Now we have three cases to consider when analyzing $p \cdot \text{Check}_1 \cdot \text{Check}_2$; if $x$ is well into the region where $\text{Check}_2$ is~$0$, then the mollified function is essentially~$0$ on the $\noisegeneric$-neighborhood.  Else, the variance of the variance of~$p$ in the neighborhood is suitably small.  Next, if $x$ is well into the region where $\text{Check}_1$ is~$0$, then the mollified function is again essentially~$0$ on the neighborhood; otherwise, the variance of~$p$ in the neighborhood is suitably small.  In this third case, we are again in good shape to apply Taylor to $p$, and $\text{Check}_1$\dots but to handle Taylor error for $\text{Check}_2$, we need to introduce \emph{another} check that the variance of the variance of the variance is small.  Indeed, Kane's final mollifier needs not only this ``descending'' sequence of checks (that we will picture ``vertically''), but for technical reasons needs additional ``horizontally proliferating'' checks (which, to avoid further lengthening this description, we will not discuss here).

Luckily, all of these proliferating checks eventually ``bottom out''.  The vertically descending checks bottom out because the ``$i$-fold variance'' is a polynomial of degree $d-i$, and hence the $(d+1)$-fold variance is constantly~$0$.  The horizontally proliferating checks may eventually be terminated due to the fact that a degree-$d$ polynomial is determined by its values at~$d+1$ points.  (Actually, one needs a quantitative version of this fact.  Kane provides one involving another factor of $2^{O(d)}$; we eliminate this factor in \Cref{lem:magicker-lemma}.)

Ultimately, Kane's mollifier multiplies~$p$ by $\poly(d)$ ``$\text{Check}_{i,j}$'' functions: one needs a generalization of \Cref{thm:kane-technical} and another theorem to show that the mollification is close to~$p$ at almost all points; and, when using Taylor's theorem at~$x$, one needs a $\poly(d)$-case analysis looking at the ``deepest'' check (if any) that ``fails.''  If any check ``fails,'' then the mollified function is essentially~$0$; otherwise, if they all pass, then in the $\noisegeneric$-neighborhood of~$z$, the variance of $p$, and the variance of the variance, and the variance of the variance of the variance, etc., are all suitably small for use in Taylor's theorem.

\section{The high-level structure of our proof}
\label{sec:high-level}

Throughout this paper $\p : \R^n \to \R$ is a nonzero polynomial of degree at most~$\degree$, and we are interested in the degree-$d$ polynomial threshold function $f(x)=\sign(p(x))$. For a given $0 < \epsprg < 1$, we  determine a small value\ignore{\rasnote{Redefined $\noisefixed$ to be the right value $\parens*{{\frac \epsprg \degree}}^{O(1)}$ here, and in \Cref{sec:anticoncentration} now we say that there we give the original version of the Local Hyperconcentration Theorem and point people to the appendix for the newer/stronger version. So the body of the paper is now written using this right value (needed for the proof of  \Cref{thm:main}) $\parens*{{\frac \epsprg \degree}}^{O(1)}$ of $\noisefixed$ throughout, except in \Cref{sec:anticoncentration}. }}

\begin{equation} \label{eq:noisefixed}
    \violet{\noisefixed \coloneqq \parens*{{\frac \epsprg \degree}}^{O(1)}} 
   \ignore{\rasnote{Was ``$\noisefixed \coloneqq \parens*{{\frac \epsprg \degree}}^{O(\log \degree)}$''}}
    \end{equation}
and we also let

\begin{equation} \label{eq:len-wise}
    \len \coloneqq 1/\noisefixed, \quad \quad \quad \quad \wise \coloneqq \Theta(\degree).
\end{equation}


Our main goal is:
\begin{theorem}  [Main result:  sum of $\wise$-wise independent Gaussians fools degree-$\degree$ PTFs]                                  \label{thm:1}
    Let $\bx \sim \normal(0,1)^n$ be a standard $n$-dimensional Gaussian random vector, and let $\bz_1, \dots, \bz_{\len}$ be independent $\wise$-wise independent $n$-dimensional Gaussian random vectors. Write
    \[
        \bZ \coloneqq \sqrt{\noisefixed}\bz_1 + \sqrt{\noisefixed} \bz_2 + \cdots + \sqrt{\noisefixed} \bz_{\len}.
    \]
    Then
    \[
        \abs{\E[\sign(f(\bZ))] - \E[\sign(f(\bx))]} \leq \epsprg.
    \]
\end{theorem}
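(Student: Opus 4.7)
Without loss of generality I would assume $\Var[p(\bx)] = 1$ under $\bx \sim \normal(0,1)^n$, since $f = \sign(p)$ is scale-invariant. The strategy is a hybrid argument: decompose $\bx \sim \normal(0,1)^n$ as $\bx = \sqrt{1-\noisefixed}\,\bx' + \sqrt{\noisefixed}\,\bg$ with $\bx', \bg$ independent standard Gaussians, and show that replacing $\bg$ by an $\wise$-wise independent Gaussian $\bz$ changes $\E[f(\bx)]$ by at most $\epsprg/\len$; telescoping $\len$ times then gives the theorem (the distribution of $\bZ$ is obtained by $\len$ such replacements, where in each step the "remaining true Gaussian" plays the role of $\bx'$).

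\paragraph{The mollifier.} Since $f=\sign(p)$ is discontinuous, we cannot directly use the degree-$\wise$ moment match. Following Kane, I would replace $\sign(p)$ by a mollification of the form $g := \sign(p)\cdot \allsoftchecks_p$, where $\allsoftchecks_p$ is a product of smooth $[0,1]$-valued \emph{Check} factors of two types: a ``vertical'' sequence $\chk_1, \chk_2, \dots, \chk_d$, where $\chk_i(y)$ is a smooth indicator of the event that the ``$i$-fold iterated hypervariance'' of $p$ in the $\noisefixed$-neighborhood of $y$ is dominated by the squared mean at the next level; together with auxiliary ``horizontal'' checks that ensure the low-degree Taylor expansions of each $\chk_i$ are accurate on $\noisefixed$-neighborhoods. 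The Local Hyperconcentration Theorem (\Cref{thm:local-hyperconcentration-informal}) applied to $p$ and to its iterated hypervariance polynomials (each of which has degree at most $d$) guarantees that each individual $\chk_i$ is $1$ on all but an $\epsprg/\poly(d,\len)$ fraction of Gaussian space. Combining this with the Carbery--Wright anticoncentration bound (to handle the $|p(\bx)| \ll (\epsprg/d)^d$ region where $\sign(p)$ is erratic) yields
\[
\abs*{\E_{\bx \sim \normal(0,1)^n}[\sign(p(\bx))] - \E_{\bx \sim \normal(0,1)^n}[g(\bx)]} \leq \epsprg/3,
\]
and the same statement for $\bZ$ once we know the hybrid argument goes through.

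\paragraph{Single-step replacement.} For a fixed outcome $x'$ of $\bx'$, I would bound
\[
    \Delta(x') := \abs*{\E_{\bg}[g(\sqrt{1-\noisefixed}x' + \sqrt{\noisefixed}\bg)] - \E_{\bz}[g(\sqrt{1-\noisefixed}x' + \sqrt{\noisefixed}\bz)]}
\]
by a case analysis on the ``deepest'' check that can fail in the $\noisefixed$-neighborhood of $x'$. If some $\chk_i$ is essentially $0$ on the whole neighborhood, then so is $g$, and $\Delta(x')$ is tiny. Otherwise every check passes, meaning that $p$, $\HV[p]$, $\HV[\HV[p]]$, etc., all have very small $R$th hypervariance inside the neighborhood. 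Writing $g$'s Taylor expansion around $\sqrt{1-\noisefixed}x'$ to order $\wise = \Theta(d)$, the degree-$\le \wise$ part is matched exactly by $\bg$ versus $\bz$ (since $\bz$ is $\wise$-wise independent), so $\Delta(x')$ is controlled by the expected magnitude of the degree-$(\wise{+}1)$ Taylor remainder. This remainder involves products of derivatives of $p$ and of the $\chk_i$'s; the hypervariance bounds from the checks, together with \Cref{lem:magicker-lemma} to bound derivative magnitudes without a $2^{O(d)}$ loss, control it by $\noisefixed^{\Omega(1)}\cdot \epsprg/\len$.

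\paragraph{Main obstacle.} The crux of the argument -- and the reason we need hyper-concentration rather than mere concentration -- is the Taylor remainder step. A naive bound using $\Var[p]$ in the local neighborhood would have to be boosted to a high moment bound via Gaussian hypercontractivity, costing $2^{O(d)}$ per factor and wiping out all the gains. By instead having each $\chk_i$ assert that a large $R$th hypervariance is small, I get the requisite $R$th-moment control ``for free'' from the definition of the mollifier, so no hypercontractive boosting is ever invoked. Once the single-step bound $\Delta(x') \le \epsprg/\len$ holds on all but an $\epsprg/\len$-measure set of $x'$, summing over the $\len$ hybrid steps yields the claimed $\epsprg$-fooling bound, using that the marginal distribution of each intermediate hybrid is exactly $\normal(0,1)^n$ on the ``still true Gaussian'' coordinates.
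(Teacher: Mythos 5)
Your overall architecture (mollify, replace one coordinate at a time, Taylor-expand using hyperconcentration rather than hypercontractivity) matches the paper's, but your handling of the mollification error on the pseudorandom side has a genuine gap, and there is a subtle misunderstanding in how the telescoping step works.

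\textbf{Mollification error for $\bZ$.} You set $g := \sign(p)\cdot\allsoftchecks_p$ and state that the bound $\abs{\E[\sign(p)] - \E[g]} \le \epsprg/3$ holds for $\bZ$ ``once we know the hybrid argument goes through.'' This does not follow. The hybrid argument, as you describe it, controls $\abs{\E[g(\bx)] - \E[g(\bZ)]}$; it says nothing about $\abs{\E[\sign(p(\bZ))] - \E[g(\bZ)]}$, which equals $\abs{\E[\sign(p(\bZ))(1-\allsoftchecks(\bZ))]}$. This cannot be bounded pointwise (the integrand ranges over all of $[0,1]$) and the Local Hyperconcentration Theorem and Carbery--Wright are available only under the genuine Gaussian, so Theorem~\ref{thm:hard-to-soft} tells you nothing about $\Pr_{\bZ}[\allsoftchecks(\bZ)\ne 1]$. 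The paper avoids this entirely by working with the one-sided truncations $\Indplus = \allsoftchecks\cdot\bone[\sign(p)=1]$ and $\Indminus = \allsoftchecks\cdot\bone[\sign(p)=-1]$, which admit the \emph{pointwise} sandwich $2\Indplus - 1 \le \sign(p) \le 1 - 2\Indminus$; the pseudorandom side is then handled by a deterministic inequality, and probability over the Gaussian side only enters once via Theorem~\ref{thm:hard-to-soft}. Your signed mollifier $g$ does not dominate or get dominated by $\sign(p)$, so you either need to switch to one-sided mollifiers, or add a separate hybrid argument for $\allsoftchecks$ alone (deducing $\E[\allsoftchecks(\bZ)] \ge 1 - O(\epsprg)$ from $\E[\allsoftchecks(\bx)] \ge 1 - \epsprg/4$) and then combine. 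Either fix works, but the step as written is a non sequitur.

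\textbf{Uniformity over the conditioning point.} In your final paragraph you invoke the single-step bound ``on all but an $\epsprg/\len$-measure set of $x'$,'' apparently appealing to the Gaussian measure of a bad set. But in the replacement step that moves from hybrid $\bw_{t-1}$ to $\bw_t$, the conditioning vector $\bv$ is a scaled combination of true Gaussians $\bx_1,\dots,\bx_{t-1}$ and $\wise$-wise independent vectors $\bz_{t+1},\dots,\bz_\len$; it is \emph{not} $\normal(0,1)^n$-distributed, and you have no way to bound the probability that $\bv$ lands in a small Gaussian-measure bad set. This is exactly why the paper's Theorem~\ref{thm:hybridstep} is stated to hold for \emph{every} $x\in\R^n$, which in turn is what the two-case analysis (Lemmas~\ref{lem:page13} and~\ref{lem:page14}) delivers. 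Your case split on ``deepest failing check'' is in fact the right tool to get a bound for every $x'$, so the claim about a measure-$\epsprg/\len$ exceptional set should simply be removed; if it genuinely were only ``almost all $x'$'' the telescoping would break.
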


\bigskip

To prove \Cref{thm:1}, we will construct a certain function
\[
    \allsoftchecks : \R^n \to [0,1],
\]
which is a smoothed indicator function for a collection of events (related to local hyperconcentration of~$p$) that are expected to almost always occur.  We then show the following:
\begin{theorem}
[Mollification error theorem\kanenote{, analogue of Lemma~17 of \cite{kane11focs}}]
\label{thm:hard-to-soft}
\[\displaystyle
    \Pr_{\bx \sim \normal(0,1)^n}[\allsoftchecks(\bx) \neq 1] \leq \epsprg/4.
\]
\end{theorem}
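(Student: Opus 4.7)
The plan is to exploit the tree-like structure of $\allsoftchecks$ and close by a union bound. Following Kane's framework outlined in the introduction, $\allsoftchecks$ is a product of $N = \poly(d)$ soft-check functions $\chk_{i,j}$ arranged in a two-dimensional grid: the ``vertical'' index $i$ runs over iterated zoomed variances of $p$ (so that $q_0 = p$, and $q_{i+1}$ is the zoomed variance of $q_i$, a polynomial of degree at most $d-i$), while the ``horizontal'' index $j$ proliferates to track the auxiliary polynomials needed to close Kane's analysis. The tree has depth at most $d+1$ and width $\poly(d)$. Each $\chk_{i,j}$ takes values in $[0,1]$ and equals $1$ precisely on a ``comfortably good'' region where a local hyperconcentration estimate for $q_{i,j}$ holds with a built-in safety margin.

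First, observe that $\allsoftchecks(x) = 1$ iff every $\chk_{i,j}(x) = 1$, so by a union bound
\[
    \Pr_{\bx}[\allsoftchecks(\bx) \neq 1] \;\le\; \sum_{i,j} \Pr_{\bx}[\chk_{i,j}(\bx) \neq 1],
\]
and it suffices to bound each summand by $\epsprg/(4N)$. Next, apply the (generalized) Local Hyperconcentration Theorem---in its basic form \Cref{thm:local-hyperconcentration-informal} for $i=0$ and its variance-of-variance extension \Cref{thm:HZL} for $i \geq 1$---to each polynomial $q_{i,j}$. Since every $q_{i,j}$ has degree at most $d$ and the chosen scale $\noisefixed = (\epsprg/d)^{O(1)}$ sits comfortably within the $d^{-O(1)}$ range required, the theorem yields that outside an $\bx$-event of probability at most $\epsprg/(4N)$ we have $\HV_\hypergeneric[q_{i,j}(\wt{\bx})] \ll \E[q_{i,j}(\wt{\bx})]^2$. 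By construction, this very inequality (with the safety margin baked into the definition) forces $\chk_{i,j}(\bx) = 1$, giving the per-check bound.

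The main obstacle is the quantitative calibration of three pieces at once: (a)~the noise scale $\noisefixed$ must be small enough that the Local Hyperconcentration Theorem applies at the required quality to every one of the $\poly(d)$ derived polynomials $q_{i,j}$; (b)~each soft-check's transition margin must be wide enough to keep $\chk_{i,j}$ smooth (so that Taylor's theorem is effective in the companion analysis theorem following this one) yet narrow enough that the ``$=1$'' region is still captured by the high-probability event supplied by the Local Hyperconcentration Theorem; and (c)~the horizontally-proliferating family must terminate---here one invokes the fact that a degree-$d$ univariate polynomial is determined by its values at $d+1$ points, capping the number of horizontal generations at $\poly(d)$. Once these parameters are fixed, summing $N = \poly(d)$ copies of $\epsprg/(4N)$ delivers the claimed bound $\epsprg/4$.
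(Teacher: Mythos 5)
The high-level structure of your proposal matches the paper's: union-bound over all mollifier checks, then show each check passes with room to spare with the required per-check probability. That part is right.

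The genuine gap is in how you bound the per-check failure probability. You write that the Local Hyperconcentration Theorem, applied to each polynomial $q_{i,j}$, handles all checks at once. But the mollifier checks are of two qualitatively different kinds, and only one of them is a direct consequence of the Local Hyperconcentration Theorem. The \emph{diagonal} checks, of the form $\st_{i+1,0}(\bx) \leq \anticoncgap\,\st_{i,1}(\bx)$, literally compare a hypervariance to a two-norm of a zoom and do follow from \Cref{thm:HZL} (or its improved form \Cref{main theorem}) applied to the distribution $\calF_{i,0}$; this is \Cref{thm:local-hyperconcentration}. The \emph{horizontal} checks, of the form $\st_{i,j}(\bx) \approx_{\horizclose} \st_{i,j+1}(\bx)$, are not a hyperconcentration statement about a single zoomed polynomial: they compare the \emph{value} of a statistic at $\bx$ with the \emph{mean} of the same statistic over a noisy neighborhood of $\bx$. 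Proving this (\Cref{thm:noise-insensitivity}) requires combining the Local Hyperconcentration Theorem, the attenuation-to-hyperconcentration machinery (\Cref{prop:atten-hyperconcy}), and, crucially, Kane's anticoncentration result \Cref{lem:kane-lemma9-alt}, which says that $\zoom{g}{\lambda}{\bx}(\by) \approx g(\bx)$ with high probability. That last ingredient is not available from the Local Hyperconcentration Theorem alone, and without it the horizontal checks—which account for the majority of the $\poly(d)$ checks—are not covered. Your proposal thus silently absorbs the separate \Cref{thm:noise-insensitivity} into a single invocation of \Cref{thm:HZL}, which is the missing step.

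A secondary inaccuracy: you invoke the fact that a degree-$d$ polynomial is determined by $d+1$ points to terminate the horizontally-proliferating family. That fact (in the quantitative form of \Cref{lem:magicker-lemma}) plays no role in the proof of \Cref{thm:hard-to-soft}; the number of columns $\maxcol = (2d+1)^2$ is simply fixed by definition and the union bound is over that fixed collection. The quantitative interpolation lemma is needed later, in the proof of \Cref{thm:hybridstep}, not here.
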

We then extend the mollifier to take into account the sign of $\p$:
\begin{definition}
    Define $\Indplus : \R^n \to [0,1]$ by
    \[
        \Indplus(x) = \allsoftchecks(x) \cdot \bone[\sign(\p(x)) = 1],
    \]
    and define $\Indminus$ similarly as $ \Indminus(x) = \allsoftchecks(x) \cdot \bone[\sign(\p(x)) = -1].$

\end{definition}

The main thing we prove about $\Indplusorminus$ is the following:
\begin{theorem}
[One step of the Replacement Method\kanenote{, analogue of Lemma~19 of \cite{kane11focs}}]
                                   \label{thm:hybridstep}
    Fix any  $x \in \R^n$, and assume the $\R^n$-valued random vectors~$\bz,\bz'$ are each $\wise$-wise independent $n$-dimensional Gaussian vectors.  Then we have
    \[
    \abs*{
    \E\bracks*{\Indplus\parens*{\sqrt{1-\noisefixed}  x + \sqrt{\noisefixed}\bz}}
-
    \E\bracks*{\Indplus\parens*{\sqrt{1-\noisefixed} x + \sqrt{\noisefixed}\bz'}}
}
 \leq {\frac \epsprg {4\len}}.
    \]
 The analogous statement for~$\Indminus$ also holds.
\end{theorem}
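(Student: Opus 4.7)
The plan is to mimic the structure of Kane's Lemma~19 proof~\cite{kane11focs}, exploiting our stronger Local Hyperconcentration Theorem to avoid the $2^{O(d)}$ losses that appear in his hypercontractivity-based argument. Fix $x \in \R^n$ and let $\bW = \sqrt{1-\noisefixed}\,x + \sqrt{\noisefixed}\,\bz$, with $\bW'$ defined analogously using $\bz'$. The goal is to exhibit a polynomial $\taylor(\bz)$ of total degree at most $\wise$ such that $\E[\Indplus(\bW)]$ and $\E[\Indplus(\bW')]$ are each within $\eps_{\mathrm{PRG}}/(8\len)$ of $\E[\taylor(\bz)]$; since $\bz,\bz'$ are $\wise$-wise independent and $\taylor$ has total degree $\le \wise = \Theta(\degree)$, the two expectations of $\taylor$ coincide exactly with $\E_{\bg \sim \normal(0,1)^n}[\taylor(\bg)]$, and the theorem follows by the triangle inequality.

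To construct $\taylor$ I would first perform a case analysis on~$x$ based on the ``deepest check'' in the mollifier. Recall that $\allsoftchecks$ is the product of a $\poly(\degree)$-sized family of smoothed indicators $\mathrm{Check}_{i,j}$, each of which is a bounded smooth function of a polynomial of degree at most~$\degree$ in~$x$, Lipschitz at the scale $\sqrt{\noisefixed}$.  Call $x$ \emph{strongly failing} if some $\mathrm{Check}_{i,j}(x)$ is below some threshold $\eta \ll 1$; then by the Lipschitz property every point within the $\sqrt{\noisefixed}$-neighborhood still has that check below $2\eta$, so $\allsoftchecks(\bW) \le 2\eta$ deterministically and likewise for $\bW'$, making both expectations at most $2\eta$.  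Setting $\eta = \eps_{\mathrm{PRG}}/(32\len)$ and taking $\taylor \equiv 0$ in this case suffices.  Otherwise, every $\mathrm{Check}_{i,j}(x) \ge \eta$, which by the definition of the checks means that the (simplified) Local Hyperconcentration conclusion~\Cref{ineq:our-hypercon}, and its iterated variants for variances-of-variances, hold in the $\noisefixed$-neighborhood of~$x$.

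In this ``all-checks-pass'' case I would use hyperconcentration to show $\sign(\p(\bW)) = \sign(\p(x))$ with probability $1 - O(\eps_{\mathrm{PRG}}/\len)$: since $\HV_\hypergeneric[\p(\bW)] \ll \E[\p(\bW)]^2$ for large~$\hypergeneric$, Markov applied to the $\hypergeneric$th power gives a superpolynomial deviation tail, which is enough to suppress the sign-flip event and to deal with the fact that $\bz$ is only $\wise$-wise (not fully) Gaussian by an $\wise$-th moment bound. Consequently we may replace the discontinuous indicator $\bone[\sign(\p(\bW))=1]$ by the constant $\bone[\sign(\p(x))=1]$, paying $O(\eps_{\mathrm{PRG}}/\len)$ error, so that up to this error $\Indplus(\bW) = \bone[\sign(\p(x))=1] \cdot \allsoftchecks(\bW)$, a smooth function of~$\bz$.

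Finally I would Taylor-expand $\allsoftchecks(\bW)$ in $\sqrt{\noisefixed}\,\bz$ about $x$ up to order $\wise = \Theta(\degree)$, obtaining $\taylor(\bz)$ as a polynomial in $\bz$ of total degree at most $\wise$. The Taylor remainder is bounded by $\noisefixed^{\wise/2}$ times a $\wise$th-order derivative of $\allsoftchecks$, which by the product rule is a sum of $\poly(d)^{\wise}$ products of derivatives of the individual $\mathrm{Check}_{i,j}$'s. Each such derivative is a product of a bounded smooth-function derivative of the mollifier shape with derivatives of degree-$\le\degree$ polynomials; bounding the expectation of this remainder over $\bz$ reduces to controlling the expected $q$th power of the polynomials appearing inside the checks, for some $q = O(\wise)$. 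The main obstacle, and the place where the improvement over Kane is essential, is precisely this moment control: the stronger hyperconcentration hypothesis supplied by \Cref{thm:local-hyperconcentration-informal} (invoked with $\hypergeneric$ polynomial in $\degree$) lets us bound these $q$th moments \emph{directly} by $O(\noisefixed \cdot \Var[\p(\bW)])^{q/2}$ without paying a $2^{O(\degree)}$ hypercontractivity factor. Plugging $\noisefixed = (\eps_{\mathrm{PRG}}/\degree)^{O(1)}$ and $\wise = \Theta(\degree)$ then forces the remainder below $\eps_{\mathrm{PRG}}/(8\len)$, completing the proof. The analogous argument for $\Indminus$ is identical after negating the final indicator.
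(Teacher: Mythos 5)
Your proposal captures the broad shape of the argument—Taylor expansion in the good case, plus an argument that the mollifier is essentially zero in the bad case—but the bad case is handled with a step that does not hold.

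You claim that if some $\mathrm{Check}_{i,j}(x) < \eta$ then, ``by the Lipschitz property, every point within the $\sqrt{\noisefixed}$-neighborhood still has that check below $2\eta$, so $\allsoftchecks(\bW) \le 2\eta$ deterministically.'' There is no such deterministic Lipschitz bound. The check is a smoothed threshold applied to a ratio $\st_u(x)/\st_v(x)$ of polynomials of degree $O(\degree)$ built from $\p$, and the values of those polynomials at $\sqrt{1-\noisefixed}\,x + \sqrt{\noisefixed}\,z$ can change by an arbitrary multiplicative factor as $z$ varies, depending on $\p$; the only control one has is \emph{probabilistic}, and even that control is contingent on certain \emph{other} checks passing. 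This is precisely why the paper introduces a separate family of \emph{analysis checks} that are ``shifted right by one'' relative to the mollifier checks, orders them carefully, and then proves \Cref{lem:page13}: if the first failing analysis check is at position $(i^*,j^*)$, one uses the passing of all earlier (deeper or to-the-left) analysis checks, together with \Cref{thm:desideratum2}, \Cref{cor:usefulhvbound}, and the noise-insensitivity extension lemma \Cref{lem:magicker-lemma}, to show that the two random statistics $\bst_{i^*,j^*-1}, \bst_{i^*,j^*}$ concentrate tightly enough around their (far-apart) means that, with probability $1 - \poly(\degree^\taylor)\anticoncgap^{\taylor/2}$, the corresponding \emph{mollifier} check evaluates to exactly $0$. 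Nothing in this is deterministic, and the shift between the two families of checks is load-bearing.

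A second, quantitative problem: you Taylor-expand $\allsoftchecks$ to order $\wise = \Theta(\degree)$. The $j$-th derivative of the shape function $\sigma$ is only controlled as $j^{O(j)}$, so a $\Theta(\degree)$-order remainder would reintroduce $\degree^{O(\degree)}$ factors—exactly the losses the paper is designed to avoid. The paper instead Taylor-expands to a fixed small order $\taylor = 4$ (see \Cref{eq:taylor}, \Cref{lem:taylor}), and uses $\wise = \Theta(\degree)$ only because the statistics $\st_{i,j}$ have degree up to $2\degree$, so matching $2\taylor$-th moments of degree-$2\degree$ polynomials requires $4\taylor\degree$-wise independence. The hyperconcentration of the $\bst_{i,j}$'s (\Cref{lem:we-got-hyperconcentration}) combined with the $(2\taylor,\poly((\taylor\degree)^\taylor))$-relaxedness of the mollifier (\Cref{lem:mollifier-is-relaxed}) is what makes the constant-order Taylor remainder small, not a high-order expansion.

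Your handling of $\bone[\sign(\p(\bW))]$ in the good case is in the right spirit (this is \Cref{claim:sign-p-almost-constant}), though the paper compares to $\sign(\mu_p)$ where $\mu_p = \E[\p(\bw)]$ rather than to $\sign(\p(x))$; the two can differ.
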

From this, a ``Replacement Method'' argument easily yields the following:
\begin{corollary}                                     \label{cor:softerror}
    For $\bZ$ as in \Cref{thm:1} and $\bx\sim \normal(0,1)^n$ we have
   \[
        \abs{\E[\Indplus(\bZ))] - \E[\Indplus(\bx)]} \leq \epsprg/4,
   \]
    and similarly for $\Indminus$
\end{corollary}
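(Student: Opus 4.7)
The plan is to use a standard hybrid (``Replacement Method'') argument, interpolating in $\len$ steps from $\bx$ to $\bZ$ and invoking \Cref{thm:hybridstep} at each step. The starting observation would be that since $\len \cdot \noisefixed = 1$, the Gaussian $\bx \sim \normal(0,1)^n$ has the same distribution as $\sqrt{\noisefixed}\bg_1 + \cdots + \sqrt{\noisefixed}\bg_{\len}$ for independent standard $n$-dimensional Gaussians $\bg_1, \dots, \bg_{\len}$ (the per-coordinate variances summing to $\len\noisefixed = 1$). Coupling these with the pseudorandom $\bz_j$'s so that the two families are mutually independent, I would define the hybrids
\[
    \bH_i \;\coloneqq\; \sqrt{\noisefixed}\bz_1 + \cdots + \sqrt{\noisefixed}\bz_i + \sqrt{\noisefixed}\bg_{i+1} + \cdots + \sqrt{\noisefixed}\bg_\len,
\]
for $i = 0, 1, \dots, \len$, so that $\bH_0 \stackrel{d}{=} \bx$ and $\bH_\len = \bZ$. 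The conclusion will then follow by the triangle inequality once I can show $\abs*{\E[\Indplus(\bH_i)] - \E[\Indplus(\bH_{i-1})]} \leq \epsprg/(4\len)$ for each $i$.

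To obtain each per-step bound, I would condition on the common ``off-$i$'' sum $\by_i \coloneqq \sqrt{\noisefixed}\bz_1 + \cdots + \sqrt{\noisefixed}\bz_{i-1} + \sqrt{\noisefixed}\bg_{i+1} + \cdots + \sqrt{\noisefixed}\bg_\len$, which by the coupling is independent of both $\bz_i$ and $\bg_i$. Since the $\len - 1$ terms making up $\by_i$ carry total per-coordinate variance $(\len - 1)\noisefixed = 1 - \noisefixed$, the natural reparametrization upon conditioning on $\by_i = y$ is $x \coloneqq y/\sqrt{1-\noisefixed}$, so that $\bH_i \mid \by_i = y$ and $\bH_{i-1} \mid \by_i = y$ have precisely the two forms $\sqrt{1-\noisefixed}\,x + \sqrt{\noisefixed}\bz_i$ and $\sqrt{1-\noisefixed}\,x + \sqrt{\noisefixed}\bg_i$ appearing in \Cref{thm:hybridstep}. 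Since $\bz_i$ is $\wise$-wise independent by hypothesis and $\bg_i$ is truly Gaussian (hence in particular $\wise$-wise independent), \Cref{thm:hybridstep} gives the desired conditional bound of $\epsprg/(4\len)$; averaging over $\by_i$ and then summing over $i = 1, \dots, \len$ finishes the argument. The argument for $\Indminus$ is identical.

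I do not foresee any serious obstacle: the whole proof is a mechanical deduction from \Cref{thm:hybridstep}. The only point requiring a bit of care is that \Cref{thm:hybridstep} is stated for a \emph{fixed} $x \in \R^n$, so the replacement at index $i$ must be performed conditionally on $\by_i$, and one must confirm that this conditioning does not disturb the $\wise$-wise independent Gaussian distribution of $\bz_i$ or $\bg_i$ (which it does not, by mutual independence of the two families).
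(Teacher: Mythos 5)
Your proof is correct and is essentially the same replacement/hybrid argument the paper uses: decompose $\bx$ as a sum of $\len$ independent rescaled Gaussians, telescope over per-step replacements of $\bg_i$ by $\bz_i$, reparametrize the "off-$i$" sum by dividing by $\sqrt{1-\noisefixed}$ to match the form in \Cref{thm:hybridstep}, and sum the $\len$ bounds of $\epsprg/(4\len)$. The only difference is cosmetic (your hybrid index runs in the opposite direction), and you make explicit the conditioning step that the paper leaves implicit when applying a fixed-$x$ lemma with a random center.
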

\begin{proof}
    We may view $\bx \sim \normal(0,1)^n$ as
    \[
        \bx = \sqrt{\noisefixed}\bx_1 + \sqrt{\noisefixed} \bx_2 + \cdots + \sqrt{\noisefixed} \bx_{\len},
    \]
    where $\bx_1, \dots, \bx_{\len} \sim \normal(0,1)^n$ are independent. For $0 \leq t \leq \len$, write
    \[
        \bw_t = \sqrt{\noisefixed}\bx_1 + \cdots + \sqrt{\noisefixed}\bx_t + \sqrt{\noisefixed}\bz_{t+1} + \cdots + \sqrt{\noisefixed}\bz_\len,
    \]
    so $\bw_0 = \bZ$ and $\bw_\len = \bx$.  Thus by telescoping,
    \begin{equation}    \label[ineq]{ineq:telescope}
        \abs{\E[\Indplus(\bZ))] - \E[\Indplus(\bx)]} \leq \sum_{t = 1}^\len \abs*{\E[\Indplus(\bw_{t-1}))] - \E[\Indplus(\bw_t))]}.
    \end{equation}
    For a fixed $1 \leq t \leq \len$, if we write
    \[
        \bv = \sqrt{\tfrac{\noisefixed}{1-\noisefixed}}\bx_1 + \cdots + \sqrt{\tfrac{\noisefixed}{1-\noisefixed}}\bx_{t-1} + \sqrt{\tfrac{\noisefixed}{1-\noisefixed}}\bz_{t+1} + \cdots + \sqrt{\tfrac{\noisefixed}{1-\noisefixed}}\bz_{\len},
    \]
    then
    \begin{equation}    \label{onesteperr}
        \abs*{\E[\Indplus(\bw_{t-1}))] - \E[\Indplus(\bw_t))]}
        = \abs*{\E\bracks*{\Indplus\parens*{\sqrt{1-\noisefixed} \cdot \bv + \sqrt{\noisefixed} \bz_t}} - \E\bracks*{\Indplus\parens*{\sqrt{1-\noisefixed} \cdot \bv + \sqrt{\noisefixed} \bx_t}}}.
    \end{equation}
    Since $\bz_t$ and $\bx_t$ are each $\wise$-wise independent $n$-dimensional Gaussian vectors, \Cref{thm:hybridstep} implies that
    $
        \eqref{onesteperr} \leq \epsprg/(4\len).
    $
    Putting this into \Cref{ineq:telescope} completes the proof.
\end{proof}

With the above ingredients in place, \Cref{thm:1} follows almost immediately:
\begin{proof}[Proof of \Cref{thm:1}]
    Since $\sign(\p) \leq 1-2\Indminus$ pointwise,
    \[
        \E[\sign(\p(\bZ))] \leq \E[1-2\Indminus(\bZ)] \leq \E[1-2\Indminus(\bx)] + \epsprg/2 \leq \E[\sign(\p(\bx))] + \epsprg,
    \]
    where the second inequality is thanks to \Cref{cor:softerror} and the third is thanks to \Cref{thm:hard-to-soft}.  The reverse direction, which lower bounds $\E[\sign(\p(\bZ))]$ by $\E[\sign(\p(\bx))] - \epsprg$ using $\Indplus$, is similar.
\end{proof}

\Cref{thm:1} shows that a scaled sum of $\wise$-wise independent Gaussians fools degree-$d$ PTFs, but such a random variable is not quite the desired PRG since perfectly generating even a single Gaussian random variable 
formally requires infinitely many random bits. However, the following construction of Kane tells us that for fooling degree-$d$ Gaussian PTFs, it essentially suffices to find the least~$L$ such that they are fooled by sums of~$L$ independent $k$-wise Gaussians; then, one gets an explicit PRG with seed length $O(k L \cdot d \log n)$.

\begin{theorem}   [Section~6 of \cite{kane11focs}]  \label{thm:kanes-prg}
  Let $n, d \in \N$, $0 < \eps < 1$.  Suppose that for some~$k, \len \in \N$,  degree-$d$ Gaussian PTFs are $(\eps/2)$-fooled by $\sqrt{\noisegeneric} \bz_1 + \cdots + \sqrt{\noisegeneric} \bz_\len$, where $\noisegeneric = 1/\len$ and $\bz_1, \dots, \bz_\len$ are $k$-wise independent $n$-dimensional Gaussians.  Then there is an explicit PRG for $\eps$-fooling degree-$d$ $n$-dimensional Gaussian PTFs with seed length
    \[
        O(k \len \cdot d \log (d \len n / \eps)),
    \]
    which is simply $O(k \len \cdot d \log n)$ under the reasonable assumptions that $d, 1/\eps, \len \leq \poly(n)$.
\end{theorem}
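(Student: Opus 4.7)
The plan is to replace the exact $\wise$-wise independent Gaussians $\bz_1,\ldots,\bz_\len$ in the hypothesis by explicitly-samplable discretizations, and argue by a hybrid that the substitution costs at most $\eps/2$ on any degree-$\degree$ PTF; combined with the $(\eps/2)$-fooling hypothesis, this yields an explicit $\eps$-PRG with the advertised seed length.

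First, I would fix a precision parameter $N = (\degree \len n/\eps)^{\Theta(\degree)}$ (so $\log N = \Theta(\degree \log(\degree \len n/\eps))$) and a truncation radius $T = O(\sqrt{\log N})$, and let $\tilde{\normal}$ be a discrete distribution supported on $M = O(NT)$ points in $[-T,T]$ that approximates $\normal(0,1)$, obtained e.g.\ via inverse-CDF sampling from the uniform distribution on $\{1,\ldots,M\}$. Each pseudorandom vector $\tilde{\bz}_i$ is then defined to have $n$ coordinates drawn $\wise$-wise independently from $\tilde{\normal}$; this is implemented by composing the coordinatewise inverse-CDF map with any standard $\wise$-wise independent distribution on $[M]^n$ (e.g.\ polynomial evaluation over $\mathbb{F}_{2^s}$ with $s = \lceil \log(Mn) \rceil$), which uses $O(\wise \log(Mn)) = O(\wise \degree \log(\degree \len n/\eps))$ random bits per vector. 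Drawing $\len$ such vectors from independent seeds gives total seed length $O(\wise \len \cdot \degree \log(\degree \len n/\eps))$, as claimed.

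Second, the analytic heart is a one-step discretization lemma: for any fixed ``background'' $\bU \in \R^n$ and any degree-$\degree$ PTF $\sign(\p)$,
\[
\bigl|\E[\sign(\p(\bU + \sqrt{\noisegeneric}\,\bz))] - \E[\sign(\p(\bU + \sqrt{\noisegeneric}\,\tilde{\bz}))]\bigr| \leq \eps/(2\len),
\]
where $\bz$ is an exact and $\tilde{\bz}$ is a discretized $\wise$-wise independent Gaussian. I would couple $\bz, \tilde{\bz}$ so that they differ coordinatewise by at most $1/N$ outside a Gaussian tail event of probability $n\,e^{-\Omega(T^2)} \ll \eps/\len$. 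After normalizing to $\Var[\p] = 1$, a routine moment bound on $\nabla \p$ (which is a vector of degree-$(\degree-1)$ polynomials, so has $\|\nabla\p\|_2 \leq \poly(\degree)$ except on an event of probability $\ll \eps/\len$, with the estimate carrying over from exact to $\wise$-wise independent Gaussians since $\wise = \Theta(\degree)$ preserves the relevant moments) implies that on the typical event the discretization perturbs $\p$ by at most $\tau = \poly(\degree)/N$. The Carbery--Wright anticoncentration inequality then gives $\Pr[|\p| \leq \tau] \leq O(\degree \tau^{1/\degree})$, which is $\leq \eps/(2\len)$ once $N$ is set as above. Telescoping across the $\len$ hybrid steps finishes the proof.

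The main obstacle is the one-step lemma: because $\sign(\p)$ is discontinuous, neither a smoothness-only argument nor a moment-only argument suffices on its own. The argument must combine a high-probability bound on how far the discretization can move $\p$ (from gradient/moment control of $\p$ on the relevant slice) with a separate anticoncentration estimate for $\p$ itself (Carbery--Wright), and both must hold uniformly in the background $\bU$ appearing in the hybrid. Technical subtleties include transferring the gradient/moment estimate from exact to $\wise$-wise independent Gaussians (routine since $\wise = \Theta(\degree)$ matches all moments of any polynomial of degree $O(\degree)$) and checking that the Carbery--Wright loss after conditioning on the ``slice'' $\bU + \sqrt{\noisegeneric}\cdot$ is only polynomial in $\degree$. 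With $N$ chosen as above, each contribution to the one-step error is at most $\eps/(4\len)$, summing to at most $\eps/2$ across the hybrid.
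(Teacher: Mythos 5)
Your overall plan — discretize each $\wise$-wise independent Gaussian vector, couple the discretization to the exact vector, and control the resulting perturbation of $\p$ via a gradient/moment bound combined with Carbery--Wright anticoncentration — is in the same spirit as the paper, which converts bit-seeds to near-Gaussians via the Box--Muller transform and then black-boxes the discretization error analysis into Lemma~21 of \cite{kane11focs}. Your inverse-CDF discretization and the seed-length accounting $O(\wise\len\degree\log(\degree\len n/\eps))$ are fine.

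However, the per-vector telescoping hybrid has a genuine gap at the anticoncentration step. In hybrid step $t$ you need $\Pr[|\p(W_{t-1})|\le\tau]\le O(\degree\tau^{1/\degree})$ where $W_{t-1}=\sqrt{\noisegeneric}(\wt{\bz}_1+\cdots+\wt{\bz}_{t-1}+\bz_t+\cdots+\bz_\len)$ is the partially discretized sum. You appeal to Carbery--Wright ``uniformly in the background $U$,'' but the Carbery--Wright bound is stated relative to $\|\p(U+\sqrt{\noisegeneric}\,\cdot)\|_2$, which depends on $U$ and can be arbitrarily small relative to $\|\p\|_2$ — so a bound with a \emph{fixed} $\tau$ simply does not hold uniformly in $U$ (take $\p(x)=x_1^\degree$ and $U$ with $U_1$ near $0$). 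Averaging over the hybrid distribution of $U$ does not rescue this: $W_{t-1}$ is neither a true Gaussian (so Carbery--Wright is inapplicable) nor a pure sum of $\wise$-wise independent Gaussians (so the $(\eps/2)$-fooling hypothesis is also inapplicable). The only handle is to transfer anticoncentration from the undiscretized sum $W_0$, but each such transfer via the fooling hypothesis costs an additive $\eps/2$, and over $\len$ hybrid steps this accumulates to $\Theta(\len\eps)$, which is unacceptable.

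The fix — and what Lemma~21 of \cite{kane11focs}, cited by the paper, effectively does — is to couple \emph{all} $\len$ vectors at once rather than one at a time, so that the exact sum $W=\sqrt{\noisegeneric}\sum_i\bz_i$ and the discretized sum $\wt{W}=\sqrt{\noisegeneric}\sum_i\wt{\bz}_i$ are uniformly close on the coupling event. Then on that event $\sign(\p(\wt{W}))\le\sign(\p(W)+\tau)$ (and symmetrically), the shifted functions $\sign(\p\pm\tau)$ are again degree-$\degree$ PTFs, and one invokes the fooling hypothesis a \emph{single} time — paying one $\eps/2$ — to pass from $W$ to a true Gaussian, where Carbery--Wright applies cleanly. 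This global-coupling decomposition removes the per-slice anticoncentration demand entirely, and your gradient/moment control then only needs to hold for a true Gaussian, not for slices of the hybrid.
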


As \cite{kane11focs} does not quite explicitly state \Cref{thm:kanes-prg}, we outline a proof in \Cref{app:omitted} for completeness.  \Cref{thm:main} follows immediately from \Cref{thm:1} and \Cref{thm:kanes-prg}.


\medskip

The remaining tasks are to define $\allsoftchecks$ and prove \Cref{thm:hard-to-soft,thm:hybridstep}.
We define $\allsoftchecks$ in \Cref{sec:defining-allsoftchecks} and prove \Cref{thm:hard-to-soft} and
\Cref{thm:hybridstep} in \Cref{sec:hard-to-soft,sec:hybridstep} respectively.


\section{Probabilistic preliminaries}

In this section we introduce notation and collect several probabilistic facts we will use.  Throughout, \textbf{boldface} is used to indicate random variables, $\normal(0,1)$ denotes the standard Gaussian (normal) distribution, and $\normal(0,1)^n$ is the associated $n$-dimensional product distribution.

\subsection{Bits, Gaussians, and $k$-wise independence} \label{sec:bits-gaussians}
Although this work is mainly concerned with Gaussian random variables, many (but not all) of the tools in it ``generalize'' to Boolean $\pm 1$ random variables. In order to illustrate this, we will  provide some definitions and notations in this section that work in both cases.  However the Boolean results are never strictly needed in this work, and the reader may prefer to ignore them and focus only on the Gaussian case.

The fact that PTFs over Boolean space generalize PTFs over Gaussian space holds because,  for large~$M$ and $\bx^{(1)}, \dots, \bx^{(n)} \sim \{\pm 1\}^M$ uniform and independent,
\begin{equation}    \label{eqn:bits-to-gaussians}
    \by = \parens*{\tfrac{1}{\sqrt{M}} \sum_{i=1}^M \bx^{(1)}_i, \dots, \tfrac{1}{\sqrt{M}} \sum_{i=1}^M \bx^{(n)}_i}
\end{equation}
is ``close'' to having an $\normal(0,1)^n$ distribution, and because a degree-$d$ polynomial $p(\by)$ is also a degree-$d$ polynomial in the $\bx^{(i)}_j$'s.  One sense of ``closeness'' here is that each $\by_i$ may be coupled with a true Gaussian $\bz_i \sim \normal(0,1)$ in such a way that $|\by_i - \bz_i| \leq \frac{1}{\sqrt{M}}$ except with probability at most $O(\frac{1}{\sqrt{M}})$.

\begin{definition} \label{def:wise-independence}
    Let $\calD$ be a probability distribution on $\R$.  We say that a random vector $\bz$ on $\R^n$ has a \emph{$k$-wise independent $\calD$~distribution} if each $\bz_i$ has distribution~$\calD$, and  for all choices of $k$~indices $1 \leq i_1, \dots, i_k \leq n$, the random variables $\bz_{i_1}, \dots, \bz_{i_k}$ are independent.  Examples include $\calD$ being the uniform distribution on $\{\pm 1\}$ (``$k$-wise independent bits'') and the main concern in this paper, $\calD$ being $\normal(0,1)$ (``$k$-wise independent Gaussians'').
\end{definition}
\begin{remark}  \label{rem:why-wise}
    The main way we use $k$-wise independence is to say that if $\bx$ is $n$-wise independent, $\bz$~is $k$-wise independent, and $p : \R^n \to \R$ is a polynomial of degree at most~$k$, then $\E[p(\bz)] = \E[p(\bx)]$.
\end{remark}

\subsection{Polynomial expansions}
We recall standard facts and notation from analysis of Boolean functions and Hermite polynomials; see, e.g.,~\cite{ODbook} for a reference, and in particular \cite[Ch.~11.2]{ODbook} for Hermite analysis.

Every function $g : \{\pm 1\}^n \to \R$ can be represented by a multilinear polynomial,
\[
    g(x) = \sum_{\alpha \in \{0,1\}^n} \wh{g}(\alpha) x^\alpha,
\]
where each $\wh{g}(\alpha) \in \R$ and we use the standard multi-index notation $x^\alpha = \prod_{i=1}^n x_i^{\alpha_i}$ and $|\alpha| = \sum_i \alpha_i$.  In ``Gaussian space'' the only functions we will ever analyze are polynomials; every degree-$d$ polynomial $g : \R^n \to \R$ can be written in \emph{Hermite polynomial decomposition} as
\[
    g(x) = \sum_{\substack{\alpha \in \N^n \\ |\alpha| \leq d}} \wh{g}(\alpha) h_\alpha(x),
\]
where each $\wh{g}(\alpha) \in \R$, and the \emph{multivariate Hermite polynomial} polynomial $h_\alpha$ is given by $h_\alpha(x) =h_{\alpha_1}(x_1) \cdots h_{\alpha_n}(x_n)$, where $h_k = \frac{1}{\sqrt{k!}} H_k$ is a normalized version of the univariate degree-$k$ ``probabilists' Hermite polynomial''~$H_k$. The multivariate Hermite polynomials $h_\alpha$ are orthonormal under $\normal(0,1)^n$. Also, in the notation of \Cref{eqn:bits-to-gaussians},
\begin{equation} \label{eqn:hermite-convergence}
    \sum_{\substack{\alpha \in \{0,1\}^n \\ |\alpha| = k}} \parens*{\tfrac{1}{\sqrt{M}}\bx^{(i)}}^\alpha \xrightarrow{M \to \infty} h_k(\bz_i), \quad \bz \sim \normal(0,1).
\end{equation}

Let $g$ denote either an $n$-variate Boolean or Gaussian polynomial. We use standard notation $\E[g]$ for its mean (that is, $\E[g(\bx)]$ for $\bx \sim \{\pm 1\}^n$ in the former case, $\bx \sim \normal(0,1)^n$ in the latter), $\|g\|_r = \E[|g(\bx)|^r]^{1/r}$ for its $r$-norm ($r \geq 1$), and $\Var[g] = \E[g^2] - \E[g]^2$ for its variance.  It holds that
\[
    \E[g] = \wh{g}(0), \quad \E[g^2] = \sum_{\alpha} \wh{g}(\alpha)^2, \quad \text{hence } \Var[g] = \sum_{\alpha \neq 0^n} \wh{g}(\alpha)^2.
\]

    We write $g^{< k} = \sum_{|\alpha| < k} \wh{g}(\alpha) h_\alpha$ for $k \in \N$, and similarly write $g^{= k}$ and $g^{\geq k}$.  We also write $\mathrm{W}^{< k}[g] = \E[(g^{< k})^2] = \sum_{|\alpha| < k} \wh{g}(\alpha)^2$ for the ``weight of $g$ below level $k$'', and similarly write $\mathrm{W}^{=k}[g]$ and $\mathrm{W}^{\geq k}[g]$.

\subsection{Noise and zooms}

A basic fact about Gaussians is that if $\bx, \by \sim \normal(0,1)^n$ are independent and $0 \leq \noisegeneric \leq 1$, then $\sqrt{1-\noisegeneric} \bx + \sqrt{\noisegeneric} \by$ is also distributed as $\normal(0,1)$.  In this work, $\noisegeneric$ typically denotes a ``small'' quantity; for fixed $x \in \R^n$ we view $\sqrt{1-\noisegeneric} x + \sqrt{\noisegeneric} \by$ as a ``$\noisegeneric$-noisy'' version of~$x$, and we view changing a polynomial~$g$'s input from $\bx \sim \normal(0,1)^n$ to $\sqrt{1-\noisegeneric} x + \sqrt{\noisegeneric} \by$ as ``zooming into $g$ at~$x$ with scale~$\noisegeneric$''.  We make a precise definition:

\begin{definition}
    For $g$ an $n$-variate Gaussian polynomial, $0 \leq \noisegeneric \leq 1$, and $x \in \R^n$, we define the function $\zoom{g}{\lambda}{x}$ by
    \[
        \zoom{g}{\lambda}{x}(y) = g\parens*{\sqrt{1-\lambda} x + \sqrt{\lambda} y}.
    \]
    The function $\zoom{g}{\lambda}{x}(y)$ is a polynomial in~$y$ of the same degree as~$g$, and we (nonstandardly) refer to it as the \emph{$\noisegeneric$-zoom of $g$ at $x$}.
\end{definition}

\begin{remark} \label{rem:zoom}
    Referring again to \Cref{eqn:bits-to-gaussians}, one may verify that a $\noisegeneric$-zoom of $g$ at a random~$x$ is the Gaussian analogue of a standard Boolean concept: a \emph{random restriction} of a function $g : \{\pm 1\}^n \to \R$ at $x \in \{\pm 1\}^n$, meaning a subfunction obtained by proceeding through each coordinate~$i$, and either fixing the $i$th input to be $x_i$ with probability~$1-\noisegeneric$, or else leaving it unfixed (``free'') with probability~$\noisegeneric$.
\end{remark}

The fact that random restrictions of a Boolean function interact well with its polynomial expansion is well known; e.g.~\cite[Prop.~4.17]{ODbook} gives a formula for the expected square of any Fourier coefficient of a Boolean function under a random restriction.  Carefully taking the ``Gaussian special case'' of this (using \Cref{eqn:hermite-convergence}) yields the below analogue for random zooms.  For completeness, we give a self-contained proof of this analogue in \Cref{app:omitted}.

\begin{proposition}   \label{prop:2}
    For $g : \R^n \to \R$ a polynomial,  $0 \leq \noisegeneric \leq 1$, and $\beta \in \N^n$,
    \[
        \Ex_{\bx \sim \normal(0,1)^n}\bracks*{\wh{\restr{g}{\noisegeneric}{\bx}}(\beta)^2} 
        = \sum_{\gamma \geq \beta} \Pr[\textnormal{Bin}(\gamma,\noisegeneric) = \beta]\,\wh{g}(\gamma)^2,
    \]
    where $\textnormal{Bin}(\gamma,\noisegeneric)$ denotes an $n$-dimensional random vector with independent components, the $j$th of which is distributed as the binomial random variable $\textnormal{Bin}(\gamma_j,\noisegeneric)$.
\end{proposition}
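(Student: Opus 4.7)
The plan is to expand $g$ in the Hermite basis, apply a Hermite addition formula to decompose each $h_\gamma\parens*{\sqrt{1-\lambda}\,x + \sqrt{\lambda}\,y}$ as a sum of products $h_{\gamma-\beta}(x)\,h_\beta(y)$, read off the $\beta$-th Hermite coefficient of $\restr{g}{\lambda}{x}$ viewed as a polynomial in $y$, and then square and average over $\bx$ using orthonormality.

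The only non-trivial ingredient is the univariate addition formula
\[ h_k\parens*{\sqrt{1-\lambda}\,x + \sqrt{\lambda}\,y} = \sum_{j=0}^k \sqrt{\binom{k}{j}}\,(1-\lambda)^{(k-j)/2}\,\lambda^{j/2}\,h_{k-j}(x)\,h_j(y), \]
which I would derive from the exponential generating function $\sum_{n\ge 0} \frac{t^n}{\sqrt{n!}} h_n(x) = e^{tx - t^2/2}$ by factoring
\[ e^{t(\sqrt{1-\lambda}\,x+\sqrt{\lambda}\,y) - t^2/2} \;=\; e^{t\sqrt{1-\lambda}\,x - t^2(1-\lambda)/2} \cdot e^{t\sqrt{\lambda}\,y - t^2\lambda/2} \]
and comparing coefficients of $t^n$. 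Tensoring over the $n$ coordinates immediately promotes this to the multivariate statement
\[ h_\gamma\parens*{\sqrt{1-\lambda}\,x + \sqrt{\lambda}\,y} = \sum_{\beta \leq \gamma} \sqrt{\binom{\gamma}{\beta}}\,(1-\lambda)^{|\gamma-\beta|/2}\,\lambda^{|\beta|/2}\,h_{\gamma-\beta}(x)\,h_\beta(y), \]
where $\binom{\gamma}{\beta} := \prod_i \binom{\gamma_i}{\beta_i}$.

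Next, substituting the Hermite expansion $g = \sum_\gamma \wh{g}(\gamma)\,h_\gamma$ and collecting the coefficient of $h_\beta(y)$ yields
\[ \wh{\restr{g}{\lambda}{x}}(\beta) = \sum_{\gamma \geq \beta} \wh{g}(\gamma)\,\sqrt{\binom{\gamma}{\beta}}\,(1-\lambda)^{|\gamma-\beta|/2}\,\lambda^{|\beta|/2}\,h_{\gamma-\beta}(x). \]
Squaring in $x$ and averaging over $\bx \sim \normal(0,1)^n$, orthonormality of the Hermite basis (together with the fact that distinct $\gamma \geq \beta$ produce distinct indices $\gamma - \beta$) kills every cross term, leaving
\[ \Ex_{\bx}\bracks*{\wh{\restr{g}{\lambda}{\bx}}(\beta)^2} = \sum_{\gamma \geq \beta} \wh{g}(\gamma)^2\,\binom{\gamma}{\beta}\,\lambda^{|\beta|}\,(1-\lambda)^{|\gamma-\beta|}. \]

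To finish, I would observe that by independence of the coordinates of $\textnormal{Bin}(\gamma,\lambda)$,
\[ \Pr[\textnormal{Bin}(\gamma,\lambda) = \beta] = \prod_j \binom{\gamma_j}{\beta_j}\,\lambda^{\beta_j}\,(1-\lambda)^{\gamma_j - \beta_j} = \binom{\gamma}{\beta}\,\lambda^{|\beta|}\,(1-\lambda)^{|\gamma-\beta|}, \]
which matches the coefficient above and gives the claimed identity. There isn't really a hard step here; all the content sits in the Hermite addition formula, and the generating function makes that routine.
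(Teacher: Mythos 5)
Your proof is correct and follows essentially the same route as the paper's: derive the univariate Hermite addition formula from the generating function $e^{tz - t^2/2} = \sum_m \frac{t^m}{\sqrt{m!}} h_m(z)$ by splitting $-1/2 = -(1-\lambda)/2 - \lambda/2$, tensor to the multivariate case, read off the $\beta$-coefficient of $\restr{g}{\lambda}{x}$, and apply Parseval over $\bx$. The only cosmetic difference is that you carry the explicit coefficient $\sqrt{\binom{k}{j}}(1-\lambda)^{(k-j)/2}\lambda^{j/2}$ through the calculation and recognize it as $\sqrt{\Pr[\textnormal{Bin}(k,\lambda)=j]}$ at the end, whereas the paper packages it as the binomial probability from the start.
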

Summing the above proposition over all multi-indices $\beta$ of a given weight $|\beta| = m$ immediately yields the following useful corollary:
\begin{corollary}
\label{cor:attenuate0}
    For $g : \R^n \to \R$ a polynomial,  $0 \leq \noisegeneric \leq 1$, and $m \in \N$,
    \[
        \Ex_{\bx \sim \normal(0,1)^n}\bracks*{\mathrm{W}^{= m}[\restr{g}{\noisegeneric}{\bx}]} = \sum_{M} \Pr[\textnormal{Bin}(M,\noisegeneric) = m]\cdot \mathrm{W}^{= M}[g].
    \]
\end{corollary}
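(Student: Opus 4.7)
The plan is to apply Proposition \ref{prop:2} termwise and sum. By the definition of $\W^{=m}$,
\[
    \W^{=m}[\restr{g}{\noisegeneric}{\bx}] = \sum_{\beta \in \N^n : |\beta| = m} \wh{\restr{g}{\noisegeneric}{\bx}}(\beta)^2,
\]
so taking expectation over $\bx \sim \normal(0,1)^n$ and invoking Proposition \ref{prop:2} on each term gives
\[
    \Ex_{\bx}\bracks*{\W^{=m}[\restr{g}{\noisegeneric}{\bx}]} = \sum_{|\beta| = m}\ \sum_{\gamma \geq \beta} \Pr[\textnormal{Bin}(\gamma, \noisegeneric) = \beta] \cdot \wh{g}(\gamma)^2.
\]

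Next I would swap the order of summation to make $\gamma$ the outer variable, and further group by $M = |\gamma|$, producing
\[
    \sum_M\ \sum_{|\gamma| = M} \wh{g}(\gamma)^2 \cdot \sum_{\beta \leq \gamma,\ |\beta| = m} \Pr[\textnormal{Bin}(\gamma, \noisegeneric) = \beta].
\]
The one substantive step is to identify the inner sum as $\Pr[\textnormal{Bin}(M, \noisegeneric) = m]$. This follows from the fact that the components of $\textnormal{Bin}(\gamma, \noisegeneric)$ are independent binomials $\textnormal{Bin}(\gamma_j, \noisegeneric)$, so their componentwise total is distributed as $\textnormal{Bin}(|\gamma|, \noisegeneric) = \textnormal{Bin}(M, \noisegeneric)$; summing the joint pmf over all $\beta \leq \gamma$ with $|\beta| = m$ is exactly the marginalization onto the event that this total equals $m$.

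Once this observation is in hand, the factor $\Pr[\textnormal{Bin}(M, \noisegeneric) = m]$ depends only on $M$ and pulls out of the $\gamma$-sum, leaving $\sum_{|\gamma| = M} \wh{g}(\gamma)^2 = \W^{=M}[g]$ by definition, which gives the stated identity. I do not anticipate a real obstacle here — the entire argument is bookkeeping on top of Proposition \ref{prop:2} together with the standard additivity property of independent binomials — which is presumably why the paper says the corollary follows ``immediately.''
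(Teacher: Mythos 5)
Your proof is correct and is exactly the argument the paper has in mind when it says the corollary follows by ``summing the above proposition over all multi-indices $\beta$ of a given weight $|\beta|=m$.'' You have merely spelled out the bookkeeping (interchange of sums, and the fact that the componentwise total of independent $\textnormal{Bin}(\gamma_j,\noisegeneric)$ variables is $\textnormal{Bin}(|\gamma|,\noisegeneric)$), which is harmless and accurate.
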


\subsection{Noise operator and hypercontractivity}
Considering the mean of the zoom of a polynomial leads to the ``Gaussian noise'' (or ``Ornstein--Uhlenbeck'') operator (see, e.g., \cite[Def.~11.12]{ODbook}):
\begin{definition} \label{def:gaussian-noise-operator}
    Given $0 < \rho \leq 1$, the operator $\U_\rho$ acts on Gaussian polynomials $g : \R^n \to \R$ via
    \[
        (\U_\rho g)(x) = \Ex_{\by \sim \normal(0,1)^n}\bracks*{g\parens*{\rho x + \sqrt{1-\rho^2} \by)}} = \Ex_{\by \sim \normal(0,1)^n}\bracks*{ \zoom{g}{(1-\rho^2)}{x}(\by)}.
    \]
    It is well known that $\U_\rho$ acts diagonally in the Hermite polynomial basis $(h_\alpha)_{\alpha \in \N^n}$:
    \begin{equation}    \label{eqn:U-formula}
        \U_\rho g = \sum_{\alpha \in \N^N} \rho^{|\alpha|} \wh{g}(\alpha) h_\alpha.
    \end{equation}
    In particular, if $g$ is a degree-$d$ polynomial, so too is $\U_\rho g$.

    We may also write $\U_\rho$ for the analogous Boolean noise operator (more usually denoted $\mathrm{T}_{\rho}$, see \cite[Def.~2.46]{ODbook}), definable for $g : \{\pm 1\}^n \to \R$ either through \Cref{eqn:U-formula}, or by stipulating that $\U_\rho g(x)$ is the mean of a random restriction of~$g$ at~$x$ with $\rho$-probability of fixing a coordinate.

    Finally, somewhat unusually, we will need to extend the definition of $\U_\rho$ to $\rho > 1$, which we can do via the formula \Cref{eqn:U-formula}; equivalently, by stipulating that $\U_{\rho^{-1}} = \U_{\rho}^{-1}$. For $\rho > 1$ this operator no longer has a ``probabilistic interpretation'', but it still maps degree-$d$ polynomials to degree-$d$ polynomials.
\end{definition}
\begin{remark}
    We will several times use the ``semi-group property'', $\U_{\rho_1} \U_{\rho_2} = \U_{\rho_1 \rho_2}$, which is immediate from \Cref{eqn:U-formula}.
\end{remark}

At one point in our analysis we will also need the notion of \emph{Gaussian noise stability}:
\begin{definition} \label{def:gaussian-noise-stability}
    For $g: \R^n \to \R$ and $\rho > 0$,
    \[
        \Stab_\rho[g] = \Ex_{\by \sim \normal(0,1)^n}[g(\by) \cdot \U_\rho g(\by)] = \sum_{\alpha} \rho^{|\alpha|} \wh{g}(\alpha)^2,
    \]
    where the last equality is by \Cref{eqn:U-formula} and orthonormality of Hermite polynomials.
\end{definition}

\paragraph{Hypercontractivity.}
A nontrivial and highly useful property of the Boolean/Gaussian noise operator~$\U_\rho$ is \emph{hypercontractivity} (see, e.g., \cite[Secs.~9.2,~11.1]{ODbook}):
\begin{theorem}     [$(2,q)$-hypercontractive inequality]                                 \label{thm:hypercon}
    Let $g$ be a Gaussian or Boolean polynomial.  Then $\|\U_{1/\sqrt{q-1}} g\|_q \leq \|g\|_2$ holds for any $q > 2$.
\end{theorem}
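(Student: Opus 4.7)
The plan is to prove the Boolean case first by the classical ``two-point inequality plus tensorization'' route, and then deduce the Gaussian case as a Central Limit Theorem limit using the convergence in~\eqref{eqn:bits-to-gaussians}--\eqref{eqn:hermite-convergence}. Since the theorem is classical and used only as a black box in the rest of the paper, I would just outline the standard proof.

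The main technical obstacle is the one-variable \emph{two-point inequality}: for all $a,b \in \R$, all $q > 2$, and $\rho = 1/\sqrt{q-1}$,
\[
    \Ex_{\bx \sim \{\pm 1\}}\bracks*{|a + \rho b \bx|^q}^{1/q} \leq \sqrt{a^2 + b^2}.
\]
To prove this, normalize so that $a = 1$ and $0 \leq b \leq 1$, set $t = \rho b \in [0,\rho]$, and show that $\phi(t) := \tfrac{1}{2}[(1+t)^q + (1-t)^q] - (1 + (q-1)t^2)^{q/2} \leq 0$ on $[0,\rho]$. Expanding both terms as power series in $t^2$, the $t^0$ and $t^2$ coefficients agree exactly when $\rho^2 = 1/(q-1)$, and for each $k \geq 2$ the left-side coefficient $\binom{q}{2k}$ is dominated by the right-side coefficient $\binom{q/2}{k}(q-1)^k$; the latter dominance reduces, via the identity $q(q-2)\cdots(q-2k+2) = 2^k (q/2)(q/2-1)\cdots(q/2-k+1)$, to the elementary inequalities $(q-1)(q-3)\cdots(q-2k+1) \leq (q-1)^k$ and $2^k k! \leq (2k)!$.

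Given the two-point inequality, the $n$-variable Boolean case follows by standard tensorization induction. Writing $g(x) = g_0(x') + x_n g_1(x')$ with $x' = (x_1,\dots,x_{n-1})$, one has $\U_\rho g(x) = \U_\rho g_0(x') + \rho x_n \U_\rho g_1(x')$, where $\U_\rho$ on the right acts only in the first $n-1$ variables; the two-point inequality applied conditionally on $x'$, with $a = \U_\rho g_0(x')$ and $b = \U_\rho g_1(x')$, yields
\[
    \Ex_{\bx_n}\bracks*{|\U_\rho g(x',\bx_n)|^q}^{2/q} \leq (\U_\rho g_0)(x')^2 + (\U_\rho g_1)(x')^2.
\]
Taking $\Ex_{\bx'}$, applying Minkowski's inequality in $L^{q/2}(\bx')$ to pull the power $q/2$ outside the sum, and then invoking the inductive hypothesis on $g_0$ and $g_1$ separately gives $\|\U_\rho g\|_q^2 \leq \|g_0\|_2^2 + \|g_1\|_2^2 = \|g\|_2^2$, completing the Boolean case.

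For the Gaussian case, given a degree-$d$ polynomial $g : \R^n \to \R$, I would construct, for each large $M$, a Boolean polynomial $g_M : \{\pm 1\}^{nM} \to \R$ of the same degree $d$ by substituting $\tfrac{1}{\sqrt{M}}\sum_{j=1}^M \bx_i^{(j)}$ into each Hermite monomial of $g$, arranged so that the limits in~\eqref{eqn:hermite-convergence} apply termwise. Then $g_M$ converges in distribution to $g$ as $M \to \infty$, and since $\U_\rho$ acts diagonally in the Hermite basis, $\U_\rho g_M$ converges in distribution to $\U_\rho g$ as well. Applying the already-proved Boolean inequality to $g_M$ at some exponent $q' > q$ gives uniform $L^{q'}$-bounds on $\{g_M\}$ and $\{\U_\rho g_M\}$, hence the uniform integrability needed to pass the inequality $\|\U_\rho g_M\|_q \leq \|g_M\|_2$ to the limit as $M \to \infty$, yielding $\|\U_\rho g\|_q \leq \|g\|_2$. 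The only subtlety here is verifying that $\|g_M\|_2 \to \|g\|_2$, which is immediate from orthonormality of Hermite polynomials under $\normal(0,1)^n$ and the chosen normalization.
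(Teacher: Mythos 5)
The paper does not prove \Cref{thm:hypercon}; it is cited as a classical fact from \cite[Secs.~9.2,~11.1]{ODbook}. Your overall plan---a one-variable two-point inequality, tensorization, then a CLT limit for the Gaussian case---is the classical Bonami--Beckner route, and the tensorization and CLT steps are essentially fine. But the two-point inequality, as you have sketched it, has a genuine gap.

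The termwise coefficient comparison $\binom{q}{2k}\le\binom{q/2}{k}(q-1)^k$ is false for general (non-even-integer) $q>2$. With $q=3$ and $k=3$: the left side is $\binom{3}{6}=0$, while the right side is $\binom{3/2}{3}\cdot 2^3 = -\tfrac12$, so the claimed comparison $0\le-\tfrac12$ fails. Your reduction to ``$(q-1)(q-3)\cdots(q-2k+1)\le(q-1)^k$ and $2^kk!\le(2k)!$'' silently divides by $\binom{q/2}{k}$, which is negative once $k>q/2+1$ and so flips the inequality, and the first ``elementary'' inequality is itself false for some $(q,k)$ (try $q=2.5$, $k=5$: the left side is about $55$, the right about $7.6$). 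The target $\phi(t)\le 0$ is of course still true, but for general $q>2$ the Taylor coefficients of $(1+(q-1)t^2)^{q/2}$ change sign and there is real cancellation, so no coefficient-by-coefficient argument can possibly work; the clean power-series proof is valid only when $q$ is an even integer (where both sides are polynomials with nonnegative coefficients). For general $q>2$ one needs a different device---e.g.\ showing $\phi'(t)\le 0$ on $[0,\rho]$ directly by calculus, reducing to the even-integer case, or proving the dual $(p,2)$-inequality for $p=q/(q-1)\in(1,2)$ and then dualizing. A smaller issue: normalizing $a=1$, $0\le b\le 1$ implicitly assumes $|b|\le|a|$, which is not WLOG since $\rho$ multiplies only $b$ and the left-hand side is not symmetric in $a,b$; the regime $|b|>|a|$ requires a separate (but easier) argument.
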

Hypercontractivity has the following consequences (see \cite[Thms.~9.22,~9.23]{ODbook}):
\begin{theorem} \label{thm:aobf-9.22}
    Let $g$ be a Gaussian or Boolean polynomial of degree at most $k$.  Then $\|g\|_2 \leq e^k \|g\|_1$.
\end{theorem}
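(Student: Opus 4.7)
The plan is to deduce $\|g\|_2 \le e^k \|g\|_1$ from the $(2,q)$-hypercontractive inequality (\Cref{thm:hypercon}) by combining a ``reverse'' hypercontractive bound of the form $\|g\|_q \le (q-1)^{k/2}\|g\|_2$ with a standard Hölder interpolation that trades a piece of the $L^2$ norm for an $L^1$ norm, and then optimizing the resulting constant over the choice of $q$.

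First I would establish the reverse bound: for any $q > 2$, setting $\rho = 1/\sqrt{q-1} \in (0,1)$, define $h := \U_{1/\rho}\, g = \U_{\sqrt{q-1}}\, g$, which is well-defined as a degree-$k$ polynomial thanks to the extension of $\U_\rho$ to $\rho > 1$ in \Cref{def:gaussian-noise-operator}. By \Cref{eqn:U-formula} and the fact that $g$ has degree at most $k$,
\[
\|h\|_2^2 \;=\; \sum_\alpha (q-1)^{|\alpha|} \wh{g}(\alpha)^2 \;\le\; (q-1)^k \|g\|_2^2.
\]
By the semigroup property $\U_\rho \U_{1/\rho} = \U_1 = \mathrm{Id}$, so $\U_\rho h = g$. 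Applying \Cref{thm:hypercon} to $h$ gives $\|g\|_q = \|\U_\rho h\|_q \le \|h\|_2 \le (q-1)^{k/2}\|g\|_2$.

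Second, I would apply log-convexity of $L^p$ norms (Hölder's inequality) with exponents $1 < 2 < q$: writing $\tfrac12 = \theta + \tfrac{1-\theta}{q}$ yields $\theta = \tfrac{q-2}{2(q-1)}$, so
\[
\|g\|_2 \;\le\; \|g\|_1^{\theta}\, \|g\|_q^{1-\theta} \;\le\; \|g\|_1^{\theta}\bigl[(q-1)^{k/2} \|g\|_2\bigr]^{1-\theta}.
\]
Since $1 - (1-\theta) = \theta$, dividing by $\|g\|_2^{1-\theta}$ and raising to the power $1/\theta$ collapses this to
\[
\|g\|_2 \;\le\; (q-1)^{kq/(2(q-2))}\, \|g\|_1.
\]

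Finally, I would optimize the constant $C(q) := (q-1)^{kq/(2(q-2))}$ over $q > 2$. Substituting $q = 2 + 1/t$ gives $C(q) = [(1 + 1/t)^t]^{k(2 + 1/t)/2}$, which tends to $e^{k}$ as $t \to \infty$ (equivalently $q \to 2^+$). Since the bound $\|g\|_2 \le C(q)\|g\|_1$ holds for every $q > 2$, taking the infimum over $q$ yields $\|g\|_2 \le e^k \|g\|_1$. The only subtlety — and the main ``obstacle'' such as it is — is that the infimum is not attained at any finite $q$: it must be reached in the limit $q \to 2^+$, where both factors in the Hölder interpolation degenerate but their product has a finite limit. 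This limiting argument (or equivalently, just choosing $q = 2 + 1/k$ and checking the constant is $e^{k+o(1)}$) produces the stated exponent $e^k$.
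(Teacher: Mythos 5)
Your proof is correct. Note that the paper does not prove this statement itself — it cites it directly from O'Donnell's book (Thms.~9.22 and 9.23), so there is no ``paper proof'' to compare against. Your derivation is the standard one: (i) push $g$ forward through $\U_{\sqrt{q-1}}$, apply the stated $(2,q)$-hypercontractive inequality, and pull back to get the ``degree-$k$ moment-comparison'' bound $\|g\|_q \le (q-1)^{k/2}\|g\|_2$; (ii) interpolate between $L^1$ and $L^q$ via log-convexity to place $\|g\|_2$ in the middle; (iii) let $q \to 2^+$. Each step checks out: $\theta = \tfrac{q-2}{2(q-1)}$ and $\tfrac{1-\theta}{\theta} = \tfrac{q}{q-2}$ give the exponent $\tfrac{kq}{2(q-2)}$ on $q-1$, and the substitution $q = 2 + 1/t$ yields the limit $e^k$. (One can further verify, though it is not needed, that $(q-1)^{kq/(2(q-2))} \ge e^k$ for all $q>2$, so $e^k$ really is the infimum of the constants you produce.) Your observation that the infimum is attained only in the limit is correct and is the only mildly delicate point; passing to that limit is legitimate since each finite $q>2$ already gives a valid inequality. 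A minor presentational remark: you implicitly divide by $\|g\|_2^{1-\theta}$, which requires $\|g\|_2 \neq 0$, but that case is trivial. One thing your route buys over the textbook treatment (which typically invokes the dual $(p,2)$-hypercontractivity for $p<2$ applied to $\U_\rho g$) is that it uses only the $(2,q)$ form for $q>2$, which is exactly the form stated as \Cref{thm:hypercon} in this paper — so your argument is self-contained within the paper's toolkit.
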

\begin{theorem}  \label{thm:tail-bound}
    Let $g$ be a Gaussian or Boolean polynomial of degree at most~$k$.  Then for any $t \ge \sqrt{2e}^k$,
    \[
        \Prx_{\bx}[|g(\bx)| \ge t \| g\|_2] \le\exp\left(-\frac{k}{2e}t^{2/k} \right).
    \]
\end{theorem}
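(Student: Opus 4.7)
The plan is the moment method: bound the tail via Markov's inequality on a high moment of $|g|$, then control $\|g\|_q$ by hypercontractivity. For any $q \geq 1$, Markov gives
\[
    \Pr[|g(\bx)| \geq t \|g\|_2] \;\leq\; \left(\frac{\|g\|_q}{t \|g\|_2}\right)^q,
\]
so the task reduces to establishing the Bonami-style moment inequality $\|g\|_q \leq (q-1)^{k/2} \|g\|_2$ for all $q \geq 2$, and then choosing $q$ optimally.

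For the moment inequality, I would apply \Cref{thm:hypercon} to the polynomial $h \coloneqq \U_{\sqrt{q-1}}\, g$. This is well-defined thanks to the ``$\rho > 1$'' extension of the noise operator in \Cref{def:gaussian-noise-operator}, and $h$ still has degree at most~$k$. By the semigroup property, $\U_{1/\sqrt{q-1}} h = g$, so \Cref{thm:hypercon} yields $\|g\|_q \leq \|h\|_2$. The Hermite (or Fourier) expansion together with $\deg g \leq k$ then gives
\[
    \|h\|_2^2 \;=\; \sum_{|\alpha| \leq k} (q-1)^{|\alpha|} \wh{g}(\alpha)^2 \;\leq\; (q-1)^k\,\|g\|_2^2,
\]
and taking square roots delivers the desired bound.

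Plugging this into Markov reduces everything to optimizing $\bigl((q-1)^{k/2}/t\bigr)^q$ over $q \geq 2$. The clean choice is $q-1 = t^{2/k}/e$: then $(q-1)^{k/2}/t = e^{-k/2}$, so the tail bound becomes $\exp(-qk/2) = \exp\bigl(-k/2 - k t^{2/k}/(2e)\bigr)$, and discarding the favorable $e^{-k/2}$ factor produces the theorem's conclusion. The constraint $q \geq 2$ required by \Cref{thm:hypercon} translates to $t \geq e^{k/2}$, which is comfortably implied by the hypothesis $t \geq \sqrt{2e}^k$. The only mildly subtle step is the use of the \emph{extended} noise operator ($\rho > 1$) in deriving the moment bound; otherwise the proof is pure manipulation of Markov and a one-variable optimization, so I do not anticipate a serious obstacle.
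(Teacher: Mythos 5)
Your proof is correct and is essentially the standard argument cited by the paper (O'Donnell's book, Thm.~9.23): Markov's inequality applied to $|g|^q$, the Bonami-type moment bound $\|g\|_q \le (q-1)^{k/2}\|g\|_2$ obtained from $(2,q)$-hypercontractivity via the extended noise operator, and the clean choice $q-1 = t^{2/k}/e$ (with the constraint $q>2$ comfortably met since $t \ge (2e)^{k/2}$ forces $q \ge 3$). No gap; the only thing worth noting explicitly is that dropping the extra factor $e^{-k/2}\le 1$ is what aligns your bound with the stated one.
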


\subsection{Hyperconcentration: our key tool}  \label{sec:hyperconcentration}
The ideas in this section, though technically standard, are part of the conceptual contribution of this work.

Very often we will need to show that a random variable is tightly concentrated around its mean in a \emph{multiplicative} sense.  Let us start with some notation.
\begin{notation}
    We use the following notation to denote that two reals $a,b > 0$ are multiplicatively close:  For $\nu \geq 0$,
    \[
        a \approx_\nu b \quad \iff \quad e^{-\nu} \leq a/b \leq e^\nu.
    \]
    Note that this condition is indeed symmetric in $a$ and $b$.  We extend the notation to all $a,b \in \R$ by stipulating that $a \approx_\nu b$ if: $ab > 0$ and the above condition holds; or, $a = b = 0$.
\end{notation}

Given a real random variable~$\bw$ with mean~$\mu$, a standard way to show that $\bw \approx \mu$ with high probability is to first establish $\stddev[\bw] \leq \eta |\mu|$ and then use Chebyshev's inequality.  When this holds we informally say that $\bw$ \emph{concentrates} around its mean.  In this work, a crucial concept will be improving this concentration using higher norms.

\begin{definition} \label{def:hyperconcentrated}
    Let $q > 2$ and $\eta \geq 0$ be real numbers.  We say a real random variable $\bw$ with mean~$\mu$ is \emph{$(q,\eta)$-hyperconcentrated} if
    \[
        \norm{\bw - \mu}_q = \E[|\bw - \mu|^q]^{1/q} \leq \eta |\mu|.
    \]
\end{definition}
The utility of this definition is that it gives an improvement to the Chebyshev inequality:
\begin{proposition}                                     \label{prop:hypermarkov}
    Suppose $\bw$ with mean $\mu$ is $(q,\eta)$-hyperconcentrated.  Then for any $t > 0$, except with probability at most $(\eta/t)^q$ we have $|\bw - \mu| \leq t|\mu|$ (and in particular $\bw \approx_{2t} \mu$ if $t \leq 1/2$).
\end{proposition}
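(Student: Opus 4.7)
The plan is to reduce this to a one-line application of Markov's inequality on the $q$-th power of the deviation. Given hyperconcentration, I have $\E[|\bw-\mu|^q] \leq \eta^q |\mu|^q$ by definition. So I would apply Markov to the nonnegative random variable $|\bw-\mu|^q$ at threshold $t^q|\mu|^q$, obtaining
\[
\Pr[\,|\bw-\mu| > t|\mu|\,] = \Pr[\,|\bw-\mu|^q > t^q|\mu|^q\,] \leq \frac{\E[|\bw-\mu|^q]}{t^q|\mu|^q} \leq \frac{\eta^q}{t^q}.
\]
(There is a degenerate case: if $\mu = 0$, then hyperconcentration forces $\|\bw-\mu\|_q = 0$, hence $\bw = \mu = 0$ almost surely, and the claim is vacuous under the convention $a \approx_\nu b$ for $a=b=0$. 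Otherwise $|\mu|>0$ and the above display is legitimate.)

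For the parenthetical strengthening, I would assume $\mu > 0$ without loss of generality (the case $\mu < 0$ is symmetric by negating $\bw$) and verify that the event $|\bw-\mu|\leq t|\mu|$ together with $t \leq 1/2$ implies $\bw \approx_{2t} \mu$. On this event, $\bw/\mu \in [1-t, 1+t]$, so I need the elementary inequalities $1+t \leq e^{2t}$ and $1-t \geq e^{-2t}$ on $t \in [0,1/2]$. The first follows from $e^{2t} \geq 1+2t \geq 1+t$. For the second, I would set $f(t) = 1 - t - e^{-2t}$ and check that $f(0) = 0$, $f(1/2) = 1/2 - e^{-1} > 0$, and that $f'(t) = -1 + 2e^{-2t}$ has a single sign change on $[0,1/2]$, so $f$ increases then decreases on this interval and hence stays nonnegative. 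This gives $e^{-2t} \leq \bw/\mu \leq e^{2t}$, i.e., $\bw \approx_{2t} \mu$.

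There is no substantive obstacle here — the statement is essentially a repackaging of Markov at level $q$, and the only care needed is the trivial but slightly fussy check of the two real-variable inequalities $1 \pm t \lessgtr e^{\pm 2t}$ on $[0,1/2]$. The whole proof is therefore a short two-step argument.
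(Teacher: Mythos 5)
Your proof is correct and uses exactly the paper's approach: Markov's inequality applied to $|\bw-\mu|^q$. The paper states this in one line and omits the degenerate case and the elementary verification of $1\pm t \lessgtr e^{\pm 2t}$, which you have (correctly) filled in.
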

\begin{proof}
    Apply Markov's inequality to the random variable $|\bw - \mu|^q$.
\end{proof}
We'll also need the following simple consequence of hyperconcentration:
\begin{lemma}                                       \label{lem:hyperholder}
    Suppose $\bz$ is an $\R^m$-valued random vector with  $(q,\eta)$-hyperconcentrated components, and write $\mu = \E[\bz]$.  Then for any multi-index $\alpha \in \N^m$ with $|\alpha| \leq q$,
    \[
        \E\bracks*{|\bz - \mu|^\alpha} \leq \eta^{|\alpha|} |\mu|^{\alpha}.
    \]
\end{lemma}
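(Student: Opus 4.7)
The plan is to reduce the claim to a single application of the generalized Hölder inequality, using the coordinate-wise $(q,\eta)$-hyperconcentration hypothesis as the only probabilistic input. Writing $X_i := \bz_i - \mu_i$ and $S = \{i : \alpha_i > 0\}$, the object to bound is $\E\bigl[\prod_{i \in S} |X_i|^{\alpha_i}\bigr]$, since the coordinates $i \notin S$ contribute factors of $1$ and can be ignored.

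Next I would choose Hölder exponents $p_i := q/\alpha_i$ for $i \in S$; this is valid because $q \geq |\alpha| \geq \alpha_i$ forces $p_i \geq 1$. These exponents satisfy $\sum_{i \in S} 1/p_i = |\alpha|/q \leq 1$, and if this inequality is strict I would add a formal ``constant $1$'' factor with exponent $p_0 := q/(q-|\alpha|)$ (interpreted as $+\infty$ when $|\alpha|=q$), so that the full collection of exponents sums to $1$. Since $\|1\|_{p_0} = 1$ on any probability space, the generalized Hölder inequality then yields
\[
    \E\left[\prod_{i \in S} |X_i|^{\alpha_i}\right] \leq \prod_{i \in S} \bigl\| |X_i|^{\alpha_i}\bigr\|_{p_i} = \prod_{i \in S} \|X_i\|_q^{\alpha_i},
\]
using the basic identity $\bigl\| |X_i|^{\alpha_i}\bigr\|_{p_i} = \|X_i\|_{p_i \alpha_i}^{\alpha_i} = \|X_i\|_q^{\alpha_i}$.

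Finally, substituting the hyperconcentration bound $\|X_i\|_q \leq \eta|\mu_i|$ on each factor gives $\prod_{i \in S}(\eta|\mu_i|)^{\alpha_i} = \eta^{|\alpha|}|\mu|^\alpha$, completing the proof. There is no real obstacle here: the whole argument is one Hölder plus one substitution. The only minor points worth checking are the handling of $\alpha_i = 0$ coordinates (trivial), the boundary case $|\alpha| = q$ where $p_0$ is infinite (still $\|1\|_\infty = 1$), and the degenerate case where some $\mu_i = 0$ with $\alpha_i > 0$ (where hyperconcentration forces $X_i = 0$ almost surely, so both sides vanish).
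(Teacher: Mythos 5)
Your proof is correct and is essentially the paper's argument: a single application of Hölder's inequality followed by substituting the hyperconcentration hypothesis. The only difference is bookkeeping—the paper uses exponents $|\alpha|/\alpha_i$ (so the factors become $\|\bz_i-\mu_i\|_{|\alpha|}^{\alpha_i}$ and one then implicitly invokes monotonicity of $L^p$ norms to pass from $|\alpha|$ to $q$), whereas you choose $q/\alpha_i$ with a padding factor, landing directly on $\|\bz_i-\mu_i\|_q^{\alpha_i}$; both are valid, and yours is marginally more self-contained.
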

\begin{proof}
We have
\begin{align*}
\E\bracks*{|\bz - \mu|^\alpha} =
\E\bracks*{\prod_{i=1}^n |\bz_i - \mu_i|^{\alpha_i}} \le
\prod_{i=1}^n \E\bracks*{|\bz_i - \mu_i|^{\alpha_i \cdot {\frac {|\alpha|}{\alpha_i}}}}^{{\frac {\alpha_i}{|\alpha|}}} \le
\prod_{i=1}^n (\eta \mu_i)^{\alpha_i} = \eta^{|\alpha|} |\mu|^{\alpha},
\end{align*}
where the first inequality is from H\"{o}lder's inequality and the second is from \Cref{def:hyperconcentrated}.
\end{proof}

The random variables we'll show hyperconcentration for will be Gaussian polynomials. We will do this by bounding a quantity that we term their ``hypervariance'', and that plays a central role in our work:

\begin{definition} \label{def:hypervar}
    Let $g$ be a Gaussian or Boolean polynomial.  Then for $\hypergeneric > 1$, we define the \emph{$\hypergeneric$-hypervariance} of~$g$ to be
    \[
        \HV_\hypergeneric[g] \coloneqq \Var[\U_\hypergeneric g] = \sum_{\alpha \neq 0} \hypergeneric^{2|\alpha|} \wh{g}(\alpha)^2.
    \]
    (For $\hypergeneric = 1$, this reduces to the usual variance of~$g$.)
\end{definition}
\begin{lemma}                                       \label{lem:hypervar-to-hypercon}
    Let $g$ be a Gaussian or Boolean polynomial.  Write $\mu = \E[g]$ and assume $\HV_{\hypergeneric}[g] \leq \theta \mu^2$.  Then the random variable $g(\bx)$ is $(1+\hypergeneric^2,\sqrt{\theta})$-hyperconcentrated.
\end{lemma}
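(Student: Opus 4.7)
The plan is to deduce this cleanly from the $(2,q)$-hypercontractive inequality (\Cref{thm:hypercon}) by inverting the noise operator $\U_\hypergeneric$. The key observation is that $\HV_\hypergeneric[g]$ is literally the squared $L^2$-norm of $\U_\hypergeneric(g-\mu)$: since $\U_\hypergeneric$ acts diagonally in the Hermite basis (\Cref{eqn:U-formula}) and preserves the constant coefficient $\wh{g}(0) = \mu$, we have $\U_\hypergeneric g - \mu = \U_\hypergeneric(g-\mu) = \sum_{\alpha \neq 0} \hypergeneric^{|\alpha|}\wh{g}(\alpha) h_\alpha$, whose squared $L^2$-norm is exactly $\HV_\hypergeneric[g]$ by orthonormality of the Hermite polynomials (and the analogous statement holds in the Boolean case).

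Next I would use the semi-group property $\U_{1/\hypergeneric} \U_{\hypergeneric} = \U_1 = \mathrm{id}$ (valid thanks to \Cref{def:gaussian-noise-operator}'s extension of $\U_\rho$ to $\rho > 1$) to write
\[
    g - \mu = \U_{1/\hypergeneric}\bigl(\U_\hypergeneric(g-\mu)\bigr).
\]
Now I would apply \Cref{thm:hypercon} with $q \coloneqq 1+\hypergeneric^2$, noting that $1/\sqrt{q-1} = 1/\hypergeneric$ is precisely the noise rate appearing outside. This yields
\[
    \|g - \mu\|_{1+\hypergeneric^2} = \bigl\|\U_{1/\hypergeneric}\bigl(\U_\hypergeneric(g-\mu)\bigr)\bigr\|_{1+\hypergeneric^2} \leq \bigl\|\U_\hypergeneric(g-\mu)\bigr\|_2 = \sqrt{\HV_\hypergeneric[g]} \leq \sqrt{\theta}\,|\mu|,
\]
which is exactly the $(1+\hypergeneric^2,\sqrt{\theta})$-hyperconcentration bound required by \Cref{def:hyperconcentrated}.

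There is no real obstacle here — the proof is a two-line application of hypercontractivity once one recognizes $\HV_\hypergeneric[g]$ as $\|\U_\hypergeneric(g-\mu)\|_2^2$ and inverts $\U_\hypergeneric$. The only mild subtlety worth flagging in the write-up is that $\U_\hypergeneric$ for $\hypergeneric > 1$ is defined formally via its action on Hermite (resp.\ Fourier) coefficients rather than probabilistically; the semi-group identity still holds, and hypercontractivity is only invoked at the contractive parameter $1/\hypergeneric < 1$, so everything is legitimate. One should also note that degree plays no role in the statement: the bound works for arbitrary polynomials, and in particular does not lose a $2^{O(d)}$ factor of the kind that plagues direct hypercontractive tail bounds (cf.\ \Cref{thm:aobf-9.22,thm:tail-bound}) — this is precisely the payoff of controlling hypervariance rather than variance.
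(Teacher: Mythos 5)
Your proof is correct and is essentially identical to the paper's: both recognize $\HV_\hypergeneric[g]$ as $\|\U_\hypergeneric(g-\mu)\|_2^2$, invert the noise operator via the semigroup property, and apply $(2,q)$-hypercontractivity at $q = 1+\hypergeneric^2$ so that $1/\sqrt{q-1} = 1/\hypergeneric$. Your write-up just spells out a couple of the bookkeeping steps (that $\U_\hypergeneric$ preserves the mean, and the caveat about $\U_\hypergeneric$ with $\hypergeneric > 1$ being defined spectrally) that the paper leaves implicit.
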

\begin{proof}
    Writing $\ol{g} = g-\mu$, our hypothesis is that $\norm{\U_\hypergeneric \ol{g}}_2^2 \leq \theta \mu^2$.
    By hypercontractivity, we have $\norm{\ol{g}}_{1+\hypergeneric^2} = \norm{\U_{1/\hypergeneric} \U_\hypergeneric \ol{g}}_{1+\hypergeneric^2} \leq \norm{\U_\hypergeneric \ol{g}}_2$.  Thus $\norm{\ol{g}}_{1+\hypergeneric^2} \leq \sqrt{\theta} |\mu|$, as needed.
\end{proof}

The hypothesis in \Cref{lem:hypervar-to-hypercon}, that $g$'s hypervariance is small compared to its squared-mean, will be an important one for us.  It is essentially the same as the hypothesis that $g$'s hypervariance is small compared to its squared-$2$-norm (since squared-$2$-norm equals squared-mean plus variance, and hypervariance is at least variance for all $\hypergeneric >1$).  It will be slightly more convenient in our Local Hypervariance Theorem to work with the latter hypothesis, so we codify it here and establish the analogue of \Cref{lem:hypervar-to-hypercon}.

\begin{definition}  \label{def:attenuated-prelim}
    Let $g$ be a Gaussian or Boolean polynomial, and let $\hypergeneric > 1$.  We say that $g$ is \emph{$(\hypergeneric,\eps)$-attenuated} if $\HV_{\hypergeneric}[g] \leq \eps \|g\|_2^2$.
\end{definition}

\begin{remark} \label{remark:attenuated}
Intuitively, a polynomial $g$ is attenuated if for each $i \geq 1$, the amount of Hermite weight it has at level $i$ is ``very small'' compared with the total Hermite weight (squared 2-norm) of~$g$.  Crucially, the precise quantitative definition of ``very small'' in the preceding sentence depends on the weight level $i$, and gets exponentially stronger (smaller) as $i$ gets larger.  An intuition which may possibly be helpful is to think of an attenuated polynomial as a polynomial which is ``morally constant'' over Gaussian space.
\end{remark}

Returning to hyperconcentration, we have the following:

\begin{lemma}                                       \label{lem:attenuated-hyperconcentrated}
    Let $g$ be a Gaussian or Boolean polynomial that is $(\hypergeneric,\theta)$-attenuated, with $\hypergeneric \geq \sqrt{2}$ and $\theta \leq 1$.  Then the random variable $g(\bx)$ is $(1+\tfrac12 \hypergeneric^2,\sqrt{\theta})$-hyperconcentrated.
\end{lemma}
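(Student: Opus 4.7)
My plan is to follow the proof strategy of \Cref{lem:hypervar-to-hypercon}---apply $(2,q)$-hypercontractivity to a Hermite-damped version of $g - \mu$---with one modification to accommodate the weaker attenuation hypothesis, which bounds $\HV_\hypergeneric[g]$ relative to $\|g\|_2^2$ rather than to $\mu^2$.

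The first step is to deduce $\|g\|_2 = O(|\mu|)$ from the hypotheses. Writing $\bar g = g-\mu$ and using $\hypergeneric^{2|\alpha|} \geq \hypergeneric^2 \geq 2$ whenever $|\alpha| \geq 1$, I would observe that $\HV_\hypergeneric[g] \geq 2\Var[g]$. Combined with attenuation $\HV_\hypergeneric[g] \leq \theta\|g\|_2^2 = \theta(\mu^2+\Var[g])$ and the assumption $\theta \leq 1$, this gives $\Var[g] \leq \mu^2$ and hence $\|g\|_2^2 \leq 2\mu^2$. This is the only place where both $\hypergeneric \geq \sqrt{2}$ and $\theta \leq 1$ get used.

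For the second step, I want the factor-of-$\tfrac{1}{2}$ shrinkage of~$q$ in the conclusion---not a $\sqrt{2}$ loss on the constant---so I would \emph{not} plug the implied bound $\HV_\hypergeneric[g] \leq 2\theta \mu^2$ directly into \Cref{lem:hypervar-to-hypercon} (that would yield $(1+\hypergeneric^2,\sqrt{2\theta})$-hyperconcentration, with the $\sqrt{2}$ on the wrong side). Instead, I would pre-apply a little of the damping before hypercontractivity: set $\hypergeneric_1 := \hypergeneric/\sqrt{2}$, so that $1 + \hypergeneric_1^2 = 1 + \tfrac{1}{2}\hypergeneric^2$ matches the target exponent. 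The elementary bound $2^{-|\alpha|} \leq \tfrac{1}{2}$ on nonzero weight levels gives $\|\U_{\hypergeneric_1}\bar g\|_2^2 \leq \tfrac{1}{2} \HV_\hypergeneric[g] \leq \tfrac{\theta}{2}\|g\|_2^2 \leq \theta\mu^2$. Applying \Cref{thm:hypercon} with $\rho = 1/\hypergeneric_1$ to the polynomial $h := \U_{\hypergeneric_1}\bar g$, and using the semigroup identity $\U_{1/\hypergeneric_1}\U_{\hypergeneric_1} = I$ to collapse $\U_{1/\hypergeneric_1}h$ back to $\bar g$, I would conclude $\|\bar g\|_{1+\tfrac{1}{2}\hypergeneric^2} \leq \|\U_{\hypergeneric_1}\bar g\|_2 \leq \sqrt{\theta}\,|\mu|$, which is the required hyperconcentration.

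I do not anticipate any real obstacle. The only design choice worth flagging is the symmetric split $\hypergeneric = \sqrt{2} \cdot \hypergeneric_1$: one copy of $\sqrt{2}$ is spent in step one to convert $\|g\|_2^2$ into $2\mu^2$, and the other is spent in step two to shrink $1+\hypergeneric^2$ into $1+\tfrac{1}{2}\hypergeneric^2$, with no slack left over.
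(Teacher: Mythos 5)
Your proof is correct and converges on the same pivotal estimate the paper uses, namely $\HV_{\hypergeneric/\sqrt{2}}[g]\le\theta\mu^2$, followed by exactly the hypercontractive step of \Cref{lem:hypervar-to-hypercon}. The intermediate arithmetic differs only cosmetically: the paper rearranges the attenuation inequality into $\sum_{j\ge1}(\hypergeneric^{2j}-\theta)\|g^{=j}\|_2^2\le\theta\mu^2$ and then bounds $\hypergeneric^{2j}-\theta\ge(\hypergeneric/\sqrt{2})^{2j}$ termwise, whereas you first extract the auxiliary fact $\|g\|_2^2\le2\mu^2$ and then apply the uniform geometric factor $2^{-j}\le\tfrac12$ --- the two routes spend the hypotheses $\hypergeneric\ge\sqrt{2}$ and $\theta\le1$ in different places but to identical effect.
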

\begin{proof}
    Using the notation $\mu$ and $\ol{g}$ again, and starting with the $(\hypergeneric, \theta)$-attenuation assumption, we have
    \[
        \sum_{j \geq 1} \hypergeneric^{2j} \|g^{=j}\|_2^2  \leq \theta \parens*{\mu^2 + \sum_{j \geq 1} \|g^{=j}\|_2^2}
        \implies \sum_{j \geq 1} (\hypergeneric^{2j} - \theta) \|g^{=j}\|_2^2 \leq \theta \mu^2
        \implies \sum_{j \geq 1} \parens*{\tfrac{\hypergeneric}{\sqrt{2}} }^{2j} \|g^{=j}\|_2^2 \leq \theta \mu^2,
    \]
    where the last step used $\hypergeneric \geq \sqrt{2}$ and $\theta \leq 1$.  But this is equivalent to $\HV_{\frac{\hypergeneric}{\sqrt{2}}}[g(\bx)] \leq \theta \mu^2$, so the result follows from \Cref{lem:hypervar-to-hypercon}.
\end{proof}
Combining this with \Cref{lem:attenuated-hyperconcentrated} and \Cref{prop:hypermarkov} yields the following useful result, which informally says that ``attenuated polynomials are very likely to take values multiplicatively close to their means'':
\begin{proposition}                                     \label{prop:atten-hyperconcy}
    Let $g$ be a Gaussian or Boolean polynomial that is $(\hypergeneric,\theta)$-attenuated, with $\hypergeneric \geq \sqrt{2}$ and $\theta \leq 1$. Write $\mu = \E[g]$.  Then assuming $0 < \gamma \leq 1$, we have $g(\bx) \approx_{\gamma} \mu$ except with probability at most $(2\sqrt{\theta}/\gamma)^{\frac12 \hypergeneric^2 + 1}$.
\end{proposition}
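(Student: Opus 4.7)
The plan is to derive this proposition as an essentially immediate composition of two results already established in the preceding subsection: the attenuation-to-hyperconcentration upgrade (\Cref{lem:attenuated-hyperconcentrated}) and the hyperconcentration version of Markov (\Cref{prop:hypermarkov}). Since both pieces are in place, no new analytic machinery is needed; the only content is a careful bookkeeping of parameters.

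First I would invoke \Cref{lem:attenuated-hyperconcentrated} on $g$ with the given $(\hypergeneric,\theta)$-attenuation hypothesis and the assumptions $\hypergeneric \geq \sqrt{2}$, $\theta \leq 1$. This immediately yields that the random variable $g(\bx)$ is $(1+\tfrac12 \hypergeneric^2,\sqrt{\theta})$-hyperconcentrated in the sense of \Cref{def:hyperconcentrated}, i.e., $\|g(\bx) - \mu\|_{1+\tfrac12 \hypergeneric^2} \leq \sqrt{\theta}\,|\mu|$.

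Next I would apply \Cref{prop:hypermarkov} with $q = 1 + \tfrac12 \hypergeneric^2$, $\eta = \sqrt{\theta}$, and the choice $t = \gamma/2$. Since $\gamma \leq 1$ we have $t \leq 1/2$, which is exactly the regime in which \Cref{prop:hypermarkov} guarantees that the conclusion $|g(\bx) - \mu| \leq t|\mu|$ upgrades to the multiplicative closeness $g(\bx) \approx_{2t} \mu = \approx_\gamma \mu$. The failure probability bound from \Cref{prop:hypermarkov} is $(\eta/t)^q = (2\sqrt{\theta}/\gamma)^{\tfrac12 \hypergeneric^2 + 1}$, matching the claim.

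There is no real obstacle here: the proposition is a straightforward packaging of prior lemmas, and the only thing to watch is the factor of $2$ arising from the $\approx_{2t}$ convention in \Cref{prop:hypermarkov} (which is precisely why $t = \gamma/2$, not $\gamma$, is the right choice). In particular, the condition $\gamma \leq 1$ is exactly what lets us invoke the ``$t \leq 1/2$'' branch of \Cref{prop:hypermarkov} to pass from additive to multiplicative closeness.
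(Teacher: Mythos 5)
Your proof is correct and matches the paper's approach exactly: the paper states the proposition as a direct combination of \Cref{lem:attenuated-hyperconcentrated} and \Cref{prop:hypermarkov} without spelling out a proof, and your parameter bookkeeping (in particular the choice $t=\gamma/2$ so that $\approx_{2t}$ becomes $\approx_\gamma$ and $\gamma\leq 1$ gives $t\leq 1/2$) is precisely what is needed.
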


\subsection{Special properties of Gaussian random variables} \label{sec:special}

All of the results in this section so far have applied equally well to Gaussian or Boolean polynomials.  We now give the two results we will use that are specific just to Gaussian polynomials.  The first is a well known result of Carbery and Wright~\cite{CW:01} on anticoncentration (see e.g.~\cite[Lem.~23]{kane11focs}, \cite[Sec.~11.6]{ODbook}):

\begin{theorem} [Gaussian Carbery--Wright] \label{thm:CW}  There is a univeral constant $C$ such that for any degree-$d$ polynomial $g : \R^n \to \R$  and any $0 \leq \delta \leq 1$,
    \[
        \Pr_{\bx \sim \normal(0,1)^n}\bracks*{|g(\bx)| < \parens*{\tfrac{\delta}{C \cdot d}}^d \cdot \|g\|_2} \leq \delta.
    \]
\end{theorem}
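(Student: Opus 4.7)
The plan is to outline the classical inductive proof of Carbery and Wright~\cite{CW:01}. By the scale-invariance of the statement, we may first normalize so that $\|g\|_2 = 1$; the theorem then reduces to establishing the small-ball bound
\[
    \Pr_{\bx \sim \normal(0,1)^n}\bracks*{|g(\bx)| < t} \;\leq\; Cd \cdot t^{1/d} \qquad \text{for all } 0 \leq t \leq 1,
\]
since plugging in $t = (\delta/(Cd))^d$ immediately yields the claim.

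This small-ball estimate is then proved by induction on the degree~$d$. The base case $d = 1$ is elementary: a normalized affine polynomial has $g(\bx) \sim \normal(\mu,\sigma^2)$ with $\mu^2 + \sigma^2 = 1$, and a direct density computation gives $\Pr[|g(\bx)| < t] \leq O(t)$. The inductive step combines a Remez-type estimate along Gaussian lines with Gaussian integration by parts. Concretely, for fixed $x \in \R^n$ and direction $v$, the univariate polynomial $s \mapsto g(x + sv)$ has degree at most $d$, and the classical Remez inequality bounds the Lebesgue measure of $\{s : |g(x+sv)| < t\}$ by $C \cdot (t/|a_d|)^{1/d}$, where $a_d$ is the leading coefficient of this restriction. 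Averaging $(x,v)$ against the Gaussian measure and exploiting Gaussian rotational invariance reduces control of $\Pr[|g(\bx)| < t]$ to a corresponding small-ball statement for a directional derivative $\partial_v g$, which is a polynomial of degree $d - 1$ to which the inductive hypothesis applies.

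The main obstacle is keeping the constants under control through the induction: naively pairing Remez with H\"older loses an $O(1)$ multiplicative factor \emph{per} inductive level, which would yield an unwanted $C^d$ in place of the target $Cd$. The Carbery--Wright argument circumvents this by carefully balancing the Remez estimate against the $L^2$-norm of the derivative, exploiting the fact that both the Gaussian measure and its log-density transform cleanly under the rotation-averaging step. As a result, each inductive level contributes only an \emph{additive} $O(1)$ to the constant, and the $d$-fold sum produces the sharp $Cd$ factor that appears in the statement.
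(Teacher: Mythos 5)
The paper does not prove this theorem: it cites it as a well-known result of Carbery and Wright~\cite{CW:01}, pointing to \cite[Lem.~23]{kane11focs} and \cite[Sec.~11.6]{ODbook} for expositions, and simply uses it as a black box. So there is no in-paper argument for your proposal to track.

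Evaluated on its own terms, your proposal is a sketch with a genuine gap at the inductive step. The normalization to $\|g\|_2 = 1$ and the reduction to the small-ball bound $\Pr[|g(\bx)| < t] \le Cd\,t^{1/d}$ are correct, and the base case $d=1$ is fine. But the core step is internally inconsistent as described: you invoke a Remez bound on $\{s : |g(x+sv)| < t\}$ controlled by the \emph{leading} coefficient $a_d$ of the univariate restriction $s \mapsto g(x+sv)$, which is a $d$-th order derivative of $g$ in direction $v$ (equivalently a top Hermite chaos contribution), and then assert that this "reduces control of $\Pr[|g(\bx)| < t]$ to a corresponding small-ball statement for a directional derivative $\partial_v g$", a degree-$(d-1)$ polynomial. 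Those two quantities are not the same object, and no argument is given bridging them; "averaging against the Gaussian measure and exploiting rotational invariance" and "Gaussian integration by parts" are named but not used to close this gap. More seriously, the entire content of the theorem — that the constant grows like $Cd$ rather than $C^d$, i.e.\ that one loses only an additive $O(1)$ per level rather than a multiplicative $O(1)$ — is stated as the goal of "carefully balancing the Remez estimate against the $L^2$-norm of the derivative" but nothing is offered to substantiate it. This is precisely the hard part of Carbery--Wright, and the actual proof (a one-dimensional small-ball estimate for log-concave measures plus an integral-geometric lifting to $\R^n$) does not take the inductive degree-lowering form you describe. As written, the proposal is a plausible-sounding road map with the load-bearing steps missing.
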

The second Gaussian-specific result we use is a key lemma from Kane's work \cite[Lemma~9]{kane11focs}.  This lemma was the essential ingredient he used to prove his ``local concentration'' result \Cref{thm:kane-technical}.  At first glance, it may look much stronger than \Cref{thm:kane-technical}, because it gives a nontrivial kind of concentration result even for $\lambda$ as large as $1/\poly(d)$.  However the concentration one gets in (almost all) local neighborhoods is somewhat weak: one gets that $\zoom{g}{\noisegeneric}{\bx}$'s values are with high probability near a specific value, but this is not enough to even conclude that $\stddev[\zoom{g}{\noisegeneric}{\bx}]$ is small compared to that value.  Kane uses hypercontractivity to bootstrap this to control over the variance when he obtains \Cref{thm:kane-technical}, and this loses a~$2^{O(d)}$ factor.  When we employ \Cref{lem:kane-lemma9-alt} below, we will already be working with hyperconcentrated functions, which means we will not lose much when similarly bootstrapping.
\begin{lemma} \label{lem:kane-lemma9-alt}
    (\cite[Lemma~9]{kane11focs} with parameters renamed.) Let $g : \R^n \to \R$ be a degree-$\degree$ polynomial and let $0 < \beta < 1$. Then for $\bx, \by \sim \normal(0,1)^n$ independent, except with probability $\beta$ we have
    \[
        \phantom{\quad \text{for } \gamma = O(\frac{d^2}{\beta})\cdot \sqrt{\noisegeneric}.}
        \zoom{g}{\noisegeneric}{\bx}(\by) \approx_{\nu} g(\bx) \quad \text{for } \nu = O(d^2/\beta)\cdot \sqrt{\noisegeneric}
    \]
    (provided $\noisegeneric$ is small enough that $\nu \leq 1$).
\end{lemma}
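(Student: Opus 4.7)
The natural plan is three-fold: (i) bound $\|g(\bx')-g(\bx)\|_2$ over $(\bx,\by)$, where $\bx'\coloneqq\sqrt{1-\noisegeneric}\bx+\sqrt{\noisegeneric}\by$ (itself distributed as $\normal(0,1)^n$); (ii) apply Markov to turn this $L^2$-bound into a high-probability pointwise upper bound on $|g(\bx')-g(\bx)|$; and (iii) apply Gaussian Carbery--Wright (\Cref{thm:CW}) to lower-bound $|g(\bx)|$, whence $g(\bx')\approx_\nu g(\bx)$ follows by dividing. For (i), Hermite orthonormality gives $\E[g(\bx)g(\bx')]=\sum_\alpha\rho^{|\alpha|}\wh{g}(\alpha)^2=\Stab_\rho[g]$ with $\rho=\sqrt{1-\noisegeneric}$, so
\[
\|g(\bx')-g(\bx)\|_2^2 \;=\; 2\sum_\alpha(1-\rho^{|\alpha|})\,\wh{g}(\alpha)^2 \;\leq\; 2\degree(1-\rho)\|g\|_2^2 \;\leq\; 2\degree\noisegeneric\|g\|_2^2,
\]
and Markov gives $|g(\bx')-g(\bx)|\leq\sqrt{2\degree\noisegeneric/\beta}\cdot\|g\|_2$ except with probability $\beta/2$.

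For (iii), \Cref{thm:CW} gives $|g(\bx)|\geq(\beta/C\degree)^{\degree}\|g\|_2$ except with probability $\beta/2$; naively combining with (ii) yields a multiplicative closeness
\[
\nu \;\lesssim\; \sqrt{\degree\noisegeneric/\beta}\cdot(C\degree/\beta)^{\degree},
\]
which is exponential in $\degree$, much worse than the target $O(\degree^2/\beta)\sqrt{\noisegeneric}$. The main obstacle is precisely this exponential $\degree$-loss: comparing $|g(\bx)|$ to the crude global quantity $\|g\|_2$ via Carbery--Wright inherently costs a factor of $(\degree/\beta)^{\degree}$, so one must instead compare $|g(\bx')-g(\bx)|$ directly to $|g(\bx)|$ without routing through $\|g\|_2$.

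To do so, the plan is to use the multivariate Hermite addition formula to expand $g(\bx')-g(\bx)$ in powers of $\sqrt{\noisegeneric}$, isolating the leading ``directional-derivative'' term $\sqrt{\noisegeneric}\,D(\bx,\by)$ with $D(\bx,\by)=\sum_i\widetilde{\partial_i g}(\bx)\,\by_i$ linear in $\by$, the remaining terms contributing only $O(\noisegeneric)$. Given $\bx$, a standard Gaussian tail bound yields $|D(\bx,\by)|\leq O(\sqrt{\log(1/\beta)})\,\|\nabla g(\bx)\|$ except with probability $\beta/4$ over $\by$. The crux, and by far the hardest step, is then a multiplicative anti-concentration statement: showing that $\|\nabla g(\bx)\|\,/\,|g(\bx)|\leq O(\degree^2/\beta)$ except with probability $\beta/4$ over $\bx$. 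One way to attempt this is to restrict to a random line through $\bx$, thereby reducing to the univariate fact that for any nonzero degree-$\degree$ polynomial $q$ on $\R$, the logarithmic derivative $q'/q=\sum_j 1/(\cdot-r_j)$ satisfies $|q'(\bt)/q(\bt)|\leq O(\degree^2/\beta)$ except on a Gaussian-measure-$\beta$ set (via a union bound over the at most $\degree$ poles). Combining this multiplicative ratio bound with the Gaussian tail on $D(\bx,\by)/\|\nabla g(\bx)\|$ yields the desired $\nu=O(\degree^2/\beta)\sqrt{\noisegeneric}$, with the extra factor of $\degree$ absorbing the lower-order $O(\noisegeneric)$ Hermite-expansion corrections. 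The multiplicative gradient-to-value ratio is precisely where Kane's proof does its real work, and it is the step that separates the $\poly(\degree)$ conclusion here from the $2^{\degree}$ bound that the naive approach produces.
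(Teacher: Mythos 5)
The paper does not actually prove this lemma: the parenthetical remark in the statement says it is \cite[Lemma~9]{kane11focs} with parameters renamed, and the only comment the paper offers is that Kane's argument ``(seemingly) crucially relies on the rotational invariance of $n$-dimensional Gaussians.'' So there is no in-paper proof to compare against, and you are being asked to reconstruct Kane's argument. Your reconstruction gets the decisive ideas right: the naive route through Carbery--Wright plus a Markov bound against the global $\|g\|_2$ inevitably loses a $(\degree/\beta)^{\degree}$ factor, and the fix is a \emph{multiplicative} anticoncentration for the directional derivative against the value, which ultimately rests on the univariate fact that the logarithmic derivative $q'/q=\sum_j 1/(t-r_j)$ of a degree-$\degree$ polynomial is $O(\degree^2/\beta)$ outside a set of Gaussian measure $\beta$ by a union bound over the $\le\degree$ poles. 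This is exactly the engine of Kane's Lemma~9 (in fact, it is stated in the appendix here as the constant-family case of Lemma~\ref{strong anticoncentration lemma}, with a pointer to Lemma~9 of Kane's Gotsman--Linial paper).

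The one part of your sketch that is off is the detour through $\|\nabla g(\bx)\|$. Restricting $g$ to a random line through $\bx$ controls a \emph{directional} derivative ratio $|D_{\bv}g(\bx)|/|g(\bx)|$ for a random $\bv$, not the gradient-norm ratio $\|\nabla g(\bx)\|/|g(\bx)|$; to pass from the former to the latter you need an extra median-plus-Markov step (using that $D_{\by}g(\bx)\mid\bx$ is Gaussian with standard deviation $\|\nabla g(\bx)\|$), and then your Gaussian tail bound on the way back re-introduces a $\sqrt{\log(1/\beta)}$ that the stated $\nu = O(\degree^2/\beta)\sqrt{\noisegeneric}$ does not have. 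The cleaner route, and what Kane does, is to skip the gradient norm entirely: set $q(t)\coloneqq g(\bx+t\by)$ so that $D_{\by}g(\bx)/g(\bx)=q'(0)/q(0)$, and then show that, conditioned on the affine line $\{\bx+t\by\}$, the parameter value $t=0$ is Gaussian-distributed along the line, so the univariate logarithmic-derivative bound applies directly. That conditional-Gaussian step is precisely where the rotational invariance the paper flags is load-bearing, and it is the ingredient your sketch gestures at but does not spell out.
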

Kane's proof of this lemma (seemingly) crucially relies on the rotational invariance of $n$-dimensional Gaussians.


\section{Defining $\allsoftchecks$} \label{sec:defining-allsoftchecks}

The definition of $\allsoftchecks$ involves a collection $\stats$ of ``statistics'' of the polynomial $\p$.  Each statistic $\st \in \stats$ will be a certain nonnegative polynomial $\st : \R^n \to \R^{\geq 0}$, defined in terms of $\p$, of degree at most $2\degree$.

The definition also involves a collection $\MollifierChecks$ of ``mollifier checks''.  Each mollifier check~$\chk \in \MollifierChecks$ will consist of two ingredients:

\begin{equation} \label{check}
    \chk = (\ineq, \softness), \quad \text{where } \ineq=(\gamma, \st_u, \st_v) \text{ means ``}\st_u \geq \gamma \st_v\text{''}
\end{equation}
for some statistics $\st_u, \st_v \in \stats$ and some nonnegative value~$\gamma$, and where
$
    \softness \geq 0$ is a ``softness'' parameter.
The intuitive meaning of~$\chk$ applied at a point $x \in \R^n$ is that it is ``softly'' checking that $\st_u(x) \geq \gamma \st_v(x)$, up to a multiplicative factor of roughly~$1 \pm \softness$.  More precisely:
\begin{definition} \label{def:sigma}
    Let $\sigma : \R \to [0,1]$ be a smooth function satisfying
    \[
        \sigma(t) = \begin{cases}
                                0 & \text{if $t \leq -1$,} \\
                                1 & \text{if $t \geq +1$,}
                          \end{cases}
    \]
    and which is such that for all $j \leq d$, the magnitude of $\sigma$'s $j$-th derivative is everywhere bounded by~$j^{O(j)}$. (This is easily achieved by standard constructions such as taking $\sigma$ to be a suitable polynomial of degree $O(d)$ on the interval $[-1,1].$)  Also, given a mollifier check~$\chk \in \MollifierChecks$ as in~\eqref{check}, define
    \[
        \softcheck_\chk : \R^n \to [0,1]
    \]
    by
    \begin{equation} \label{eq:softcheck-definition}
        \softcheck_\chk(x) = \sigma\parens*{\softness^{-1} \ln\parens*{\frac{\phantom{\gamma} \st_u(x)}{\gamma \st_v(x)}}},
    \end{equation}
    where we take $0/0 = +\infty$.  We remark that
    \begin{align*}
        \softcheck_\chk(x) &= 1 \text{ if } \st_u(x) \geq \exp(\delta) \cdot \gamma \st_v(x), \\
        \softcheck_\chk(x) &= 0 \text{ if } \st_u(x) \leq \exp(-\delta) \cdot \gamma  \st_v(x).
    \end{align*}
\end{definition}

The function $\allsoftchecks$ is the product of all the mollifier checks:
\begin{definition} \label{def:allsoftchecks}
$\displaystyle
    \allsoftchecks(x) = \prod_{\chk \in \MollifierChecks} \softcheck_\chk(x).
$
\end{definition}

To complete the definition of $\allsoftchecks$ we need to: (i)~define the statistics in $\stats$; and, (ii)~define the collection $\MollifierChecks$ of mollifier checks.  We do each of these in turn  below.

\subsection{The statistics in $\stats$}

\subsubsection{Noisy derivatives of amplified polynomials} Now we arrive at a novel definition in this work which plays a key role in our results.  We will make extensive use of the following notion, which can be thought of as a sort of ``noisy derivative of the $R$-amplified version of $g$ at $x$ in directions $y$ and $y'$.''  This is a variant of one of the key definitions of \cite{kane11focs} (the second definition in Section~3 of that paper) but with the crucial difference that now we consider the ``$R$-amplified'' version of $g$ in place of just $g$ itself as was the case in the corresponding definition in \cite{kane11focs}:

\begin{definition} \label{def:amplified-derivative}
    Given two vectors $y, y' \in \R^n$, define the operator $\nderiv{y}{y'}{\hyperfixed,\noisefixed}$ on polynomials $g : \R^n \to \R$ via
    \[
        (\nderiv{y}{y'}{\hyperfixed,\noisefixed} g)(x) \coloneqq \frac{\U_\hyperfixed \zoom{g}{\noisefixed}{x}(y) - \U_\hyperfixed \zoom{g}{\noisefixed}{x}(y')}{\sqrt{2}}.
    \]
    We remark that $\U_\hyperfixed \zoom{g}{\noisefixed}{x}(y)$ is parenthesized as $(\U_\hyperfixed (\zoom{g}{\noisefixed}{x}))(y)$.\ignore{ and that the dependence of the notation $\nderiv{y}{y'}{\hypergeneric}$ on $\noisefixed$ is suppressed, as it will be fixed throughout.\rodnote{Honestly, this doesn't make sense; $\noisefixed$ is no more ``fixed throughout'' than $\hyperfixed$ is going to be.  We actually hardly use this notation $\triangle$ that much, so maybe we should cram both params in there.{\red{{\bf Rocco:} Sure, let's cram it in, give the reader a real treat.}}}}
\end{definition}

The following is easily verified:

\begin{fact}                                        \label{fact:deriv-deg}
    If $g : \R^n \to \R$ is a polynomial of degree at most~$d$, then for every $y,y' \in \R^n$, the function $\nderiv{y}{y'}{\hyperfixed,\noisefixed} g$ is a polynomial of degree at most~$d-1$.
\end{fact}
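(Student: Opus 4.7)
The plan is to unpack the definitions and track the total degree in $x$ carefully. Write $h(x,y) \coloneqq \zoom{g}{\noisefixed}{x}(y) = g(\sqrt{1-\noisefixed}\,x + \sqrt{\noisefixed}\,y)$. Since $g$ has total degree at most $d$, $h(x,y)$ is a polynomial in the $2n$ variables $(x,y)$ of total degree at most~$d$. Moreover, because $\U_\hyperfixed$ acts on $h$ as a polynomial in $y$ alone (with $x$ playing the role of a parameter), and because $\U_\hyperfixed$ acts diagonally in the Hermite-in-$y$ basis via \Cref{eqn:U-formula}, the resulting object $(\U_\hyperfixed h(x,\cdot))(y)$ remains a polynomial in $(x,y)$ of total degree at most~$d$.

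Now expand in the Hermite basis in $y$:
\[
    \U_\hyperfixed \zoom{g}{\noisefixed}{x}(y) \;=\; \sum_{\alpha \in \N^n} \hyperfixed^{|\alpha|} \, c_\alpha(x) \, h_\alpha(y),
\]
where $c_\alpha(x) \coloneqq \wh{\zoom{g}{\noisefixed}{x}}(\alpha)$ is a polynomial in~$x$. The key degree fact is that $\deg_x c_\alpha \leq d - |\alpha|$: this follows by expanding $g$ in monomials, noting that a monomial $(\sqrt{1-\noisefixed}\,x_i + \sqrt{\noisefixed}\,y_i)^{\beta_i}$ contributes $x$-degree at most $\beta_i - \alpha_i$ whenever the $y$-degree is at least $\alpha_i$, and summing over indices. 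Equivalently, since $h(x,y)$ has total degree $\leq d$ in $(x,y)$, the $y$-Hermite coefficient at level $|\alpha|$ is an $x$-polynomial of $x$-degree at most $d - |\alpha|$, and applying the (degree-preserving) operator $\U_\hyperfixed$ in~$y$ doesn't change this.

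Next, subtract the two evaluations at $y$ and $y'$:
\[
    (\nderiv{y}{y'}{\hyperfixed,\noisefixed} g)(x) \;=\; \tfrac{1}{\sqrt{2}}\sum_{\alpha \in \N^n} \hyperfixed^{|\alpha|}\,c_\alpha(x)\bigl(h_\alpha(y) - h_\alpha(y')\bigr).
\]
The $\alpha = 0$ term vanishes because $h_{0^n}(y) = h_{0^n}(y') = 1$. For every remaining term, $|\alpha| \geq 1$, so $\deg_x c_\alpha \leq d - 1$. Hence the entire sum, viewed as a polynomial in~$x$ (with the fixed vectors $y,y'$ contributing only constants), has degree at most $d-1$, which is exactly the claim.

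There is no real obstacle here — the whole argument is bookkeeping, and the single substantive observation is that applying $\U_\hyperfixed$ in the $y$-variable commutes with restricting to $y$-Hermite level $|\alpha|$ and does not increase the $x$-degree of the coefficient. The cancellation of the $\alpha = 0$ constant term (which is the point of the antisymmetric difference in $y,y'$) is what strips off the one degree, giving the improvement from $d$ to $d-1$.
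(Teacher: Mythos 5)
Your proof is correct; the paper simply labels this fact ``easily verified'' without giving an argument, and your Hermite-coefficient bookkeeping is a clean, valid verification. The two substantive observations are exactly the ones needed: (i) if $h(x,y)$ has total $(x,y)$-degree at most $d$, then its Hermite-in-$y$ coefficient $c_\alpha(x)$ has $x$-degree at most $d-|\alpha|$ (which holds because the change of basis from monomials $y^\alpha$ to Hermites $h_\alpha(y)$ is degree-triangular, so $c_\alpha(x)$ is a constant-coefficient combination of monomial coefficients $c^{\mathrm{mon}}_\beta(x)$ with $|\beta|\geq|\alpha|$, each of $x$-degree $\leq d-|\beta|\leq d-|\alpha|$), and (ii) $\U_\hyperfixed$ merely rescales each $c_\alpha(x)$ by $\hyperfixed^{|\alpha|}$ and thus preserves $x$-degree, while the $y$-vs-$y'$ difference annihilates the $\alpha=0^n$ term because $h_{0^n}\equiv 1$, dropping the $x$-degree bound from $d$ to $d-1$. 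Nothing is missing.
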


The following simple but crucial fact connects the derivative notion from \Cref{def:amplified-derivative} to the hypervariance notion from \Cref{def:hypervar}:

\begin{fact}                                        \label{fact:deriv-hypervar}
    For fixed $x \in \R^n$ and independent $n$-dimensional Gaussians $\by, \by' \sim \normal(0,1)^n$, we have that
    \[
        \Ex_{\by,\by'}\bracks*{\parens*{\nderiv{\by}{\by'}{\hyperfixed,\noisefixed} g(x)}^2} = \HVar_\hyperfixed\bracks*{\zoom{g}{\noisefixed}{x}}.
    \]
\end{fact}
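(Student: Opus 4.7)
The plan is a direct computation that reduces to the standard identity $\E[(\bX - \bX')^2] = 2\Var[\bX]$ for iid copies of a random variable $\bX$, applied to $\bX = (\U_\hyperfixed \zoom{g}{\noisefixed}{x})(\by)$.

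First I would fix $x \in \R^n$ and, for notational clarity, set $h \coloneqq \U_\hyperfixed \zoom{g}{\noisefixed}{x}$; note $h$ is an honest polynomial in $\R^n$ (since $\zoom{g}{\noisefixed}{x}$ is a polynomial in its argument and $\U_\hyperfixed$ preserves polynomiality, by \Cref{eqn:U-formula}). With this notation, the definition of $\nderiv{y}{y'}{\hyperfixed,\noisefixed}$ gives
\[
    (\nderiv{\by}{\by'}{\hyperfixed,\noisefixed} g)(x)^2 \;=\; \tfrac{1}{2}\bigl(h(\by) - h(\by')\bigr)^2.
\]

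Next, I would expand the square and take the expectation over independent $\by, \by' \sim \normal(0,1)^n$. Using independence, $\E[h(\by) h(\by')] = \E[h(\by)]\,\E[h(\by')] = \E[h]^2$, so
\[
    \Ex_{\by,\by'}\!\left[\tfrac{1}{2}\bigl(h(\by) - h(\by')\bigr)^2\right] \;=\; \E[h(\by)^2] - \E[h]^2 \;=\; \Var[h].
\]

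Finally, unfolding the abbreviation $h = \U_\hyperfixed \zoom{g}{\noisefixed}{x}$ and recalling \Cref{def:hypervar}, which says $\HVar_\hyperfixed[f] = \Var[\U_\hyperfixed f]$ for any polynomial $f$ (here applied with $f = \zoom{g}{\noisefixed}{x}$), we obtain
\[
    \Var[h] \;=\; \Var\bigl[\U_\hyperfixed \zoom{g}{\noisefixed}{x}\bigr] \;=\; \HVar_\hyperfixed\bigl[\zoom{g}{\noisefixed}{x}\bigr],
\]
which is the desired identity. There is no real obstacle here: the fact is essentially the observation that a symmetrized squared difference of two iid evaluations of a polynomial computes its variance, combined with the definitional equality of hypervariance with the ordinary variance of the noise-operator-amplified polynomial.
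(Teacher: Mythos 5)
Your proof is correct, and the core idea matches the paper's: both reduce the left-hand side to the variance of $\U_\hyperfixed \zoom{g}{\noisefixed}{x}$, which is $\HVar_\hyperfixed[\zoom{g}{\noisefixed}{x}]$ by \Cref{def:hypervar}. The only difference is presentational — the paper expands everything in the Hermite basis and invokes orthonormality plus independence to isolate the nonconstant terms, whereas you package that same computation as the standard identity $\tfrac12\E[(\bX-\bX')^2]=\Var[\bX]$ for iid copies of $\bX = h(\by)$, which is a cleaner and slightly more abstract way to reach the same conclusion.
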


\begin{proof}
The left-hand side is equal to
\begin{align*}
&{\frac 1 2} \E \bracks*{
\parens*{
\sum_{\alpha} \hyperfixed^{|\alpha|} \wh{\zoom{g}{\noisefixed}{x}}(\alpha) \left(
h_\alpha(\by)-h_\alpha(\by')
\right)
}^2
}\\
&= {\frac 1 2} \E \bracks*{
\parens*{
\sum_{\alpha \neq 0} \hyperfixed^{|\alpha|} \wh{\zoom{g}{\noisefixed}{x}}(\alpha) \left(
h_\alpha(\by)-h_\alpha(\by')
\right)
}^2
}\tag{since $h_{\alpha}(y) \equiv 1$ for $\alpha=0$}\\
&= {\frac 1 2}
\sum_{\alpha,\beta \neq 0}
\hyperfixed^{|\alpha|+|\beta|}
\wh{\zoom{g}{\noisefixed}{x}}(\alpha)
\wh{\zoom{g}{\noisefixed}{x}}(\beta)
\E  \bracks*{
(h_\alpha(\by)-h_\alpha(\by'))
(h_\beta(\by)-h_\beta(\by'))
}\\
& = \sum_{|\alpha|>0} \hyperfixed^{2|\alpha|} \wh{\zoom{g}{\noisefixed}{x}}(\alpha)^2
        =  \HV_\hyperfixed[\zoom{g}{\noisefixed}{x}]
\end{align*}
where the penultimate equality is by orthonormality of the Hermite polynomials and independence of $\by,\by'$.
\end{proof}

\subsubsection{The statistics in $\stats$} \label{sec:the-statistics}

Fix a parameter
\begin{equation} \label{eq:hyperfixed}
\hyperfixed \coloneqq \Theta(\taylor)
\end{equation}
where $\taylor$ is a parameter we will set later.  (See \Cref{eq:taylor}; it will in fact be an absolute constant.)
We can now define the set of statistics,~$\stats$.  Each statistic is doubly indexed by a pair of natural numbers; there are $(\degree+1)(\maxcol+1)$ many statistics, $\{\st_{i,j}\}_{i \in \{0,\dots,\degree\}, j \in \{0,\dots,\maxcol\}},$ where
\begin{equation} \label{eq:maxcol}
\maxcol \coloneqq (2\degree+1)^2.
\end{equation}
It is convenient for us to view the elements of $\stats$ as being arranged in a grid where the $(i,j)$-th statistic $\st_{i,j}$ is in row $i$ and column $j$ (we will often use terminology of this sort).  We remark that our statistic $\st_{i,j}$ will closely correspond to the functions called $|p_\theta^{(\ell),m} (X)|_2^2$ in~\cite{kane11focs} ($i = \ell, j = m$), except that as mentioned we use \emph{amplified} noisy derivatives where Kane just had noisy derivatives.

All statistics are defined in terms of the underlying degree-$\degree$ polynomial $\p$.
We first define the 0$^{th}$ column of statistics:
\begin{definition} [0$^{th}$ column of statistics] \label{def:si0}
    For $i \in \{0,\dots,\degree\}$, we define
    \[
        \st_{i,0}(x) \coloneqq \E\bracks*{\parens*{\nderiv{\by_i}{\by_i'}{\hyperfixed,\noisefixed} \nderiv{\by_{i-1}}{\by_{i-1}'}{\hyperfixed,\noisefixed} \cdots \nderiv{\by_1}{\by_1'}{\hyperfixed,\noisefixed} \p(x)}^2},
    \]
    where $\by_1, \by'_1, \dots, \by_i, \by'_i \sim \normal(0,1)^n$ are independent.
\end{definition}
\begin{remark}
    By \Cref{fact:deriv-deg}, $\st_{\degree,0}$ is a constant function and $\st_{\degree+1,0}$ is identically zero; this is why we consider $\st_{i,0}$ only for $i \leq \degree$.
\end{remark}

For  technical reasons, we will also need to use slight variants of the statistics $\st_{i,0}$, which correspond to taking an average over mildly noisy versions of the input:

\begin{definition}   [The remaining statistics] \label{def:sij}
  For $i \in \{0,\dots,\degree\}$ and $j \in \{1,\dots,\maxcol\}$ we define  $\st_{i,j}$ by
    \[
        \st_{i,j}(x) \coloneqq \U_{\sqrt{1-\noisefixed}} \st_{i,j-1}(x) = \E_{\by \sim \normal(0,1)^n} \bracks*{\zoom{(\st_{i,j-1})}{\noisefixed}{x}(\by)}.
    \]
 \end{definition}

\begin{remark} \label{rem:semigroup} Using the semigroup property, we have that
\[
  \st_{i,j}(x) = \U_{(1-\noisefixed)^{j/2}} \st_{i,0}(x) =
  \E_{\by \sim \normal(0,1)^n} \bracks*{\zoom{(\st_{i,0})}{1 - (1-\noisefixed)^j}{x}(\by)}.
\]
\end{remark}

This completes the formal definition of the statistics in $\stats$; however, it will be very useful for us to view the statistics from a different perspective based on distributions of polynomials.  We introduce this perspective in the next subsection.

\subsection{A distributional view on the statistics} \label{sec:distributional}

For the sake of probabilistic technicalities, we will need to define a notion of a distribution of polynomials being  ``nice'':

\begin{definition} \label{def:nice}
We say that a distribution $(g_{\bup})_{\bup \sim \Upsilon}$ of $n$-variable polynomials of degree at most $d$ is \emph{nice} if
\begin{itemize}

\item $\Upsilon$ is the normal distribution $\normal(0,1)^I$ for some natural number $I$, and

\item for each $\upsilon$ in the support of $\Upsilon$, the coefficients of $p_\upsilon$ are polynomials of degree at most~$d$ in $\upsilon.$
\end{itemize}
\end{definition}

It will be convenient for us to view the statistics $\st_{i,j}$ as averages of squares of polynomials drawn from various nice distributions. To do this, we inductively define a grid of nice distributions of polynomials ${\cal F}_{i,j}$ as follows.  The base distribution, $\calF_{0,0}$ is just the probability distribution with a single outcome, namely the polynomial~$\p$. (Note that this corresponds to a nice distribution with $I=0.$)  Next, we inductively define the distributions $\calF_{i,0}$ as follows:

\begin{itemize}

\item {\bf To make a draw from $\calF_{i,0}$, where $i > 0$:}  First draw $\boldf \sim \calF_{i-1,0}$.
 Then draw $\by, \by' \sim \normal(0,1)^n$.
 Then output the polynomial $\nderiv{\by}{\by'}{\hyperfixed,\noisefixed} \boldf$.

\end{itemize}
\newcommand{\opnoise}{\mathrm{N}}
The following ``operator notation'' for zooms (local to this section and \Cref{sec:sick}) will be convenient: for a polynomial~$p$ and a vector $y \in \R^n$, we define the notation
\begin{equation} \label{eq:opnoise}
\opnoise_{y} p \coloneqq \zoom{p}{1-\noisefixed}{y}.
\end{equation}

With the distributions of polynomials $\calF_{i,0}$ defined as above, we inductively define the distributions
 $\calF_{i,j}$, where $j > 0$ as follows:

\begin{itemize}

\item {\bf To make a draw from $\calF_{i,j}$, where $j > 0$:}
First draw $\boldf \sim \calF_{i,j-1}$.
 Then draw $\by \sim \normal(0,1)^n$.
  Then output the polynomial $\opnoise_{\by} \boldf$.

\end{itemize}

It is immediate from these definitions that each ${\cal F}_{i,j}$ is a nice distribution of polynomials of degree at most $d$.  It is also immediate, comparing the above definition against \Cref{def:si0} and \Cref{def:sij}, that for each $i \in \{0,\dots,d\}$ and each $j \in \{0,\dots,\maxcol\}$ we have that
\begin{equation} \label{eq:alt-sij}
    \st_{i,j} = \E_{\boldf \sim \calF_{i,j}}[\boldf^2].
\end{equation}
Finally, it is straightforward to check from these definitions (using also \Cref{fact:deriv-hypervar}) that
\begin{equation} \label{eq:stats-as-zooms}
    \st_{i+1,0}(x)  = \E_{\boldf \sim \calF_{i,0}}[\HV_\hyperfixed[\zoom{\boldf}{\noisefixed}{x}]],
    \qquad
    \st_{i,j+1}(x)  = \E_{\boldf \sim \calF_{i,j}}[\|\zoom{\boldf}{\noisefixed}{x}\|_2^2].
\end{equation}

These characterizations will be useful when we analyze the statistics later.


\ignore{
%
%
%
}

\subsection{Defining the mollifier checks} \label{sec:MollifierChecks-def}

\noindent {\bf Intuition.}  In this subsection we define the collection $\MollifierChecks$ of mollifier checks. Before formally defining these checks, we give some useful intuition concerning them. We will show in \Cref{sec:hard-to-soft} that except with very small failure probability over $\bx \sim \normal(0,1)^n$, the statistics $\st_{i,j}(\bx)$ satisfy the following properties, where $\anticoncgap,\horizclose>0$ are suitable small parameters:

\begin{enumerate}

\item {\bf Local hyperconcentration:} For each $i \in \{0,\dots,\degree-1\}$,
\begin{equation}    \label[ineq]{ineq:anticoncgap}
 \st_{i+1,0}(\bx) \leq \anticoncgap \st_{i,1}(\bx).
\end{equation}

\item {\bf Insensitivity under noise:} For each $i \in \{0,\dots,\degree\}, j \in \{0,\dots,\maxcol-{2}\}$,
\begin{equation}    \label[ineq]{ineq:noise-insens}
 \st_{i,j}(\bx) \approx_{\horizclose} \st_{i,j+1}(\bx).
\end{equation}

\end{enumerate}

The parameter settings we require will turn out to be the following:
\begin{equation}  \label{eq:anticoncgap-horizclose}
\anticoncgap \text{ satisfying } \poly(\taylor d)^\taylor \cdot \anticoncgap^{\taylor/2} = \noisefixed \epsprg,
\quad \quad \quad \quad
\horizclose \coloneqq {\frac 1 {K \degree \maxcol}},
\end{equation}
where $K$ is a suitably large absolute constant and  $\taylor > 2$ is a constant that will be set later in \Cref{eq:taylor}  (for now, the most important thing to notice is that since $\taylor > 2$, the exponent on $\anticoncgap$ above is strictly greater than~$1$).
\ignore{\rasnote{See the ``CHECK/ENFORCE'' rasnotes in \Cref{sec:anticoncentration} and \Cref{sec:noise-insensitivity} for constraints that we'll need $\anticoncgap, \horizclose$ and $\noisefixed$ to satisfy together}
}

The mollifier checks are designed precisely to check that the above properties \Cref{ineq:anticoncgap,ineq:noise-insens} actually hold at~$x$, and thus \Cref{thm:hard-to-soft} corresponds to the fact that these properties hold with high probability for a random $\bx \sim \normal(0,1)^n$.

With the above intuition in place, we proceed to define the mollifier checks that check for each of the above types of properties (1) and (2).
Please see below for a figure depicting the mollifier checks and the grid of statistics.
\myfig{.75}{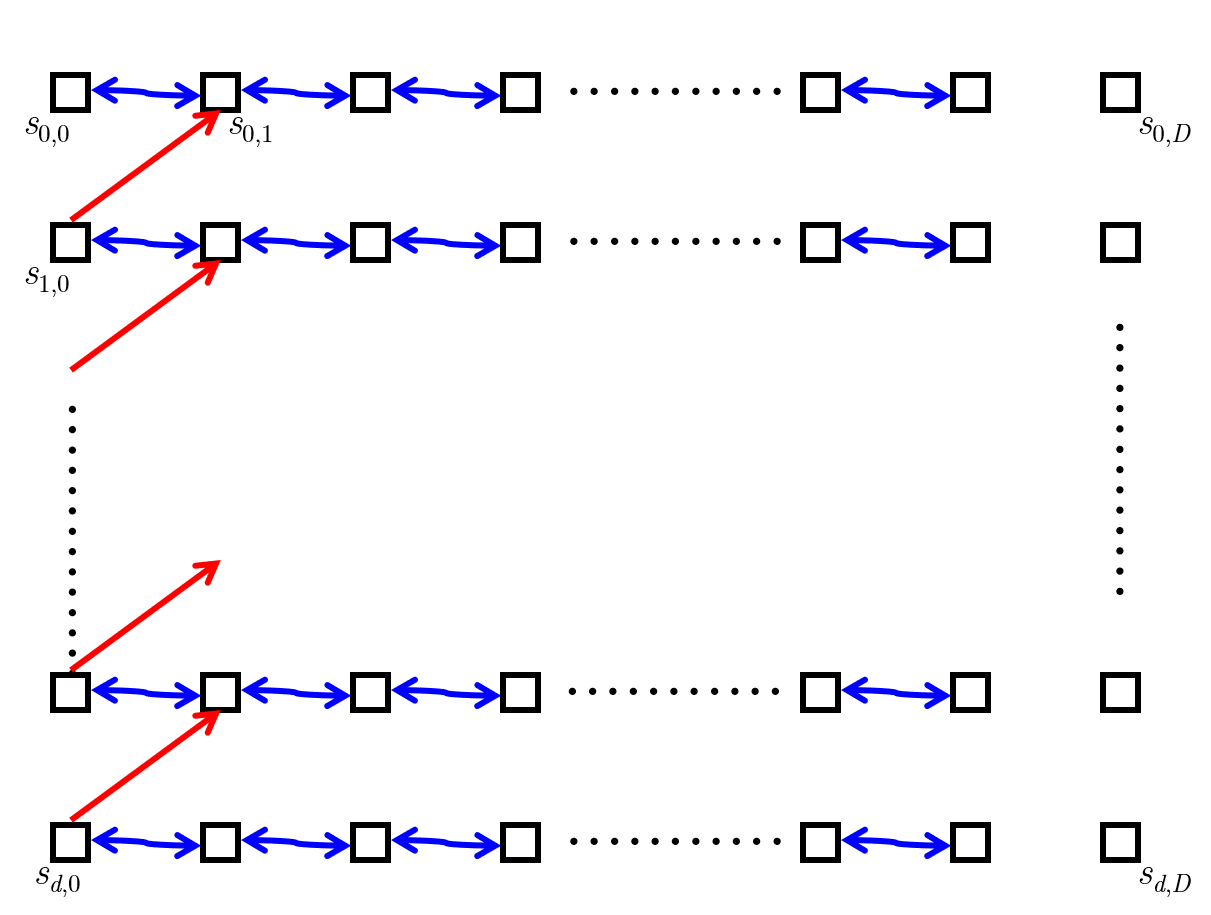}{The grid of statistics and the mollifier checks. Local hyperconcentration checks are depicted in red and noise insensitivity checks are depicted in blue.}{}

\subsubsection{Checking local hyperconcentration} \label{sec:checking-local-hyperconcentration}
For each $i \in \{0,\dots,\degree-1\}$, $\MollifierChecks$ contains a check corresponding to \Cref{ineq:anticoncgap}. The ``inequality'' portion of the check is
\[
\st_{i+1,0} \leq e\anticoncgap \st_{i,1}
\]
and the ``softness'' parameter of the check is $1$; so for this check $\chk$ in $\MollifierChecks,$ the associated ``soft check'' is
\begin{equation} \label{eq:diagonal-check}
\softcheck_\chk(x) = \sigma
\parens*{ \ln\parens*{\frac{e\anticoncgap \st_{i,1}}{\st_{i+1,0}}}}.
\end{equation}
We refer to these $\degree$ elements of $\MollifierChecks$ as ``anticoncentration checks'' or (recalling the grid) as ``diagonal checks.''

\subsubsection{Checking insensitivity under noise} \label{sec:checking-insensitivity}
For each $i \in \{0,\dots,\degree\},j \in \{0,\maxcol-2\},$ $\MollifierChecks$ contains a pair of checks corresponding to \Cref{ineq:noise-insens}. The ``inequality'' portion of the first (respectively, second) check of the pair is
\[
\st_{i,j}(x) \geq   \exp(2\horizclose) \cdot \st_{i,j+1}(x)
\quad \text{(respectively, }
\st_{i,j+1}(x) \geq  \exp(2\horizclose)\cdot \st_{i,j}(x)\text{)},
\]
and the ``softness'' parameter of each of these checks is $\horizclose$. So for these two elements $\chk,\chk'$ of $\MollifierChecks$ the associated ``soft checks'' are
\begin{align}
\softcheck_\chk(x) &= \sigma
\parens*{\horizclose^{-1} \ln\parens*{\frac{ \exp(2\horizclose) \cdot \st_{i,j}(x)}{\st_{i,j+1}(x)}}}
\quad \text{and} \label{eq:horizontal-check1}\\
\softcheck_{\chk'}(x) &= \sigma
\parens*{\horizclose^{-1} \ln\parens*{\frac {\exp(2\horizclose) \cdot \st_{i,j+1}}{\st_{i,j}}}}. \label{eq:horizontal-check2}
\end{align}
We refer to these $2\degree\maxcol$ checks as ``noise-insensitivity checks'' or  as ``horizontal checks.''

\medskip

This concludes the definition of $\MollifierChecks$, so recalling \Cref{def:allsoftchecks} the definition of $\allsoftchecks$ is now complete.  We turn to proving \Cref{thm:hard-to-soft}.


\subsection{Breaking down the mollification error for the proof of \Cref{thm:hard-to-soft}} \label{sec:hard-to-soft}

Recalling the definition of $\MollifierChecks$ from \Cref{sec:MollifierChecks-def}, the approach to proving \Cref{thm:hard-to-soft} is clear.  We will show that each of the local hyperconcentration (diagonal) checks passes ``with room to spare'' with high probability over $\bx$, and that likewise each of the noise-insensitivity (horizontal) checks passes with room to spare with high probability over $\bx$.  The two theorems stated below give the desired bounds:

\begin{theorem} [Local hyperconcentration\kanenote{, rough analogue of Corollary~10 of \cite{kane11focs}}]
\label{thm:local-hyperconcentration}
For each $i \in \{0,\dots,\degree-1\}$, except with probability at most $\epsprg/(8 \degree)$ over $\bx \sim \normal(0,1)^n$, \Cref{ineq:anticoncgap} holds, i.e.
\[
\st_{i+1,0}(\bx) \leq \anticoncgap \st_{i,1}(\bx).
\]
\end{theorem}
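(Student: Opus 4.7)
The strategy is to reformulate the claim using the distributional characterization of the statistics developed in \Cref{sec:distributional}, and then appeal to the (distributional) Local Hyperconcentration Theorem (the paper's main technical tool, informally stated as \Cref{thm:local-hyperconcentration-informal}).

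\textbf{Step 1 (Rewriting).} By the identities in \eqref{eq:stats-as-zooms}, the inequality $\st_{i+1,0}(\bx) \leq \anticoncgap \st_{i,1}(\bx)$ unfolds to
\[
\E_{\boldf \sim \calF_{i,0}}\bigl[\HV_\hyperfixed[\zoom{\boldf}{\noisefixed}{\bx}]\bigr] \;\leq\; \anticoncgap \cdot \E_{\boldf \sim \calF_{i,0}}\bigl[\|\zoom{\boldf}{\noisefixed}{\bx}\|_2^2\bigr].
\]
Recalling \Cref{def:attenuated-prelim}, this is precisely the assertion that the random polynomial $\zoom{\boldf}{\noisefixed}{\bx}$ (viewed as a polynomial in the ``noise variable'') is $(\hyperfixed,\anticoncgap)$-attenuated \emph{on average over $\boldf \sim \calF_{i,0}$}. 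This is the ``local hyperconcentration'' conclusion, in its form averaged over the distribution $\calF_{i,0}$.

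\textbf{Step 2 (Niceness and degree).} I will check that $\calF_{i,0}$ is a nice distribution in the sense of \Cref{def:nice}, so that the Local Hyperconcentration Theorem applies. This is immediate from the inductive definition of $\calF_{i,0}$: each step either forms $\nderiv{\by}{\by'}{\hyperfixed,\noisefixed}\boldf$ or composes with $\opnoise_{\by}$, both of which keep the distribution nice (preserving polynomial dependence of the coefficients on the Gaussian parameters $\by,\by'$) and do not increase the degree in $x$. In particular, every $\boldf$ in the support of $\calF_{i,0}$ has degree at most $d-i\le d$ in $x$.

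\textbf{Step 3 (Invoking the Local Hyperconcentration Theorem).} I then invoke the full Local Hyperconcentration Theorem (the distributional strengthening of \Cref{thm:local-hyperconcentration-informal}, stated in \Cref{thm:RZL} / \Cref{main theorem}), applied to the nice distribution $\calF_{i,0}$ with amplification parameter $\hyperfixed = \Theta(T)$ and scale $\noisefixed = (\epsprg/d)^{O(1)}$. The quantitative conclusion to use is tuned exactly so that the target gap $\anticoncgap$ (satisfying $\poly(Td)^T \cdot \anticoncgap^{T/2} = \noisefixed \epsprg$ from \eqref{eq:anticoncgap-horizclose}) is achieved with failure probability at most $\epsprg/(8d)$ over $\bx$. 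A union bound over the $d$ values of $i$ will eventually be handled when this theorem is combined with the noise-insensitivity checks, but for this theorem alone a single application suffices.

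\textbf{Main obstacle.} The entire content here lies in the Local Hyperconcentration Theorem itself, which is the central technical innovation of the paper; once it is available in the distributional form appropriate to nice distributions, the current statement is essentially a rewriting plus a direct invocation. The hardest part of the overall program is therefore not this reduction but establishing the theorem's promised ``scale $\noisefixed = d^{-O(1)}$'' for hypervariance (rather than merely variance, as in \Cref{thm:kane-technical}), which uses Gaussian Carbery--Wright anticoncentration (\Cref{thm:CW}) and Kane's rotational-invariance lemma (\Cref{lem:kane-lemma9-alt}), avoiding the $2^{O(d)}$ losses that hypercontractivity would otherwise impose.
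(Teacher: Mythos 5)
Your proposal is correct and follows essentially the same route as the paper's own proof: rewrite $\st_{i+1,0}(\bx)$ and $\st_{i,1}(\bx)$ via \eqref{eq:stats-as-zooms} as the average hypervariance and average squared $2$-norm of the random zoom over $\calF_{i,0}$, observe that $\calF_{i,0}$ is nice, and then directly instantiate the distributional Local Hyperconcentration Theorem (the paper uses \Cref{main theorem} from the appendix) with $\hypergeneric = \hyperfixed$, $\noisegeneric = \noisefixed$, $\eps = \anticoncgap$, and $\beta = \epsprg/(8\degree)$, checking that the parameter constraint on $\noisegeneric$ is satisfied. The only cosmetic slip is invoking "\Cref{thm:RZL} / \Cref{main theorem}" in one breath; one needs the distributional version (\Cref{thm:HZL} or \Cref{main theorem}), not the single-polynomial \Cref{thm:RZL}, but you correctly identify this requirement in the surrounding text.
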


\begin{theorem} [Noise-insensitivity\kanenote{, analogue of Lemma~11 of \cite{kane11focs}}] \label{thm:noise-insensitivity}
 For each $i \in \{0,\dots,\degree\}$ and $j \in \{0,\dots,\maxcol-1\},$ except with probability at most
 $\epsprg/(8(\degree+1)\maxcol)$ over $\bx \sim \normal(0,1)^n$, \Cref{ineq:noise-insens} holds, i.e.
\[
 \st_{i,j}(\bx) \approx_{\horizclose} \st_{i,j+1}(\bx).
\]
\end{theorem}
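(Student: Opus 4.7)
The plan is to derive \Cref{thm:noise-insensitivity} from a direct application of Kane's local concentration result, \Cref{lem:kane-lemma9-alt}, to the polynomial $h \coloneqq \st_{i,j}$, which has degree at most $2\degree$ by \Cref{fact:deriv-deg}. The key structural observation is the semigroup identity from \Cref{rem:semigroup}, namely $\st_{i,j+1}(x) = \U_{\sqrt{1-\noisefixed}}\st_{i,j}(x) = \E_{\by \sim \normal(0,1)^n}[\zoom{h}{\noisefixed}{x}(\by)]$. So the task is to show that, with high probability over $\bx$, this expectation is multiplicatively within $e^{\pm\horizclose}$ of $h(\bx)$.

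Concretely, I first apply \Cref{lem:kane-lemma9-alt} to $h$ with a parameter $\beta$ chosen so that (i) the resulting closeness parameter $\nu = O(\degree^2/\beta)\sqrt{\noisefixed}$ is at most $\horizclose/2$, and (ii) $\sqrt{\beta}$ is at most the target failure probability $\epsprg/(8(\degree+1)\maxcol)$. Both constraints can be arranged using the choice \eqref{eq:noisefixed} of $\noisefixed$ with a sufficiently large absolute constant in the exponent (recall $\horizclose = \Theta(1/\degree^3)$ and $\maxcol = \Theta(\degree^2)$). Markov's inequality over $\bx$ then yields: except with probability $\sqrt{\beta}$ over $\bx \sim \normal(0,1)^n$, the bad set $B_{\bx}$ of $\by$'s for which $\zoom{h}{\noisefixed}{\bx}(\by) \approx_{\horizclose/2} h(\bx)$ fails to hold has Gaussian measure at most $\sqrt{\beta}$. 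Fix such a good $\bx$ and decompose $\st_{i,j+1}(\bx) = E_G + E_B$ according to the good/bad partition of $\by$. Since $h$ is nonnegative (being an average of squares, by \eqref{eq:alt-sij}), both contributions are nonnegative, and $E_G$ lies in the interval $[e^{-\horizclose/2}(1-\sqrt{\beta}),\, e^{\horizclose/2}]\cdot h(\bx)$, contributing the desired near-multiplicative agreement.

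The main obstacle is to bound $E_B$ by a small multiplicative factor of $h(\bx)$ without incurring the $e^{O(\degree)}$ loss that would arise from invoking hypercontractivity (\Cref{thm:aobf-9.22}) to relate $\|\zoom{h}{\noisefixed}{\bx}\|_2$ to $\|\zoom{h}{\noisefixed}{\bx}\|_1 = \st_{i,j+1}(\bx)$. My plan is to instead apply \Cref{lem:kane-lemma9-alt} a second time to the polynomial $h^q$ of degree at most $2q\degree$, for a suitably large constant $q$: this yields, for most $\bx$, an $L^q$-style bound of the form $\E_{\by}[\zoom{h}{\noisefixed}{\bx}(\by)^q] \leq O(1)\cdot h(\bx)^q$, where the (even higher-power) bad contribution is handled by bootstrapping the same argument a constant number of additional stages (which is fine since $q$ stays constant so that the degrees of all polynomials involved remain polynomial in $\degree$). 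H\"older's inequality then gives $E_B \leq \|\zoom{h}{\noisefixed}{\bx}\|_q\cdot (\sqrt{\beta})^{1-1/q} \leq O(1)\cdot h(\bx)\cdot \beta^{(1-1/q)/2}$, which is much smaller than $\horizclose\cdot h(\bx)$ by our parameter choices. Combining the good- and bad-contribution bounds yields $\st_{i,j+1}(\bx) \in [e^{-\horizclose},\, e^{\horizclose}]\cdot h(\bx)$ on the good event, completing the proof.
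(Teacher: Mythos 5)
There is a genuine gap, and it sits exactly where you flagged it as "the main obstacle": bounding $E_B$ without hypercontractivity. The quantity you need, a bound of the form $\|\zoom{h}{\noisefixed}{\bx}\|_q \le O(1)\cdot h(\bx)$ for most $\bx$, is precisely a \emph{local hyperconcentration} statement, and the bootstrap you describe cannot supply it. Unwinding the recursion: if you write $a_t \coloneqq \E_\by[\zoom{h}{\noisefixed}{\bx}(\by)^{q^t}]/h(\bx)^{q^t}$, then H\"older plus the $t$-th application of \Cref{lem:kane-lemma9-alt} gives $a_t \le e^{\nu_t} + \mu_t^{1-1/q}\, a_{t+1}^{1/q}$, so you never exit the chain: each stage's bad contribution needs the \emph{next} moment, with no base case. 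To terminate you must eventually fall back on hypercontractivity for $a_T$, which costs $e^{\Theta(d)}$ (\Cref{thm:aobf-9.22} / \Cref{thm:hypercon}); this factor only shrinks by $q$-th roots as it propagates back, so you would need $T = \Omega(\log d)$ stages, at which point the top-level factor is still $q^{Td} = d^{\Omega(d)}$, forcing $\beta = d^{-\Omega(d)}$ and then $\noisefixed = 2^{-\Omega(d\log d)}$ through the $\nu \le \horizclose/2$ constraint. That reintroduces exactly the exponential dependence on $d$ the whole paper is designed to avoid, so the argument does not prove the theorem for $\noisefixed = (\epsprg/d)^{O(1)}$.

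The paper proves \Cref{thm:noise-insensitivity} by a fundamentally different, and in fact short, route that sidesteps the tail-integral entirely. It combines two concentration statements about the same random variable $\zoom{g}{\noisefixed}{\bx}(\by)$: \Cref{lem:kane-lemma9-alt} (via \Cref{cor:kane-lemma-9}) gives, for most $\bx$, concentration around the center $g(\bx)$ with failure probability at most $0.1$ over $\by$; and the Local Hyperconcentration Theorem \Cref{main theorem}, fed through \Cref{lem:attenuated-hyperconcentrated} and \Cref{prop:atten-hyperconcy}, gives, for most $\bx$, concentration around the other center $\E[\zoom{g}{\noisefixed}{\bx}]$ with failure probability at most $0.04$ over $\by$. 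Since $0.1 + 0.04 < 1$, both events hold simultaneously for some $\by$, which forces the two centers themselves to satisfy $g(\bx) \approx_{\nu + \gamma} \E[\zoom{g}{\noisefixed}{\bx}]$ --- with no bad-set integral to control. Your proposal, by contrast, tries to establish the mean comparison by dominated integration, which is exactly the step that requires a hyperconcentration input; reproducing that input from \Cref{lem:kane-lemma9-alt} alone by bootstrapping is the missing piece and is not achievable without the paper's new theorem.
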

\begin{proofof}{\Cref{thm:hard-to-soft} using \Cref{thm:local-hyperconcentration} and \Cref{thm:noise-insensitivity}.}  By a union bound over failure probabilities, we have that with probability at least $1 - \epsprg/4$ \Cref{ineq:anticoncgap} holds for all $i$ and \Cref{ineq:noise-insens} holds for all $i,j$. If \Cref{ineq:anticoncgap} holds for a given $i$ then $\ln {\frac{e\anticoncgap \st_{i,1}}{\st_{i+1,0}}} \geq 1$ and the diagonal check \Cref{eq:diagonal-check} evaluates to 1.
If \Cref{ineq:noise-insens} holds for a given $i,j$ then $\horizclose^{-1} \ln \parens*{\frac {\exp(2 \horizclose) \cdot s_{i,j}}{s_{i,j+1}}} \geq 1$ and the horizontal check \Cref{eq:horizontal-check1} evaluates to 1, and similarly $\horizclose^{-1} \ln\parens*{\frac {(\exp(2 \horizclose) \cdot  \st_{i,j+1}}{\st_{i,j}}} \geq 1$ and the horizontal check \Cref{eq:horizontal-check2} evaluates to 1.
\end{proofof}

It remains to prove \Cref{thm:local-hyperconcentration,thm:noise-insensitivity}.

%




\section{Local hyperconcentration: Proof of
\Cref{thm:local-hyperconcentration}} \label{sec:anticoncentration}

In this section we present the key new ingredient underlying our main result, the \emph{Local Hyperconcentration Theorem} for degree-$\degree$ polynomials. As alluded to in the Introduction, this result says that with high probability over a Gaussian $\bx \sim \normal(0,1)^n,$ the $\noisegeneric$-zoom of a degree-$d$ polynomial $p$ at~$\bx$ (i.e.~the polynomial $\zoom{p}{\noisegeneric}{\bx}$) is attenuated --- intuitively, it is ``very close to a constant polynomial''.  We refer to this result as a ``local hyperconcentration theorem'' since by \Cref{lem:attenuated-hyperconcentrated} attenuation of $\zoom{p}{\noisegeneric}{\bx}$ implies that the random variable $\zoom{p}{\noisegeneric}{\bx}(\by)$ (for $\by \sim \normal(0,1)^n$) is hyperconcentrated; this property will play a crucial role in our later technical arguments.

For technical reasons related to the definition of our statistics (essentially because each statistic $\st_{i,j}(x)$ is an average of polynomials --- recall \Cref{sec:distributional}), the actual statement we will need is one that is about a distribution of polynomials rather than a single polynomial.  However, for clarity of exposition we first state the ``one-polynomial'' version of \violet{the original local hyperconcentration theorem from \cite{OST20}} below:

\begin{theorem}[Local hyperconcentration theorem for a single polynomial]                                   \label{thm:RZL}
       Let $g: \R^n \to \R$ be a polynomial of degree at most $\degree$.   Fix parameters $\hypergeneric \geq 1$,
        $0 < \eps \leq 1$, $0 < \beta < 1$, and assume 
        \[
            \noisegeneric \leq \frac{\eps}{\hypergeneric^2} \cdot \parens*{\frac{\beta}{d}}^{C \log \degree} 
       \]
(where $C$ is a certain universal constant). Then for $\bx \sim \normal(0,1)^n$, except with probability at most~$\beta$ we have that
       the randomly zoomed polynomial $\zoom{g}{\noisegeneric}{\bx}$ is $(\hypergeneric,\eps)$-attenuated; i.e.,
        \[
            \HV_{\hypergeneric}[\zoom{g}{\noisegeneric}{\bx}] \leq \eps \cdot \|\zoom{g}{\noisegeneric}{\bx}\|_2^2.
        \]
\end{theorem}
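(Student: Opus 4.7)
The plan is to separately control the numerator $\HV_\hypergeneric[\zoom{g}{\noisegeneric}{\bx}]$ and the denominator $\|\zoom{g}{\noisegeneric}{\bx}\|_2^2$ of the attenuation inequality. Bounding the numerator is routine: \Cref{cor:attenuate0} together with the identity $\sum_m \hypergeneric^{2m} \Pr[\textnormal{Bin}(M,\noisegeneric) = m] = (1+\noisegeneric(\hypergeneric^2-1))^M$ gives
\[
    \E_{\bx}\bracks*{\HV_\hypergeneric[\zoom{g}{\noisegeneric}{\bx}]} = \sum_{M \geq 1} \mathrm{W}^{=M}[g] \bracks*{(1+\noisegeneric(\hypergeneric^2-1))^M - (1-\noisegeneric)^M} \leq O(\hypergeneric^2 \noisegeneric \degree)\,\|g\|_2^2,
\]
provided $\hypergeneric^2 \noisegeneric \degree$ is sufficiently small, which will hold under our eventual choice of $\noisegeneric$. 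Markov's inequality then yields the pointwise bound $\HV_\hypergeneric[\zoom{g}{\noisegeneric}{\bx}] \leq O(\hypergeneric^2 \noisegeneric \degree/\beta)\,\|g\|_2^2$ with probability at least $1-\beta/2$ over $\bx$.

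The substantive task is a matching high-probability lower bound of the form $\|\zoom{g}{\noisegeneric}{\bx}\|_2^2 \gtrsim (\beta/\degree)^{O(\log \degree)}\,\|g\|_2^2$. Noting that $\|\zoom{g}{\noisegeneric}{\bx}\|_2^2 = \U_{\sqrt{1-\noisegeneric}}(g^2)(\bx)$, this amounts to an anticoncentration statement for a nonnegative degree-$2\degree$ polynomial whose Gaussian mean equals $\|g\|_2^2$. The direct application of Carbery--Wright (\Cref{thm:CW}) to this polynomial would yield only a $(\beta/\degree)^{O(\degree)}$ lower bound, which is exponentially too weak. To obtain the target $(\beta/\degree)^{O(\log \degree)}$ dependence, I would prove the theorem by induction on $\degree$, roughly halving the degree at each step so that the recursion has only $O(\log \degree)$ levels, each level costing roughly a $\poly(\degree/\beta)$ factor in $\noisegeneric$.

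The inductive step is the main obstacle and is the most delicate part of the argument. A natural plan is to decompose $g = g^{<k} + g^{\geq k}$ for $k = \lceil \degree/2 \rceil$, apply the inductive hypothesis to $g^{<k}$ (a polynomial of degree at most $\degree/2$) to obtain an anticoncentration lower bound on $\|\zoom{g^{<k}}{\noisegeneric}{\bx}\|_2^2$, and separately control the high-degree part $g^{\geq k}$ using \Cref{cor:attenuate0}: the expected Hermite weight of $\zoom{g^{\geq k}}{\noisegeneric}{\bx}$ at any level $m < k$ is suppressed by the binomial leakage probability $\binom{M}{m}\noisegeneric^m$, which is tiny for $m < k$ and $\noisegeneric \ll 1/d$. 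The cross term between $\zoom{g^{<k}}{\noisegeneric}{\bx}$ and $\zoom{g^{\geq k}}{\noisegeneric}{\bx}$ in the 2-norm must be controlled via Cauchy--Schwarz, and the failure probabilities at each of the $O(\log \degree)$ inductive levels must be chosen so that their total stays below $\beta/2$. At the base case $\degree = O(1)$, the conclusion follows directly from \Cref{thm:CW} together with Kane's \Cref{lem:kane-lemma9-alt}, which relates $\|\zoom{g}{\noisegeneric}{\bx}\|_2^2$ to $g(\bx)^2$ up to a factor that is harmless at constant degree.
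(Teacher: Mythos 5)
Your plan of separately upper-bounding $\HV_\hypergeneric[\zoom{g}{\noisegeneric}{\bx}]$ (via \Cref{cor:attenuate0} plus Markov) and lower-bounding $\|\zoom{g}{\noisegeneric}{\bx}\|_2^2$ is a reasonable framing, and the numerator calculation you give is correct. But the crucial part—the lower bound on $\|\zoom{g}{\noisegeneric}{\bx}\|_2^2$—has a genuine gap in your proposed induction, and in fact your framing is not how the paper proceeds: the paper never decouples the numerator from the denominator. Instead it maintains a ratio-type invariant (the ``$(k,R,\eps)$-attenuated'' condition of \Cref{def:attenuated}) through the iteration, pushing $k$ from $d$ down to $0$ over $O(\log d)$ zoom steps.

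The gap in your inductive step is this. You decompose $g = g^{<k} + g^{\geq k}$ with $k = \lceil d/2\rceil$ and plan to apply the inductive hypothesis to $g^{<k}$. But there is no hypothesis that $g^{\geq k}$ is small; if most of $\|g\|_2^2$ lives above level $k$, then $\|g^{<k}\|_2^2 \ll \|g\|_2^2$ and your inductive lower bound on $\|\zoom{g^{<k}}{\noisegeneric}{\bx}\|_2^2$ is vacuous. Worse, your claim that the weight of $\zoom{g^{\geq k}}{\noisegeneric}{\bx}$ at levels $m < k$ is ``suppressed by the binomial leakage probability $\binom{M}{m}\noisegeneric^m$'' is simply false for $m=0$: since $\Pr[\mathrm{Bin}(M,\noisegeneric)=0] = (1-\noisegeneric)^M \approx 1$ for $\noisegeneric M \ll 1$, the zoom sends essentially \emph{all} of the high-level weight to level $0$, not away from the low levels. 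So the cross-term between $\zoom{g^{<k}}{\noisegeneric}{\bx}$ and $\zoom{g^{\geq k}}{\noisegeneric}{\bx}$ can cause cancellation on the full scale of $\|g\|_2$ and Cauchy–Schwarz will not save you.

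What makes the halving work in the paper is precisely the invariant you are missing. The one-stage lemma (\Cref{thm:HZL-one-hit}) takes as \emph{input} a distribution that is $(k,\hypergenerictwo,1)$-attenuated; this hypothesis forces $\|g^{>k}\|_2^2 \lesssim \hypergenerictwo^{-2k}\|g\|_2^2$, so that the ``retention'' argument of \Cref{lem:HZL-2-norm-retention} can apply Carbery–Wright to the degree-$k$ part $\ell = g^{\leq k}$ (paying only $(\beta'/k)^{2k}$) while the ``attrition'' argument of \Cref{lem:HZL-hi-degree-weight-low} then re-establishes attenuation at the new, halved level $\lfloor k/2\rfloor$ for the next round. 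At the very first round this input hypothesis is free, since any degree-$d$ polynomial is trivially $(d,\hypergenerictwo,1)$-attenuated. Without this hypothesis propagated through your induction, the decomposition $g = g^{<k}+g^{\geq k}$ gives you nothing in the first step, where you would have to pay the full $(\beta/d)^{\Theta(d)}$ Carbery–Wright cost. The fix, in short, is to replace your pure 2-norm induction with an induction on the attenuation level, which is exactly \Cref{thm:HZL-one-hit}.
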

Another way to phrase the conclusion is that for $\bh = \zoom{g}{\noisegeneric}{\bx}$, except with probability~$\beta$ we have that $\bh$ is such that
\[
    \HV_{\hypergeneric}[\bh] \leq (d/\beta)^{O(\log d)} \cdot  \hypergeneric^2 \noisegeneric \cdot \|\bh\|_2^2.
\]
Notice that the dependencies here on $\hypergeneric$ and $\noisegeneric$ are ``correct'' in the sense that if one intuitively thinks of $\noisegeneric$ as ``infinitesimal'', we  expect that~$\bh$ will have $\Theta(\noisegeneric)$ weight at level~$1$, negligible weight above level~$1$, and the definition of $\HV_{\hypergeneric}$ multiplies this $\Theta(\noisegeneric)$ level-$1$ weight by~$\hypergeneric^2$.  The ``error factor'' in this result, $(d/\beta)^{O(\log d)}$ with $\beta \sim \epsprg$,  essentially becomes our final seed length (divided by $\log n$).

Because of the need to analyze the statistics introduced in \Cref{sec:the-statistics}, we will often need to work with a distribution over polynomials rather than a single polynomial.  We therefore introduce the following generalization of \Cref{def:attenuated}, which captures the notion of a distribution over polynomials being attenuated on average:

\begin{definition}  [Nice distribution of polynomials is attenuated on average]\label{def:attenuated-on-average-prelim}
Let $(g_{\bup})_{\bupsilon \sim \Upsilon}$ be a nice (in the sense of \Cref{sec:distributional}) distribution of polynomials over $\R^n$\ignore{of degree at most $\degree$}.   For $\hypergeneric \geq 1$ and $0 < \eps \leq 1$, we say that the distribution $(g_{\bup})_{\bup \in \Upsilon}$ is \emph{$(\hypergeneric,\eps)$-attenuated on average} if
    \[
        \E_{\bup} \bracks*{\HV_\hypergeneric[g_{\bup}]} \leq \eps \cdot \E_{\bup} \bracks*{\|g_{\bup}\|_2^2}.
    \]
\end{definition}

The actual main result we prove in this section is \Cref{thm:HZL}, which generalizes \Cref{thm:RZL} to a nice distribution of polynomials \violet{and is the original local hyperconcentration theorem from \cite{OST20}}:

\begin{theorem} [Local hyperconcentration theorem for a nice distribution of polynomials]                                   \label{thm:HZL}
       Let $(g_{\bup})_{\bup \sim \Upsilon}$ be a nice (in the sense of \Cref{sec:distributional}) distribution of degree-$\degree$ polynomials.   Fix parameters $\hypergeneric \geq 1$, $0 < \eps \leq 1$, $0 < \beta < 1$, and assume
\begin{equation} \label{eq:HZL-conditions}
\noisegeneric \leq \frac{\eps}{\hypergeneric^2} \cdot \parens*{\frac{\beta}{d}}^{C \log \degree}              \end{equation}
(where $C$ is a certain universal constant). Then for $\bx \sim \normal(0,1)^n$, except with probability at most $\beta$ we have that the distribution $(\zoom{(g_\bup)}{\noisegeneric}{\bx})_{\bup \in \Upsilon}$ is $(\hypergeneric,\eps)$-attenuated on average; i.e.,
        \[
            \E_{\bup} \bracks*{\HV_{\hypergeneric}[\zoom{(g_{\bup})}{\noisegeneric}{\bx}]}
            \leq  \eps \cdot \E_{\bup} \bracks*{\|\zoom{(g_{\bup})}{\noisegeneric}{\bx}\|_2^2}.
        \]
\end{theorem}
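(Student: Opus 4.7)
The plan is to prove \Cref{thm:HZL} in two stages: first establish the single-polynomial version \Cref{thm:RZL}, then reduce the distributional statement to it via a Fubini-type averaging argument.

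For the single-polynomial version, I plan to combine Kane's Lemma~9 (\Cref{lem:kane-lemma9-alt}) with Gaussian hypercontractivity (\Cref{thm:aobf-9.22}) and Carbery--Wright anticoncentration (\Cref{thm:CW}). Kane's Lemma tells us that for typical $\bx \sim \normal(0,1)^n$, the zoom $h(\by) \coloneqq \zoom{g}{\noisegeneric}{\bx}(\by)$ is multiplicatively close to the constant $g(\bx)$ for most $\by$, with closeness parameter $\nu = O(d^2/\beta)\sqrt{\noisegeneric}$. Combining this pointwise closeness with hypercontractivity applied to the ``residual'' polynomial $r = h - g(\bx)$ bounds $\|r\|_2^2 = \Var[h]$ by roughly $\nu^2 \cdot g(\bx)^2$ (modulo hypercontractivity losses); invoking Carbery--Wright to lower-bound $|g(\bx)|$ in terms of $\|g\|_2$ then gives a ratio bound for $\Var[h]/\|h\|_2^2$. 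The main challenge is upgrading from this $\Var$ bound to an $\HV_\hypergeneric$ bound with only $\hypergeneric^{-2}$ (rather than the trivial $\hypergeneric^{-2d}$) dependence on the amplification $\hypergeneric$. My plan is an iterative argument: the hypervariance decomposes as $\HV_\hypergeneric[h]=\sum_{k=1}^d \hypergeneric^{2k}\W^{=k}[h]$, and by \Cref{fact:deriv-hypervar} combined with \Cref{def:amplified-derivative} each Hermite-level contribution is related to a zoom of an iterated noisy derivative of $g$, whose degree is strictly lower. Recursively applying the argument to these lower-degree polynomials across $O(\log d)$ levels of binary refinement produces the claimed $(\beta/d)^{C\log d}$ factor.

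For the extension of \Cref{thm:RZL} to \Cref{thm:HZL}, my plan is a Fubini argument exploiting the nice-distribution structure. Applying \Cref{thm:RZL} to each $g_\bup$ individually with parameter $\beta'$ yields that for each fixed $\bup$, except on a set of $\bx$'s of measure $\beta'$, the ratio $\HV_\hypergeneric[\zoom{g_\bup}{\noisegeneric}{\bx}]/\|\zoom{g_\bup}{\noisegeneric}{\bx}\|_2^2$ is at most $\eps$. Fubini then implies that except on a set of $\bx$'s of measure $\sqrt{\beta'}$, the failure probability over $\bup$ is at most $\sqrt{\beta'}$. For such ``good'' $\bx$, the contribution from the good $\bup$'s to both sides of the desired average inequality satisfies the required ratio bound. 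To handle the bad $\bup$'s (of measure $\leq \sqrt{\beta'}$), I exploit that for a nice distribution $\|g_\bup\|_2^2$ and $\HV_\hypergeneric[\zoom{g_\bup}{\noisegeneric}{\bx}]$ are themselves polynomials in $\bup$ of degree $O(d)$, so Gaussian polynomial tail bounds (\Cref{thm:tail-bound}) let me absorb the bad contribution by taking $\beta'$ appropriately smaller than $\beta$.

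The primary obstacle I anticipate is the iterative bootstrap in the single-polynomial case: achieving $\hypergeneric^{-2}$ dependence with only $(\beta/d)^{O(\log d)}$ loss requires organizing the recursion so that the polynomial degree is halved at each step and the per-level hypercontractivity losses telescope correctly, rather than accumulating to a prohibitive $e^{O(d)}$. A secondary obstacle is ensuring that the Fubini reduction does not introduce further $e^{O(d)}$ factors that would destroy the quasi-polynomial-in-$d$ bound.
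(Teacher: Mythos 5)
Your proposal departs substantially from the paper's actual argument for \Cref{thm:HZL}, and I believe it has two genuine gaps.

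\textbf{The Fubini reduction from \Cref{thm:RZL} to \Cref{thm:HZL} does not go through.}  ``Attenuated on average'' is a condition on a \emph{ratio of expectations}, $\E_{\bup}[\HV_R[\cdot]]/\E_{\bup}[\|\cdot\|_2^2] \leq \eps$, whereas what your Fubini argument delivers is that the \emph{pointwise} ratio is $\leq \eps$ for most $\bup$.  These are not interchangeable: a small-measure set of ``bad'' $\bup$'s with large $\HV_R[\zoom{g_\bup}{\noisegeneric}{\bx}]$ and moderate $\|\zoom{g_\bup}{\noisegeneric}{\bx}\|_2^2$ can dominate the numerator while barely affecting the denominator, destroying the ratio.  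Your plan to absorb the bad $\bup$'s via Gaussian tail bounds on the polynomial $\upsilon \mapsto \|g_\upsilon\|_2^2$ runs into exactly the $2^{O(d)}$ trap this paper is designed to avoid: \Cref{thm:tail-bound} for degree-$O(d)$ polynomials in $\upsilon$ only becomes useful at thresholds $t \gtrsim (2e)^{d/2}$, and without a per-$\bup$ bound on $\HV_R$ better than the trivial $R^{2d}\|\cdot\|_2^2$, the bad-$\bup$ contribution costs an $R^{O(d)}$ or $e^{O(d)}$ factor.  The paper avoids this entirely by proving the distributional statement directly, keeping the $\E_{\bup}$ average inside every step (see in particular \Cref{prop:HZL-CW-corollary}, which applies Carbery--Wright to the polynomial $x \mapsto \E_\bup[\|(g_\bup)_{\lambda\mid x}\|_2^2]$ rather than to each $g_\bup$ separately).

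\textbf{The single-polynomial argument is also structured differently and is too underspecified to close.}  You propose Kane's \Cref{lem:kane-lemma9-alt} + hypercontractivity to get a $\Var$ bound, and then a ``binary refinement'' recursion over noisy derivatives to lift $\Var$ to $\HV_R$.  But a derivative $\nderiv{y}{y'}{R,\lambda} g$ reduces degree by only $1$ (\Cref{fact:deriv-deg}), so recursing on derivatives gives depth $d$, not $\log d$, and each level invokes hypercontractivity with its attendant $e^{O(\text{degree})}$ cost.  The paper's mechanism for achieving $\log d$ depth is entirely different: it exploits that \emph{random zooms compose} ($\zoom{(\zoom{g}{\lambda}{\bx})}{\lambda'}{\bx'} \overset{d}{=} \zoom{g}{\lambda\lambda'}{\bz}$), so a single zoom at rate $\noisegeneric = (\beta/d)^{O(\log d)}$ decomposes into $\log d$ weak zooms; each weak zoom halves the ``attenuation threshold'' $k$ (\Cref{thm:HZL-one-hit}), which is proved by a ``retention'' part (\Cref{lem:HZL-2-norm-retention}, via Carbery--Wright and tail bounds) plus an ``attrition'' part (\Cref{lem:HZL-hi-degree-weight-low}, via the binomial formula \Cref{cor:attenuate0} and Markov) --- and crucially \emph{not} via \Cref{lem:kane-lemma9-alt} or hypercontractivity.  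Kane's Lemma~9 does appear in this paper, but only in \Cref{sec:noise-insensitivity} to prove \Cref{thm:noise-insensitivity}; it plays no role in the local hyperconcentration theorem itself.  Without the zoom-composition observation, the $(\beta/d)^{O(\log d)}$ noise threshold does not emerge.
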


\violet{In \Cref{app:kane} an improved version of \Cref{thm:HZL}, namely \Cref{main theorem},
 is proved,  which only requires an upper bound on $\lambda$ of $\eps\beta/(Rd^{O(1)})$.}  
\Cref{thm:local-hyperconcentration} follows from \Cref{main theorem}\ignore{\rasnote{Was ``follows from \Cref{thm:HZL}''; in the following ``Proof of \Cref{thm:local-hyperconcentration} using \Cref{thm:HZL}'' all the ``\Cref{thm:HZL}'''s were replaced by ``\Cref{main theorem}'''s}} directly by setting parameters as follows:

\begin{proofof}{\Cref{thm:local-hyperconcentration} using \Cref{main theorem}}
We  instantiate \Cref{main theorem}  with its nice distribution ``$(g_{\bup})_{\bup \sim \Upsilon}$'' being  ${\cal F}_{i,0}$,  its ``$\hypergeneric$'' parameter being set to $\hyperfixed$ defined in \Cref{eq:hyperfixed}, its ``$\noisegeneric$'' parameter being $\noisefixed$, its ``$\eps$'' parameter being $\anticoncgap$, and its ``$\beta$'' parameter being $\epsprg/(8 \degree)$. Recalling \Cref{eq:alt-sij} and \Cref{eq:stats-as-zooms} we have
\[
\st_{i,0}(x) = \Ex_{\bup}[g_{\bup}(x)^2], \quad \st_{i,1}(x) =  \E_{\bup}\bracks*{\|\zoom{(g_{\bup})}{\noisefixed}{x}\|_2^2}, \quad \text{and~}\st_{i+1,0}(x)= \E_{\bup} \bracks*{\HV_{\hyperfixed}[\zoom{(g_{\bup})}{\noisefixed}{x}]}.
\]
Recalling the settings of $\noisefixed$ and $\anticoncgap$ from \Cref{eq:noisefixed} and \Cref{eq:anticoncgap-horizclose}, we see that the bound required in \Cref{eq:HZL-conditions} indeed holds, and so we can apply \Cref{main theorem}, and its conclusion gives precisely the desired conclusion of \Cref{thm:local-hyperconcentration}.
\end{proofof}



In the rest of this section we prove \Cref{thm:HZL}. We first explain the high-level structure of the argument in \Cref{sec:HZL-idea} and then give the formal proof in the rest of the section.

\subsection{A useful definition, and the high-level argument underlying \Cref{thm:HZL}} \label{sec:HZL-idea}

Before we can give the high level idea of the proof of \Cref{thm:HZL} we need a refined notion of a polynomial being attenuated:

\begin{definition}  [Attenuated polynomial, refined notion]\label{def:attenuated}
Let $g: \R^n \to \R$ be a polynomial of degree at most $\degree.$
For $k \geq 0$, $\hypergeneric \geq 1$, and $0 < \eps \leq 1$, we say that the polynomial $g$ is \emph{$(k,\hypergeneric,\eps)$-attenuated} if
    \begin{equation} \label{eq:attenuated}
\sum_{|\beta| > k} \hypergeneric^{2|\beta|} \wh{g}(\beta)^2 = \HV_\hypergeneric[g^{> k}] \leq  \eps \cdot \|g\|_2^2.
    \end{equation}
Similarly, if $(g_{\bup})_{\bupsilon \sim \Upsilon}$ is a nice (in the sense of \Cref{sec:distributional}) distribution of polynomials over $\R^n$\ignore{of degree at most $\degree$},
we say that the distribution $(g_{\bup})_{\bup \in \Upsilon}$ is \emph{$(k,\hypergeneric,\eps)$-attenuated on average} if
    \[
        \E_{\bup} \bracks*{\HV_\hypergeneric[g_{\bup}^{> k}]} \leq \eps \cdot \E_{\bup} \bracks*{\|g_{\bup}\|_2^2}.
    \]
Note that $g$ being $(0,\hypergeneric,\eps)$-attenuated is the same as $g$ being $(\hypergeneric,\eps)$-attenuated as defined earlier (see \Cref{def:attenuated-prelim}), and likewise for $(g_{\bup})_{\bupsilon \sim \Upsilon}$ (see \Cref{def:attenuated-on-average-prelim}).
\end{definition}


With this refined notion of attenuation in hand we can explain the high level idea of   our local hyperconcentration theorem.  For ease of exposition, below we sketch the underlying ideas in the ``one-polynomial'' setting of \Cref{thm:RZL} (the same ideas drive the proof of  \Cref{thm:HZL}).

So, we are given a degree-$\degree$ polynomial $g$ and the goal is to argue that with high probability over a random point $\bx \sim \normal(0,1)^n$, the polynomial $\zoom{g}{\noisegeneric}{\bx}$ is $(\hypergeneric,\eps)$-attenuated, i.e.~$(0,\hypergeneric,\eps)$-attenuated.  A simple but crucial insight pointing the way is that \emph{random zooms compose}:  in more detail, if $0<\noisegeneric,\noisegeneric'<1$ are two noise rates and $\bx,\bx'$ are two independent $\normal(0,1)^n$ random variables, then the distribution of the composed random zoom $\zoom{(\zoom{g}{\noisegeneric}{\bx})}{\noisegeneric'}{\bx'}$ is identical to the distribution of $\zoom{g}{\noisegeneric \noisegeneric'}{\bz}$ where $\bz \sim \normal(0,1)^n$.  With this in mind, it is natural to view a random zoom at the small noise rate $\noisegeneric$ as a ``strong'' random zoom which is obtained by composing a sequence of $\log \degree$ many ``weaker'' random zooms at larger noise rates.\footnote{The idea of decomposing a ``strong''random zoom into multiple ``weak'' random zooms is due to Avi Wigderson.}  If we can prove that a ``weak'' random zoom with high probability causes a $(k,\hypergeneric,\eps)$-attenuated polynomial to become $(k/2,\hypergeneric,\eps)$-attenuated, then since any degree-$\degree$ polynomial is trivially $(\degree,\hypergeneric,\eps)$-attenuated, a simple union bound over $\log \degree$ many applications of this ``one-stage'' result yields the desired random zoom lemma for $g$.  This is precisely the high-level structure of our argument; see \Cref{thm:HZL-one-hit} for a formal statement of the one-stage result in the more general setting of a nice distribution of polynomials.

We proceed to give intuition for the proof of the one-stage result.  In this setting we are now given $g$ which is a $(k,\hypergeneric,\eps)$-attenuated polynomial; intuitively this means that the amount of Hermite weight it has at levels $k+1,k+2,\dots$ is very small compared to the total Hermite weight of $g$ at all levels $0,1,\dots$.  We must argue that with high probability over $\bx \sim \normal(0,1)^n$, after a random zoom at $\bx$ the polynomial $\bq \coloneqq\zoom{g}{\noisegeneric}{\bx}$ is $(k/2,\hypergeneric,\eps)$-attenuated, i.e.~the amount of Hermite weight $\bq$ has at levels $k/2 + 1,k/2 + 2,\dots$ is very small relative to the total Hermite weight of $\bq$ at levels $0,1,\dots$.  This is naturally done via a two part argument.  The first part is to argue \emph{two-norm retention}:  this amounts to showing that with high probability over $\bx$, the squared two-norm of $\bq$ does not become too small relative to the squared two-norm of $p$. The argument for this is based on the Carbery--Wright anticoncentration bound (\Cref{thm:CW}) and the tail bound for Gaussian polynomials (\Cref{thm:tail-bound}); see \Cref{sec:retention} for a precise statement and proof of this part.
The second part is to argue \emph{attrition} of the high-degree Hermite weight:  this amounts to showing that with high probability after a random zoom, the amount of Hermite weight at levels $k/2+1,k/2+2,\dots$ becomes very small relative to the squared two-norm of $p$. The argument for this is based on \Cref{cor:attenuate0} and Markov's inequality; see \Cref{sec:attrition} for a precise statement and proof.

\subsection{First part of the proof of the one-stage local hyperconcentration theorem: Retention} \label{sec:retention}

The main result of this section is \Cref{lem:HZL-2-norm-retention}. Its proof uses the following proposition:
\begin{proposition}                                     \label{prop:HZL-CW-corollary}
    Let $(g_{\bup})_{\bup \sim \Upsilon}$ be a nice distribution of polynomials of degree at most~$k$.  Then for $\bx \sim \normal(0,1)^n$, except with probability at most~$\beta'$ we have
    \[
\Ex_{\bup} [\|(g_{\bup})_{\lambda\mid\bx}\|_2^2] \geq \parens*{\frac{\beta'}{O(k)}}^{2k} \Ex_{\bup} [\|g_{\bup}\|_2^2].
    \]
\end{proposition}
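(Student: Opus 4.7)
\medskip

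\noindent\textbf{Proof proposal.} The plan is to recognize the left-hand side as a nonnegative degree-$2k$ polynomial in~$x$ and then invoke the Gaussian Carbery--Wright anticoncentration bound (\Cref{thm:CW}). The key preparatory observation is that for any single polynomial~$g$ and any $x \in \R^n$,
\[
\|\zoom{g}{\lambda}{x}\|_2^2 = \Ex_{\by \sim \normal(0,1)^n}\bracks*{g(\sqrt{1-\lambda}x + \sqrt{\lambda}\by)^2} = \U_{\sqrt{1-\lambda}}(g^2)(x),
\]
which follows directly from \Cref{def:gaussian-noise-operator} with $\rho = \sqrt{1-\lambda}$. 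Taking expectations over~$\bup$ and using linearity of~$\U_{\sqrt{1-\lambda}}$,
\[
\Ex_{\bup}\bracks*{\|\zoom{(g_{\bup})}{\lambda}{x}\|_2^2} = \U_{\sqrt{1-\lambda}}(F)(x) \eqqcolon G(x), \qquad \text{where } F(x) \coloneqq \Ex_{\bup}[g_{\bup}(x)^2].
\]

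The next step is to record the three properties of $G$ that we need. First, $F$ is a polynomial in~$x$ of degree at most~$2k$: by niceness (\Cref{def:nice}), $g_{\bup}(x)^2$ is a polynomial in $(\bup,x)$ of degree at most~$2k$ in~$x$, and integrating out $\bup \sim \normal(0,1)^I$ leaves a polynomial in~$x$ of the same degree; hence so is $G = \U_{\sqrt{1-\lambda}} F$. Second, $F \ge 0$ pointwise as an average of squares, and $\U_{\sqrt{1-\lambda}}$ is a positive Markov-averaging operator for this range of its parameter, so $G \geq 0$ pointwise as well. Third, since $\U_{\rho}$ preserves Gaussian expectations,
\[
\E_{\bx}[G(\bx)] = \E_{\bx}[F(\bx)] = \Ex_{\bup}\bracks*{\|g_{\bup}\|_2^2}.
\]

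Finally I would apply \Cref{thm:CW} (Carbery--Wright) to the degree-$2k$ polynomial~$G$ with parameter $\delta = \beta'$, obtaining that except with probability at most~$\beta'$ over~$\bx$,
\[
G(\bx) = |G(\bx)| \geq \parens*{\tfrac{\beta'}{2Ck}}^{2k} \|G\|_2.
\]
Since $G \geq 0$ pointwise, Jensen's inequality gives $\|G\|_2 \geq \|G\|_1 = \E[G]$. Combining these chains of (in)equalities yields
\[
\Ex_{\bup}\bracks*{\|\zoom{(g_{\bup})}{\lambda}{\bx}\|_2^2} = G(\bx) \geq \parens*{\tfrac{\beta'}{O(k)}}^{2k} \Ex_{\bup}\bracks*{\|g_{\bup}\|_2^2},
\]
which is the claimed bound. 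There is no real obstacle here beyond the initial identity expressing $\Ex_{\bup}[\|\zoom{(g_\bup)}{\lambda}{x}\|_2^2]$ as a clean, nonnegative, degree-$2k$ polynomial in~$x$; once that identification is made, the rest is a direct application of anticoncentration together with Jensen.
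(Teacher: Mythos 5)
Your proposal is correct and follows essentially the same route as the paper: both arguments define the same nonnegative degree-$2k$ polynomial $x \mapsto \Ex_{\bup}[\|\zoom{(g_{\bup})}{\lambda}{x}\|_2^2]$ (you write it as $\U_{\sqrt{1-\lambda}} F$, the paper calls it $r$ and computes its Gaussian mean by Fubini), observe it has mean $\Ex_{\bup}[\|g_{\bup}\|_2^2]$, and then invoke Carbery--Wright together with $\|\cdot\|_2 \geq \|\cdot\|_1 = \E[\cdot]$. The only difference is bookkeeping: your identification of the polynomial as an image under the noise operator $\U_{\sqrt{1-\lambda}}$ replaces the paper's direct integral manipulation, but the content is identical.
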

\begin{proof}
    Let $r(x) \coloneqq \Ex_{\bup} [\|(g_{\bup})_{\lambda\mid x}\|_2^2].$ We observe that $r$ is a nonnegative degree-$2k$ polynomial with mean
\begin{align*}
    \E_{\bx \sim \normal(0,1)^n}[r(\bx)] &=
    \Ex_{\bup} \Ex_{\bx \sim \normal(0,1)^n}[\|(g_{\bup})_{\lambda\mid\bx}\|_2^2] =
     \Ex_{\bup} \Ex_{\bx,\bx' \sim \normal(0,1)^n}[g_{\bup}(\sqrt{1-\lambda} \bx + \sqrt{\lambda} \bx')]\\
     &=
     \Ex_{\bup} \Ex_{\bg \sim \normal(0,1)^n}[g_{\bup}(\bg)^2]=
     \Ex_{\bup}[\|g_{\bup}\|_2^2].
\end{align*}
The claimed result now follows immediately from the Carbery--Wright anticoncentration bound \Cref{thm:CW} applied to $\p = r$, since $\|r\|_2 \geq \|r\|_1 = \E[r]  =  \Ex_{\bup}[\|g_{\bup}\|_2^2]$.
\end{proof}

One way to think of the nice distribution of degree-$k$ polynomials in \Cref{prop:HZL-CW-corollary} is that it is ``$(k,+\infty,1)$-attenuated on average.''
\Cref{lem:HZL-2-norm-retention} relaxes this requirement and shows that a similar result holds for a nice distribution that is $(k,\hypergenerictwo,1)$-attenuated on average for a modestly large $\hypergenerictwo$.

\begin{lemma}                                     \label{lem:HZL-2-norm-retention}
    Let $(g_{\bup})_{\bup \sim \Upsilon}$ be a nice distribution of polynomials that is $(k,\hypergenerictwo,1)$-attenuated on average for some $k \in \N^+$.
    Fix a parameter $0 < \beta' < 1$.  Then for $\bx \sim \normal(0,1)^n,$ except with probability at most~$\beta'$ we have
    \[
        \Ex_{\bup}\bracks*{\|(g_{\bup})_{\lambda\mid\bx}\|_2^2} \geq \parens*{\frac{\beta'}{O(k)}}^{2k} \Ex_{\bup}\bracks*{\|g_{\bup}\|_2^2},
    \]
provided that (for a certain universal constant~$C$)\ignore{\rasnote{This was ``$        \hypergenerictwo \geq C\max\parens*{k/\beta', \sqrt{k \log(3/\beta')}/\beta'}$'' in Ryan's original writeup but I couldn't quite get the inequality at the end to work out with that --- maybe I'm making a silly error?  With what follows, which is slightly weaker (but not I think in a way that matters for us), it seems to work out with a little room to spare.  This change is why the lower bound on $\hypergenerictwo$ changed from what Ryan had earlier in \Cref{thm:RZL} and \Cref{thm:HZL}.}}
    \[
\hypergenerictwo \geq Ck \log(3/\beta')/\beta'.
    \]
\end{lemma}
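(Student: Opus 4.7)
The plan is to split each $g_\bup$ into its low- and high-Hermite-degree parts $g_\bup = g_\bup^{\leq k} + g_\bup^{>k}$, apply \Cref{prop:HZL-CW-corollary} to the low-degree part, and argue that the high-degree part's contribution to the zoomed $L_2$-mass is negligible thanks to the attenuation hypothesis. Since the zoom operation is linear in its polynomial argument, $(g_\bup)_{\lambda\mid\bx} = (g_\bup^{\leq k})_{\lambda\mid\bx} + (g_\bup^{>k})_{\lambda\mid\bx}$; the elementary inequality $\|u+v\|_2^2 \geq \tfrac12 \|u\|_2^2 - \|v\|_2^2$ (which follows from expanding the square and applying AM-GM to $2\langle u,v\rangle$) then reduces matters, after averaging over $\bup$, to lower-bounding $A \coloneqq \E_\bup[\|(g_\bup^{\leq k})_{\lambda\mid\bx}\|_2^2]$ and upper-bounding $B \coloneqq \E_\bup[\|(g_\bup^{>k})_{\lambda\mid\bx}\|_2^2]$ with high $\bx$-probability.

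For the lower bound on $A$: zeroing out all Hermite coefficients of level above $k$ in each $g_\bup$ produces a nice distribution whose polynomials have degree at most $k$ in $x$, so \Cref{prop:HZL-CW-corollary} applied with failure probability $\beta'/2$ yields $A \geq (\beta'/O(k))^{2k}\,\E_\bup[\|g_\bup^{\leq k}\|_2^2]$ except on an $\bx$-event of probability at most $\beta'/2$. Since every nonzero Hermite coefficient of $g_\bup^{>k}$ lies at level $\geq k+1$ and $\hypergenerictwo \geq 1$, we have $\HV_\hypergenerictwo[g_\bup^{>k}] \geq \hypergenerictwo^{2(k+1)}\|g_\bup^{>k}\|_2^2$, and the attenuation hypothesis then forces $\E_\bup[\|g_\bup^{>k}\|_2^2] \leq \hypergenerictwo^{-2(k+1)}\,\E_\bup[\|g_\bup\|_2^2]$; this is a tiny fraction of $\E_\bup[\|g_\bup\|_2^2]$ for $\hypergenerictwo$ in the hypothesized range, so Hermite orthogonality yields $\E_\bup[\|g_\bup^{\leq k}\|_2^2] \geq \tfrac12 \E_\bup[\|g_\bup\|_2^2]$.

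For the upper bound on $B$: interchanging the $\bx$ and $\bup$ expectations and using that a random zoom preserves squared $L_2$-norm in expectation, we get $\E_\bx[B] = \E_\bup[\|g_\bup^{>k}\|_2^2] \leq \hypergenerictwo^{-2(k+1)}\,\E_\bup[\|g_\bup\|_2^2]$. Since $B \geq 0$, Markov's inequality gives $B \leq (2/\beta')\,\hypergenerictwo^{-2(k+1)}\,\E_\bup[\|g_\bup\|_2^2]$ except on an $\bx$-event of probability at most $\beta'/2$.

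A union bound combines the two events to probability at most $\beta'$, and on their joint complement $\E_\bup[\|(g_\bup)_{\lambda\mid\bx}\|_2^2] \geq A/2 - B$. The main task---more bookkeeping than genuine obstacle---is to verify that the hypothesis $\hypergenerictwo \geq Ck\log(3/\beta')/\beta'$ is strong enough to make the Markov bound on $B$ negligible compared with $A/4$; chasing constants reduces this to an inequality of the form $\hypergenerictwo^{2(k+1)} \gtrsim (O(k)/\beta')^{2k}/\beta'$, which is comfortably implied by the hypothesis (the $\log(3/\beta')$ factor gives ample slack). The resulting bound $A/2 - B \geq (\beta'/O(k))^{2k}\,\E_\bup[\|g_\bup\|_2^2]$, with the $O(k)$ absorbing all constants introduced along the way, is the desired conclusion.
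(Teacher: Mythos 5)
Your proof is correct, and it takes a genuinely simpler route than the paper's.

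The paper decomposes $g_\upsilon$ into $\ell_\upsilon = g_\upsilon^{\leq k}$ plus infinitely many separate Hermite-level pieces $h^j_\upsilon = g_\upsilon^{=k+j}$ for $j \geq 1$, applies the hypercontractive tail bound (\Cref{thm:aobf-9.22} combined with \Cref{thm:tail-bound}) separately to each $h^j_\upsilon$ with geometrically shrinking failure probability $(\beta'/3)/10^j$, and then recombines via a triangle inequality on square-rooted $\E_\bup$-norms. You instead lump the whole high part into a single piece $g_\bup^{>k}$, control it with nothing more than Markov's inequality, and recombine via the elementary pointwise inequality $\|u+v\|_2^2 \geq \tfrac12\|u\|_2^2 - \|v\|_2^2$, which has the benefit of commuting trivially with the $\E_\bup$ average. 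What the paper's heavier approach buys is a much stronger tail on the high-degree contribution (exponentially small in $u^2$ rather than the $1/\beta'$ from Markov), but as you observe this extra strength is not actually needed here: the hypothesis $\hypergenerictwo \geq Ck\log(3/\beta')/\beta'$ makes $\hypergenerictwo^{-2(k+1)}$ so small that the crude Markov bound on $B$ is already negligible compared with $A$, with the $\log(3/\beta')$ factor providing spare slack. Your derivation of $\E_\bup[\|g_\bup^{>k}\|_2^2] \leq \hypergenerictwo^{-2(k+1)}\E_\bup[\|g_\bup\|_2^2]$ from the attenuation hypothesis is sound, as is the observation that truncating a nice distribution to Hermite level $\leq k$ yields a nice distribution of degree-$k$ polynomials to which \Cref{prop:HZL-CW-corollary} applies. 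One small remark: your proposal could be made fully self-contained by writing out the final arithmetic verification that $(2/\beta')\hypergenerictwo^{-2(k+1)} \leq \tfrac18(\beta'/O(k))^{2k}$, but you correctly flag it as routine and the calculation does go through.
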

\begin{proof}
\ignore{
In this proof we only use that $(g_{\bup})_{\bup \sim \Upsilon}$ is $(k,\hypergenerictwo,1)$-attenuated on average, which is an immediate consequence of its being $(k,\hypergenerictwo,\eps)$-attenuated on average since $\eps \leq 1.$}
We introduce the notation $\ell_\upsilon \coloneqq g_\upsilon^{\leq k}$ and $h^j_\upsilon \coloneqq g_\upsilon^{= k + j}$ for $j = 1, 2, 3, \dots$, so $g_\upsilon = \ell_\upsilon + \sum_{j \geq 1} h^j_\upsilon$.
We may assume without loss of generality that $\Ex_{\bup}\bracks*{\|g_{\bup}\|_2^2} = 1$, or equivalently, $\Ex_{\bup}\bracks*{\|\ell_\bup\|_2^2} + \sum_{j \geq 1} \Ex_{\bup}\bracks*{\|h^j_\bup\|_2^2}= 1$.  Since $(g_{\bup})_{\bup \sim \Upsilon}$ is $(k,\hypergenerictwo,1)$-attenuated on average, we have that
    \begin{align}
 \Ex_{\bup}\bracks*{\|g_{\bup}\|_2^2}  = 1 \geq \Ex_{\bup} \bracks*{\HV_\hypergenerictwo[g_{\bup}^{> k}]} &= \sum_{i > k} \hypergenerictwo^{2i}  \Ex_{\bup}\bracks*{ \|g_{\bup}^{=i}\|_2^2} \label{eqn:dual-use} \\
        &\geq  \hypergenerictwo^{2k}  \sum_{j \geq 1}\Ex_{\bup} \bracks*{ \|h^j_\bup\|_2^2} = \hypergenerictwo^{2k} \parens*{1-\Ex_{\bup}\bracks*{\|\ell_\bup\|_2^2}}. \nonumber
    \end{align}
We may deduce that
    \[
        \Ex_{\bup} \bracks*{\|\ell_\bup\|_2^2} \geq 1 - 1/\hypergenerictwo^{2k} \geq .99,
    \]
where the latter inequality holds assuming $C$ is large enough.  From \Cref{prop:HZL-CW-corollary}, we conclude that
    \begin{equation} \label[ineq]{eq:low-stay}
        \text{except with probability at most\ }\beta'/2 \text{ over }\bx, \qquad         \Ex_{\bup}\bracks*{\|(\ell_\bup)_{\lambda\mid\bx}\|_2^2} \geq \parens*{\frac{\beta'}{C_1 k}}^{2k},
    \end{equation}
    where $C_1$ is a universal (large) constant.  Our goal will be to establish the following: for all $j \geq 1$,
    \begin{equation}    \label[ineq]{eqn:HZL-high-death}
         \text{except with probability at most\ $(\beta'/3)/10^j$  over $\bx$}, \qquad         \Ex_{\bup}\bracks*{\|(h^j_\bup)_{\lambda\mid\bx}\|_2^2} \leq \parens*{\frac{\beta'}{C_1 k}}^{2(k+j)}.
    \end{equation}

    Before establishing \Cref{eqn:HZL-high-death}, we show how it yields the conclusion of the lemma.
Given \Cref{eq:low-stay,eqn:HZL-high-death}, summing over $j$ and taking a union bound, we get that except with probability at most~$\beta'$ over~$\bx$,
    \begin{equation} \label{eq:about-to-use-this}
        \sqrt{\Ex_{\bup}\bracks*{\|(\ell_\bup)_{\lambda\mid\bx}\|_2^2}} - \sum_{j \geq 1} \sqrt{\Ex_{\bup}\bracks*{\|(h^j_\bup)_{\lambda\mid\bx}\|_2^2}} \geq \frac12 \parens*{\frac{\beta'}{C_1 k}}^{k}.
    \end{equation}
The triangle inequality easily gives that for functions $a(\upsilon),b_1(\upsilon),b_2(\upsilon),\dots$, we have that
    \[
        \sqrt{\Ex_{\bup}\bracks*{(a(\bup) - \littlesum_j b_j(\bup))^2}} \geq \sqrt{\Ex_{\bup}\bracks*{a(\bup)^2}} - \sum_j \sqrt{\Ex_{\bup}\bracks*{b(\bup)^2}}.
    \]
    Applying this (for each outcome of~$\bx$) with $a(\bup) = \|(\ell_\bup)_{\lambda\mid\bx}\|_2$ and $b(\bup) = \|(h^j_\bup)_{\lambda\mid\bx}\|_2$, by \Cref{eq:about-to-use-this} we get that
    \[
        \sqrt{\Ex_{\bup}\bracks*{(\|(\ell_\bup)_{\lambda\mid\bx}\|_2 - \littlesum_{j \geq 1} \|(h^j_\bup)_{\lambda\mid\bx}\|_2)^2}} \geq \sqrt{\Ex_{\bup}\bracks*{\|(\ell_\bup)_{\lambda\mid\bx}\|_2^2}} - \sum_{j \geq 1} \sqrt{\Ex_{\bup}\bracks*{\|(h^j_\bup)_{\lambda\mid\bx}\|_2^2}} \geq {\frac 1 2}  \parens*{\frac{\beta'}{C_1 k}}^{k},
    \]
 from which we get (using $(\|f\| - \|g_1\| - \|g_2\| - \|g_3\| - \cdots )^2 \leq \|f +g_1 + g_2 + g_3 + \cdots\|^2$) that
    \[
        {\frac 1 4} \parens*{\frac{\beta'}{C_1 k}}^{2k} \leq \Ex_{\bup}\bracks*{(\|(\ell_\bup)_{\lambda\mid\bx}\|_2 - \littlesum_{j \geq 1} \|(h^j_\bup)_{\lambda\mid\bx}\|_2)^2} \leq \Ex_{\bup}\bracks*{\|(\ell_\bup)_{\lambda\mid\bx} + \littlesum_{j \geq 1} (h^j_\bup)_{\lambda\mid\bx}\|_2^2} = \Ex_{\bup}\bracks*{\|(g_{\bup})_{\lambda\mid\bx}\|_2^2},
    \]
which is the conclusion of the lemma since $\Ex_{\bup}\bracks*{\|g_{\bup}\|_2^2} = 1$.

    Thus it remains to establish \Cref{eqn:HZL-high-death}.  To do this, write $s_j(\bx) = \Ex_{\bup}\bracks*{\|(h^{j}_\bup)_{\lambda\mid\bx}\|_2^2}$  for brevity, and note that $s_j$ is a nonnegative polynomial of degree at most~$2(k+j)$.  Similar to the proof of \Cref{prop:HZL-CW-corollary}, we have that
    \[
        \|s_j\|_1  = \E_{\bx}[s_j(\bx)] =\E_{\bx} \bracks*{\Ex_{\bup}\bracks*{\|(h^{j}_\upsilon)_{\lambda\mid\bx}\|_2^2}}
        = \Ex_{\bup} \bracks*{\E_{\bx}[\|(h^{j}_\bup)_{\lambda\mid\bx}\|_2^2]}
        = \Ex_{\bup} \bracks*{\|h^{j}_\bup\|_2^2},
    \]
and by \Cref{eqn:dual-use} we have that $\Ex_{\bup} \bracks*{\|h^{j}_\bup\|_2^2} \leq 1/\hypergenerictwo^{2(k+j)}$.  Now using \Cref{thm:aobf-9.22} we get that $\|s_j\|_2  \leq (e/\hypergenerictwo)^{2(k+j)} $ and therefore (using \Cref{thm:tail-bound}, with its ``$t$'' set to $u^{2(k+j)}$) we get that
    \[
        \Pr\bracks*{s_j(\bx) \geq (ue/\hypergenerictwo)^{2(k+j)} } \leq \exp(-(k+j) u^2/e) = \exp(-u^2k/e) \exp(-u^2/e)^j
    \]
    for any choice of $u \geq \sqrt{2e}$.
    We will select $u = \frac{\beta' \hypergenerictwo}{C_1e k}$, so that the preceding inequality aligns with \Cref{eqn:HZL-high-death}; recalling the bound on $R$, this choice of $u$ is indeed at least $\sqrt{2e}$ provided~$C$ is taken at least $\sqrt{2}e^{3/2}C_1$.    Also taking $C$~sufficiently large in our assumption on~$\hypergenerictwo$, it is not hard to arrange for the error probability above to be at most~$(\beta'/3)(.1)^j$.  Thus \Cref{eqn:HZL-high-death} is established and the proof of \Cref{lem:HZL-2-norm-retention} is complete.
    \end{proof}

It is interesting to observe that both the results of this section, \Cref{prop:HZL-CW-corollary} and \Cref{lem:HZL-2-norm-retention}, hold with no dependence on the value of $0 < \noisegeneric < 1.$

\subsection{Second part of the proof of the one-stage local hyperconcentration theorem: Attrition}
\label{sec:attrition}

\ignore{
}

The attrition result we establish in this subsection, \Cref{lem:HZL-hi-degree-weight-low}, is a fairly direct consequence of \Cref{cor:attenuate0}. (Note that \Cref{lem:HZL-hi-degree-weight-low} does not  require that the nice distribution $(g_{\bup})_{\bup \sim \Upsilon}$ be \ignore{$(k,\hypergeneric,\eps)$-}attenuated on average --- it holds for any nice distribution of degree-$\degree$ polynomials.)

\begin{lemma} \label{lem:HZL-hi-degree-weight-low}
   Let $(g_{\bup})_{\bup \sim \Upsilon}$ be a nice distribution of polynomials of degree at most~$\degree$.  
   Fix parameters $0 < \beta' < 1$,  $\hypergeneric' \geq 1$, $m \in \N^+$, $0 < \eps \leq 1$, let $c > 0 $ be a sufficiently small constant, and assume $\lambda \leq \frac{c^4 \eps\beta'^5}{{\hypergeneric'}^2 m^3 \degree}$.\ignore{\rodnote{I don't see why you need $\eps^5$ here, don't you just need $\eps^1$?  I'm kind of into trying to get the good $\lambda$ vs.\ $\eps$ tradeoff, imagining all other factors are ``constant''}}  Then for~$\bx \sim \normal(0,1)^n$,
    \[
    {\frac 1 \eps}   \cdot \Ex_{\bup} \bracks*{\HV_{\hypergeneric'}[((g_{\bup})_{\lambda \mid \bx})^{\geq m}]} \leq \parens*{ \frac{O(c\beta')}{m}}^{{4}m} \Ex_{\bup} \bracks*{\|g_{\bup}\|_2^2}
    \]
    holds except with probability at most~$\beta'$.
\end{lemma}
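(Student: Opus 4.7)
The plan is to obtain a bound on $\E_\bx[\HV_{\hypergeneric'}[(\restr{g}{\lambda}{\bx})^{\geq m}]]$ directly from \Cref{cor:attenuate0}, and then to convert this expectation bound into a high-probability statement via Markov's inequality. For a single polynomial $g$ of degree at most $\degree$, linearity of expectation combined with \Cref{cor:attenuate0} gives
\[
    \E_{\bx}\bracks*{\HV_{\hypergeneric'}[(\restr{g}{\lambda}{\bx})^{\geq m}]} \;=\; \sum_{M \leq \degree} \mathrm{W}^{=M}[g] \sum_{m'=m}^{M} (\hypergeneric')^{2m'} \Pr[\textnormal{Bin}(M,\lambda) = m'].
\]
I would bound each binomial probability by $\binom{M}{m'}\lambda^{m'} \leq (e\degree/m)^{m'}\lambda^{m'}$ (using $M \leq \degree$ and $m' \geq m$), which turns the inner sum into a geometric series in $m'$ with common ratio $(\hypergeneric')^{2}e\degree\lambda/m$. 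Under the hypothesis on $\lambda$ this ratio is far smaller than $1/2$, so the series is dominated by its first term, yielding
\[
    \E_{\bx}\bracks*{\HV_{\hypergeneric'}[(\restr{g}{\lambda}{\bx})^{\geq m}]}\;\leq\; 2\parens*{\frac{(\hypergeneric')^2 e\degree\lambda}{m}}^{\!\!m}\|g\|_2^2.
\]

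Averaging over $\bup \sim \Upsilon$ extends this to the nice distribution with $\|g_\bup\|_2^2$ averaged on the right. Applying Markov's inequality in the variable $\bx$ then yields, except with probability at most $\beta'$,
\[
    \E_\bup\bracks*{\HV_{\hypergeneric'}[((g_\bup)_{\lambda\mid\bx})^{\geq m}]}\;\leq\;\frac{2}{\beta'}\parens*{\frac{(\hypergeneric')^2 e\degree\lambda}{m}}^{\!\!m}\E_\bup\bracks*{\|g_\bup\|_2^2}.
\]
To finish I would substitute the hypothesis $\lambda \leq c^4\eps{\beta'}^{5}/((\hypergeneric')^2 m^3 \degree)$, so that the base of the exponential becomes at most $ec^4\eps{\beta'}^5/m^4$, and then check that the resulting quantity is at most $\eps(O(c\beta')/m)^{4m}\E_\bup[\|g_\bup\|_2^2]$ for $c$ a sufficiently small absolute constant.

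I do not expect any serious obstacle here. The exponent $5$ on $\beta'$ in the hypothesis on $\lambda$ is exactly calibrated so that the $\beta'^{5m}$ produced by the geometric-series bound supplies both the $\beta'^{4m}$ factor demanded in the conclusion and the single power $1/\beta'$ lost to Markov; likewise, the exponent $3$ on $m$ in the hypothesis promotes the $m^{-m}$ rate obtained from the geometric series to the $m^{-4m}$ rate appearing on the right-hand side. One matches powers of $\eps$ via $\eps^m/\eps = \eps^{m-1} \leq 1$, and all remaining numerical constants collapse into the hidden constant inside $O(\cdot)$. The only slightly delicate point is making sure that the hypothesis is strong enough to drive the geometric series' common ratio below $1/2$, which is why one needs $c$ to be a small absolute constant rather than just asking $\lambda$ to be bounded by $O(\eps\beta'^5 / ((\hypergeneric')^2 m^3 \degree))$.
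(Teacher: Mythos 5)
Your proposal is correct and follows essentially the same route as the paper: compute $\E_\bx$ of the quantity via \Cref{cor:attenuate0}, bound the binomial probabilities by $\binom{M}{m'}\lambda^{m'}$, sum the resulting geometric series, and finish with Markov's inequality. The only cosmetic difference is that you replace $M$ by $\degree$ and $m'$ by $m$ inside the binomial bound immediately, whereas the paper keeps the summation index $j$ until the last step; both are valid and lead to the same final estimate.
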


\begin{proof}
    The expectation of the left-hand side is
    \begin{align*}
        \E_{\bx}\bracks*{\Ex_{\bup} \bracks*{\HV_{\hypergeneric'}[((g_{\bup})_{\lambda \mid \bx})^{\geq m}]}}
         &= \Ex_{\bup}\bracks*{ \sum_{i \geq m} {\hypergeneric'}^{2i} \E_{\bx}\bracks*{\W^{= i}\bracks*{(g_{\bup})_{\lambda \mid \bx}}}}  \\
         &=\Ex_{\bup}\bracks*{ \sum_{i \geq m} {\hypergeneric'}^{2i} \sum_{j \leq \degree} \Pr[\text{Bin}(j,\lambda) = i] \W^{= j}\bracks*{g_{\bup}}} \tag{\Cref{cor:attenuate0}}\\
         &=\sum_{j \leq \degree} \Ex_{\bup}\bracks*{ \W^{=j}[g_{\bup}]} \sum_{i \geq m} \Pr[\text{Bin}(j,\lambda) = i] {\hypergeneric'}^{2i}\\
        &\leq \sum_{j \leq \degree} \Ex_{\bup}\bracks*{ \W^{=j}[g_{\bup}]} \sum_{i \geq m} \parens*{\frac{ej}{i}}^i \lambda^i {\hypergeneric'}^{2i} \\
        &\leq \sum_{j \leq \degree} \Ex_{\bup}\bracks*{ \W^{=j}[g_{\bup}]} \parens*{\frac{2ej{\hypergeneric'}^2 \lambda}{m}}^{m} \tag*{(if $c$ small enough)}\\
        & \leq \parens*{\frac{2e\degree{\hypergeneric'}^2 \lambda}{m}}^{m}  \Ex_{\bup}\bracks*{ \|g_{\bup}\|_2^2} \\
        & \leq \parens*{\frac{2ec^4 \eps\beta'^5}{m^4}}^{m}  \Ex_{\bup}\bracks*{ \|g_{\bup}\|_2^2} \tag*{(by the bound on $\lambda$)} \\
        & \leq \beta' \eps \parens*{\frac{O(c \beta')}{m}}^{4m}  \Ex_{\bup}\bracks*{ \|g_{\bup}\|_2^2}.
    \end{align*}
    The result now follows by Markov's inequality.
\end{proof}

\subsection{Putting the pieces together:  Proof of the local hyperconcentration theorem}

Combining \Cref{lem:HZL-2-norm-retention} and \Cref{lem:HZL-hi-degree-weight-low} (with the ``$k$'' and ``$m$'' parameters satisfying $m = \lfloor k/2 \rfloor$, and adjusting constants),  we may deduce the following, which is our ``one-stage local hyperconcentration theorem:''

\begin{theorem} [One-stage local hyperconcentration theorem]                                    \label{thm:HZL-one-hit}
       Let $(g_{\bup})_{\bup \sim \Upsilon}$ be a nice distribution of polynomials of degree at most~$\degree$, and assume the distribution  is $(k,\hypergenerictwo,1)$-attenuated on average for some $1 \leq k \leq \degree$.  Fix parameters $\hypergeneric' \geq 1$, $0 < \beta' < 1$, $0 < \eps' \leq 1$, and assume\ignore{\rasnote{As before, earlier this was ``$\hypergenerictwo  \geq C \max\parens*{k/\beta', \sqrt{k \log(3/\beta')}/\beta'}$''}}
       \[
 \hypergenerictwo \geq Ck \log(3/\beta')/\beta',
       \qquad \lambda \le \frac{\eps'\beta'^5}{C{\hypergeneric'}^2 k^3 \degree }
       \]
for a suitably large universal constant $C$.
       Then except with probability at most~$\beta'$ over $\bx \sim \normal(0,1)^n$, the distribution $(\zoom{(g_{\bup})}{\lambda}{\bx})_{\bup \sim \Upsilon}$ is $(\lfloor k/2\rfloor,\hypergeneric',\eps')$-attenuated on average.
\end{theorem}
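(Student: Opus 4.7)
The natural plan is to combine \Cref{lem:HZL-2-norm-retention} (retention of the $2$-norm after zooming) with \Cref{lem:HZL-hi-degree-weight-low} (attrition of the high-degree Hermite weight after zooming), matching their failure probabilities and parameters so that a union bound gives the desired failure probability $\beta'$.

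First, I would apply \Cref{lem:HZL-2-norm-retention} to the distribution $(g_\bup)_{\bup \sim \Upsilon}$, which is $(k,\hypergenerictwo,1)$-attenuated on average by hypothesis, using failure probability $\beta'/2$. The assumption $\hypergenerictwo \geq C k \log(3/\beta')/\beta'$ (with $C$ large enough to absorb the $\log(6/\beta')$ from $\log(3/(\beta'/2))$) is exactly what that lemma needs, and the conclusion is that except with probability at most $\beta'/2$ over $\bx \sim \normal(0,1)^n$,
\[
\Ex_{\bup}\!\bracks*{\|(g_{\bup})_{\lambda\mid\bx}\|_2^2} \;\geq\; \parens*{\tfrac{\beta'}{O(k)}}^{2k} \Ex_{\bup}\!\bracks*{\|g_{\bup}\|_2^2}.
\]

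Next, I would apply \Cref{lem:HZL-hi-degree-weight-low} to the same distribution with the parameters $m \coloneqq \lfloor k/2 \rfloor + 1$ (so that $(\cdot)^{\ge m}$ is exactly the ``$> \lfloor k/2 \rfloor$'' piece appearing in the definition of $(\lfloor k/2 \rfloor, \hypergeneric', \eps')$-attenuation), the same $\hypergeneric'$ from the theorem statement, $\eps \coloneqq \eps'$, and failure probability $\beta'/2$. The constraint $\lambda \leq c^4 \eps' (\beta'/2)^5/({\hypergeneric'}^2 m^3 \degree)$ required by the lemma is implied by the hypothesis $\lambda \leq \eps'\beta'^5/(C \hypergeneric'^2 k^3 \degree)$ of the present theorem, after absorbing constants (using $m \leq k$) into $C$. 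The conclusion is that except with probability at most $\beta'/2$,
\[
\Ex_{\bup}\!\bracks*{\HV_{\hypergeneric'}\bracks*{(g_{\bup})_{\lambda\mid\bx}^{>\lfloor k/2\rfloor}}}
\;\leq\; \eps' \cdot \parens*{\tfrac{O(c\beta')}{m}}^{4m}\Ex_{\bup}\!\bracks*{\|g_{\bup}\|_2^2}.
\]

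By a union bound, both events hold simultaneously except with probability at most $\beta'$. In that case, the desired inequality
\[
\Ex_{\bup}\!\bracks*{\HV_{\hypergeneric'}\bracks*{(g_{\bup})_{\lambda\mid\bx}^{>\lfloor k/2\rfloor}}}
\;\leq\; \eps' \cdot \Ex_{\bup}\!\bracks*{\|(g_{\bup})_{\lambda\mid\bx}\|_2^2}
\]
reduces to checking the purely numerical inequality $\parens*{O(c\beta')/m}^{4m} \leq \parens*{\beta'/O(k)}^{2k}$. Since $m = \lfloor k/2 \rfloor+1 \in [k/2, k]$, we have $4m \geq 2k$, and the left side can be rewritten as $(O(c))^{4m}\cdot (\beta'/m)^{4m}$ while the right side is $(\beta'/O(k))^{2k}$; comparing base-by-base and exponent-by-exponent, the ratio is bounded by $(O(c))^{\Omega(k)}$, which is at most $1$ provided the universal constant $c$ in \Cref{lem:HZL-hi-degree-weight-low} is chosen small enough.

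The main potential obstacle is this final exponent-bookkeeping step, where it is essential that the exponent $4m$ on the attrition side genuinely exceeds (or at worst matches) the exponent $2k$ on the retention side, so that a small absolute constant $c$ can dominate all the hidden $O(\cdot)$'s in the bases. Everything else is a direct quotation of the two lemmas and a union bound; the only care needed is to verify that the theorem's hypotheses on $\hypergenerictwo$ and $\lambda$ imply the hypotheses of \Cref{lem:HZL-2-norm-retention,lem:HZL-hi-degree-weight-low} with their failure parameter set to $\beta'/2$, which holds after absorbing constants into $C$.
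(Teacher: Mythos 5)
Your proposal is correct and follows exactly the route the paper sketches for this theorem (combine \Cref{lem:HZL-2-norm-retention} and \Cref{lem:HZL-hi-degree-weight-low} with failure probabilities $\beta'/2$ each, then check that the retained $2$-norm dominates the attrited high-degree hyperweight). One small improvement you make is using $m = \lfloor k/2\rfloor + 1$ rather than the paper's stated $m = \lfloor k/2\rfloor$; your choice is in fact the right one, since the relevant definition involves $g^{>\lfloor k/2\rfloor} = g^{\geq \lfloor k/2\rfloor+1}$ and it also cleanly handles the case $k=1$ where $\lfloor k/2\rfloor=0$ would be disallowed.
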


Note that a nice distribution of degree-$\degree$ polynomials is $(\degree,\hypergenerictwo,1)$-attenuated on average for any~$\hypergenerictwo$.  We can take $\hypergenerictwo = C\degree \log(3/\beta')/\beta'$ and perform a first application of \Cref{thm:HZL-one-hit} on $(g_{\bup})_{\bup \sim \Upsilon}$ with its $\hypergeneric'$ parameter set to $\hypergenerictwo$ and its $\eps'$ parameter set to 1, and infer that except with failure probability at most $\beta'$ the distribution
$(\zoom{(g_{\bup})}{\lambda}{\bz})_{\bup \sim \Upsilon}$ is $(\lfloor \degree/2\rfloor,\hypergenerictwo,1)$-attenuated on average.  Repeating this a total of $\lceil \log d \rceil$ times, with each repetition having its $\beta'$ parameter set to $\beta/(\lceil \log d \rceil + 1)$, its $\hypergeneric'$ parameter set to $\hypergenerictwo =C\degree \log(3/\beta')/\beta'$ (for simplicity),  and its $\eps'$ parameter set to~$1$, we get that
except with probability $\beta \cdot (1 - {\frac 1 {\lceil \log d \rceil + 1 }})$,
the distribution $(\zoom{(g_\upsilon)}{\noisegeneric'}{\bz})_{\upsilon \in \Upsilon}$ is $(1,\hypergenerictwo,1)$-attenuated on average, where $\noisegeneric' = (\frac{\beta'^6}{C^2 d^6 \log^2(3/\beta')})^{\lceil \log d \rceil}$. Finally, we perform one last application of \Cref{thm:HZL-one-hit} with its $k$ parameter set to 1, its $\noisegeneric'$ parameter set to $\noisefixed$, its $\eps'$ parameter set to the ``$\eps$'' of \Cref{thm:HZL}, and its $\beta'$ parameter set to $\beta/(\lceil \log d \rceil + 1)$ and its $\hypergeneric'$ parameter set to $\hypergeneric$. We get the conclusion of \Cref{thm:HZL} as stated at the beginning of this section, and the proof of the local hyperconcentration theorem is complete.\qed

\ignore{

\rasnote{If we wanted we could get a somewhat tighter bound on $\noisegeneric$ but I'm not sure it makes a difference for us overall. \Cref{lem:HZL-2-norm-retention} doesn't care about the ``$\eps$'' of $(k,\hypergeneric,\eps)$-attenuated --- as far as that is concerned we could always have it just be 1.  And \Cref{lem:HZL-hi-degree-weight-low} could also be run with the ``$\eps$'' parameter being 1 for each of the first $\log_2(d)-1$ stages to get $k$ down to 1 --- and then we could just once at the last stage run it with $\eps$.  So overall instead of having $ \noisegeneric \leq \parens*{\frac{\eps\beta}{d \hypergeneric}}^{c \log d}$ as we currently have, we could I think instead have it be $ \noisegeneric \leq \eps^5 \cdot \parens*{\frac{\beta}{d \hypergeneric}}^{c \log d}$ or something like that if we like. It seemed like this would require the exposition to become a little messier so I didn't do it, but if the difference in the parameters matters to us, we can go that route. (I don't think it will matter but didn't trace through 100\% to confirm.)}
}


\section{Noise insensitivity of the statistics: Proof of \Cref{thm:noise-insensitivity}} \label{sec:noise-insensitivity}

\begin{remark} \label{rem:our-noise-insensitivity-of-statistics-versus-kanes}
Before entering into the proof, we note that \Cref{thm:noise-insensitivity} is analogous to Lemma~11 of \cite{kane11focs}, which shows that for every $i,j,$ for most $\bx \sim \normal(0,1)^n$, the $(i,j)$-th statistic at $\bx$ is multiplicatively close to the $(i,j+1)$-th statistic at $\bx$.  \cite{kane11focs}'s proof of Lemma~11 uses his \Cref{lem:kane-lemma9-alt} (i.e., \cite[Lem.~9]{kane11focs}) together with hypercontractivity, but as we discussed earlier, this incurs a $2^{O(d)}$ factor.

Our arguments in this section also use Lemma~9 of \cite{kane11focs}, but they additionally use our Local Hyperconcentration Theorem  and our notions of attenuation and hyperconcentration (specifically \Cref{prop:atten-hyperconcy}).\ignore{  and a quantitative strengthening of Corollary~4 of \cite{kane11focs} for attenuated polynomials which we state and prove below.} These  new ingredients let us avoid the $2^{O(d)}$ factor which is incurred at this point in the \cite{kane11focs} argument.
\end{remark}

\ignore{
\subsection{Attenuated polynomials are strongly concentrated}

The following lemma\ignore{, which we will use in the proof of \Cref{thm:noise-insensitivity},} says that any attenuated polynomial is strongly concentrated.  This may be viewed as a significant quantitative strengthening of Corollary~4 of \cite{kane11focs}, which states that for any degree-$d$ polynomial $g$, we have $\Prx_{\bx \sim \normal(0,1)^n}[|g(\bx)|\geq \|g\|_2/2] \geq 9^{-d}/2$; in words, $g$ is at least slightly likely to have magnitude at least half of its 2-norm.  In contrast, for attenuated polynomials \Cref{lem:our-cor4} states that with very high probability the magnitude is multiplicatively very close to the 2-norm.

\begin{lemma} [Strong concentration for attenuated polynomials\ignore{; our version of \cite{kane11focs}'s Corollary~4}] \label{lem:our-cor4} Let $0 < \eps,\tau,\delta_1 < 1$ satisfy $ \eps \leq {\frac 1 {(2e \ln(d/\tau))^3}}$ and let $q$ be an $(\hypergeneric,\eps)$-attenuated polynomial of degree at most $d$, where $\hypergeneric \geq \max\{{\frac {4 \eps^{1/3}} {\delta_1}},$ $2ed \ln {\frac d \tau}\}.$   Then
\[
\Prx_{\bx \sim \normal(0,1)^n}\bracks*{|q(\bx)| \approx_{\delta_1} \|q\|_2} \geq 1-\tau.
\]
\end{lemma}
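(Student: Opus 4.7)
The plan is a two-step decomposition: first I would show deterministically that $|\E[q]|$ is multiplicatively close to $\|q\|_2$ as a consequence of $(\hypergeneric,\eps)$-attenuation; then I would show via hyperconcentration plus Markov's inequality that $q(\bx)$ is multiplicatively close to $\E[q]$ with high probability; and finally chain.

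For the first step, write $\mu = \E[q]$. Since $\HV_\hypergeneric[q] \geq \hypergeneric^2\Var[q]$, the attenuation hypothesis gives $\Var[q] \leq \eps\|q\|_2^2/\hypergeneric^2$. Combined with $\hypergeneric \geq 4\eps^{1/3}/\delta_1$, this yields $\Var[q]/\|q\|_2^2 \leq \delta_1^2 \eps^{1/3}/16 \leq \delta_1^2/16$, and hence $\mu^2 = \|q\|_2^2 - \Var[q] \geq (1 - \delta_1^2/16)\|q\|_2^2$. A short calculation (using $-\ln(1-x) \le 2x$ for $x \le 1/2$) yields $|\mu| \approx_{\delta_1/4} \|q\|_2$.

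For the second step, combining hypercontractivity with the attenuation hypothesis yields a sharper moment bound than the one coming directly from \Cref{lem:attenuated-hyperconcentrated}: for every $r$ with $r - 1 \leq \hypergeneric^2$,
\[
    \|q - \mu\|_r \;\leq\; \sqrt{(r-1)\,\eps/\hypergeneric^2}\cdot\|q\|_2,
\]
obtained by writing $\|q-\mu\|_r \leq \|\U_{\sqrt{r-1}}(q-\mu)\|_2$ and then bounding $\sum_{\alpha \neq 0} (r-1)^{|\alpha|}\wh{q}(\alpha)^2 \leq ((r-1)/\hypergeneric^2)\cdot\HV_\hypergeneric[q]$. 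Applying Markov's inequality to $|q-\mu|^r$ with threshold $(\delta_1/4)|\mu|$ gives failure probability at most $\bigl(32(r-1)\eps/(\delta_1^2\hypergeneric^2)\bigr)^{r/2}$. I would optimize~$r$ in two regimes: if $\eps \leq \delta_1^2/(32e)$, take $r = 1 + \hypergeneric^2$, so the base is $\leq 1/e$ and the exponent $\geq \hypergeneric^2/2 \geq \ln(1/\tau)$ by $\hypergeneric \geq 2ed\ln(d/\tau)$; otherwise, take $r - 1 = \delta_1^2\hypergeneric^2/(32 e\, \eps)$, so the base equals $1/e$, and use the three hypotheses ($\hypergeneric\delta_1 \gtrsim \eps^{1/3}$, $\hypergeneric \gtrsim d\ln(d/\tau)$, $\eps \lesssim \ln^{-3}(d/\tau)$) to verify that $r/2 = \Omega(\delta_1^2\hypergeneric^2/\eps) \geq \ln(1/\tau)$. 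Either way, $|q(\bx) - \mu| \leq (\delta_1/4)|\mu|$ except with probability at most $\tau$. Chaining with step~(i): $|q(\bx) - \mu| \leq (\delta_1/4)|\mu|$ implies $|q(\bx)| \approx_{\delta_1/2} |\mu|$, which combined with $|\mu| \approx_{\delta_1/4} \|q\|_2$ gives the target $|q(\bx)| \approx_{\delta_1} \|q\|_2$.

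The main obstacle is the choice of Markov power in step~(ii). A direct application of \Cref{lem:attenuated-hyperconcentrated} would give $\|q-\mu\|_{1+\hypergeneric^2/2} \leq \sqrt{\eps}|\mu|$, but the resulting Markov bound $(4\sqrt{\eps}/\delta_1)^{1+\hypergeneric^2/2}$ fails to beat $\tau$ when $\sqrt{\eps}$ is comparable to or exceeds $\delta_1$, because the base then fails to be below~$1$. The sharper level-by-level moment bound above, together with the two-regime optimization of~$r$, is what circumvents this; the three hypotheses on $\hypergeneric, \eps, \delta_1$ interlock precisely so that the resulting Markov exponent always dominates $\ln(1/\tau)$.
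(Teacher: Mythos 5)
Your plan is genuinely different from the paper's commented-out argument, and the difference is worth noting. The paper decomposes $q$ into Hermite levels $q^{=1},\dots,q^{=d}$ and applies the hypercontractive tail bound (\Cref{thm:tail-bound}) \emph{separately to each level}, with a threshold $t_j$ chosen level-by-level so that each level contributes failure probability at most $\tau/d$; this exploits the fact that $q^{=1}$ is degree-$1$ (essentially Gaussian) and therefore has a much sharper tail than the degree-$d$ polynomial $q-\mu$ as a whole. You instead apply a \emph{single} Markov-on-$|q-\mu|^r$ argument, using the level-weighted moment bound $\|q-\mu\|_r\le\sqrt{(r-1)\eps/\hypergeneric^2}\,\|q\|_2$ coming from attenuation plus $(2,r)$-hypercontractivity; this single bound implicitly captures the level structure through the $(r-1)^{|\alpha|}$ weights. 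Both ingredients of your proof (the level-weighted moment bound and the two-regime choice of $r$) are correct in spirit, and the decomposition into a deterministic ``$|\mu|\approx\|q\|_2$'' step plus a stochastic Markov step is sound.

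However, there is a genuine quantitative gap in your Regime~2 verification, coming from the loose bound $|\mu|\ge\|q\|_2/\sqrt2$ that produces the constant $32$ in your Markov base. With $r-1=\delta_1^2\hypergeneric^2/(32e\eps)$, the hypotheses yield $\delta_1^2\hypergeneric^2/\eps\ge 16\eps^{-1/3}\ge 32e\ln(d/\tau)$, hence only $r-1\ge\ln(d/\tau)$ and $r/2\ge\tfrac12\ln(d/\tau)+\tfrac12$. The resulting failure probability bound $e^{-r/2}\le e^{-1/2}\sqrt{\tau/d}$ is \emph{not} $\le\tau$ unless $\tau\ge 1/(ed)$; for, say, $d=2$ and $\tau=10^{-2}$ the bound does not close. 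Your statement ``$r/2=\Omega(\delta_1^2\hypergeneric^2/\eps)\ge\ln(1/\tau)$'' does not actually follow from the three hypotheses in this regime. The fix is easy once spotted: the crude $|\mu|\ge\|q\|_2/\sqrt2$ should be replaced by the much sharper $|\mu|^2=\|q\|_2^2-\Var[q]\ge(1-\eps/\hypergeneric^2)\|q\|_2^2$, and under the stated hypotheses $\eps/\hypergeneric^2$ is tiny (at most $1/(32e^5 d^2\ln^5(d/\tau))$), so $|\mu|$ is extremely close to $\|q\|_2$. This improves the Markov base constant from $32$ to essentially $16$, which doubles the exponent to $r/2\gtrsim\ln(d/\tau)$ and gives failure probability $\lesssim\tau/d\le\tau$, matching what the paper's level-by-level argument delivers. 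So the gap is a repairable constant-factor slip rather than a flaw in the overall strategy, but as written the regime-2 verification is incorrect.
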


\begin{proof}
Since $q$ is $(\hypergeneric,\eps)$-attenuated, as an immediate consequence of \Cref{def:attenuated} we have that
for each $j=1,\dots,\degree$,
\begin{equation} \label{eq:apple}
\|q^{=j}\|_2  \leq {\frac {\sqrt{\eps}}{\hypergeneric^{j}}} \cdot \|q\|_2.
\end{equation}

Decomposing $q$ into a sum of orthogonal components $q = q^{=0} + \cdots + q^{=\degree}$, by the triangle inequality we have that
\begin{equation*}
|q^{=0}|=\|q^{=0}\|_2 \in \bracks*{
\|q\|_2 - \sum_{j=1}^\degree \|q^{=j}\|_2, \|q\|_2 + \sum_{j=1}^\degree \|q^{=j}\|_2
},
\end{equation*}
which together with \Cref{eq:apple} immediately yields
\begin{equation} \label{eq:triangle}
|q^{=0}| \approx_{{\frac {2 \sqrt{\eps}}{\hypergeneric}}} \|q\|_2.
\end{equation}

Fix any $j \in \{1,\dots,\degree\}.$ We will apply the tail bound \Cref{thm:tail-bound} to the degree-$j$ polynomial $q^{=j}$, with the ``$t$'' parameter of \Cref{thm:tail-bound} set to\ignore{\rasnote{This is kind of annoying, to have to have a separate definition in the $j=1$ case.  It's because we want something like
$t_j {\frac {\sqrt{\eps}}{\hypergeneric^j}} \leq {\frac {\sqrt{\eps}}{d \hypergeneric}}$ to hold, but since $t_j$ has to be slightly largish (surely at least 1), this doesn't fly when $j=1$ because you don't get the power-of-$\hypergeneric$ helping you in the denominator.  So as a hack we settle for $t_1 {\frac {\sqrt{\eps}}{\hypergeneric^1}} \leq {\frac {\eps^{1/3}}{\hypergeneric}}$ instead. Hacky, sorry. But none of this matters I think.}}
\[
t_j \coloneqq
\begin{cases}
{\frac 1 {\eps^{1/6}}} & \text{~if~}j=1;\\
{\frac {\hypergeneric^{j-1}}d} & \text{if~}j \in \{2,\dots,\degree\}.
\end{cases}
\]
The condition on $\eps$ ensures that $t_1 \geq \parens*{{\frac {2e}{j} \ln {\frac \degree \tau}}}^{j/2} \geq \sqrt{2e}$, and the condition on $\hypergeneric$ ensures that for $j=2,\dots,\degree$ we have $t_j \geq \max\{\sqrt{2e}^j,({\frac {2e} j} \ln {\frac d \tau})^{j/2}\}$. So we may apply \Cref{thm:tail-bound}, and we get that for each $j \in \{1,\dots,d\}$,
\begin{equation} \label{eq:first-cb}
\Pr_{\bx \sim \normal(0,1)^n}\bracks*{|q^{=j}(\bx)| > t_j\|q^{=j}\|_2} \leq \exp\left(-\frac{j}{2e}t_j^{2/j} \right)
\leq {\frac \tau \degree}.
\end{equation}
Combining this with \Cref{eq:apple}, we get that

\begin{equation} \label{eq:second-cb}
\Pr_{\bx \sim \normal(0,1)^n}\bracks*{|q^{=j}(\bx)| > t_j \cdot {\frac {\sqrt{\eps}}{\hypergeneric^{j}}} \cdot \|q\|_2}
\leq {\frac \tau \degree}.
\end{equation}

Now, if $j=1$ we have that $t_1 {\frac {\sqrt{\eps}}{\hypergeneric}} = {\frac {\eps^{1/3}}{\hypergeneric}}$, and if $j \in \{2,\dots,\degree\}$ we have that
$t_j {\frac {\sqrt{\eps}}{\hypergeneric^j}} = {\frac {\sqrt{\eps}}{d \hypergeneric}}$. So by a union bound over all $j=1,\dots,\degree$ and the triangle inequality, we have that with probability at least $1-\tau$ over $\bx \sim \normal(0,1)^n$,
\begin{equation} \label{eq:abra}
\abs*{\sum_{j=1}^\degree q^{=j}(\bx)}
\leq
\sum_{j=1}^\degree \abs*{q^{=j}(\bx)}  \leq {\frac {\sqrt{\eps} + \eps^{1/3}}{\hypergeneric}} \|q\|_2.
\end{equation}
Since
\[
|q^{=0}| - \sum_{j=1}^\degree\abs*{q^{=j}(x)}
\leq
|q(\bx)|
\leq
|q^{=0}| + \sum_{j=1}^\degree\abs*{q^{=j}(x)}
\]
holds for all $x$, combining \Cref{eq:triangle} and \Cref{eq:abra} we get that
\[
|q(\bx)| \approx_{{\frac {4 \eps^{1/3}}{\hypergeneric}}} \|q\|_2,
\]
\ignore{
}
which gives the lemma by the condition on $\hypergeneric$ in the lemma statement.
\end{proof}
}


\ignore{

%
}
\ignore{


\subsection{The main technical lemma for \Cref{thm:noise-insensitivity}}
As we explain in the next subsection, \Cref{thm:noise-insensitivity} follows rather directly from \Cref{lem:our-lem11}, which is the main result of this subsection.  The proof requires the following anticoncentration result from \cite{kane11focs}, which intuitively says that the value of a polynomial is likely to be very stable, in a multiplicative sense, when moving from a random point to a random zoom centered at that random point.

\begin{lemma} [Anticoncentration for degree-$\degree$ polynomials; Lemma~9 of \cite{kane11focs}] \label{lem:kane-lemma9}
Let $g$ be any degree-$\degree$ polynomial over Gaussian space.  Suppose that $0 < \noisegeneric < \tau/100$.  Then with probability at least $1-O(d^2 \tau)$ over independent $\bx,\by \sim \normal(0,1)^n$, it holds that\rodnote{Should the $\noisegeneric$ in the below in fact be $\sqrt{\noisegeneric}$?}
\[
|\zoom{g}{\noisegeneric}{\bx}(\by)| \approx_{{\frac \noisegeneric \tau}} |g(\bx)|.
\]
\end{lemma}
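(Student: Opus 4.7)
The plan is to exploit the rotational invariance of the Gaussian distribution of $\by$ to reduce to a univariate polynomial of degree at most $d$ on the line segment from $\bx$ to $\bx' \coloneqq \sqrt{1-\noisegeneric}\bx + \sqrt{\noisegeneric}\by$, and then to control the log-ratio $\log|g(\bx')/g(\bx)|$ using the factorization of this univariate polynomial over its complex roots. I would first condition on $\bx$ and decompose $\by = \by_\parallel \hat{\bx} + \by_\perp$ with $\hat{\bx} = \bx/\|\bx\|$; the rotational invariance of $\by_\perp$ within the hyperplane $\hat{\bx}^\perp$ lets me assume (without changing the distribution of $g(\bx')$) that $\by_\perp$ points along some fixed unit vector $e \perp \hat{\bx}$. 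Both $\bx$ and $\bx'$ then lie in the $2$-dimensional plane $V \coloneqq \mathrm{span}(\hat{\bx}, e)$, on which $g$ restricts to a bivariate polynomial $\tilde g$ of degree at most $d$ with only $O(d^2)$ monomials.

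Parameterizing the segment from $\bx$ to $\bx'$ inside $V$ by $\gamma(\epsilon) = (1-\epsilon)\bx + \epsilon\bx'$ for $\epsilon \in [0,1]$, the restriction $h(\epsilon) \coloneqq \tilde g(\gamma(\epsilon))$ is a univariate polynomial of degree at most $d$ satisfying $h(0) = g(\bx)$ and $h(1) = g(\bx')$. Factoring $h(\epsilon) = c\prod_j(\epsilon - \epsilon_j)$ over its (at most $d$) complex roots $\epsilon_j$ gives $h(1)/h(0) = \prod_j(1 - 1/\epsilon_j)$, so whenever every root satisfies $|\epsilon_j| \ge 2d/\nu$ for a target deviation $\nu$, each factor contributes at most $O(\nu/d)$ to the log-ratio and the total log-deviation is $O(\nu)$. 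The remaining step is an anticoncentration bound: with high probability over the random choice of $\bx$ and $\by$, every complex root of $h$ has modulus at least $2d/\nu$, equivalently $|\tilde g|$ stays bounded below at the appropriate points of the extended line $\{\gamma(\epsilon) : |\epsilon| \le 2d/\nu\}$ inside $V$. This I would establish by applying the Carbery--Wright inequality (\Cref{thm:CW}) to $\tilde g$ at each such point and then union-bounding over the $\le d$ roots.

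The main obstacle is to obtain a $\poly(d)$-in-$d$ (rather than exponential-in-$d$) deviation. A direct combination of a Carbery--Wright lower bound of the form $|g(\bx)| \ge (\delta/Cd)^d\|g\|_2$ (with failure probability $\delta$) and a $2$-norm tail bound on $g(\bx') - g(\bx)$ would produce an exponential-in-$d$ loss, inadequate for the claimed bound. The $2$-dimensional reduction to $\tilde g$ (with its $O(d^2)$ monomials, independent of the ambient $n$) together with the product-over-roots identity above circumvents this loss, since the failure event decomposes into $d$ separate anticoncentration statements for $\tilde g$, each carrying polynomial-in-$d$ failure probability, that together union-bound to the $O(d^2\tau)$ total failure probability claimed. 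As the excerpt remarks, Kane's proof ``(seemingly) crucially relies on the rotational invariance of $n$-dimensional Gaussians'', exactly reflecting this reduction to the planar section $V$.
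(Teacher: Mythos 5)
The lemma you're reconstructing is \emph{cited} in this paper (as Lemma~9 of \cite{kane11focs}) but not proved in it, so there is no in-paper argument to compare against; the question is whether your outline would establish what Kane's Lemma~9 actually asserts. Two of your ingredients are genuinely present in Kane's argument — restricting $g$ to a one-parameter family and using the factorization of the resulting univariate polynomial over its complex roots, so that multiplicative closeness reduces to ``all roots are far.'' But the anticoncentration step, which is the entire content of the lemma, does not go through as you describe. First, the $2$-plane restriction does not yield a standard Gaussian setup: $\tilde g = g|_V$ is a \emph{random} polynomial (it depends on the random plane $V$), and conditioned on $V$ the in-plane coordinates of $\bx$ and $\by$ are not $\normal(0,1)^2$-distributed (e.g.\ $\|\bx\|$ is $\chi_n$-, not $\chi_2$-, distributed), so Carbery--Wright (\Cref{thm:CW}) — a statement about a fixed polynomial at a standard Gaussian point — does not apply. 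Second and more fundamentally, you cannot ``apply Carbery--Wright at each root and union-bound'': the $\leq d$ complex roots of $h(\eps)=\tilde g(\gamma(\eps))$ are random and unknown, anticoncentration at finitely many fixed real $\eps$'s does not control the moduli of all \emph{complex} roots, and for complex $\eps$ the point $\gamma(\eps)$ is not a real Gaussian vector at all. Your ``equivalently $|\tilde g|$ stays bounded below'' is a restatement of the needed claim, not a proof.

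Kane's actual trick — and the place rotational invariance does the work — is a different parametrization: write $\bx = \cos\boldsymbol{\theta}\,\bz_1 + \sin\boldsymbol{\theta}\,\bz_2$ and $\bx' = \cos(\boldsymbol{\theta}+\alpha)\,\bz_1 + \sin(\boldsymbol{\theta}+\alpha)\,\bz_2$ with $\bz_1,\bz_2\sim\normal(0,1)^n$ i.i.d., $\boldsymbol{\theta}$ uniform on $[0,2\pi)$, and $\alpha=\arccos\sqrt{1-\noisegeneric}\asymp\sqrt{\noisegeneric}$; one checks this reproduces the joint law of $(\bx,\bx')$. Conditioned on $(\bz_1,\bz_2)$, the map $\theta\mapsto g(\cos\theta\,\bz_1+\sin\theta\,\bz_2)$ equals $e^{-id\theta}P(e^{i\theta})$ for an ordinary polynomial $P$ of degree $2d$, so there are at most $2d$ complex roots, and — this is exactly what your linear parametrization loses — $\boldsymbol{\theta}$ is \emph{uniform}, so the probability that $e^{i\boldsymbol{\theta}}$ lies within $\delta$ of any root is $O(d\delta)$ by counting arc length, with no Carbery--Wright needed. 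The per-root ratio estimate then gives a log-deviation $O(d\alpha/\delta)$; optimizing $\delta$ yields failure probability $\beta$ with deviation $O(d^2\sqrt{\noisegeneric}/\beta)$, which is exactly \Cref{lem:kane-lemma9-alt} — and which also confirms that the $\noisegeneric$ in the display of the statement you quoted should be $\sqrt{\noisegeneric}$, as the footnote in the source suspects.
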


\rasnote{You'll start seeing the parameter ``$\hypergenerictwo$'' cropping up a bunch, starting here and going up to (but not into) the start of \Cref{sec:proof-of-thm-noise-insens-using-our-lem-11}.  I thought it might be helpful to use a different symbol for this than ``$\hypergeneric$'' because it is conceptually untethered to the other constraints we sometimes impose on ``$\hypergeneric$'' --- as I understand it, its scope is completely internal to this little piece of the argument (it doesn't have to have any particular relation to $\hyperfixed$, for instance).  Anyway, just wanted to mention this.}

\begin{lemma} [Variant of Daniel's Lemma~11, ``large-$\lambda$ regime'']\label{lem:our-lem11}
Let $(g_\bup)_{\bupsilon \sim \Upsilon}$ be a nice (in the sense of \Cref{sec:distributional}) distribution of polynomials over $\R^n$ of degree at most $\degree$.
Let $0 < \rho < {\frac 1 {100}}$, let $0 < \mu < 1$, let
\[
\noisegeneric' \leq
\kappa \cdot  \parens*{\frac{ \mu}{d \hypergenerictwo}}^{C \log d} = \rho^2 \cdot \parens*{{\frac \mu d}}^{O(\log d)}
\]
(i.e.~$\noisegeneric'$ is at most the right-hand side of the inequality constraining it in \Cref{eq:rxattenuated}), where
$\kappa,\hypergenerictwo$ are set as in \Cref{eq:set-hypergeneric}.
Then with probability at least $1-\mu$ over $\bx \sim \normal(0,1)^n$, we have that
\[
\Ex_{\by \sim \normal(0,1)^n}
\bracks*{\Ex_{\bup}\bracks*{
\zoom{(g_{\bup})}{\noisegeneric'}{\bx}(\by)^2}}
\approx_\rho
\Ex_{\bup}\bracks*{g_{\bup}(\bx)^2}.
\]
\end{lemma}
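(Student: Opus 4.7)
The plan is to reduce the statement to a noise-insensitivity claim about a single nonnegative polynomial, and then combine Kane's anticoncentration lemma (\Cref{lem:kane-lemma9-alt}) with the Local Hyperconcentration Theorem (\Cref{thm:RZL}).

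Set $r(x) \coloneqq \Ex_\bup[g_\bup(x)^2]$, a nonnegative polynomial of degree at most~$2\degree$. By swapping the two expectations and using the definition of the Ornstein--Uhlenbeck operator,
\begin{equation*}
\Ex_{\by}\bracks*{\Ex_{\bup}\bracks*{\zoom{(g_\bup)}{\noisegeneric'}{\bx}(\by)^2}}
= \Ex_\bup\bracks*{(\U_{\sqrt{1-\noisegeneric'}}(g_\bup^2))(\bx)}
= (\U_{\sqrt{1-\noisegeneric'}} r)(\bx),
\end{equation*}
while the RHS of the desired inequality is exactly $r(\bx)$. So it suffices to establish that $(\U_{\sqrt{1-\noisegeneric'}} r)(\bx) \approx_\rho r(\bx)$ with probability at least $1-\mu$ over $\bx$.

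Next I would apply two tools to $r$, both exploiting $\deg r \le 2\degree$. First, \Cref{lem:kane-lemma9-alt} with failure parameter $\mu/4$: under the hypothesis $\noisegeneric' \le \rho^2(\mu/d)^{O(\log d)}$ one has $\nu_1 \coloneqq O(d^2/\mu)\sqrt{\noisegeneric'} \le \rho/2$, and hence except with probability $\mu/4$ over independent $(\bx,\by)$,
\begin{equation*}
\zoom{r}{\noisegeneric'}{\bx}(\by) \approx_{\rho/2} r(\bx).
\end{equation*}
Second, \Cref{thm:RZL} applied to $r$ with hypervariance parameter set to $\hypergenerictwo$, attenuation strength $\eps$ chosen small enough that $(4\sqrt{\eps}/\rho)^{\hypergenerictwo^2/2+1} \le 1/4$, and failure parameter $\mu/4$: the lemma hypothesis $\noisegeneric' \le \kappa \cdot (\mu/(d\hypergenerictwo))^{C\log d}$ is precisely engineered to satisfy the upper-bound condition on $\noisegeneric'$ in \Cref{thm:RZL} for these parameter settings. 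It yields that, except with probability $\mu/4$ over $\bx$, $\zoom{r}{\noisegeneric'}{\bx}$ is $(\hypergenerictwo,\eps)$-attenuated, in which case by \Cref{prop:atten-hyperconcy} (with $\gamma = \rho/2$),
\begin{equation*}
\zoom{r}{\noisegeneric'}{\bx}(\by) \approx_{\rho/2} \Ex_{\by}\bracks*{\zoom{r}{\noisegeneric'}{\bx}(\by)} = (\U_{\sqrt{1-\noisegeneric'}} r)(\bx),
\end{equation*}
except with probability at most $1/4$ over $\by$.

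Finally I would glue the pieces together via Markov and union bounds. From the joint $\mu/4$ failure of Kane's event, Markov's inequality on the conditional $\by$-failure probability $B(\bx)$ gives $\Pr_\bx[B(\bx) > 1/2] \le \mu/2$, so for at least a $1-\mu/2$ fraction of $\bx$, Kane's event holds for at least half of $\by$. Intersecting with the $\mu/4$ attenuation failure over $\bx$, at least a $1-\mu$ fraction of $\bx$ is good for both. For such $\bx$, the $\by$-failure probabilities of the two approximations add to at most $1/2 + 1/4 < 1$, so we can pick a single witness $\by$ for which both hold simultaneously; transitivity of $\approx$ then gives $r(\bx) \approx_\rho (\U_{\sqrt{1-\noisegeneric'}} r)(\bx)$, as desired. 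The main obstacle is the parameter juggling: pinning down $\eps$ and $\hypergenerictwo$ so that the hyperconcentration tail from \Cref{prop:atten-hyperconcy} is comfortably below $1/4$, while the $\noisegeneric'$-bound demanded by \Cref{thm:RZL} remains implied by the lemma's hypothesis---this is precisely what the factor $(\mu/(d\hypergenerictwo))^{C\log d}$ is tuned for.
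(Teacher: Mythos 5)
Your proof is correct, and takes a cleaner route than the paper's own proof of this lemma. The paper argues by contradiction: it defines $q_x(y) = r_x(y) - r(x)$ (with $r_x = \zoom{r}{\noisegeneric'}{x}$ and $r(x) = \E_\bup[g_\bup(x)^2]$), shows that when the bad event occurs the polynomial $q_x$ is also attenuated, applies a strong-concentration lemma for attenuated polynomials to conclude that $|q_x(\by)| \ge (\rho/2)\,r(x)$ on a constant fraction of $\by$, and then uses Kane's \Cref{lem:kane-lemma9-alt} to rule out such deviations. You sidestep the $q_x$ subtraction and the strong-concentration lemma entirely: after rewriting the left-hand side as $(\U_{\sqrt{1-\noisegeneric'}} r)(\bx)$, you intersect Kane's event (relating $r_x(\by)$ to $r(\bx)$) with the hyperconcentration event (relating $r_x(\by)$ to $\E_\by[r_x(\by)]$), each boosted to hold for a constant fraction of $\by$ for most $\bx$, and conclude by transitivity of $\approx$. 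Same parameter dependence, simpler argument. Indeed, this direct-transitivity approach is essentially what the paper adopts in its streamlined proof of \Cref{thm:noise-insensitivity} (via \Cref{cor:kane-lemma-9} and the subsequent unnamed proposition, with the more economical choice $\hypergeneric = \sqrt{2}$ in place of your $\hypergeneric = \hypergenerictwo$) --- so you have effectively rediscovered the streamlining.
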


\begin{proof}
For $x \in \R^n$, we define the quantities
\begin{align}
C_x&\coloneqq \Ex_{\bup}\bracks*{g_{\bup}(x)^2}, \label{eq:Cdef}\\
r_x(y) &\coloneqq \Ex_{\bup}\bracks*{\zoom{(g_{\bup})}{\noisegeneric'}{x}(y)^2},  \text{~and~}\label{eq:rxydef}\\
q_x(y) &\coloneqq r(x,y) - C(x). \label{eq:qxy}
\end{align}
As suggested by the notation, given a fixed $x \in \R^n$, the quantity $C_x$ should be thought of as a ``constant,'' whereas $r_x(y)$ and $q_x(y)$ are polynomials in $y$ of degree at most $2d$.

For notational convenience, for $x \in \R^n$ we define the shorthand
\[
\text{``$A(x)$'' to mean the event ``}\Ex_{\by} \bracks*{\Ex_{\bup}\bracks*{\zoom{(g_{\bup})}{\noisegeneric'}{x}(\by)^2}} \not\approx_{\rho} \Ex_{\bup}\bracks*{g_{\bup}(\bx)^2}\text{,'' i.e., ``}
\Ex_{\by}[r_x(\by)] \not \approx_\rho C_x\text{,''}
\]
and the goal of the proof is to show that $p_A \leq \mu$, where $p_A \coloneqq \Pr_{\bx}[A(\bx)].$

Our argument below will use the following simple fact:

\begin{fact} \label{fact:rx-attenuated} Suppose that $\noisegeneric',\mu,\kappa,\hypergenerictwo$ are such that \rasnote{This condition was ``$
            \hypergenerictwo  \ge  \frac{C d \log d \cdot \sqrt{\log(1/\beta)}}{\beta},
$'', the new condition is propagated from \Cref{thm:HZL}}
\begin{equation}
  \noisegeneric' \leq \kappa \cdot \parens*{\frac{ \mu}{d \hypergenerictwo}}^{C \log d},
\quad \quad
 \hypergenerictwo  \ge  \frac{C \degree \log \degree \cdot \sqrt{\log \log \degree + \log(1/\mu)}}{\mu}
   \label{eq:rxattenuated}
\end{equation}
where $C$ is a suitably large absolute constant. Then with probability at least $1-\mu/2$ over $\bx \sim \normal(0,1)^n$, the polynomial $r_\bx(y)$ is $(\hypergenerictwo,\kappa)$-attenuated.
\end{fact}

\begin{proof}
This is a direct consequence of \Cref{thm:RZL}, our random zoom lemma for a single polynomial.  Define
\[
P(y) \coloneqq \Ex_{\bup}[g_{\bup}(y)^2],
\]
and observe that $P(y)$ is a polynomial of degree at most $2d$. Applying \Cref{thm:RZL} to $P$, we get that except with probability at most $\mu/2$ the polynomial $\zoom{P}{\noisegeneric'}{\bx}$ is $(\hypergenerictwo,\kappa)$-attenuated, and the fact follows since
\[
\zoom{P}{\noisegeneric'}{\bx}(y)
=
P(\sqrt{1-\noisegeneric'} \bx + \sqrt{\noisegeneric'} y)\\
=
\Ex_{\bup}[g_{\bup}(\sqrt{1-\noisegeneric'} \bx + \sqrt{\noisegeneric'} y)^2]\\
=
\Ex_{\bup}[\zoom{(g_{\bup})}{\noisegeneric'}{\bx}(y)^2]\\
= r_{\bx}(y). \qedhere
\]
\end{proof}

We will also use the following:

\begin{claim} \label{claim:r-to-q}
For $\kappa < 1/10$, if $x$ is such that both (i) $r_x(y)$ is $(\hypergenerictwo,\kappa)$-attenuated and (ii) $A(x)$ holds, then $q_x(y)$ is $(\hypergenerictwo,8\kappa/\rho^2)$-attenuated.
\end{claim}

We defer the proof of \Cref{claim:r-to-q} until \Cref{sec:r-to-q} and continue with the proof of \Cref{lem:our-lem11}. Towards the goal of upper bounding $p_A$, we will use \Cref{fact:rx-attenuated}, taking its ``$\hypergenerictwo$'' and ``$\kappa$'' parameters to be
\begin{equation}
\hypergenerictwo = {\frac{C'' \degree \log \degree \cdot \sqrt{\log \log \degree + \log(1/\mu)}}{\mu}},
\quad \quad
\kappa = {\frac {c' \rho^2}{(\log \degree)^3}},
\label{eq:set-hypergeneric}
\end{equation}
where $C''$ is a suitably large constant and $c'$ is a suitably small one.  (So $\hypergenerictwo$ is set equal to the right-hand side of the inequality constraining it in \Cref{eq:rxattenuated}.)
Since these parameter settings are such that $\hypergenerictwo$ and $\noisegeneric'$ satisfy \Cref{eq:rxattenuated},  we have that
\[
\Pr_{\bx}[A(\bx) \text{~holds and~}r_{\bx}(y) \text{~is~}(\hypergenerictwo,\kappa)\text{-attenuated}]
\geq p_A - \mu/2,
\]
and from this, by \Cref{claim:r-to-q}, we get that
\begin{equation} \label{eq:goodie}
\Pr_{\bx}[A(\bx) \text{~holds and~}q_{\bx}(y) \text{~is~}(\hypergenerictwo,8\kappa/\rho^2)\text{-attenuated}]
\geq p_A - \mu/2.
\end{equation}
In the rest of the argument we will upper bound the probability on the left-hand side of \Cref{eq:goodie} and thereby upper bound $p_A$.

So, let us fix an outcome $x$ satisfying the condition of \Cref{eq:goodie}, i.e.\ an $x$ such that $A(x)$ holds and $q_{x}(y)$ is $(\hypergenerictwo,8\kappa/\rho^2)$-attenuated.
We apply \Cref{lem:our-cor4} to the polynomial $q_x$, taking the ``$\delta_1$'' and ``$\tau$'' parameters in \Cref{lem:our-cor4} to each be $1/2$, taking its ``$\eps$'' parameter to be $8\kappa/\rho^2$ (which equals $ 8c/(\log d)^3,$ recalling the setting $\kappa = c\rho^2/(\log d)^3$), and taking its ``$\hypergenerictwo$'' parameter as above to be ${\frac{C \degree \log \degree \cdot \sqrt{\log \log \degree + \log(1/\mu)}}{\mu}}$. It is straightforward to verify that for these settings, the $\hypergenerictwo$ and $\eps$ of \Cref{lem:our-cor4} indeed satisfy the bounds required in that lemma statement.
Since $A(x)$ holds we have $|\Ex_{\by}[q_x(\by)]| \geq \rho \Ex_{\bup}[g_\bup(x)^2]$, and since $\|q_x\| \geq |\wh{q_x}(0^n)| = |\Ex_{\by}[q_x(\by)]|,$ it follows that
$\|q_x\| \geq \rho \Ex_{\bup}[g_\bup(x)^2]$, so we get from \Cref{lem:our-cor4} that
\[
|q_x(\by)| =
\abs*{\Ex_{\bup}\bracks*{\zoom{(g_\bup)}{\noisegeneric'}{x}(\by)^2} - \Ex_{\bup}\bracks*{g_\bup(x)^2}}
\geq  \|q_x\|/2 \geq (\rho/2)\Ex_{\bup}[g_\bup(x)^2]
\]
with probability at least $1/2$ over $\by$.

For $x,y \in \R^n$ let us define the shorthand
\[
\text{``$B(x,y)$'' to mean the event ``}\Ex_{\bup}\bracks*{\zoom{(g_\bup)}{\noisegeneric'}{x}(y)^2} \not\approx_{\rho/2} \Ex_{\bup}\bracks*{g_\bup(x)^2}\text{,'' i.e. ``}
r_x(\by)] \not\approx_{\rho/2} C_x\text{.''}
\]
Summarizing the preceding paragraph, it showed that if $x$ is such that $A(x)$ holds and $q_x$ is $(\hypergenerictwo,8\kappa/\rho^2)$-attenuated, then with probability at least
$1/2$ over $\by$, we have that $B(x,\by)$ holds.

Now we apply \Cref{lem:kane-lemma9} to get an upper bound on $\Pr[B(\bx,\by)]$, with its ``$\noisegeneric$'' parameter set to $\noisegeneric'$ and its ``$\tau$'' parameter set to $2 \noisegeneric'/\rho$, and its polynomial ``$p$'' being $\Ex_{\bup}[g_\bup^2].$  (Note that the ``$\noisegeneric/\tau$'' of \Cref{lem:kane-lemma9} is $\rho/2 < {\frac 1 {200}},$ so \Cref{lem:kane-lemma9} can indeed be applied.) Applied in this way we see that \Cref{lem:kane-lemma9} gives $\Pr[B(\bx,\by)] \leq O(d^2 \tau) = O(d^2 \noisegeneric'/\rho)$.
Combining this with the last sentence of the previous paragraph, it must hold that
\begin{equation} \label{eq:gumdrops}
 \Pr_{\bx}[A(\bx) \text{~holds and~}q_{\bx}\text{~is $(\hypergenerictwo,8\kappa/\rho^2)$-attenuated}] \leq  2 \cdot O(d^2 \noisegeneric'/\rho) =
O(d^2) \cdot \noisegeneric'/\rho.
 \end{equation}
Combining \Cref{eq:goodie} and \Cref{eq:gumdrops}, we get that $p_A \leq O(d^2) \cdot \noisegeneric'/\rho + \mu/2$. Observing that $\noisegeneric' \ll  {\frac {\rho \mu} {O(d^2)}}$, we have that $p_A \leq \mu$, and \Cref{lem:our-lem11} is proved.
\end{proof}

\subsubsection{Proof of \Cref{claim:r-to-q}} \label{sec:r-to-q}

Recall that by the assumption of the claim, we have that $r_x$ is $(\hypergenerictwo,\kappa)$-attenuated and that $A(x)$ holds.  The latter can be rephrased as
\[
\E[r_x(\by)] \not\approx_{\rho} C_x.
\]
Recalling that $q_x(y)=r_x(y)-C_x,$ it follows that $\wh{q_x}(0^n) = \E[q_x(\by)] \notin [-\rho C_x, \rho C_x].$
Now we consider two possibilities.  First, if $0 \leq \E[r_x(\by)] \leq (1-\rho)C_x$, then since $C_x \geq {\frac {\E[r_x(\by)]}{1-\rho}}$ we have
\[
|\wh{q_x}(0^n)| \geq \rho C_x \geq {\frac \rho {1-\rho}} \E[r_x(\by)] \geq \rho |\wh{r_x}(0^n)|
\]
(where we used the nonnegativity of $r_x$ for the last inequality).
Second, if $(1+\rho)C_x \leq \E[r_x(\by)]$, then since $-C_x \geq {\frac {-\E[r_x(\by)]}{1+\rho}}$ we have
\[
\wh{q_x}(0^n)=\E[q_x(\by)] = \E[r_x(\by)] - C_x \geq \E[r_x(\by)] \left( 1 -  {\frac 1 {1+\rho}}\right) \geq (\rho/2) \E[r_x(\by)] = (\rho/2)|\wh{r_x}(0^n)|
\]
(where the second inequality uses $0 < \rho < 1/100$). So in both cases we have $|\wh{q_x}(0^n)| \geq (\rho/2)|\wh{r_x}(0^n)|$, and hence we have that
\begin{equation} \label{eq:qxemptysetnottoosmall}
|\wh{q_x}(0^n)|  \geq (\rho/2) |\wh{r_x}(0^n)|
\end{equation}
In words, this says that  the constant Hermite coefficient of $q_x$ is (multiplicatively) smaller than the constant Hermite coefficient of $r_x$ by at most a  $\rho/2$ multiplicative factor.

To finish the proof, we observe that since the polynomial $r_x$ is $(\hypergenerictwo,\kappa)$-attenuated where $\kappa \leq 1/10$, it must be the case that the magnitude of the constant Hermite coefficient $|\wh{r_x}(0^n)|$ contributes the lion's share of  $\|r_x\|$ --- to be more precise, it must hold that
$\wh{r_x}(0^n)^2 \geq \|r_x\|^2/2.$  Since $\HV_\hypergenerictwo[q_x^{>0}]=\HV_\hypergenerictwo[r_x^{>0}]$ (since $q_x$ and $r_x$ differ only in the constant term $C_x$), and $\|q_x\| \geq |\wh{q_x}(0^n)|$, recalling \Cref{eq:qxemptysetnottoosmall} we can infer from $r_x$ being $(\hypergenerictwo,\kappa)$-attenuated that $q_x$ is $(\hypergenerictwo,8\kappa/\rho^2)$-attenuated.
This proves \Cref{claim:r-to-q}. \qed

}

\ignore{
\subsection{Proof of \Cref{thm:noise-insensitivity} using \Cref{lem:our-lem11}} \label{sec:proof-of-thm-noise-insens-using-our-lem-11}

Fix an $i \in \{0,\dots,\degree\}$ and a $j \in \{0,\dots,\maxcol-1\}$.  We apply \Cref{lem:our-lem11} taking the nice distribution $(g_\bup)_{\bup \sim \Upsilon}$ to be ${\cal F}_{i,j}$, taking its parameter ``$\noisegeneric'$'' to be
$\noisefixed$, taking its parameter ``$\rho$'' to be $\horizclose$, and taking its parameter ``$\mu$'' to be $\epsprg/(8(\degree+1)\maxcol)$. Recalling the setting of $\noisefixed$ from \Cref{eq:noisefixed}, we see that the conditions of \Cref{lem:our-lem11} are indeed satisfied. \rasnote{CHECK/ENFORCE THIS WHEN $\horizclose$ is fixed --- this requires that $\noisefixed \leq \horizclose^2 \cdot (\epsprg/d)^{O(\log d)}$}  By \Cref{eq:alt-sij} we have that $\st_{i,j}(x)=\Ex_{\bup}[g_{\bup}(x)^2]$, and by \Cref{eq:stats-as-zooms} we have that
\[
\Ex_{\by \sim \normal(0,1)^n}
\bracks*{\Ex_{\bup}\bracks*{
\zoom{(g_{\bup})}{\noisegeneric'}{x}(\by)^2}}=
\Ex_{\bup}
\bracks*{\Ex_{\by}\bracks*{
\zoom{(g_{\bup})}{\noisegeneric'}{x}(\by)^2}} =
\Ex_{\bup}
\bracks*{\norm*{\zoom{(g_{\bup})}{\noisegeneric'}{x}}_2^2
}=\st_{i,j+1}(x).
\]
So \Cref{lem:our-lem11} gives that except with probability at most $\epsprg/(8(\degree+1)\maxcol)$ over $\bx \sim \normal(0,1)^n,$ we have
\[
\st_{i,j}(\bx) \approx_{\horizclose} \st_{i,j+1}(\bx),
\]
and \Cref{thm:noise-insensitivity} is proved. \qed
}

We proceed with the proof of \Cref{thm:noise-insensitivity}.  We begin by recording a simple corollary of \Cref{lem:kane-lemma9-alt}:

\begin{corollary}                                       \label{cor:kane-lemma-9}
    In the setting of \Cref{lem:kane-lemma9-alt}, say that $x \in \R^n$ is ``good'' if
    \[
        \text{except with probability at most $0.1$ over $\by \sim \normal(0,1)^n$ we have} \quad \zoom{g}{\noisegeneric}{x}(\by) \approx_{\gamma} g(x).
    \]
    Then $\bx \sim \normal(0,1)^n$ is good except with probability $10\beta$.
\end{corollary}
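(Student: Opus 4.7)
The plan is to deduce this from \Cref{lem:kane-lemma9-alt} by a straightforward Fubini-plus-Markov argument, viewing the probability that the zoom approximation fails as a nonnegative random variable depending on $\bx$.

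More concretely, first I would define, for each fixed $x \in \R^n$, the quantity
\[
    q(x) \coloneqq \Pr_{\by \sim \normal(0,1)^n}\bracks*{\zoom{g}{\noisegeneric}{x}(\by) \not\approx_{\gamma} g(x)},
\]
so that $x$ is ``good'' in the sense of the corollary precisely when $q(x) \leq 0.1$. By the definition of $q$ and Fubini's theorem, the expectation $\E_{\bx}[q(\bx)]$ is exactly the joint probability (over independent $\bx, \by$) that $\zoom{g}{\noisegeneric}{\bx}(\by) \not\approx_{\gamma} g(\bx)$. Since $\gamma$ in the corollary matches the ``$\nu$'' in \Cref{lem:kane-lemma9-alt}, that lemma directly bounds this joint failure probability by $\beta$, so $\E_{\bx}[q(\bx)] \leq \beta$.

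The conclusion is then immediate from Markov's inequality applied to the nonnegative random variable $q(\bx)$:
\[
    \Pr_{\bx}[q(\bx) > 0.1] \;\leq\; \frac{\E_{\bx}[q(\bx)]}{0.1} \;\leq\; 10\beta,
\]
so $\bx$ is good except with probability at most $10\beta$, as claimed.

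There is no real obstacle here; the only thing to double-check is that the $\nu$ of \Cref{lem:kane-lemma9-alt} can be identified with the $\gamma$ in the corollary statement (it can, since the corollary is explicitly stated ``in the setting of \Cref{lem:kane-lemma9-alt}'') and that the provisos ($0 < \beta < 1$ and $\nu \leq 1$) carry over unchanged. The entire argument is a standard ``expectation over $\bx$ of conditional failure probability equals joint failure probability, then Markov'' move.
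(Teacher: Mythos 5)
Your proof is correct, and since the paper states this as ``a simple corollary'' of \Cref{lem:kane-lemma9-alt} without giving an explicit argument, the Fubini-plus-Markov derivation you wrote out is precisely the intended one. The only bookkeeping detail is the one you already flagged: the lemma's ``$\nu$'' and the corollary's ``$\gamma$'' are the same quantity (the paper itself is slightly inconsistent in this notation), so the identification is legitimate.
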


Next, combining \violet{\Cref{main theorem}}\ignore{\rasnote{Was ``our Local Hyperconcentration Theorem, \Cref{thm:RZL}''}} with \Cref{prop:atten-hyperconcy},\ignore{\rodnote{We should already have this combination as a Corollary somewhere; fix later.}} we derive the following:
\begin{proposition}
    Let $g : \R^n \to \R$ be a degree-$\degree$ polynomial and let $\hypergeneric \geq \sqrt{2}$, $\theta \leq 1$, $\gamma \leq 1$. Say that $x \in \R^n$ is ``well-behaved''  if
    \[
        \text{except with probability $(2\sqrt{\theta}/\gamma)^{\frac12 \hypergeneric^2 + 1}$ over $\by \sim \normal(0,1)^n$ we have} \quad \zoom{g}{\noisegeneric}{x}(\by) \approx_\gamma \E[\zoom{g}{\noisegeneric}{x}].
    \]
    Then $\bx \sim \normal(0,1)^n$ is well-behaved except with probability $\beta$, provided
    \violet{$
        \noisegeneric \leq \frac{c \theta}{\hypergeneric^2} \cdot \parens*{\frac{\beta}{\degree}}^{9/2}.
    $}
    \ignore{\rasnote{Was 
    ``\violet{$
        \noisegeneric \leq \frac{\theta}{\hypergeneric^2} \cdot \parens*{\frac{\beta}{\degree}}^{C \log \degree}.
    $}''}}
\end{proposition}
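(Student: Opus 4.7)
The plan is to prove this as a direct two-step combination of \Cref{main theorem} (the improved Local Hyperconcentration Theorem from \Cref{app:kane}, which requires only $\noisegeneric \leq \eps\beta/(\hypergeneric^{O(1)} d^{O(1)})$) and \Cref{prop:atten-hyperconcy} (which converts the ``attenuation'' of a fixed polynomial into a multiplicative concentration statement about its values). The first step produces, with high probability over the outer randomness $\bx$, an attenuated zoomed polynomial $\zoom{g}{\noisegeneric}{\bx}$; the second step then converts this attenuation, for each such ``good'' outcome~$\bx$, into the desired ``well-behaved'' statement about $\by$.

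Concretely, I would first view the single polynomial $g$ as a trivial nice distribution of polynomials (formally, take the parameter ``$I$'' in \Cref{def:nice} to be~$0$, so that the distribution has a single atom) and apply \Cref{main theorem} with its parameter ``$\hypergeneric$'' set to our~$\hypergeneric$, its ``$\eps$'' set to $\theta$, and its ``$\beta$'' set to $\beta$. The bound $\noisegeneric \leq \frac{c\theta}{\hypergeneric^2}\parens*{\frac{\beta}{\degree}}^{9/2}$ assumed in the statement (with $c$ a sufficiently small absolute constant) implies the hypothesis of \Cref{main theorem}, so its conclusion gives that except with probability at most $\beta$ over $\bx \sim \normal(0,1)^n$, the degree-$\degree$ polynomial $\zoom{g}{\noisegeneric}{\bx}$ is $(\hypergeneric,\theta)$-attenuated in the sense of \Cref{def:attenuated-prelim}.

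Next, for any fixed outcome $x$ of $\bx$ in this ``good'' event, I would apply \Cref{prop:atten-hyperconcy} to the $(\hypergeneric,\theta)$-attenuated polynomial $\zoom{g}{\noisegeneric}{x}$. Since $\hypergeneric \geq \sqrt{2}$, $\theta \leq 1$, and $0 < \gamma \leq 1$ by assumption, the proposition yields that except with probability at most $(2\sqrt{\theta}/\gamma)^{\frac12 \hypergeneric^2 + 1}$ over $\by \sim \normal(0,1)^n$,
\[
    \zoom{g}{\noisegeneric}{x}(\by) \approx_{\gamma} \E[\zoom{g}{\noisegeneric}{x}],
\]
which is exactly the condition for $x$ to be well-behaved. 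Thus every $x$ in the good event is well-behaved, and the measure of the bad event is at most $\beta$, which completes the proof.

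There is no serious obstacle here; the one thing to verify is the bookkeeping on the $\noisegeneric$ bound, namely that the exponent $9/2$ in $(\beta/\degree)^{9/2}$ (together with the $\hypergeneric^{-2}$ factor) is large enough to dominate whatever specific $O(1)$ constants appear in \Cref{main theorem}. This amounts to picking the constant $c$ small enough, which is routine once the exact form of \Cref{main theorem} is in hand.
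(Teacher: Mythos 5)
Your proposal is correct and is exactly the derivation the paper has in mind: the Proposition is stated immediately after the sentence ``combining \Cref{main theorem} with \Cref{prop:atten-hyperconcy}, we derive the following,'' and your two-step argument (apply the improved Local Hyperconcentration Theorem to the trivial one-atom nice distribution to get attenuation with probability $1-\beta$ over~$\bx$; then invoke \Cref{prop:atten-hyperconcy} pointwise on each good outcome) is precisely that. One small bookkeeping slip: \Cref{main theorem}'s conclusion reads $\E_{\bup}[\HV_R(\cdot)] \leq \eps^2 \E_{\bup}[\|\cdot\|_2^2]$ --- note the square on $\eps$ --- so to arrive at the $(\hypergeneric,\theta)$-attenuation needed to feed \Cref{prop:atten-hyperconcy} you should set its ``$\eps$'' to $\sqrt{\theta}$, not $\theta$; with $\eps = \theta$ as you wrote you actually obtain $(\hypergeneric,\theta^2)$-attenuation, which is stronger (since $\theta \leq 1$) and so still implies the claim, but then your stated parameter chase is off by a square. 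Either choice is compatible with the hypothesis $\noisegeneric \leq \frac{c\theta}{\hypergeneric^2}(\beta/\degree)^{9/2}$, which dominates the required $\frac{c'\sqrt{\theta}\beta}{\hypergeneric\,\degree^{9/2}}$ because $\theta \leq \sqrt{\theta}$, $\beta^{9/2} \leq \beta$, and $\hypergeneric^2 \geq \hypergeneric$.
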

Given some $\gamma \leq 1$, let us take
\[
    \theta = .01\gamma^2, \quad \hypergeneric = \sqrt{2}.
\]
It follows that if $x \in \R^n$ is both good and well-behaved, then
\[
    \Pr[\zoom{g}{\noisegeneric}{x}(\by) \not \approx_{\nu} g(x)] \leq .1, \quad  \Pr[\zoom{g}{\noisegeneric}{x}(\by) \not \approx_\gamma \E[\zoom{g}{\noisegeneric}{x}]] \leq .04.
\]
Since $.1 + .04 < 1$, the only way this can happens is that $g(x) \approx_{\nu + \gamma} \E[\zoom{g}{\noisegeneric}{x}]$.  Thus the above two propositions imply that except with probability at most $10\beta + \beta = 11\beta$ over $\bx \sim \normal(0,1)^n$, we have
\[
    g(\bx) \approx_{\nu + \gamma} \E[\zoom{g}{\noisegeneric}{\bx}],
\]
provided \violet{$\noisegeneric \leq \gamma^2 (\beta/d)^{O(1)}$}\ignore{\rasnote{Was ``\violet{$\noisegeneric \leq \gamma^2 (\beta/d)^{O(\log d)}$}''}}.  Selecting $\gamma = \horizclose/2$ and $\beta = \epsprg/(88(\degree+1)\maxcol)$, we conclude (recalling that $\maxcol = \poly(d)$) that
\ignore{\rasnote{ The part before the $\implies$ was ``$\noisegeneric \leq \violet{\horizclose^2 (\epsprg/d)^{O(\log d)}}$''}}
\[
    \noisegeneric \leq \violet{\horizclose^2 (\epsprg/d)^{O(1)}} \quad\implies\quad \Pr[g(\bx) \approx_{\horizclose} \E[\zoom{g}{\noisegeneric}{\bx}]] \leq \epsprg/(8(\degree+1)\maxcol).
\]
Applying this with $g = \st_{i,j}$ completes the proof of \Cref{thm:noise-insensitivity}. \qed 



\section{Proof of \Cref{thm:hybridstep}: one step of the Replacement Method} \label{sec:hybridstep}

In this section we define a collection of ``analysis checks,'' which are inequalities among the statistics, and explain the high-level structure of the proof of \Cref{thm:hybridstep}. The analysis checks  play a crucial role in the proof of \Cref{thm:hybridstep}: as we explain in \Cref{sec:high-level-one-step}, two very different arguments (corresponding to \Cref{lem:page14} and \Cref{lem:page13}) are used to establish the conclusion of \Cref{thm:hybridstep} at a given $x \in \R^n$, depending on whether or not all of the analysis checks hold at that $x$.

\subsection{Analysis checks} \label{sec:analysischecks}

In this subsection we define our set of ``analysis checks,'' which we denote $\AnalysisChecks$.  They are related to, but somewhat different from, the mollifier checks $\MollifierChecks$ that were used to define the mollifier $\allsoftchecks$ in \Cref{sec:defining-allsoftchecks}.

One difference between the analysis checks and the mollifier checks is that since the mollifier checks needed to be ``actually encoded into the mollifier,'' each one needed to consist of both an inequality $\ineq$ among the statistics and a ``softness'' parameter $\delta$.  In contrast, the analysis checks only play a role in our analysis and do not need to be encoded in the mollifier, and for this reason each analysis check consists only of an inequality $\ineq$ among the statistics.  Other than this, the difference between the analysis checks and mollifier checks is that the analysis checks essentially correspond to the mollifier checks ``shifted right by one in the grid.''

Below we describe the analysis checks in more detail and highlight the difference between them and the mollifier checks.

\begin{definition}
    The set $\AnalysisChecks$ contains the following checks (inequalities among statistics):
    \begin{itemize}
        \item \textbf{The horizontal checks:} for every $0 \leq i \leq  d$, for every $1 \leq j \leq \maxcol-1$, we check that
                    \begin{equation}    \label{eqn:horz-checks}
                \st_{i,j}(x) \approx_{\horzanal} \st_{i,j+1}(x)
            \end{equation}
(so $2(\degree+1)(\maxcol-1)$ inequalities in total for the horizontal checks), where we set
\begin{equation} \label{eq:horzanal}
\horzanal \coloneqq \frac{1}{100d\maxcol}.
\end{equation}
(Looking ahead, we note that this choice of $\horzanal$ is less than the upper bound on $\gamma$ imposed by the noise insensitivity extension lemma, \Cref{lem:magicker-lemma}, which is the main result of \Cref{sec:magic-lemma}.) Note also that while the mollifier checks defined in \Cref{sec:MollifierChecks-def} check $\st_{i,j}(x)$ against $\st_{i,j+1}(x)$ for $j=0,\dots,\maxcol-2$, here we are checking $\st_{i,j}(x)$ against $\st_{i,j+1}(x)$ for $j=1,\dots,\maxcol-1$.  Thus these checks correspond precisely to the mollifier's noise-insensitivity checks, but ``shifted to the right by one.''
        \item \textbf{The diagonal checks:} for \emph{just} the $1$st column (note, \emph{not} the $0$th column), for all $i=0,\dots,d-1,$ we check that
\begin{equation} \label{eqn:diag-checks}
                \st_{i+1,1}(x) \leq 100 \anticoncgap \st_{i,2}(x)
\end{equation}
(so $d$ diagonal checks in total).
Note that while the strong anticoncentration checks defined in \Cref{sec:MollifierChecks-def} check $\st_{i+1,0}(x)$ against $\st_{i,1}(x)$, here we are checking $\st_{i+1,1}$ against $\st_{i,2}(x).$ So similar to the previous bullet, these checks correspond precisely to the mollifier's strong anticoncentration checks, but again ``shifted to the right by one.''

    \end{itemize}

    Below we give an illustration of the  analysis checks.     \myfig{0.75}{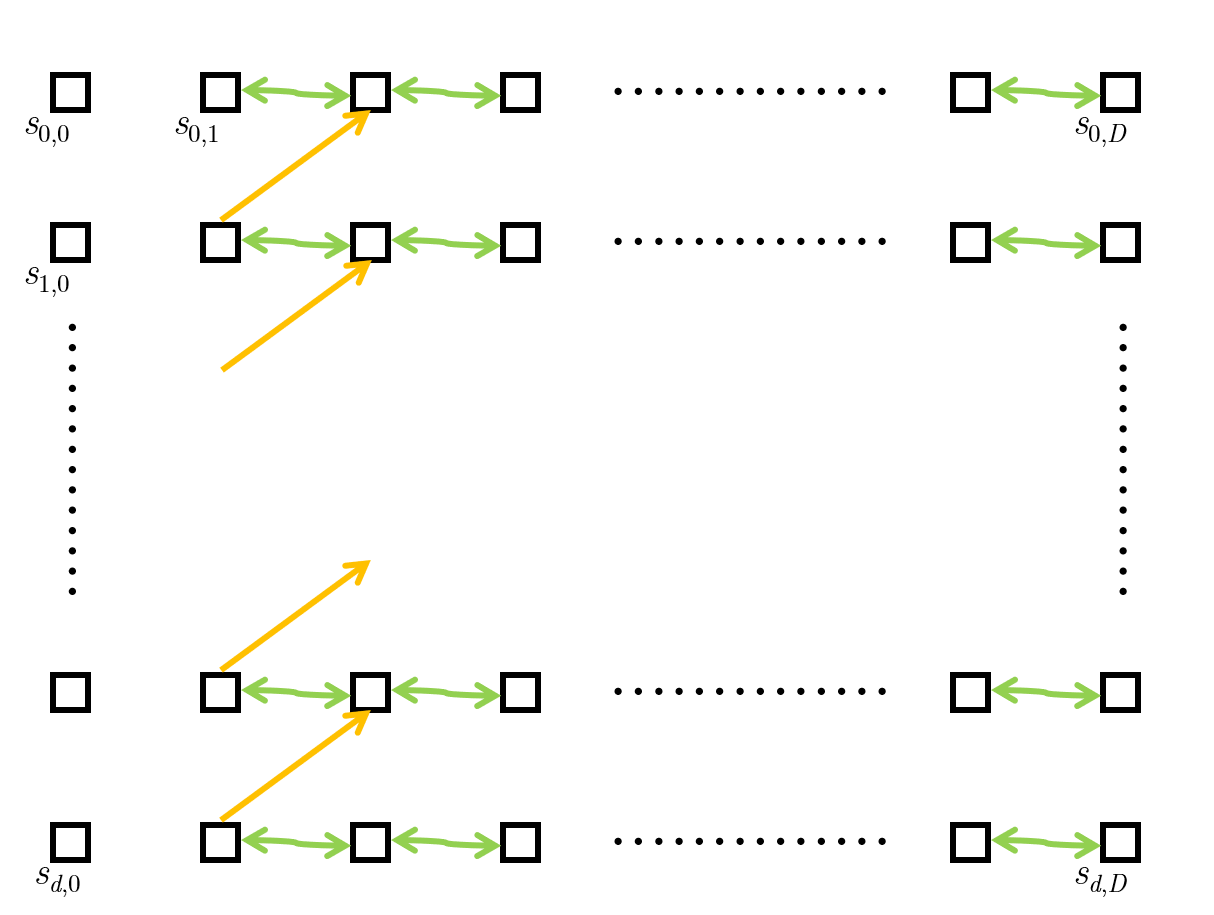}{The analysis checks. Horizontal checks are depicted in green and diagonal checks are depicted in yellow.}{}

\end{definition}

\subsection{High-level structure of the proof of \Cref{thm:hybridstep}} \label{sec:high-level-one-step}

%
%
%
%


Let us define the following small integer parameter,
\begin{equation} \label{eq:taylor}
    \taylor = 4,
\end{equation}
which will be the degree out to which we use Taylor's theorem.

\Cref{thm:hybridstep} will  be an immediate consequence of the below two lemmas, since by \Cref{eq:anticoncgap-horizclose}, \Cref{eq:len-wise}
 and \Cref{eq:noisefixed} we have that
$\poly((\taylor d)^\taylor) \cdot \anticoncgap^{\taylor/2} \ll \epsprg/(8\len) $:


\begin{lemma} [If analysis checks all pass, $\wise$-wise moments determine mollifier's value] \label{lem:page14}
Suppose that $x \in \R^n$ is such that all of the checks in $\AnalysisChecks$ hold at $x$, and that the random vector $\bz$ is a $\wise$-wise independent  $n$-dimensional Gaussian.  Then $\E\bracks*{\Indplus\parens*{\sqrt{1-\noisefixed} \cdot x + \sqrt{\noisefixed}\cdot \bz}}$ is determined up to an additive $\pm \poly((\taylor d)^\taylor) \cdot \anticoncgap^{\taylor/2}$.\ignore{\rodnote{this bound may not be quite right; also the $\poly()$ also hides some factors like $\taylor^\taylor$; gotta find a way to say this briefly}}
\end{lemma}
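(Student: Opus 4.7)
The plan is to approximate the map $\bz \mapsto \Indplus(y(\bz))$, where $y(\bz) \coloneqq \sqrt{1-\noisefixed}\,x + \sqrt{\noisefixed}\,\bz$, by a polynomial in $\bz$ of degree at most $\taylor - 1 < \wise$. Because any two $\wise$-wise independent $n$-dimensional Gaussians agree on expectations of all polynomials of degree at most $\wise$ (\Cref{rem:why-wise}), this will show that $\E[\Indplus(y(\bz))]$ is determined up to the contribution of the Taylor remainder.

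The first step is to peel off the discontinuous factor $\bone[\sign \p = 1]$. The diagonal analysis check for $i=0$ gives $\st_{1,1}(x) \leq 100\anticoncgap \cdot \st_{0,2}(x)$, and chaining with the horizontal checks $\st_{0,1}(x) \approx_{\horzanal} \st_{0,2}(x)$ and $\st_{1,0}(x) \approx_{\horzanal} \st_{1,1}(x)$ yields, via the distributional identities of~\eqref{eq:stats-as-zooms}, that the $\noisefixed$-zoom of $\p$ at $x$ is $(\hyperfixed, O(\anticoncgap))$-attenuated. By \Cref{prop:atten-hyperconcy}, the random variable $\p(y(\bz))$ is then multiplicatively close to its mean --- in particular, of the same sign as $\p(x)$ --- except with $\bz$-probability at most $\poly((\taylor d)^\taylor) \cdot \anticoncgap^{\taylor/2}$. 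Thus, at this much cost I may replace $\Indplus(y(\bz))$ by $c^+ \cdot \allsoftchecks(y(\bz))$, where $c^+ \coloneqq \bone[\sign \p(x) = 1]$ is a $\bz$-independent constant, and reduce to approximating $\E[\allsoftchecks(y(\bz))]$.

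Next, I Taylor-expand $\allsoftchecks(y(\bz))$ as a function of $\bz \in \R^n$ around $\bz = 0$ out to order $\taylor$, writing $\allsoftchecks(y(\bz)) = T_{\taylor - 1}(\bz) + R_\taylor(\bz)$ with $T_{\taylor - 1}$ a polynomial in $\bz$ of degree at most $\taylor - 1 < \wise$. By \Cref{rem:why-wise}, $\E[T_{\taylor - 1}(\bz)]$ takes the same value for every $\wise$-wise independent Gaussian $\bz$, so everything comes down to showing $|\E[R_\taylor(\bz)]| \leq \poly((\taylor d)^\taylor) \cdot \anticoncgap^{\taylor/2}$ for any such $\bz$. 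For this I would apply Faà di Bruno's formula to the $\taylor$-th $\bz$-derivatives of $\allsoftchecks = \prod_{\chk} \sigma(\delta^{-1} \ln(\st_u / (\gamma \st_v)))$: the derivatives of $\sigma$ contribute at most $\taylor^{O(\taylor)}$ by construction; the horizontal analysis checks keep each ratio $\st_u/\st_v$ bounded by a constant in a typical $\noisefixed$-neighborhood of~$x$; and the essential small factor comes from iterating the diagonal analysis checks $\st_{i+1,1}(x) \leq 100\,\anticoncgap \cdot \st_{i,2}(x)$. Using the distributional/amplified-derivative view of the grid, a $y$-derivative of a statistic in row $i$ is a polynomial whose squared $2$-norm is smaller than that of row $i$ by a factor of $\anticoncgap$; iterating this $\taylor$ times gives the desired $\anticoncgap^{\taylor/2}$. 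The resulting pointwise bound on $|R_\taylor(\bz)|$ is polynomial of degree $\taylor$ in $\|\bz\|$, whose expectation under any $\wise$-wise Gaussian is finite (indeed, its moments of order $\le \wise$ match those of a true Gaussian), and $|\allsoftchecks| \le 1$ together with light-tail bounds handle the atypical event that $\|\bz\|$ is large.

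The main obstacle I anticipate is the translation from true $y$-Taylor derivatives of $\allsoftchecks$ to the noisy amplified-derivative quantities that the statistics actually compute. This is exactly where the careful design of the grid of statistics pays off: the column of noise smoothings $\U_{\sqrt{1-\noisefixed}}$ in \Cref{def:sij} converts a discrete noisy derivative into an object that is differentiable in~$y$ (at the cost of analyzing the averaged polynomial families $\calF_{i,j}$ rather than a single polynomial), while the amplification $\U_\hyperfixed$ in \Cref{def:amplified-derivative} ensures that when we turn the hypervariance bounds implied by the diagonal checks into the higher-moment control needed for the chain-rule estimates, we do not suffer a $2^{O(d)}$ hypercontractivity loss as in~\cite{kane11focs}.
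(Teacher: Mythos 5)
Your high-level plan --- peel off the sign factor via hyperconcentration of $\p(\bw)$ (this part is correct and matches the paper's \Cref{claim:sign-p-almost-constant}), then Taylor-expand the mollifier and use $\wise$-wise independence for the polynomial part --- is the right skeleton, but there is a genuine gap in the way you set up the Taylor expansion. You propose to expand $\allsoftchecks(y(\bz))$ as a function of the \emph{$n$-dimensional} variable $\bz$ around $\bz=0$. That runs into two serious problems. First, the $\taylor$-th order Taylor remainder then involves $\binom{n+\taylor-1}{\taylor}\approx n^\taylor$ partial derivatives $\partial_\alpha$, and nothing in the setup makes this $n$-dependence disappear; the final bound cannot depend on~$n$. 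Second, a Fa\`a di Bruno computation of a $\bz$-derivative of $\sigma\bigl(\delta^{-1}\ln(\st_u(y(z))/(\gamma\st_v(y(z))))\bigr)$ produces $\bz$-derivatives of the degree-$2\degree$ polynomials $\st_{i,j}(y(z))$, and these have no a priori smallness; the diagonal analysis checks give $\st_{i+1,1}(\ul{x}) \leq 100\anticoncgap\,\st_{i,2}(\ul{x})$ only at the single point $\ul{x}$, not at intermediate points $\bz^*$ on the segment from $0$ to $\bz$ where the Lagrange remainder is evaluated, so your assertion that ``a $y$-derivative of a statistic in row $i$ has squared $2$-norm smaller by a factor of $\anticoncgap$'' is not something the checks provide.

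The paper circumvents both problems with one move (\Cref{eq:indplusbw-rewritten}): observe that $\allsoftchecks(\bw)$ factors through the $m=(\degree+1)\maxcol=\poly(\degree)$-dimensional vector of statistics, i.e.\ $\allsoftchecks(\bw)=\wt{\allsoftchecks}(\bst)$ with $\bst_{i,j}=\st_{i,j}(\bw)$, and Taylor-expand $\wt{\allsoftchecks}$ in the statistics variable $s\in\R^m$ around the (deterministic) mean $\mu=\E[\bst]$, not in $\bz$. This reparametrization is what buys (i) the $m^\taylor=\poly(\degree^\taylor)$ rather than $n^\taylor$ combinatorial factor; (ii) the clean ``relaxedness'' bound $|s|^\alpha\,|\partial_\alpha\wt{\allsoftchecks}(s)|\leq B$ of \Cref{lem:mollifier-is-relaxed}, which together with the restriction to the event $\{\bst\approx_1\mu\}$ uniformly controls $\partial_\alpha\wt{\allsoftchecks}$ at the intermediate point $\bst^*$; and (iii) the ability to treat the atypical event $\{\bst\not\approx_1\mu\}$ by Cauchy--Schwarz plus the hyperconcentration of each $\bst_{i,j}$ around $\mu_{i,j}$ (\Cref{lem:we-got-hyperconcentration}, a consequence of \Cref{cor:usefulhvbound} and the passed analysis checks), rather than by any pointwise derivative bound along a path. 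Notice also that the Taylor polynomial $P(\bst)$ has degree $\taylor-1$ in $\bst$ but degree up to $2\degree(\taylor-1)$ in $\bz$; this is why $\wise=\Theta(\degree)$ rather than $\Theta(\taylor)$. Your closing paragraph correctly intuits that the grid structure and the amplification $\U_\hyperfixed$ are there to avoid $2^{O(d)}$ hypercontractivity losses, but the mechanism is hyperconcentration of the low-dimensional statistics vector, not smallness of high-order $\bz$-derivatives.
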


\begin{lemma} [If an analysis check fails, mollifier is close to zero] \label{lem:page13}
Suppose that $x \in \R^n$ is such that some check in $\AnalysisChecks$ does not hold at $x$, and that the random vector $\bz$ is a $\wise$-wise independent  $n$-dimensional Gaussian.  Then $\E\bracks*{\Indplus\parens*{\sqrt{1-\noisefixed} \cdot x + \sqrt{\noisefixed}\cdot \bz}} \in [0,\poly((\taylor d)^\taylor) \cdot \anticoncgap^{\taylor/2}]$.
\end{lemma}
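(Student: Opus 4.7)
The plan is as follows. Since $0 \le \Indplus(y) \le \allsoftchecks(y) = \prod_{\chk \in \MollifierChecks} \softcheck_\chk(y)$ pointwise and each factor lies in $[0,1]$, it suffices to exhibit a single $\chk^* \in \MollifierChecks$ with
\[
    \E_\bz\bracks*{\softcheck_{\chk^*}\parens*{\sqrt{1-\noisefixed}\, x + \sqrt{\noisefixed}\, \bz}} \;\le\; \poly((\taylor d)^\taylor) \cdot \anticoncgap^{\taylor/2}.
\]
I would choose $\chk^*$ by a ``shift left by one column'' rule matched to the failing analysis check $\chk_0$: if $\chk_0$ is the horizontal analysis check at $(i,j)$ for some $1 \le j \le \maxcol-1$ (comparing $\st_{i,j}$ with $\st_{i,j+1}$), take $\chk^*$ to be the horizontal mollifier check at $(i,j-1)$, specifically the one of the two checks in that pair corresponding to the direction in which $\chk_0$ fails; if $\chk_0$ is the diagonal analysis check at row $i$ (comparing $\st_{i+1,1}$ with $\st_{i,2}$), take $\chk^*$ to be the diagonal mollifier check at row $i$ (comparing $\st_{i+1,0}$ with $\st_{i,1}$). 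The index-range arithmetic shows that the chosen $\chk^*$ is always a valid element of $\MollifierChecks$.

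The reason this matching works is the semigroup identity from \Cref{def:sij} and \Cref{rem:semigroup}: $\st_{i,j+1}(x) = \E_{\by\sim \normal(0,1)^n}[\st_{i,j}(\sqrt{1-\noisefixed} x + \sqrt{\noisefixed} \by)]$. Since $\st_{i,j}$ has degree at most $2\degree$ and $\wise = \Theta(\degree)$ can be taken with implicit constant big enough that $\wise \ge 2\degree$, the same identity holds with $\wise$-wise $\bz$ in place of a true Gaussian, by \Cref{rem:why-wise}. Writing $y = \sqrt{1-\noisefixed}\, x + \sqrt{\noisefixed}\, \bz$ and letting $(\st_u,\st_v)$ denote the pair of statistics compared by $\chk^*$, it follows that $\E_\bz[\st_u(y)]/\E_\bz[\st_v(y)]$ equals exactly the ratio $\st_{u'}(x)/\st_{v'}(x)$ compared by $\chk_0$ at $x$. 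A direct parameter check---$\horzanal = 1/(100\,d\maxcol)$ safely exceeds $3\horizclose = 3/(K d \maxcol)$ for $K$ a sufficiently large absolute constant in the horizontal case, and the multiplicative gap between $100\anticoncgap$ and $e^2\anticoncgap$ in the diagonal case is $\Theta(1)$---shows that this ratio of expectations sits strictly inside the ``$\sigma=0$'' region of $\softcheck_{\chk^*}$, with multiplicative slack at least $\Omega(1)$ times $\chk^*$'s softness.

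The crux of the argument, and in my view the main obstacle, is in passing from this expected-value failure to an actual probability bound on $\{\softcheck_{\chk^*}(y) > 0\}$, since $\sigma$ is nonlinear. I would handle this by a degree-$\taylor$ Taylor expansion of $\sigma$ about $\E_\bz[L(y)]$, where $L(y) \coloneqq \ln(\st_u(y)/\st_v(y))$ (well-defined on the event that concerns us). Using the derivative bound $|\sigma^{(j)}| \le j^{O(j)}$ from \Cref{def:sigma} together with the $\delta^{-\taylor}$ scaling from the soft-check softness (a factor of at most $\poly(d\maxcol)^{\taylor}$), the Taylor remainder reduces the estimate to a bound on the $\taylor$th central moment of $L(y)$, which in turn is implied by hyperconcentration of $\st_u(y)$ and $\st_v(y)$ around their expected values with parameter $\sqrt{\anticoncgap}$. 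Hyperconcentration at this scale is precisely what our Local Hyperconcentration Theorem (\Cref{thm:HZL}, or its strengthening \Cref{main theorem}) provides when applied to the nice distributions $\calF_{u}, \calF_{v}$ from \Cref{sec:distributional}: attenuation of the zoomed statistics at scale $\hyperfixed = \Theta(\taylor)$ together with \Cref{lem:attenuated-hyperconcentrated} and \Cref{prop:hypermarkov} yields $(1+\tfrac12\hyperfixed^2)$-moment concentration, i.e.\ a $\taylor$th-moment bound of order $\anticoncgap^{\taylor/2}$. Combining the Taylor prefactor $\poly((\taylor d)^\taylor)$ with this $\anticoncgap^{\taylor/2}$ gives the claimed bound and completes the proof.
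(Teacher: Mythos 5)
Your overall decomposition is right: bound $\Indplus$ by a single $\softcheck_{\chk^*}$, pick $\chk^*$ by the "shift left one column" rule, and use the semigroup identity to match the failing analysis inequality at $x$ to the mollifier inequality in expectation over $\bz$. The paper does exactly this. However, there is a genuine gap where you justify the concentration of $\st_u(\bw), \st_v(\bw)$ around their means.

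You invoke the Local Hyperconcentration Theorem (\Cref{thm:HZL}/\Cref{main theorem}) to establish that the zoomed statistics are attenuated at $x$. But that theorem is a \emph{``with high probability over a random $\bx \sim \normal(0,1)^n$''} statement; it says nothing about an arbitrary fixed $x$. In \Cref{lem:page13}, $x$ is an arbitrary point at which an analysis check fails, and you must establish concentration for this specific $x$. The tool for this is \Cref{thm:desideratum2} together with \Cref{cor:usefulhvbound}, which give a \emph{deterministic} bound, valid for every $x$, on $\HV_{\sqrt{\hyperfixed}/13}[\zoom{\st_{i,j}}{\noisefixed}{x}]$ in terms of the neighboring statistics $\st_{i,j+1}(x)$ and $\st_{i+1,j}(x)$ — so that $\bst_{i,j}$ is $(2\taylor, 4\max\{\sqrt{\zeta_{i,j}},\zeta_{i,j}\})$-hyperconcentrated with $\zeta_{i,j} = \ul{\st}_{i+1,j}/\ul{\st}_{i,j+1}$. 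Substituting the Local Hyperconcentration Theorem here would not compile into a valid argument.

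This error propagates into a second omission: the deterministic hypervariance bound only yields useful hyperconcentration once you know $\zeta_{i,j} = O(\anticoncgap)$, and getting this requires the careful ordering of analysis checks. The paper considers the \emph{first} analysis check to fail in a prescribed order and uses the passing of all earlier checks — noise-insensitivity in row $i^*+1$ (to apply the noise insensitivity extension lemma, \Cref{lem:magicker-lemma}), the diagonal check at $i^*$, and noise-insensitivity in row $i^*$ up to column $j^*$ — to establish $\zeta_{i^*,j^*-1}, \zeta_{i^*,j^*} = O(\anticoncgap \cdot M/\mu)$. Your plan never identifies the first failing check or exploits earlier checks passing, so there is no way to bound the $\zeta$'s and the hyperconcentration parameter is uncontrolled. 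Finally, and more minor: the paper does not Taylor-expand $\sigma$ in this lemma; it argues directly via \Cref{prop:hypermarkov} that the concentrated $\bst_u, \bst_v$ land in the $\sigma=0$ region (the Taylor argument is reserved for \Cref{lem:page14}). Your Taylor route could in principle work, but it adds complexity without buying anything, and it still presupposes the hyperconcentration that you have not correctly established.
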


As we will see in the following sections, two very different arguments are used to prove \Cref{lem:page14} and \Cref{lem:page13}. \Cref{lem:page14}, which corresponds to the case in which $x$ is such that all of the analysis checks pass, is based on a Taylor's theorem argument.  In contrast, \Cref{lem:page13}, which corresponds to the case in which $x$ is such that some analysis check fails, employs a delicate argument, which takes advantage of the careful way that the analysis checks are structured vis-a-vis the mollifier checks, to argue that  in this case almost all outcomes of $\bz$ result in $\Indplus\parens*{\sqrt{1-\noisefixed} \cdot x + \sqrt{\noisefixed}\cdot \bz}=0$.

Before we can enter into the proofs of \Cref{lem:page14} and \Cref{lem:page13}, there are several intermediate technical results which will be used in both proofs which we need to establish.\ignore{\rasnote{If we are going to stick in some exposition about what these technical results --- Desideratum \#2, Magic Lemma ---  are, and why they arise in both of the \Cref{lem:page14} and \Cref{lem:page13} proofs, this might be the place to do it. But we don't necessarily need such exposition.}} We state and prove these technical results in  \Cref{sec:hypervariance-bound} and \Cref{sec:magic-lemma}, and prove \Cref{lem:page13} and \Cref{lem:page14} in \Cref{sec:page13} and \Cref{sec:page14} respectively.


\section{\ignore{Desideratum \#2:}Bounding the hypervariance of a statistic by its ``neighbors''}
\label{sec:hypervariance-bound}

The main goal of this section is to prove the following technical result which will be needed for our analysis.  For every $x \in \R^n$, it gives an upper bound on the hypervariance of the zoom-at-$x$ of our $(i,j)$-th statistic in terms of the values of some ``nearby'' statistics:

\begin{theorem} [Bounding hypervariance of zooms \kanenote{(analogue of Proposition~12 of \cite{kane11focs})}] \label{thm:desideratum2}
 For all $i \in \{0,\dots,\degree-1\}$ all $j \in \{0,\dots,\maxcol-1\}$,\ignore{all $\hypergeneric \geq 1$,\rodnote{This should actually be about $\hyperfixed$, right?  Because $\hyperfixed$ is built into the definition of the statistics.}} and all $x \in \R^n$, it holds that
    \begin{align*}
        \HV_{\frac{\sqrt{\hyperfixed}}{13}}\bracks*{\zoom{(\st_{i,j})}{\noisefixed}{x}}
        &\leq 8\parens*{\st_{i,j+1}(x) + \st_{i+1,j}(x)} \cdot  \st_{i+1,j}(x) \\
        &\leq 16\max\{\st_{i,j+1}(x)  \st_{i+1,j}(x), \st_{i+1,j}(x)^2\}.
    \end{align*}
\end{theorem}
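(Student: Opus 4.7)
The plan is to apply \Cref{fact:deriv-hypervar} with parameter $\hypergeneric' := \sqrt{\hyperfixed}/13$ to the polynomial $\st_{i,j}$, rewriting
\[
\HV_{\hypergeneric'}\bigl[\zoom{(\st_{i,j})}{\noisefixed}{x}\bigr] \;=\; \E_{\by,\by'}\bigl[(\nderiv{\by}{\by'}{\hypergeneric',\noisefixed}\st_{i,j})(x)^2\bigr].
\]
Using the distributional representation $\st_{i,j}(x) = \E_{\boldf \sim \calF_{i,j}}[\boldf(x)^2]$ from \Cref{eq:alt-sij} and the linearity of the noisy-derivative operator, I move the derivative inside the $\boldf$-expectation, and Cauchy--Schwarz in $\boldf$ yields
\[
\HV_{\hypergeneric'}\bigl[\zoom{(\st_{i,j})}{\noisefixed}{x}\bigr] \;\le\; \E_\boldf\bigl[\HV_{\hypergeneric'}[g_\boldf^2]\bigr], \quad\text{where } g_\boldf := \zoom{\boldf}{\noisefixed}{x}.
\]

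For each fixed $\boldf$, I would next bound $\HV_{\hypergeneric'}[g_\boldf^2]$ in terms of $\|g_\boldf\|_2^2$ and $\HV_\hyperfixed[g_\boldf]$. Writing $g = g_\boldf$ and $\mu = \E[g]$, the decomposition $g^2 - \|g\|_2^2 = 2\mu(g-\mu) + ((g-\mu)^2 - \|g-\mu\|_2^2)$ combined with Minkowski's inequality in $L^2$ gives
\[
\HV_{\hypergeneric'}[g^2]^{1/2} \;\le\; 2|\mu|\,\HV_{\hypergeneric'}[g]^{1/2} \;+\; \HV_{\hypergeneric'}[(g-\mu)^2]^{1/2}.
\]
The first term is bounded by $2|\mu|\HV_\hyperfixed[g]^{1/2}$ (using $\hypergeneric' \le \hyperfixed$) with $\mu^2 \le \|g\|_2^2$. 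The core subclaim is a single-polynomial bound $\HV_{\hypergeneric'}[h^2] \le C_0\|h\|_2^2\,\HV_\hyperfixed[h]$ for every degree-$\le d$ polynomial $h$, applied to $h = g-\mu$. This subclaim is proved via the Hermite multiplication formula $h_\alpha h_\beta = \sum_\gamma c_{\alpha\beta\gamma}h_\gamma$ (the sum restricted to $|\gamma|\le|\alpha|+|\beta|$) followed by Cauchy--Schwarz: because $(\hypergeneric')^{2|\gamma|} \le ((\hypergeneric')^2)^{|\alpha|}\cdot((\hypergeneric')^2)^{|\beta|}$ whenever $|\gamma|\le|\alpha|+|\beta|$, the ``doubling of degree'' under squaring is exactly offset by the relation $(\hypergeneric')^2 = \hyperfixed/169$, with the constant $13$ calibrated to absorb the combinatorial $\|h_\alpha h_\beta\|_2$ factors.

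Combining, I obtain pointwise $\HV_{\hypergeneric'}[g_\boldf^2] \le O(1)\cdot\|g_\boldf\|_2^2\cdot\HV_\hyperfixed[g_\boldf]$. Taking $\E_\boldf$ and relating the right-hand side to the statistics: by \Cref{eq:stats-as-zooms} we have $\E_\boldf[\|g_\boldf\|_2^2] = \st_{i,j+1}(x)$, and unwinding the distributional definition of $\calF_{i+1,j}$ identifies $\E_\boldf[\HV_\hyperfixed[g_\boldf]]$ with $\st_{i+1,j}(x)$ up to a swap in the order of the noisy-derivative and $\opnoise$-zoom operations. To convert the product $\E_\boldf[\|g_\boldf\|_2^2\HV_\hyperfixed[g_\boldf]]$ into the form $(\st_{i,j+1}+\st_{i+1,j})\,\st_{i+1,j}$, I split $\|g_\boldf\|_2^2 = \mu_\boldf^2 + \|g_\boldf - \mu_\boldf\|_2^2$ and use the trivial inequality $\|g_\boldf - \mu_\boldf\|_2^2 \le \HV_\hyperfixed[g_\boldf]/\hyperfixed^2$: the $\mu_\boldf^2$ piece produces the cross-term $\st_{i,j+1}\,\st_{i+1,j}$ (after a further Cauchy--Schwarz step bounding $\E_\boldf[\mu_\boldf^2] \le \st_{i,j+1}$), while the centered piece produces the $\st_{i+1,j}^2$ term.

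The main obstacle is proving the single-polynomial subclaim $\HV_{\hypergeneric'}[h^2] \le C_0 \|h\|_2^2 \HV_\hyperfixed[h]$ with an absolute constant $C_0$ independent of~$d$: a naive application of the hypercontractive inequality \Cref{thm:hypercon} would cost an exponential-in-$d$ factor, which is exactly what this paper is designed to avoid. The proof instead leverages the degree-doubling property of squaring, so that the specific scaling $(\hypergeneric')^2 = \hyperfixed/169$ correctly trades Hermite weight between levels of $h$ and $h^2$ via the product formula, keeping the constant absolute. A secondary subtlety is the identification of $\E_\boldf[\HV_\hyperfixed[g_\boldf]]$ with $\st_{i+1,j}(x)$ when $j > 0$: this requires verifying that, up to appropriate rewritings using rotational invariance of $\normal(0,1)^n$, the ``derive then zoom'' and ``zoom then derive'' compositions in the construction of $\calF_{i+1,j}$ yield the same second-moment statistic, so the identity $\st_{i+1,j}(x) = \E_{\boldf \sim \calF_{i,j}}[\HV_\hyperfixed[\zoom{\boldf}{\noisefixed}{x}]]$ can be used in the final accounting.
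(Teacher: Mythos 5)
Your plan diverges from the paper's and contains two genuine gaps. The first is in the ``conversion'' step. After the pointwise bound $\HV_{\hypergeneric'}[g_\boldf^2] \le O(1)\,\|g_\boldf\|_2^2\,\HV_\hyperfixed[g_\boldf]$ you take $\E_\boldf$ and are left with $\E_\boldf\bigl[\|g_\boldf\|_2^2\,\HV_\hyperfixed[g_\boldf]\bigr]$, but this does \emph{not} bound by the product-of-expectations form $(\st_{i,j+1}+\st_{i+1,j})\,\st_{i+1,j}$: your split $\|g_\boldf\|_2^2=\mu_\boldf^2+\Var[g_\boldf]$ reduces the centered piece to (essentially) $\E_\boldf\bigl[\HV_\hyperfixed[g_\boldf]^2\bigr]$, which you need $\le (\E_\boldf\HV_\hyperfixed[g_\boldf])^2=\st_{i+1,j}(x)^2$ --- but $\E[Y^2]\ge(\E Y)^2$ always, so the inequality runs the wrong way; and the $\mu_\boldf^2$ piece requires $\E_\boldf[\mu_\boldf^2\,\HV_\hyperfixed[g_\boldf]]\le\E_\boldf[\mu_\boldf^2]\,\E_\boldf[\HV_\hyperfixed[g_\boldf]]$, i.e.\ negative correlation you do not have. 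The paper never forms the pointwise product of norms in the first place: Lemma~\ref{lem:a-lemma2} treats $\HV_{\hypergeneric_0}[\E_\bup q_\bup^2]$ Hermite-level by Hermite-level, pushes $\E_\bup$ inside an $L^1$ norm, then applies Lemma~\ref{lem:HF-bound} to each $q_\bup$ to get a quantity of the form $\sqrt{A_\bup B_\bup}$, and only then applies Cauchy--Schwarz in $\bup$ via $\E_\bup[\sqrt{A_\bup B_\bup}]^2\le\E_\bup[A_\bup]\,\E_\bup[B_\bup]$ --- the one factorization that runs the right way. (If you replace Jensen by the seminorm triangle inequality $\HV[\E_\bup\cdot]^{1/2}\le\E_\bup[\HV[\cdot]^{1/2}]$ you could recover a workable $\sqrt{A_\bup B_\bup}$ structure, but then the entire burden falls on your single-polynomial subclaim, whose sketch --- relying on $(\hypergeneric')^{2|\gamma|}\le(\hypergeneric')^{2|\alpha|}(\hypergeneric')^{2|\beta|}$ for $|\gamma|\le|\alpha|+|\beta|$, which needs $\hypergeneric'\ge 1$ --- is not established.)

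The second gap is the assertion that $\E_{\boldf\sim\calF_{i,j}}\bigl[\HV_\hyperfixed[\zoom{\boldf}{\noisefixed}{x}]\bigr]$ equals $\st_{i+1,j}(x)$ ``up to rewritings using rotational invariance.'' It does not: the two quantities apply the noisy-derivative and $\opnoise$-zoom operators in opposite orders, and these operators genuinely fail to commute, so equality fails. The paper proves only the inequality $\E_\boldf[\HV_\hyperfixed[\cdot]]\le\st_{i+1,j}(x)$ (Lemma~\ref{lem:LHS-RHS}), via Proposition~\ref{prop:every-outcome}, which is a delicate term-by-term comparison of two univariate polynomials in the noise parameter established through the binomial theorem --- not a rotational-invariance triviality. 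Indeed, the footnote to Lemma~\ref{lem:LHS-RHS} records that exactly this non-commutativity is the gap in the proof of Kane's Proposition~12 which Proposition~\ref{prop:every-outcome} was designed to close; your proposal reproduces that gap.
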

We stress that \Cref{thm:desideratum2} holds for \emph{every} input $x \in \R^n$. This is important because \Cref{thm:desideratum2} will be used to prove \Cref{thm:hybridstep}, and in that setting we are dealing with an arbitrary $x \in \R^n$.

\ignore{
}

\begin{remark} \label{rem:our-hypervariance-bound-versus-kanes}
\Cref{thm:desideratum2} is analogous to Proposition~12 of \cite{kane11focs}, which  upper bounds the \emph{variance} of the zoom-at-$x$ of \cite{kane11focs}'s $(i,j)$-th statistic in terms of the values of the $(i+1,j)$-th and $(i,j+1)$-th statistics.  However, there is a factor of $2^\degree$ present in the bound of \cite{kane11focs} (again because of hypercontractivity) which as always is incompatible with our goal of achieving an overall quasipolynomial rather than exponential dependence on $\degree$.
\end{remark}

Before entering into the proof of \Cref{thm:desideratum2}, we record some corollaries and related results which we will use  in \Cref{sec:page13,sec:page14}.  Fix any $x\in \R^n$, let $\bz$ denote a $\wise$-wise independent $n$-dimensional Gaussian random vector, and let us write $\bw$ to denote $\sqrt{1-\noisefixed} x + \sqrt{\noisefixed} \bz$.
Let us also introduce the notation
\[
    \ul{\st}_{i,j} = \st_{i,j}(x), \qquad \bst_{i,j} = \st_{i,j}(\bw) = \zoom{(\st_{i,j})}{\noisefixed}{x}(\bz).
\]

We first record the following:

\begin{fact}
\label{fact:themean}
For all $0 \leq i \leq d$ and all $0 \leq j \leq \maxcol-1$, we have that
\begin{equation} \label{eqn:themean}
    \ul{\st}_{i,j+1} = \E[\bst_{i,j}] = \abs{\E[\bst_{i,j}]}.
\end{equation}
\end{fact}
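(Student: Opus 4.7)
\begin{proofof}{Fact~\ref{fact:themean} (proof plan)}
The plan is to establish the two equalities separately; both are short and essentially follow from the way the statistics are constructed.

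For the first equality $\ul{\st}_{i,j+1} = \E[\bst_{i,j}]$, I would start by noting that $\st_{i,j}$ is a polynomial of total degree at most $2\degree$: indeed $\st_{i,0}$ is an average of squares of polynomials of degree at most~$\degree$ (by \Cref{def:si0} and \Cref{fact:deriv-deg}), and applying $\U_{\sqrt{1-\noisefixed}}$ as in \Cref{def:sij} preserves degree. Consequently, $\zoom{(\st_{i,j})}{\noisefixed}{x}(z)$ is a polynomial of degree at most $2\degree$ in~$z$. Since $\wise = \Theta(\degree)$ with the hidden constant taken to be at least~$2$, \Cref{rem:why-wise} yields
\[
\E[\bst_{i,j}] \;=\; \E_{\bz}\bracks*{\zoom{(\st_{i,j})}{\noisefixed}{x}(\bz)} \;=\; \E_{\by \sim \normal(0,1)^n}\bracks*{\zoom{(\st_{i,j})}{\noisefixed}{x}(\by)}.
\]
But by \Cref{def:sij} the right-hand side is exactly $\st_{i,j+1}(x) = \ul{\st}_{i,j+1}$.

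For the second equality $\E[\bst_{i,j}] = \abs{\E[\bst_{i,j}]}$, I would simply observe that $\st_{i,j}$ is a pointwise nonnegative function: by \Cref{eq:alt-sij} we have $\st_{i,j}(y) = \E_{\boldf \sim \calF_{i,j}}[\boldf(y)^2] \geq 0$ for every~$y$. Therefore $\bst_{i,j} = \st_{i,j}(\bw) \geq 0$ almost surely, so $\E[\bst_{i,j}] \geq 0$, and the absolute value is redundant.

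No step here looks like a real obstacle; the only subtle point is verifying that the constant implicit in $\wise = \Theta(\degree)$ is large enough (at least~$2$) so that $\wise$-wise Gaussian independence is enough to compute the expectation of the degree-$2\degree$ polynomial $\zoom{(\st_{i,j})}{\noisefixed}{x}$ exactly.
\end{proofof}
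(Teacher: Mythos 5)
Your proof is correct and follows essentially the same argument as the paper: the first equality uses \Cref{def:sij} together with the fact that $\st_{i,j}$ has degree at most $2\degree$ while $\wise \geq 2\degree$ (so $\wise$-wise independence suffices), and the second equality uses the pointwise nonnegativity of $\st_{i,j}$. No gaps.
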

\begin{proof}
We have
\[
\ul{\st}_{i,j+1} =
\Ex_{\by \sim \normal(0,1)^n}[\zoom{(\st_{i,j})}{\noisefixed}{x}(\by)] =
\Ex_{\by \sim \normal(0,1)^n}[\zoom{(\st_{i,j})}{\noisefixed}{x}(\bz)] =
\E[\bst_{i,j}] = \abs{\E[\bst_{i,j}]},
\]
where the first equality is \Cref{def:sij}, the second is because $\st_{i,j}$ is a polynomial of degree at most $2 \degree$ and $\wise \geq 2 \degree$, and the last is by the non-negativity of $\st_{i,j}.$
\end{proof}

Next, as a corollary of  \Cref{thm:desideratum2} we have the following:
\begin{corollary} \label{cor:usefulhvbound}
For all $0 \leq i \leq d-1$ and all $0 \leq j \leq \maxcol-1$,
\begin{equation}    \label[ineq]{ineq:useme}
   \bst_{i,j} \text{ is } (2\taylor, 4\max\{\sqrt{\zeta_{i,j}}, \zeta_{i,j}\})\text{-hyperconcentrated,}
   \quad \quad \text{where~}\zeta_{i,j} \coloneqq \frac{\ul{\st}_{i+1,j}}{\ul{\st}_{i,j+1}}
\end{equation}
(where we interpret $0/0=0$).
\end{corollary}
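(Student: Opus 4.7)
The plan is to combine \Cref{thm:desideratum2}, \Cref{fact:themean}, and \Cref{lem:hypervar-to-hypercon}, with a little care for a degenerate case and for transferring from a true Gaussian to the $\wise$-wise independent $\bz$. I would write $g \coloneqq \zoom{(\st_{i,j})}{\noisefixed}{x}$, a polynomial of degree at most $2d$ which is pointwise nonnegative (since $\st_{i,j} \ge 0$), and note that $\bst_{i,j} = g(\bz)$.

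First, I would dispense with the degenerate case $\ul{\st}_{i,j+1} = 0$: by \Cref{fact:themean} this means $\E_{\by\sim\normal(0,1)^n}[g(\by)] = 0$, and the pointwise nonnegativity of $g$ then forces $g \equiv 0$ as a polynomial, so $\bst_{i,j} \equiv 0$ and the claim holds vacuously (with the convention $\zeta_{i,j} = 0/0 = 0$).

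Second, assuming $\mu \coloneqq \ul{\st}_{i,j+1} > 0$, I would apply \Cref{thm:desideratum2} to obtain
\[
    \HV_{\sqrt{\hyperfixed}/13}[g] \;\le\; 16\max\bigl\{\ul{\st}_{i,j+1}\ul{\st}_{i+1,j},\,\ul{\st}_{i+1,j}^2\bigr\} \;=\; 16\,\mu^2\max\{\zeta_{i,j},\,\zeta_{i,j}^2\}.
\]
Then \Cref{lem:hypervar-to-hypercon}, applied with $R = \sqrt{\hyperfixed}/13$ and $\theta = 16\max\{\zeta_{i,j},\zeta_{i,j}^2\}$, tells us that under truly Gaussian input, $g$ is $\bigl(1 + \hyperfixed/169,\;4\max\{\sqrt{\zeta_{i,j}},\,\zeta_{i,j}\}\bigr)$-hyperconcentrated. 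Since $\hyperfixed = \Theta(\taylor)$ by \Cref{eq:hyperfixed} and $\taylor = 4$ is an absolute constant, I would fix the hidden constant large enough that $1 + \hyperfixed/169 \ge 2\taylor = 8$, yielding the desired exponent in the hyperconcentration.

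Finally, I would transfer from a truly Gaussian input to the $\wise$-wise independent $\bz$. Since $2\taylor = 8$ is even, $(g(\cdot) - \mu)^{2\taylor}$ is a polynomial of total degree at most $16 d$, so by \Cref{rem:why-wise} its expectation agrees on $\bz$ and on a truly Gaussian $\by$ provided $\wise \ge 16 d$; this is accommodated by the setting $\wise = \Theta(d)$ in \Cref{eq:len-wise} upon choosing the hidden constant appropriately. The only substantive ingredient is \Cref{thm:desideratum2} itself, which is already in hand; everything else is bookkeeping around nonnegativity, degree-matching, and constant-choosing.
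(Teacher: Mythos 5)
Your proposal is correct and follows essentially the same path as the paper's proof: apply \Cref{thm:desideratum2} to bound $\HV_{\sqrt{\hyperfixed}/13}[\zoom{(\st_{i,j})}{\noisefixed}{x}]$, feed that bound into \Cref{lem:hypervar-to-hypercon} to obtain $(1+\hyperfixed/169,\cdot)$-hyperconcentration over a true Gaussian, arrange $\hyperfixed$ so this is at least $2\taylor$, and transfer to $\bz$ via the moment-matching in \Cref{rem:why-wise} (with $\wise = 4d\taylor$ covering the degree-$4d\taylor$ polynomial $(g-\mu)^{2\taylor}$). Your degenerate-case argument (nonnegative $g$ with zero Gaussian mean is identically zero) is a mild and valid variant of the paper's, which instead traces back to $\st_{i,0}$ being the zero polynomial via \Cref{rem:semigroup}; and you have in fact corrected a small typo in the paper, which writes $1+\hyperfixed/13$ where the application of \Cref{lem:hypervar-to-hypercon} with $R=\sqrt{\hyperfixed}/13$ actually yields $1+\hyperfixed/169$.
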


\begin{proof}
We begin by noting that if $\ul{\st}_{i,j+1}=0$, then (recalling \Cref{rem:semigroup} and \Cref{def:si0}) it must be the case that $\st_{i,0}$ is the constant-0 polynomial and hence $\st_{i+1,j}$ is also zero; in this case $\bst_{i,j}$ is the identically-0 random variable, which is certainly $(2\taylor,0)$-hyperconcentrated. Hence we subsequently assume that $\ul{\st}_{i,j+1} >0.$

By \Cref{thm:desideratum2}, we have that
\begin{equation} \label{eq:lolly}
    \HV_{\frac{\sqrt{\hyperfixed}}{13}}\bracks*{\zoom{(\st_{i,j})}{\noisefixed}{x}}
    \leq 16\max\{\ul{\st}_{i,j+1} \ul{\st}_{i+1,j}, \ul{\st}_{i+1,j}^2\}.
\end{equation}
We apply \Cref{lem:hypervar-to-hypercon} to the function $\zoom{(\st_{i,j})}{\noisefixed}{x}$, observing that the ``$\mu$'' of \Cref{lem:hypervar-to-hypercon} is $\Ex_{\by \sim \normal(0,1)^n}[\zoom{(\st_{i,j})}{\noisefixed}{x}(\by)]=\ul{\st}_{i,j+1} $ and hence that \Cref{eq:lolly} lets us take the ``$\theta$'' of \Cref{lem:hypervar-to-hypercon} to be
$16\max\{\zeta_{i,j},\zeta_{i,j}^2\}.$
Since $1+\hyperfixed/13 \geq {2}\taylor$,  \Cref{lem:hypervar-to-hypercon} thus gives that (for $\by \sim \normal(0,1)^n$) the random variable $\zoom{(\st_{i,j})}{\noisefixed}{x}(\by)$ is $({2}\taylor, 4\max\{\sqrt{\zeta_{i,j}}, \zeta_{i,j}\})$-hyperconcentrated. Since $\zoom{(\st_{i,j})}{\noisefixed}{x}(y)$ is a polynomial of degree at most $2\degree$ and $4 \degree \taylor = \wise$, the $2 \taylor$-th moments of $\zoom{(\st_{i,j})}{\noisefixed}{x}(\by)$ and of $\bst_{i,j}$ are identical (recall \Cref{rem:why-wise}). Now by the definition of hyperconcentration of a random variable we get that \Cref{ineq:useme} holds as desired.
\end{proof}

\ignore{
%
%
%
}

\subsection{Proof of \Cref{thm:desideratum2}} \label{sec:proof-of-desideratum2}

\Cref{thm:desideratum2} is proved using the following two results. We note that each of these results holds in a fairly general setting: in \Cref{lem:a-lemma} $(q_{\upsilon})_{\upsilon \in \Upsilon}$ can be any nice distribution of polynomials, and in fact both lemmas hold for  polynomials over either Boolean space or Gaussian space (it will be clear from the proofs that they go through essentially unchanged in the Boolean context).

\begin{lemma} \label{lem:a-lemma}
    Let $(q_{\bup})_{\bup \sim \Upsilon}$ be a nice distribution over polynomials (as defined in \Cref{sec:distributional}) over Gaussian space.  Then\ignore{ for all $\hypergeneric \geq \sqrt{3}$,} for $\hypergeneric_0 \coloneqq \tfrac{1}{13}\hyperfixed^{1/4}$, we have
    \begin{align*}
        \HV_{\hypergeneric_0}\bracks*{\E_{\bup}\bracks*{q_{\bup}^2}}
           & \leq 8  \parens*{\E_{\bup}\bracks*{\|q_{\bup}\|_2^2} + \E_{\bup}\bracks*{\HyperVar_{\hyperfixed}[q_{\bup}]}}
                         \cdot
                    \E_{\bup}\bracks*{\HyperVar_{\hyperfixed}[q_{\bup}]}.
    \end{align*}
\end{lemma}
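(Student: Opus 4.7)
My plan is to decompose, for each $\upsilon$, $q_\bup = m_\bup + \bar q_\bup$ where $m_\bup \coloneqq \E[q_\bup]$ is a scalar and $\bar q_\bup$ has mean zero. Then
\[
\E_\bup[q_\bup^2] = \E_\bup[m_\bup^2] + 2\,\E_\bup[m_\bup \bar q_\bup] + \E_\bup[\bar q_\bup^2],
\]
and the first term is a constant in $x$ so contributes nothing to $\HV_{\hypergeneric_0}$. Since $\sqrt{\HV_{\hypergeneric_0}[\cdot]} = \|\U_{\hypergeneric_0}(\cdot - \E[\cdot])\|_2$ is a seminorm, the triangle inequality reduces the job to separately bounding the ``cross term'' $\HV_{\hypergeneric_0}[\E_\bup m_\bup \bar q_\bup]$ and the ``square term'' $\HV_{\hypergeneric_0}[\E_\bup \bar q_\bup^2]$. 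Abbreviating $A \coloneqq \E_\bup[\|q_\bup\|_2^2]$ and $B \coloneqq \E_\bup[\HV_\hyperfixed[q_\bup]]$, the plan is to show that the cross term is at most $A \cdot B$ and the square term is at most $B^2$; combining with the triangle inequality and $(x+y)^2 \le 2x^2 + 2y^2$ then gives $\HV_{\hypergeneric_0}[\E_\bup q_\bup^2] \le (2\sqrt{AB} + B)^2 \le 8AB + 2B^2 \le 8(A+B)\cdot B$, as needed.

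For the cross term, pushing $\U_{\hypergeneric_0}$ inside the $\bup$-expectation and applying Cauchy--Schwarz pointwise in $x$, then integrating in $x$, yields $\|\U_{\hypergeneric_0}\E_\bup[m_\bup \bar q_\bup]\|_2^2 \le \E_\bup[m_\bup^2] \cdot \E_\bup[\HV_{\hypergeneric_0}[\bar q_\bup]]$, which is at most $A \cdot B$ using $m_\bup^2 = \E[q_\bup]^2 \le \|q_\bup\|_2^2$ and $\HV_{\hypergeneric_0} \le \HV_\hyperfixed$ (since $\hypergeneric_0 \le \hyperfixed$). For the square term, the seminorm property combined with Jensen gives $\sqrt{\HV_{\hypergeneric_0}[\E_\bup \bar q_\bup^2]} \le \E_\bup[\sqrt{\HV_{\hypergeneric_0}[\bar q_\bup^2]}]$, and the main step is to bound $\HV_{\hypergeneric_0}[\bar q_\bup^2] \le \HV_\hyperfixed[q_\bup]^2$ pointwise in $\bup$. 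Here the key point --- and the reason for the specific choice $\hypergeneric_0 = \hyperfixed^{1/4}/13$ --- is that $\hypergeneric_0 \le 1$ allows the ``free'' $L^2$-contractivity bound
\[
\HV_{\hypergeneric_0}[\bar q_\bup^2] \le \|\U_{\hypergeneric_0} \bar q_\bup^2\|_2^2 \le \|\bar q_\bup^2\|_2^2 = \|\bar q_\bup\|_4^4,
\]
applied to the degree-$2d$ polynomial $\bar q_\bup^2$ without incurring any $2^{O(d)}$ losses. The $(2,4)$-hypercontractive inequality (\Cref{thm:hypercon}) is then invoked only at the degree-$d$ level on $\bar q_\bup$, giving $\|\bar q_\bup\|_4 = \|\U_{1/\sqrt{3}} \U_{\sqrt{3}} \bar q_\bup\|_4 \le \|\U_{\sqrt{3}} \bar q_\bup\|_2 = \sqrt{\HV_{\sqrt{3}}[\bar q_\bup]}$ (the last equality using $\E[\bar q_\bup] = 0$); and this is at most $\sqrt{\HV_\hyperfixed[q_\bup]}$ since $\hyperfixed \ge \sqrt{3}$ by $\hyperfixed = \Theta(\taylor)$ with $\taylor = 4$.

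The main conceptual point is the splitting into a ``mean times mean-zero'' cross term and a ``mean-zero times mean-zero'' square term, together with the observation that the seminorm/Jensen skeleton makes the cross term entirely elementary. The only nontrivial analytic input is the bound on the square term, and the plan there is crucially to contract ``for free'' with $\U_{\hypergeneric_0}$ at the degree-$2d$ level (using only $\hypergeneric_0 \le 1$) before invoking hypercontractivity at the cheaper degree-$d$ level with the fixed constant radius $\sqrt{3}$; this is precisely what avoids the $2^{O(d)}$ loss that would appear if hypercontractivity were applied directly to $\bar q_\bup^2$.
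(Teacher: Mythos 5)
Your proof is correct but takes a genuinely different route from the paper's. The paper (via its Lemmas~\ref{lem:a-lemma2} and~\ref{lem:HF-bound}) does a level-by-level analysis: for each Hermite level $k$ it splits $q_\bup = q_\bup^{<k/2} + q_\bup^{\geq k/2}$, bounds $\W^{\geq k}[q_\bup^2]$ by a product of two hypercontractive norms, and then sums over $k$ with the $\hypergeneric_0^{2k}$ weights; the $e^k$ factors that appear when converting $\|g^{=k}\|_1$ to $\|g^{=k}\|_2$ are what force the fourth-root relationship $\hypergeneric_0 = \Theta(\hyperfixed^{1/4})$. You instead split $q_\bup = m_\bup + \bar q_\bup$ into mean and mean-zero parts, dispose of the cross term by pure Cauchy--Schwarz (using only $\HV_{\hypergeneric_0} \le \HV_\hyperfixed$), and reduce the square term to a pointwise inequality $\HV_{\hypergeneric_0}[\bar q_\bup^2] \le \HV_\hyperfixed[q_\bup]^2$, obtained by a ``free'' $L^2$ contraction at the degree-$2d$ level followed by $(2,4)$-hypercontractivity applied only at the degree-$d$ level. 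What each buys: your argument is noticeably more elementary and shorter, avoids the level-by-level bookkeeping, and isolates exactly why no $2^{O(d)}$ loss appears (the only hypercontractive step is on a degree-$d$ polynomial with the fixed radius $\sqrt{3}$). The only condition your argument needs is $\hypergeneric_0 \le 1$ and $\hyperfixed \ge \sqrt{3}$, rather than the specific fourth-root coupling; the paper's proof, however, would continue to give information in a regime where $\hyperfixed$ is so large that $\hypergeneric_0 = (\hyperfixed/3e^4)^{1/4} > 1$, which your free-contraction step would not cover. For the statement as written --- where $\hyperfixed = \Theta(\taylor)$ is a fixed constant and hence $\hypergeneric_0 = \hyperfixed^{1/4}/13 < 1$ --- both proofs give the identical $8(A+B)\cdot B$ bound, so the two are equally valid here.
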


\begin{lemma} \label{lem:LHS-RHS}
\footnote{We note at this point that \cite{kane11focs} appears to have a gap in the proof of its Proposition~12. Specifically, it is not true that the second equality following ``Notice that'' in that proof holds true (roughly speaking, because the derivative operator $D$ and the noise operator $N$ of \cite{kane11focs} do not in general commute, as can be verified by considering the polynomial $p(x)=x_1 x_2)$. The raison d'\^{e}tre of our \Cref{lem:LHS-RHS} is to fill this gap.}
For all $0 \leq i \leq \degree$ and all $0 \leq j \leq \maxcol-1$,\ignore{all $\hypergeneric \geq 1$,} and all $x \in \R^n$, we have
\[
 \E_{\boldf \sim \calF_{i,j}}[\HV_\hyperfixed[\zoom{\boldf}{\noisefixed}{x}]]
 \leq
 \st_{i+1,j}(x).
\]
\end{lemma}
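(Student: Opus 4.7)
The plan is to prove this by induction on $j$, with the key new ingredient being a \emph{single-step commutation inequality} relating the derivative-like and noise-like operations on polynomials. The base case $j = 0$ is immediate from~\eqref{eq:stats-as-zooms}, which gives equality. For the inductive step, I decompose a draw from $\calF_{i,j}$ as $\opnoise_\bz \boldf'$ with $\boldf' \sim \calF_{i,j-1}$ and $\bz \sim \normal(0,1)^n$ independent; it then suffices to establish, for any polynomial $g : \R^n \to \R$ and any $x \in \R^n$, the pointwise single-step inequality
\begin{equation}
\E_\bz[\HV_\hyperfixed[\zoom{\opnoise_\bz g}{\noisefixed}{x}]] \;\leq\; \E_\bz[\HV_\hyperfixed[\zoom{g}{\noisefixed}{\sqrt\noisefixed \bz + \sqrt{1-\noisefixed}x}]].
\label{eq:single-step-plan}
\end{equation}
Granted this, applying~\eqref{eq:single-step-plan} with $g = \boldf'$, averaging over $\boldf'$, and invoking the inductive hypothesis at the point $y = \sqrt\noisefixed\bz + \sqrt{1-\noisefixed}x$ yields $\E_\bz[\st_{i+1,j-1}(y)] = \U_{\sqrt{1-\noisefixed}}\st_{i+1,j-1}(x) = \st_{i+1,j}(x)$, as required.

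To prove~\eqref{eq:single-step-plan}, I would work directly with Hermite coefficients. Both sides involve a polynomial in $v$ of the form $g(Ax + B\bz + Cv)$ with $A = 1-\noisefixed$ and $A^2 + B^2 + C^2 = 1$; for the LHS $(B, C) = (\sqrt\noisefixed,\ \sqrt{\noisefixed(1-\noisefixed)})$, whereas for the RHS the pair is swapped to $(\sqrt{\noisefixed(1-\noisefixed)},\ \sqrt\noisefixed)$. The three-variable Hermite addition formula---easily derived from the generating function $\sum_\alpha t^\alpha H_\alpha(u)/\alpha! = \exp(t\cdot u - \|t\|^2/2)$ by factoring $\exp$ into three pieces---gives
\[
h_\alpha(Ax+B\bz+Cv) = \sum_{\beta_1+\beta_2+\beta_3=\alpha} \sqrt{\tbinom{\alpha}{\beta_1,\beta_2,\beta_3}} A^{|\beta_1|}B^{|\beta_2|}C^{|\beta_3|} h_{\beta_1}(x)h_{\beta_2}(\bz)h_{\beta_3}(v).
\]
Averaging over $\bz$ via orthonormality of $h_{\beta_2}$ and applying $\HV_\hyperfixed$ (which restricts to $\beta_3 \neq 0$ and weights by $\hyperfixed^{2|\beta_3|}$) reduces both sides to the form $\sum_{\beta_3 \neq 0,\, \beta_2} (\text{coef})\cdot S(\beta_2, \beta_3, x)$, where
\[
S(\beta_2, \beta_3, x) = \Bigl(\sum_\gamma \wh g(\gamma+\beta_2+\beta_3)\sqrt{\tbinom{\gamma+\beta_2+\beta_3}{\gamma,\beta_2,\beta_3}}(1-\noisefixed)^{|\gamma|} h_\gamma(x)\Bigr)^2
\]
is the \emph{same} nonnegative quantity on both sides, and crucially is symmetric: $S(\beta_2, \beta_3, x) = S(\beta_3, \beta_2, x)$.

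The main obstacle is that~\eqref{eq:single-step-plan} does not hold term-by-term---the ratio of LHS- to RHS-coefficients is $(1-\noisefixed)^{|\beta_3|-|\beta_2|}$, which can exceed $1$ when $|\beta_2| > |\beta_3|$. The plan to overcome this is a symmetrization trick. First, split off the $\beta_2 = 0$ contributions: there the coefficient ratio is $(1-\noisefixed)^{|\beta_3|} \leq 1$, so LHS $\leq$ RHS trivially. For the remaining sum with both $\beta_2, \beta_3 \neq 0$, I use the symmetry of $S$ to pair each $(\beta_2, \beta_3)$ with its swap $(\beta_3, \beta_2)$; after pairing, the RHS-minus-LHS contribution becomes
\[
\noisefixed^{|\beta_2|+|\beta_3|}\bigl[(1-\noisefixed)^{|\beta_2|} - (1-\noisefixed)^{|\beta_3|}\bigr]\bigl[\hyperfixed^{2|\beta_3|} - \hyperfixed^{2|\beta_2|}\bigr]\, S(\beta_2, \beta_3, x),
\]
in which the two bracketed factors always share the same sign (since $0 < 1-\noisefixed < 1$ and $\hyperfixed \geq 1$ make both brackets simultaneously $\leq 0$ when $|\beta_2| \geq |\beta_3|$ and simultaneously $\geq 0$ otherwise). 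Hence each paired contribution is nonnegative, which completes the proof of~\eqref{eq:single-step-plan} and therefore of the lemma.
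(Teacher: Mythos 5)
Your proof is correct, and it takes a genuinely different route from the paper's.

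The paper first telescopes all $j$ noise steps into one via the semigroup property, reducing \Cref{lem:LHS-RHS} to \Cref{prop:every-outcome}: a ``swap'' inequality $\E_{\by,\by'}\E_\bz[(\nderiv{\by}{\by'}{R,\noisefixed} \zoom{g}{\rho}{\bz})(x)^2] \leq \E_{\by,\by'}\E_\bz[\zoom{(\nderiv{\by}{\by'}{R,\noisefixed} g)}{\rho}{\bz}(x)^2]$ for a single polynomial $g$ and a \emph{general} noise parameter $\rho \in (0,1)$. That proposition is then established by treating both sides as polynomials in $R$, reinterpreting them probabilistically (via $\U_{R'}$ for $0 \le R' \le 1$) in terms of $\Stab$, and comparing the resulting $\sigma$- and $\tau$-formulas term by term via the binomial theorem. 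Your version instead inducts on $j$, and the single-step inequality you need is exactly the $\rho = 1 - \noisefixed$ special case of \Cref{prop:every-outcome} (unwinding $\nderiv{\by}{\by'}{\hyperfixed,\noisefixed}$ via \Cref{fact:deriv-hypervar} shows $\plhs(\hyperfixed) = \E_\bz[\HV_\hyperfixed[\zoom{\opnoise_\bz g}{\noisefixed}{x}]]$ and $\prhs(\hyperfixed) = \E_\bz[\HV_\hyperfixed[\zoom{g}{\noisefixed}{\sqrt{\noisefixed}\bz+\sqrt{1-\noisefixed}x}]]$). You prove that special case by a completely different method: expanding in Hermite coefficients via the three-variable addition formula, observing that the $(\beta_2,\beta_3)$-coefficients of the two sides differ only by a swap of the $B$ and $C$ scalings, and then using the $\beta_2 \leftrightarrow \beta_3$ symmetry of $S(\beta_2,\beta_3,x)$ to pair terms and exhibit the coefficient difference as a product of two same-signed factors (after handling $\beta_2 = 0$ separately). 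Your approach pays the small price of an outer induction, but buys a transparent, self-contained computation in the Hermite basis that avoids the paper's probabilistic-reinterpretation-then-polynomial-comparison trick; conversely, the paper's proof dispenses with induction at the cost of carrying a free parameter $\rho$ through the argument and doing the somewhat opaque $\Stab$ bookkeeping. Both are sound; yours is arguably easier to verify line by line.
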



\begin{proofof}{\Cref{thm:desideratum2} using \Cref{lem:a-lemma} and \Cref{lem:LHS-RHS}}
We instantiate \Cref{lem:a-lemma} by taking the nice distribution $(q_{\bup})_{\bup \sim \Upsilon}$ to be $(\zoom{\boldf}{\noisefixed}{x})_{\boldf \sim {\cal F}_{i,j}}.$
With this choice $\HV_{\hypergeneric_0}\bracks*{\E_{\bup}\bracks*{q_{\bup}^2}}$ corresponds to $\HV_{\hypergeneric_0}\bracks*{\zoom{(\st_{i,j})}{\noisefixed}{x}}$ (by \Cref{eq:alt-sij}), and $\E_{\bup}\bracks*{\|q_{\bup}\|_2^2}$ corresponds to $\st_{i,j+1}(x)$ (by \Cref{eq:stats-as-zooms}).  The final quantity on the right-hand side of \Cref{lem:a-lemma}, $\E_{\bup}\bracks*{\HyperVar_{\hyperfixed}[q_{\bup}]}$, corresponds to
$\E_{\boldf \sim {\cal F}_{i,j}}[\HyperVar_{\hyperfixed}[\zoom{\boldf}{\noisefixed}{x}]]$, so from \Cref{lem:a-lemma} we get that
\begin{align}
\HV_{\hyperfixed_0}\bracks*{\zoom{(\st_{i,j})}{\noisefixed}{x}}
&\leq
8 \parens*{\st_{i,j+1}(x) + \E_{\boldf \sim {\cal F}_{i,j}}[\HyperVar_{\hyperfixed}[\zoom{\boldf}{\noisefixed}{x}]]} \cdot \E_{\boldf \sim {\cal F}_{i,j}}[\HyperVar_{\hyperfixed}[\zoom{\boldf}{\noisefixed}{x}]] \nonumber\\
& \leq
8 \parens*{\st_{i,j+1}(x) + \st_{i+1,j}(x)} \cdot \st_{i+1,j}(x) \tag{\Cref{lem:LHS-RHS}}
\end{align}
and the proof of \Cref{thm:desideratum2} is complete.
\end{proofof}


\subsection{Proof of \Cref{lem:a-lemma}}

Our main goal in this section is to prove the following:

\begin{lemma} \label{lem:a-lemma2}
    Let $(q_{\bup})_{\bup \sim \Upsilon}$ be a nice distribution over polynomials (as defined in \Cref{sec:distributional}) over Gaussian space.
Define the function
\[
            g = \E_{\bup}\bracks*{q_{\bup}^2}.
\]

    Then if $\hypergeneric_0 > {\frac {3^{1/8}}{13}}$,  we have
    \[
        \HV_{\hypergeneric_0}\bracks*{g}
            \leq 8
        \E_{\bup}\bracks*{\|\U_{\sqrt{3}} q_{\bup}\|_2^2}
            \cdot
        \E_{\bup}\bracks*{\HyperVar_\hypergeneric[q_{\bup}]}
    \]
    where $\hypergeneric = 3e^4 \hypergeneric_0^{{4}} \leq 165 \hypergeneric_0^{{4}}$.

    %
    %
    %
\end{lemma}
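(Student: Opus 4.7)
The plan is to split each $q_\bup$ into mean and fluctuation, expand $\HV_{\hypergeneric_0}[g]$ by Parseval, and bound the two resulting contributions separately via Cauchy--Schwarz in $\bup$ and Gaussian hypercontractivity. Writing $\mu_\bup \coloneqq \wh{q_\bup}(0)$ and $\ol{q}_\bup \coloneqq q_\bup - \mu_\bup$, the identity $q_\bup^2 = \mu_\bup^2 + 2\mu_\bup\ol{q}_\bup + \ol{q}_\bup^2$ shows that the $\mu_\bup^2$ piece contributes only to the zeroth Hermite coefficient of $q_\bup^2$, so for every $\gamma\neq 0$,
\[
\wh{g}(\gamma) \;=\; 2\,\E_\bup\bigl[\mu_\bup\,\wh{\ol{q}_\bup}(\gamma)\bigr] \;+\; \E_\bup\bigl[\wh{\ol{q}_\bup^{\,2}}(\gamma)\bigr].
\]
Applying $(a+b)^2 \le 2a^2 + 2b^2$ frequency-wise inside $\HV_{\hypergeneric_0}[g] = \sum_{\gamma\ne 0}\hypergeneric_0^{2|\gamma|}\wh{g}(\gamma)^2$, I obtain $\HV_{\hypergeneric_0}[g] \le 8T_1 + 2T_2$, where $T_1$ collects the $\mu_\bup\cdot\ol{q}_\bup$ cross contributions and $T_2$ collects the pure $\ol{q}_\bup^{\,2}$ contributions.

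For $T_1$, the key observation is that $T_1 = \|\E_\bup[\mu_\bup\,\U_{\hypergeneric_0}\ol{q}_\bup]\|_2^2$, to which a pointwise Cauchy--Schwarz in $\bup$ yields $T_1 \le \E_\bup[\mu_\bup^2]\cdot\E_\bup[\HV_{\hypergeneric_0}[\ol{q}_\bup]]$. I would then close this out using the trivial bound $\mu_\bup^2 \le \|\U_{\sqrt 3}q_\bup\|_2^2$ together with a comparison of $\HV_{\hypergeneric_0}[\ol{q}_\bup]$ to $\HV_{\hypergeneric}[q_\bup]$; this comparison is precisely where the specific lower bound $\hypergeneric_0 > 3^{1/8}/13$ and the relation $\hypergeneric = 3e^4\hypergeneric_0^4$ enter, and they are calibrated so that the combined factor matches the leading $8$ in the target.

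For $T_2$, the crucial structural observation is that $T_2 = \HV_{\hypergeneric_0}[\widetilde{g}]$, where $\widetilde{g}(x) \coloneqq \E_\bup[\ol{q}_\bup(x)^2]$ is the $g$-function of the mean-zero nice distribution $(\ol{q}_\bup)_\bup$. So the bound on $T_2$ reduces to the original problem restricted to mean-zero polynomials, and for this mean-zero single-polynomial instance I would exploit the Jensen-type identity
\[
\U_{\hypergeneric_0}(\ol{q}^2)(x) \;=\; (\U_{\hypergeneric_0}\ol{q})(x)^2 \;+\; \Var_{\by}\!\bigl[\ol{q}\bigl(\hypergeneric_0\, x + \sqrt{1-\hypergeneric_0^2}\,\by\bigr)\bigr],
\]
which decomposes $\HV_{\hypergeneric_0}[\ol{q}^2] = \Var_\bx[\U_{\hypergeneric_0}(\ol{q}^2)(\bx)]$ into a variance of a sum of two degree-$2d$ pieces. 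The ``square of smoothed $\ol{q}$'' piece is controlled via the $(2,4)$-hypercontractive inequality and its dual form $\|f\|_4 \le \|\U_{\sqrt 3}f\|_2$ (a consequence of \Cref{thm:hypercon}); this is the source of the factor $3$ in $\hypergeneric$. The factor $e^4$ comes from applying $\|p\|_2 \le e^k\|p\|_1$ of \Cref{thm:aobf-9.22} to the degree-$4$ intermediate polynomial $(\U_{\hypergeneric_0}\ol{q})^2$; and the fourth power on $\hypergeneric_0$ reflects that $\hypergeneric_0$ enters $\HV_{\hypergeneric_0}[\cdot]$ with exponent $2$ while $\HV_{\hypergeneric_0}$ is being applied to a \emph{square}.

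The main obstacle will be the mean-zero single-polynomial step for $T_2$: one must prove a sharp bound on $\HV_{\hypergeneric_0}[\ol{q}^2]$ that, after averaging over $\bup$ and combining with the $T_1$ estimate, fits exactly under the leading constant $8$. This requires simultaneous use of both primal and dual hypercontractivity (to handle the $L^4$-norm of $(\U_{\hypergeneric_0}\ol{q})^2$ and the ``variance of a variance'' piece), a careful level-by-level Hermite analysis of $\ol{q}^2$, and precise tracking of the interplay between the hypercontractive constants and the parameter $\hypergeneric_0$; the numerical threshold $\hypergeneric_0 > 3^{1/8}/13$ is exactly what is needed for the accumulated constants to absorb into the factor of $8$.
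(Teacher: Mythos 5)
Your approach is genuinely different from the paper's and, as written, it does not close.

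The paper's proof of this lemma hinges on \Cref{lem:HF-bound}, which bounds $\W^{\geq k}[q^2]$ by $4\norm{\U_{\sqrt 3}(q^{\geq k/2})}_2^2\cdot\norm{\U_{\sqrt 3}q}_2^2$ via the split $q = q^{<k/2} + q^{\geq k/2}$ \emph{at level $k/2$, separately for each $k$}. This is what makes the sum $\sum_{k\geq 1}(e\hypergeneric_0)^{2k}\W^{\geq k}[q^2]$ tractable: after swapping sums, the geometric factor $(e\hypergeneric_0)^{2k}$ is killed by the decay of $\norm{\U_{\sqrt 3}(q^{\geq k/2})}_2^2$ in $k$, and one gets a single clean $\HV_{3e^4\hypergeneric_0^4}[q]$ factor. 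A single Cauchy--Schwarz in $\bup$, applied \emph{after} taking a square root of the product bound, produces the product-of-expectations form $\E_\bup[\cdot]\cdot\E_\bup[\cdot]$.

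You instead split $q_\bup$ at level $0$ into $\mu_\bup + \ol{q}_\bup$. The $T_1$ analysis is fine (modulo checking the constants, and noting that $\HV_{\hypergeneric_0}[\ol{q}_\bup]\leq\HV_\hypergeneric[q_\bup]$ needs $\hypergeneric_0\leq\hypergeneric = 3e^4\hypergeneric_0^4$, i.e.\ $\hypergeneric_0\geq(3e^4)^{-1/3}$, which is slightly stronger than the stated $3^{1/8}/13$). The gap is in $T_2$, and it is a real one: your ``reduce to the mean-zero case'' step is circular, since when $\mu_\bup=0$ we have $T_1=0$ and $T_2=\HV_{\hypergeneric_0}[g]$ --- you are back to the original problem. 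The proposed escape via the Jensen-type identity $\U_{\hypergeneric_0}(\ol q^2) = (\U_{\hypergeneric_0}\ol q)^2 + V$ does not break the circle either: the ``hard'' term $\Var[V]$ equals $\Var[\U_{\hypergeneric_0}(\ol q^2)-(\U_{\hypergeneric_0}\ol q)^2]$, so applying $(a+b)^2\leq 2a^2+2b^2$ to $\Var[(\U_{\hypergeneric_0}\ol q)^2 + V]$ puts $\Var[\U_{\hypergeneric_0}(\ol q^2)]$ on both sides with a larger constant on the right. One could instead try to induct on degree using that $V$ has degree $< 2d$, but that is a substantially different and harder argument than what you sketch, and you rightly flag the sharp bound on $\HV_{\hypergeneric_0}[\ol q^2]$ as ``the main obstacle'' without resolving it. The missing ingredient is precisely the idea of \Cref{lem:HF-bound}: splitting at level $k/2$ rather than at level $0$, which makes the geometric sum over $k$ converge without any induction or recursion.

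There is also a structural concern for the averaged version: even if you proved a sharp per-$\bup$ bound $\HV_{\hypergeneric_0}[\ol q_\bup^2]\leq C\norm{\U_{\sqrt 3}q_\bup}_2^2\cdot\HV_\hypergeneric[q_\bup]$, taking $\E_\bup$ gives $\E_\bup[\text{product}]$, not the needed $\E_\bup[\cdot]\cdot\E_\bup[\cdot]$. The paper avoids this by applying Cauchy--Schwarz in $\bup$ \emph{before} the final level-by-level manipulation (inside the $\sum_k$, on $\E_\bup[\sqrt{\W^{\geq k}[q_\bup^2]}]^2$). Your decomposition would need an analogous placement of Cauchy--Schwarz to produce the product form, and with the level-$0$ split it is not clear where that goes.
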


\Cref{lem:a-lemma} follows from \Cref{lem:a-lemma2} since $13^2 > 165$,  $\HV_{S}[p]$ is an increasing function of $S$ for all $p$, and
\begin{align*}
\|\U_{\sqrt{3}} q_{\upsilon}\|_2^2
&= \sum_{|\alpha|\geq 0} 3^{|\alpha|} \wh{q}_{\upsilon}(\alpha)^2
\tag{\Cref{eqn:U-formula} and Plancherel}\\
&= \sum_{|\alpha| \geq 0} \wh{q}_{\upsilon}(\alpha)^2 +
\sum_{|\alpha| \geq 1} (3^{|\alpha|} - 1) \wh{q}_{\upsilon}(\alpha)^2\\
&= \|q_{\upsilon}\|_2^2 + \HV_{\hyperfixed}[q_{\upsilon}]
\tag{\Cref{def:hypervar} and choice of $\hyperfixed$}
\end{align*}

We will use the following lemma in the proof of \Cref{lem:a-lemma2}:

\begin{lemma} \label{lem:HF-bound}
Given any polynomial $q$ and any $k \in \N^+$, we have
\[
 \W^{\geq k}[q^2] \leq 4 \cdot \norm*{\U_{\sqrt{3}} (q^{\geq k/2})}_2^2 \cdot \norm*{\U_{\sqrt{3}} q}_2^2.
\]
\end{lemma}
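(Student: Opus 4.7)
The plan is to peel off the low-frequency part of $q$, argue that its square contributes no Hermite weight at level $\geq k$, and then bound the remaining cross-term using Cauchy--Schwarz followed by $(2,4)$-hypercontractivity.

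First, decompose $q = q^{<k/2} + q^{\geq k/2}$. The polynomial $(q^{<k/2})^2$ is a product of two polynomials each of Hermite degree $<k/2$, hence has Hermite degree $<k$. Therefore it contributes nothing to the Hermite weight of $q^2$ at level $\geq k$:
\[
\W^{\geq k}[q^2] \;=\; \W^{\geq k}\bigl[q^2 - (q^{<k/2})^2\bigr] \;=\; \W^{\geq k}\bigl[\,q^{\geq k/2}\cdot (q + q^{<k/2})\,\bigr].
\]
Since $\W^{\geq k}[\cdot]$ is bounded above by the squared $2$-norm, it suffices to bound
\[
\bigl\|\,q^{\geq k/2}\cdot (q + q^{<k/2})\,\bigr\|_2^2 \;\leq\; 4\,\norm*{\U_{\sqrt{3}} q^{\geq k/2}}_2^2 \cdot \norm*{\U_{\sqrt{3}} q}_2^2.
\]

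The key step is to apply Cauchy--Schwarz, $\|fg\|_2^2 \leq \|f\|_4^2\,\|g\|_4^2$, followed by the $(2,4)$-hypercontractive inequality (\Cref{thm:hypercon}), which in the form $\|h\|_4 \leq \|\U_{\sqrt{3}} h\|_2$ gives
\[
\bigl\|\,q^{\geq k/2}\cdot (q + q^{<k/2})\,\bigr\|_2 \;\leq\; \norm*{\U_{\sqrt{3}} q^{\geq k/2}}_2 \cdot \norm*{\U_{\sqrt{3}}(q + q^{<k/2})}_2.
\]
It remains to control the second factor. Since $\U_{\sqrt{3}}$ acts diagonally in the Hermite basis (\Cref{eqn:U-formula}), we have $\U_{\sqrt{3}} q^{<k/2} = (\U_{\sqrt{3}} q)^{<k/2}$, and so $\|\U_{\sqrt{3}} q^{<k/2}\|_2 \leq \|\U_{\sqrt{3}} q\|_2$ by orthogonality. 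The triangle inequality then yields $\|\U_{\sqrt{3}}(q + q^{<k/2})\|_2 \leq 2\|\U_{\sqrt{3}} q\|_2$, and squaring produces the claimed factor of $4$.

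I do not anticipate any genuine obstacle here; the only delicate point is the bookkeeping that $(q^{<k/2})^2$ truly has Hermite degree strictly less than $k$ (which holds whether $k$ is even or odd, since the Hermite degree of $q^{<k/2}$ is at most $\lceil k/2\rceil - 1$, and thus that of its square is at most $k-2$ or $k-1$). Everything else is a clean application of two standard inequalities.
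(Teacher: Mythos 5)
Your proof is correct and follows essentially the same route as the paper's: decompose $q = q^{<k/2} + q^{\geq k/2}$, observe that the low-degree square contributes no weight at level $\geq k$, then apply Cauchy--Schwarz and $(2,4)$-hypercontractivity to the residual $q^{\geq k/2}(q + q^{<k/2})$. The only (cosmetic) difference is in the last step, where you bound $\|\U_{\sqrt{3}}(q + q^{<k/2})\|_2 \leq 2\|\U_{\sqrt 3}q\|_2$ by the triangle inequality while the paper expands $\|\U_{\sqrt3}(2\ell+h)\|_2^2 = 4\|\U_{\sqrt3}\ell\|_2^2 + \|\U_{\sqrt3}h\|_2^2$ using orthogonality of $\ell$ and $h$; both yield the same constant $4$.
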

\begin{proof}
Let us write $q = \ell + h$ where $\ell = q^{< k/2}$ and $h = q^{\geq k/2}$.  Since $(\ell^2)^{\geq k}=0$, we have that
\[
    (q^2)^{\geq  k}  = (\ell^2 + 2 \ell h + h^2)^{\geq k} = (\ell^2)^{\geq  k} + (2\ell h)^{\geq  k}  + (h^2)^{\geq  k} = (h(2\ell  + h))^{\geq  k},
\]
and thus
\[
    \W^{\geq k}[q^2] = \norm*{(q^2)^{\geq k}}_2^2  = \norm*{(h(2\ell + h))^{\geq k}}_2^2  \leq \norm*{h(2\ell + h)}_2^2 = \E[h^2 (2\ell + h)^2] \leq \norm*{h^2}_2 \cdot \norm*{(2\ell + h)^2}_2,
\]
where the latter inequality is by Cauchy--Schwarz.  By $(2,4)$-hypercontractivity (recalling \Cref{thm:hypercon}), we have that
\[
    \|h^2\|_2 \leq \|\U_{\sqrt{3}} h\|_2^2 
\]
and similarly
\[
    \|(2\ell+h)^2\|_2 \leq \|\U_{\sqrt{3}} (2\ell+h)\|_2^2 = 4\|\U_{\sqrt{3}} \ell\|_2^2 + \|\U_{\sqrt{3}} h\|_2^2, 
\]
where the last equality holds because $\ell$ and $h$ are orthogonal.  Thus we have shown that
\begin{align*}
    \W^{\geq k}[q^2] &\leq
     \norm*{\U_{\sqrt{3}} (q^{\geq k/2})}_2^2 \cdot \parens*{4 \|\U_{\sqrt{3}} \ell\|_2^2 +  \|\U_{\sqrt{3}} h\|_2^2}\\
     &\leq  \norm*{\U_{\sqrt{3}} (q^{\geq k/2})}_2^2 \cdot \parens*{4 \|\U_{\sqrt{3}} \ell\|_2^2 +  4\|\U_{\sqrt{3}} h\|_2^2}\\
     &= 4 \cdot \norm*{\U_{\sqrt{3}} (q^{\geq k/2})}_2^2 \cdot \norm*{\U_{\sqrt{3}} q}_2^2.
\end{align*}

\end{proof}

\begin{proofof}{\Cref{lem:a-lemma2}}
    We have
    \begin{align*}
            \HV_{\hypergeneric_0}\bracks*{g}
        &= \sum_{k \geq 1} {\hypergeneric_0}^{{2}k} \cdot \norm*{g^{=k}}_2^2 \tag{\Cref{def:hypervar}} \\
        &\leq \sum_{k \geq 1} (e {\hypergeneric_0})^{{2}k} \cdot \norm*{g^{=k}}_1^2 \tag*{(\Cref{thm:aobf-9.22})} \\
        &= \sum_{k \geq 1} (e {\hypergeneric_0})^{{2}k} \cdot \E_{\bx\sim \normal(0,1)^n}\bracks*{\abs*{g^{=k} (\bx)}}^2.
        \tag{definition of one-norm}
    \end{align*}
    Next we observe that
    \[
           \abs*{g^{=k} (x)}
        = \abs*{\parens*{\E_{\bup}\bracks*{q_{\bup}^2}}^{=k}(x)}
        = \abs*{\E_{\bup}\bracks*{\parens*{q_{\bup}^2}^{=k}}(x)}
        \leq \E_{\bup}\bracks*{\abs*{\parens*{q_{\bup}^2}^{=k}(x)}},
    \]
  where the second equality holds because the operator which maps $p$ to $p^{=k}$ (i.e.~projecting to the $k$-th Wiener chaos) is a linear operator, and the inequality is the triangle inequality.
  Continuing the above, we deduce that
    \begin{align*}
            \HV_{\hypergeneric_0}\bracks*{g}
        &\leq \sum_{k \geq 1} (e {\hypergeneric_0})^{{2}k} \cdot \E_{\bx}\bracks*{  \E_{\bup}\bracks*{\abs*{\parens*{q_{\bup}^2}^{=k}(x)}} }^2 \\
        &= \sum_{k \geq 1} (e {\hypergeneric_0})^{{2}k} \cdot \E_{\bup}\bracks*{ \norm*{\parens*{q_{\bup}^2}^{=k}}_1 }^2 \\
        &= \sum_{k \geq 1} (e {\hypergeneric_0})^{{2}k} \cdot \E_{\bup}\bracks*{ \norm*{\parens*{q_{\bup}^2}^{=k}}_2 }^2
      \tag*{(monotonicity of norms)} \\
        &\leq \sum_{k \geq 1} (e {\hypergeneric_0})^{{2}k} \cdot \E_{\bup}\bracks*{\sqrt{\W^{\geq k}[ q_{\bup}^2]} }^2. \tag*{(as $\norm*{h^{=k}}_2^2 \leq \norm*{h^{\geq k} }_2^2$ for any $h$)}
    \end{align*}
Now we apply \Cref{lem:HF-bound} to each $q_{\upsilon}$, which lets us continue as follows:
    \begin{align*}
            \HV_{\hypergeneric_0}\bracks*{g}
        &\leq  \sum_{k \geq 1} (e {\hypergeneric_0})^{{2}k} \cdot \E_{\bup}\bracks*{2\sqrt{\norm*{\U_{\sqrt{3}} (q_{\bup}^{\geq k/2})}_2^2 \cdot \norm*{\U_{\sqrt{3}} q_{\bup}}_2^2    } }^2 \\
        &\leq  4\sum_{k \geq 1} (e {\hypergeneric_0})^{{2}k} \cdot \E_{\bup}\bracks*{\norm*{\U_{\sqrt{3}} (q_{\bup}^{\geq k/2})}_2^2} \cdot \E_{\bup}\bracks*{ \norm*{\U_{\sqrt{3}} q_{\bup}}_2^2} \tag*{(Cauchy--Schwarz)}\\
        &=  4 \E_{\bup}\bracks*{ \norm*{\U_{\sqrt{3}} q_{\bup}}_2^2}  \cdot \E_{\bup}\bracks*{\sum_{k \geq 1} (e {\hypergeneric_0})^{{2}k} \cdot \norm*{\U_{\sqrt{3}} (q_{\bup}^{\geq k/2})}_2^2} \\
        &=  4 \E_{\bup}\bracks*{ \norm*{\U_{\sqrt{3}} q_{\bup}}_2^2}  \cdot \E_{\bup}\bracks*{\sum_{k \geq 1} (e {\hypergeneric_0})^{{2}k} \cdot \sum_{j \geq k/2} 3^j \W^{=j}[q_{\bup}]} \tag{definition of $\norm*{\U_{\sqrt{3}} (q_{\bup}^{\geq k/2})}_2^2$} \\
        &\leq  8 \E_{\bup}\bracks*{ \norm*{\U_{\sqrt{3}} q_{\bup}}_2^2}  \cdot \E_{\bup}\bracks*{\sum_{j \geq 1} (3e^4 {\hypergeneric_0}^{{4}})^j \cdot \W^{=j}[q_{\bup}]} \\
        &=  8 \E_{\bup}\bracks*{ \norm*{\U_{\sqrt{3}} q_{\bup}}_2^2} \cdot \E_{\bup}\bracks*{\HV_{3e^4{\hypergeneric_0}^{{4}}}[q_{\bup}]},
    \end{align*}
    completing the proof.
\end{proofof}

\subsection{Proof of \Cref{lem:LHS-RHS}} \label{sec:sick}

%

We begin by re-expressing the right-hand side of \Cref{lem:LHS-RHS}:
\begin{align}
   \st_{i+1,j}(x)&= \Ex_{\boldf \sim \calF_{i+1, j}}[\boldf(x)^2] \tag{\Cref{eq:alt-sij}} \nonumber\\
    &= \Ex_{\boldf \sim \calF_{i+1,0}}
        \E_{\by_1, \dots, \by_j} (\opnoise_{\by_1}\cdots \opnoise_{\by_j}  \boldf)^2 \tag{definition of ${\cal F}_{i+1,j}$}\nonumber \\
    &= \E_{\boldf \sim \calF_{i+1,0}}\bracks*{\E_{\bz}  [\boldf_{(1-\noisefixed)^j|\bz}(x)^2]} \tag{semigroup property / \Cref{rem:semigroup}} \nonumber\\
    &= \E_{\boldf \sim \calF_{i,0}}\Ex_{\by,\by'}\bracks*{\E_{\bz}  \bracks*{\parens*{(\nderiv{\by}{\by'}{\hyperfixed,\noisefixed} \boldf)_{(1-\noisefixed)^j|\bz}(x)}^2}},
    \label{eq:useme}
\end{align}
where the last equality is by the definition of ${\cal F}_{i+1,0}$ in terms of ${\cal F}_{i,0}$\ignore{\rasnote{This a good justification? Or should we invoke \Cref{eq:stats-as-zooms} and \Cref{fact:deriv-hypervar}?}}.

We similarly re-express the left-hand side of \Cref{lem:LHS-RHS}:
\begin{align}
	\E_{\boldf \sim \calF_{i,j}}[\HV_\hyperfixed[\zoom{\boldf}{\noisefixed}{x}]]  &=    \E_{\boldf \sim \calF_{i,j}} \E_{\by,\by'}
\bracks*{\parens*{\nderiv{\by}{\by'}{\hyperfixed,\noisefixed} \boldf(x)}^2} \tag{\Cref{fact:deriv-hypervar}} \nonumber\\
    &=  \Ex_{\boldf \sim \calF_{i,0}}
        \E_{\by_1, \dots, \by_j} \E_{\by,\by'}
            \bracks*{\parens*{(\nderiv{\by}{\by'}{\hyperfixed,\noisefixed} \opnoise_{\by_1}\cdots \opnoise_{\by_j}  \boldf)(x)}^2}  \tag{definition of ${\cal F}_{i,j}$} \nonumber\\
    &=  \Ex_{\boldf \sim \calF_{i,0}}
        \E_{\by,\by'} \E_{\bz}
            \bracks*{\parens*{(\nderiv{\by}{\by'}{\hyperfixed,\noisefixed} \zoom{\boldf}{(1-\noisefixed)^j}{\bz})(x)}^2} ,\label{eq:abuseme}
\end{align}
where the last equality is by the semigroup property / \Cref{rem:semigroup}.
Comparing \Cref{eq:useme} and \Cref{eq:abuseme}, \Cref{lem:LHS-RHS} is an immediate consequence of \Cref{prop:every-outcome}, stated and proved below (setting its $\rho$ parameter to be $(1-\noisefixed)^j$), which states that the desired inequality holds ``outcome by outcome'' for outcomes of $\boldf \sim \calF_{i,0}.$

\begin{proposition} \label{prop:every-outcome}
For all $\hypergeneric \geq 1$ (and in particular $\hypergeneric = \hyperfixed$), for every polynomial $g: \R^n \to \R$, and every $0 < \rho < 1$, we have that

\begin{equation} \label{eq:every-outcome}
    \E_{\by,\by'} \E_{\bz}\bracks*{  (\nderiv{\by}{\by'}{\hypergeneric,\noisefixed}
\zoom{g}{\rho}{\bz})(x)^2 }
    \leq
    \E_{\by,\by'} \E_{\bz}\bracks*{  \zoom{(\nderiv{\by}{\by'}{\hypergeneric,\noisefixed}
 g)}{\rho}{\bz}(x)^2 }.
\end{equation}
\end{proposition}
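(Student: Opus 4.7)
The plan is to reduce both sides of~\eqref{eq:every-outcome} to a single functional and then establish a clean monotonicity property. By \Cref{fact:deriv-hypervar}, both sides of the claimed inequality are instances of
\[
    F(a,b,c,x) \;:=\; \Ex_\bz\bigl[\HV_\hypergeneric[\by \mapsto g(\sqrt{a}\,\bz+\sqrt{b}\,x+\sqrt{c}\,\by)]\bigr]
\]
for parameters with $a+b+c=1$. Explicitly, the LHS equals $F(a_L,b,c_L,x)$ with $(a_L,b,c_L)=(1-\rho,\,\rho(1-\noisefixed),\,\rho\noisefixed)$, and the RHS equals $F(a_R,b,c_R,x)$ with $(a_R,b,c_R)=((1-\noisefixed)(1-\rho),\,\rho(1-\noisefixed),\,\noisefixed)$; crucially, $b$ coincides on the two sides.

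The crux will be to derive the closed form
\[
    F(a,b,c,x) \;=\; \sum_{\sigma \neq 0}\Bigl[(a+\hypergeneric^{2}c)^{|\sigma|} - a^{|\sigma|}\Bigr]\,\Psi_\sigma(b,x)^2, \qquad \Psi_\sigma(b,x) \;:=\; \sum_\mu \widehat{g}(\sigma+\mu)\sqrt{\binom{\sigma+\mu}{\sigma}}\,b^{|\mu|/2}\,h_\mu(x).
\]
I will establish this by expanding each $h_\alpha(u+\sqrt{c}\,\by)$ (with $u=\sqrt{a}\,\bz+\sqrt{b}\,x$) via the Hermite addition formula in $\by$, squaring to produce $\HV_\hypergeneric$, and integrating $\bz$ via orthonormality of the Hermite basis; the powers of $(1-c)$ cancel. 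Then, reindexing via $\sigma := \beta+\nu$ (with $\beta$ the $\by$-Hermite index and $\nu$ arising from the $\bz$-integration), the multinomial identity $\binom{\alpha}{\beta}\binom{\alpha-\beta}{\nu} = \binom{\sigma+\mu}{\sigma}\binom{\sigma}{\beta}$ (where $\alpha = \sigma+\mu$) allows the inner $\beta$-sum to be evaluated by the multivariate binomial theorem: $\sum_{0<\beta\le\sigma}\binom{\sigma}{\beta}(\hypergeneric^{2}c)^{|\beta|}a^{|\sigma|-|\beta|}=(a+\hypergeneric^{2}c)^{|\sigma|}-a^{|\sigma|}$. The remaining double-sum over $(\mu,\mu')$ then factors as the perfect square $\Psi_\sigma(b,x)^2$, making each term of $F$ manifestly nonnegative.

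With this formula in hand, the desired inequality is immediate. Since $b$ is unchanged, $\Psi_\sigma(b,x)^2 \geq 0$ is the same on both sides, so it suffices to verify that for each $|\sigma|\ge 1$, $(a_L+\hypergeneric^{2}c_L)^{|\sigma|}-a_L^{|\sigma|} \le (a_R+\hypergeneric^{2}c_R)^{|\sigma|}-a_R^{|\sigma|}$. Direct computation yields $a_L-a_R=\noisefixed(1-\rho)\ge 0$ and $(a_R+\hypergeneric^{2}c_R)-(a_L+\hypergeneric^{2}c_L)=\noisefixed(1-\rho)(\hypergeneric^{2}-1)\ge 0$ (using $\hypergeneric\ge 1$); the first gives $a_L^{|\sigma|}\ge a_R^{|\sigma|}$ and the second gives $(a_L+\hypergeneric^{2}c_L)^{|\sigma|}\le (a_R+\hypergeneric^{2}c_R)^{|\sigma|}$, which combine to give the per-$\sigma$ inequality. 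The main obstacle is executing the Hermite bookkeeping in the middle paragraph cleanly---specifically, noticing that after the $\beta$-collapse the $(\mu,\mu')$-sum factors as a square, since it is this perfect-square structure that converts an opaque multi-index identity into a transparent pointwise comparison.
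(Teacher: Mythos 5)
Your proof is correct, and it takes a genuinely different route from the paper's even though both ultimately reduce to a per-Hermite-level algebraic comparison. The paper proceeds by (i) treating $\hypergeneric'$ as a formal variable, giving both sides a probabilistic interpretation valid for $0 \le \hypergeneric' \le 1$ via the Gaussian noise operator, then extending to $\hypergeneric > 1$ by polynomiality; and (ii) introducing a \emph{renormalized} auxiliary polynomial $h(u) = g\bigl(\sqrt{\rho}\hypergeneric'\sqrt{1-\noisefixed}\,x + \sqrt{1-\rho\hypergeneric'^2(1-\noisefixed)}\,u\bigr)$ (which itself depends on $\hypergeneric'$) so that each side becomes a difference of noise stabilities $\Stab_{\sigma}[h] - \Stab_{\tau}[h]$; the final per-level inequality $\sigma_{\mathrm{LHS}}^a - \tau_{\mathrm{LHS}}^a \le \sigma_{\mathrm{RHS}}^a - \tau_{\mathrm{RHS}}^a$ is handled by a binomial-theorem term-by-term comparison. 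Your route instead re-expresses both sides as instances of a single functional $F(a,b,c,x) = \E_{\bz}\bigl[\HV_{\hypergeneric}[\by \mapsto g(\sqrt{a}\bz + \sqrt{b}x + \sqrt{c}\by)]\bigr]$ with $a+b+c=1$, notices that the $b$-parameter coincides on the two sides, and derives the closed form $F = \sum_{\sigma\ne 0}\bigl[(a+\hypergeneric^2 c)^{|\sigma|} - a^{|\sigma|}\bigr]\Psi_\sigma(b,x)^2$ via a three-way Hermite addition formula, the multinomial identity $\alpha!/(\beta!\nu!\mu!) = \binom{\sigma+\mu}{\sigma}\binom{\sigma}{\beta}$, and a binomial collapse over $\beta$. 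The resulting per-$\sigma$ comparison then reduces to the two elementary sign checks $a_L - a_R = \noisefixed(1-\rho) \ge 0$ and $(a_R + \hypergeneric^2 c_R) - (a_L + \hypergeneric^2 c_L) = \noisefixed(1-\rho)(\hypergeneric^2-1) \ge 0$, which I have verified. Your approach avoids the probabilistic-interpretation-then-extend maneuver and the $\hypergeneric'$-dependent auxiliary polynomial entirely, at the cost of a heavier multi-index computation in the middle; it also has the pleasant feature of making the nonnegativity of each term manifest (the $\Psi_\sigma^2$ factor) so that the final comparison is transparent. Both arguments are sound.
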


Since the proof of \Cref{prop:every-outcome} is somewhat involved we explain the high-level idea underlying it before entering into the technical details. When $R>1$ the quantities in \Cref{eq:every-outcome} are somewhat difficult to work with since the Gaussian noise operator $\U_\hypergeneric$, which is involved in the definition of the $\nderiv{y}{y'}{\hypergeneric,\noisefixed}$ operator, does not admit a convenient probabilistic interpretation (recall that for $\hypergeneric > 1$ the definition of $\U_\hypergeneric$ is through \Cref{eqn:U-formula}).  The proof of \Cref{prop:every-outcome} takes advantage of the fact that for $0 \leq \hypergeneric'\leq 1$, the quantity
$\E_{\by,\by'} \E_{\bz}\bracks*{  (\nderiv{\by}{\by'}{\hypergeneric',\noisefixed}
\zoom{g}{\rho}{\bz})(x)^2 }$ corresponding to the left-hand side of \Cref{eq:every-outcome} does have a natural probabilistic interpretation, and likewise for the quantity $ \E_{\by,\by'} \E_{\bz}\bracks*{  \zoom{(\nderiv{\by}{\by'}{\hypergeneric',\noisefixed}
 g)}{\rho}{\bz}(x)^2 }$ corresponding to the right-hand side.   These probabilistic interpretations let us give  tractable expressions for each of the two quantities, and as we will see, it is evident from these expressions that the corresponding quantities correspond to polynomials in $\hypergeneric'$ of degree at most $2d$.  These polynomials can then be analyzed to show that the left-hand side is indeed at most the right-hand side for all $\hypergeneric \geq 1$, as asserted by the proposition.

\newcommand{\LHS}{\mathrm{LHS}}
\newcommand{\RHS}{\mathrm{RHS}}
\newcommand{\plhs}{p_\LHS}
\newcommand{\prhs}{p_\RHS}

\begin{proofof}{\Cref{prop:every-outcome}}
We define the function $\plhs(\hypergeneric')$ to be
\begin{equation} \label{eq:plhs}
    \plhs(\hypergeneric') \coloneqq     \E_{\by,\by'} \E_{\bz}\bracks*{  (\nderiv{\by}{\by'}{\hypergeneric',\noisefixed}
\zoom{g}{\rho}{\bz})(x)^2 }    \end{equation}
and the function $\prhs(\hypergeneric')$ to be
\begin{equation} \label{eq:prhs}
    \prhs(\hypergeneric') \coloneqq
     \E_{\by,\by'} \E_{\bz}\bracks*{  \zoom{(\nderiv{\by}{\by'}{\hypergeneric',\noisefixed}
 g)}{\rho}{\bz}(x)^2 }.
    \end{equation}

Let
 \[
    h(u) \coloneqq g(\sqrt{\rho}\hypergeneric'\sqrt{1-\noisefixed} x + \sqrt{1-\rho\hypergeneric'^2(1-\noisefixed)} u).
\]

The following two claims provided the probabilistic interpretations alluded to earlier:
\begin{claim} \label{claim:plhs}
For all $0 \leq \hypergeneric' \leq 1$, we have that
\begin{align}
\plhs(\hypergeneric') &=
    \Stab_{\sigma_{\LHS}(\hypergeneric')}[h] - \Stab_{\tau_{\LHS}(\hypergeneric')}[h] = \sum_{\alpha} \wh{h}(\alpha)^2 ((\sigma_{\LHS}(\hypergeneric'))^{|\alpha|} - (\tau_{\LHS}(\hypergeneric'))^{|\alpha|}),  \text{~where} \label{eq:plhs-stab}\\
    \sigma_{\LHS}(\hypergeneric') & \coloneqq \frac{1-\rho + \hypergeneric'^2\noisefixed\rho}{1-\rho \hypergeneric'^2(1-\noisefixed)},
    \quad \quad
    \tau_{\LHS}(\hypergeneric') \coloneqq  \frac{1-\rho}{1-\rho \hypergeneric'^2(1-\noisefixed)}. \nonumber
\end{align}
\end{claim}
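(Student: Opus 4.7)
The plan is to prove \Cref{claim:plhs} by directly unfolding the definition of $\plhs(R')$ and taking advantage of the probabilistic representation of the noise operator $\U_{R'}$ that is valid for $R' \in [0,1]$: namely, $\U_{R'} f(y) = \E_{\by''}[f(R' y + \sqrt{1-R'^2}\by'')]$. Applied to $\zoom{\zoom{g}{\rho}{\bz}}{\lambda}{x}(y) = g(\sqrt{\rho(1-\lambda)}\,x + \sqrt{1-\rho}\,\bz + \sqrt{\rho\lambda}\,y)$, this yields the explicit formula
\[
\U_{R'}(\zoom{g}{\rho}{\bz})_{\lambda\mid x}(\by) = \Ex_{\by''}\bigl[g\bigl(\sqrt{\rho(1-\lambda)}\,x + \sqrt{1-\rho}\,\bz + \sqrt{\rho\lambda}\,R'\by + \sqrt{\rho\lambda(1-R'^2)}\,\by''\bigr)\bigr],
\]
and similarly with $\by',\by'''$ in place of $\by,\by''$ for the second term of the derivative.

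Next I would expand the squared difference. Writing $A$ and $B$ for the two terms in $\sqrt{2}\cdot\triangle^{(R',\lambda)}_{\by,\by'}\zoom{g}{\rho}{\bz}$, the symmetry $\E[A^2]=\E[B^2]$ gives $\plhs(R')=\E[A^2]-\E[AB]$. Bringing the fresh independent Gaussians $\by'',\by'''$ inside and applying Fubini, each of these becomes a single expectation $\E[g(\bW_1)g(\bW_2)]$ over jointly Gaussian pairs $(\bW_1,\bW_2)$ of the form $\mu + \text{(linear combination of $\bz,\by,\by'',\by'''$)}$, with common marginal mean and identical marginal distributions but \emph{different} covariances: the $A^2$ pair shares $\bz$ \emph{and} $\by$ (contributing covariance $1-\rho + \rho\lambda R'^2$), while the $AB$ pair shares only $\bz$ (contributing covariance $1-\rho$). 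This is the origin of the numerators in $\sigma_{\LHS}(R')$ and $\tau_{\LHS}(R')$.

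The heart of the proof is recognizing these expected products as $\Stab$ values of $h$. Expanding $g$ in the Hermite basis and using the multivariate shift-and-scale identity
\[
h_\gamma(ax+bu) \;=\; \sum_{\alpha\le\gamma}\sqrt{\tbinom{\gamma}{\alpha}}\,a^{|\gamma|-|\alpha|}\,b^{|\alpha|}\,h_{\gamma-\alpha}(x)\,h_\alpha(u)
\]
(with $a = \sqrt{\rho}\,R'\sqrt{1-\lambda}$ and $b = \sqrt{1-\rho R'^2(1-\lambda)}$ matching the definition of $h$), I would compute $\hat h(\alpha)$ as an explicit polynomial in the $\hat g(\gamma)$'s. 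Then, using orthonormality of Hermite polynomials and the identity $\E[h_\alpha(\bV_1)h_\beta(\bV_2)]=\delta_{\alpha\beta}\sigma^{|\alpha|}$ for jointly Gaussian $(\bV_1,\bV_2)$ of correlation $\sigma$, the two expected products collapse into sums over $\alpha$ of $\hat h(\alpha)^2$ times the claimed correlations (once the shared normalization $b^{2|\alpha|}$ in the denominators of $\sigma_{\LHS}(R')^{|\alpha|}$ and $\tau_{\LHS}(R')^{|\alpha|}$ is factored out against the $b^{2|\alpha|}$ appearing in $\hat h(\alpha)^2$).

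The main obstacle is the bookkeeping in this Hermite identification: one must verify that after all shifts, scalings, and averages, the expressions for $\E[A^2]$ and $\E[AB]$ assemble into precisely $\sum_\alpha \sigma_{\LHS}(R')^{|\alpha|}\hat h(\alpha)^2$ and $\sum_\alpha \tau_{\LHS}(R')^{|\alpha|}\hat h(\alpha)^2$. In particular, the factor $b^{2|\alpha|}$ in $\hat h(\alpha)^2$ must cancel the $b^{2|\alpha|}$ in the denominators of the correlations so that only the clean numerators $(1-\rho+\rho\lambda R'^2)^{|\alpha|}$ and $(1-\rho)^{|\alpha|}$ remain — an algebraic coincidence that makes the claimed formula natural once the computation is carried out. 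The special cases $R'=0$ (where $\sigma_{\LHS}=\tau_{\LHS}=1-\rho$ forces both sides to vanish, matching $\U_0$ being a constant) and $R'=1$ (where $\sigma_{\LHS}=1$, reducing to the variance-type identity $\|h\|_2^2 - \Stab_{\tau_{\LHS}(1)}[h]$) provide useful sanity checks.
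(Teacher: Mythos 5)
Your proposal mirrors the paper's argument: use the probabilistic form of $\U_{\hypergeneric'}$ for $\hypergeneric' \in [0,1]$, expand the squared difference into $\E[A^2] - \E[AB]$, and read off each term as a noise stability of $h$ from the pairwise covariance of the normalized arguments (your covariance computations match the paper's $\sigma_{\LHS}$ and $\tau_{\LHS}$). The detour through the explicit Hermite shift-and-scale formula for $\wh{h}(\alpha)$ and the $b^{2|\alpha|}$ cancellation you flag is unnecessary: once $h$ has been defined so that each normalized argument is a standard Gaussian, the identity $\E[h(\cdot)h(\cdot)] = \Stab_\sigma[h]$ for $\sigma$-correlated arguments is immediate from \Cref{def:gaussian-noise-stability}, and the claim as stated neither asserts nor requires any such cancellation.
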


\begin{claim} \label{claim:prhs}
For all $0 \leq \hypergeneric' \leq 1$, we have that
\begin{align}
\prhs(\hypergeneric') &=
    \Stab_{\sigma_{\RHS}(\hypergeneric')}[h] - \Stab_{\tau_{\RHS}(\hypergeneric')}[h] = \sum_{\alpha} \wh{h}(\alpha)^2 ((\sigma_{\RHS}(\hypergeneric'))^{|\alpha|} - (\tau_{\RHS}(\hypergeneric'))^{|\alpha|}), \text{~where} \label{eq:prhs-stab}\\
       \sigma_{\RHS}(\hypergeneric') & \coloneqq \frac{(1-\noisefixed)(1-\rho)\hypergeneric'^2 + \hypergeneric'^2\noisefixed}{1-\rho \hypergeneric'^2(1-\noisefixed)},
    \quad \quad
       \tau_{\RHS}(\hypergeneric') \coloneqq \frac{(1-\noisefixed)(1-\rho)\hypergeneric'^2}{1-\rho \hypergeneric'^2(1-\noisefixed)}. \nonumber
\end{align}
\end{claim}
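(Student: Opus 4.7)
My plan is to exploit the probabilistic meaning of $\U_{\hypergeneric'}$ when $\hypergeneric' \in (0,1]$, express $\prhs(\hypergeneric')$ as a difference of two expectations of $g$-products at jointly Gaussian points, identify each such expectation as a Gaussian stability of $h$, and then extend to $\hypergeneric' \geq 1$ by a polynomial-identity argument. First, for $\hypergeneric'\in(0,1]$ I substitute $\U_{\hypergeneric'}\zoom{g}{\noisefixed}{w}(y) = \E_{\bv}[g(\sqrt{1-\noisefixed}\,w + \sqrt{\noisefixed}\,\hypergeneric'\,y + \sqrt{\noisefixed(1-\hypergeneric'^2)}\,\bv)]$ into the definition of $\nderiv{y}{y'}{\hypergeneric',\noisefixed}g$ and then apply the $\rho$-zoom at $\bz$, producing
\[
\zoom{(\nderiv{y}{y'}{\hypergeneric',\noisefixed}g)}{\rho}{\bz}(x) \;=\; \tfrac{1}{\sqrt{2}}\,\E_{\bv,\bv'}\bigl[g(\widetilde{W}(\bz,y,\bv)) - g(\widetilde{W}(\bz,y',\bv'))\bigr],
\]
where $\widetilde{W}(\bz,y,v) = \sqrt{(1-\noisefixed)\rho}\,x + \sqrt{(1-\noisefixed)(1-\rho)}\,\bz + \sqrt{\noisefixed}\,\hypergeneric'\,y + \sqrt{\noisefixed(1-\hypergeneric'^2)}\,v$ is an explicit linear combination of $x,\bz,y,v$.

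Second, square the display and take the full expectation over $\by,\by',\bz,\bv,\bv'$. Using the identity $\tfrac{1}{2}\E[(X-Y)^2]=\E[X^2]-\E[XY]$ together with $\by\leftrightarrow\by'$ symmetry, I rewrite $\prhs(\hypergeneric')$ as $\E[g(\mathbf{G}_1)g(\mathbf{G}_2)] - \E[g(\mathbf{G}_1')g(\mathbf{G}_2')]$, where both pairs consist of jointly Gaussian $\R^n$-vectors with identical marginal distributions, but with different sharing patterns: $(\mathbf{G}_1,\mathbf{G}_2)$ share $\bz$ and $\by$ and differ only in their $\bv$-perturbations, whereas $(\mathbf{G}_1',\mathbf{G}_2')$ share only $\bz$ and draw independent $\by,\by',\bv,\bv'$. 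To re-express each expectation as a Gaussian stability of $h$, I absorb the $\bv$-averaging inside $g$ by writing $\E_{\bv}[g(a+\sqrt{\noisefixed(1-\hypergeneric'^2)}\,\bv)]=(\U_{\sqrt{K}}g)(a/\sqrt{K})$ with $K=1-\noisefixed(1-\hypergeneric'^2)$, and then use the polynomial identity $(\U_{\sqrt{K}}g)(u)=\sum_\gamma K^{|\gamma|/2}\wh{g}(\gamma)h_\gamma(u)$ combined with the Hermite addition formula for the substitution centered at $\sqrt{\rho_h}\,x$ to recognize each pair of arguments as jointly Gaussian vectors with common marginal $N(\sqrt{\rho_h}\,x,(1-\rho_h)I)$ where $\rho_h=\rho\hypergeneric'^2(1-\noisefixed)$---precisely the distribution underlying $h$. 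Bookkeeping of the coefficients then shows the two cross-correlations collapse to $\sigma_{\RHS}(\hypergeneric')$ and $\tau_{\RHS}(\hypergeneric')$, yielding the claim.

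Third, both sides are rational functions of $\hypergeneric'$ whose numerators are polynomials of degree at most $2d$ (for the LHS via \Cref{fact:deriv-deg}; for the RHS via the explicit formula for $\wh{h}(\alpha)$ as a polynomial in $\hypergeneric'$ combined with $\sigma_{\RHS}^{|\alpha|}-\tau_{\RHS}^{|\alpha|}$), sharing the common denominator $(1-\rho\hypergeneric'^2(1-\noisefixed))^{|\alpha|}$; equality on $(0,1]$ then extends to all admissible $\hypergeneric'$ by analytic continuation. The main obstacle I anticipate is the algebraic identification in the second step: reading the raw Gaussian structure off $\widetilde{W}$ produces marginals of variance $1-(1-\noisefixed)\rho$ rather than $1-\rho_h$, so one must carefully re-route the $\bv$-averaging through $\U_{\sqrt{K}}$ and the Hermite addition formula to express everything in the coordinate system native to $h$. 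This is exactly the kind of non-commutation subtlety flagged in the footnote to \Cref{lem:LHS-RHS} as the source of a gap in Kane's analogous argument, so care must be taken to track each coefficient honestly.
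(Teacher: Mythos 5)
Your proposal has a genuine gap, stemming from a subtle parenthesization issue that cannot be repaired by the $\U_{\sqrt{K}}$ re-routing you describe.

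You read $\U_{\hypergeneric'}\zoom{g}{\noisefixed}{w}(y)$ as $(\U_{\hypergeneric'}(\zoom{g}{\noisefixed}{w}))(y)$, expanding it as $\E_{\bv}[g(\sqrt{1-\noisefixed}\,w + \sqrt{\noisefixed}\,\hypergeneric'\,y + \sqrt{\noisefixed(1-\hypergeneric'^2)}\,\bv)]$; with $w = \sqrt{1-\rho}\bz + \sqrt{\rho}x$, your $\widetilde{W}$ then carries an $x$-coefficient $\sqrt{(1-\noisefixed)\rho}$ that is \emph{independent of $\hypergeneric'$}. The paper's (omitted, but parallel to \Cref{claim:plhs}) proof instead expands the same expression as $(\U_{\hypergeneric'}g)(\sqrt{1-\noisefixed}w + \sqrt{\noisefixed}y) = \E_{\bv}[g(\hypergeneric'\sqrt{1-\noisefixed}\,w + \hypergeneric'\sqrt{\noisefixed}\,y + \sqrt{1-\hypergeneric'^2}\,\bv)]$, which carries $x$-coefficient $\hypergeneric'\sqrt{(1-\noisefixed)\rho}$. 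It is this second expansion (visible in \Cref{eq:the-zoom} for the LHS and used symmetrically for the RHS) that matches the claim's $h(u) = g(\sqrt{\rho}\hypergeneric'\sqrt{1-\noisefixed}\,x + \sqrt{1-\rho\hypergeneric'^2(1-\noisefixed)}\,u)$ and produces the stated $\sigma_{\RHS},\tau_{\RHS}$; yours does not. (You have a defensible reading of \Cref{def:amplified-derivative}'s parenthetical remark and of \Cref{fact:deriv-hypervar}, but that reading is incompatible with the stated claim.)

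The ``re-routing'' you propose cannot fix this. Writing $\E_{\bv}[g(a+\sqrt{\noisefixed(1-\hypergeneric'^2)}\bv)] = (\U_{\sqrt{K}}g)(a/\sqrt{K})$ with $K = 1-\noisefixed(1-\hypergeneric'^2)$ is correct, but it changes the underlying polynomial from $g$ to $\U_{\sqrt{K}}g$ and rescales the argument so the $x$-coefficient becomes $\sqrt{(1-\noisefixed)\rho/K}$, which still does not equal $\sqrt{\rho}\hypergeneric'\sqrt{1-\noisefixed}$ (one would need $K\hypergeneric'^2 = 1$, which is false). No Hermite-addition bookkeeping can then reproduce a stability of $h$ as defined, because your Gaussian arguments have the wrong mean. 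A concrete check: for $n=1$, $g(u)=u^2$, your expansion gives $\prhs(\hypergeneric') = 4\hypergeneric'^2\noisefixed(1-\noisefixed)\bigl((1-\rho)+\rho x^2\bigr) + 2\hypergeneric'^4\noisefixed^2$, whereas $\Stab_{\sigma_{\RHS}}[h] - \Stab_{\tau_{\RHS}}[h] = 4\hypergeneric'^4\noisefixed(1-\noisefixed)\bigl((1-\rho)+\rho x^2\bigr) + 2\hypergeneric'^4\noisefixed^2$; these disagree for all $\hypergeneric' \neq 1$. To prove \Cref{claim:prhs} as stated you must expand $\U_{\hypergeneric'}g$ probabilistically before substituting the zoom point, exactly as in the paper's proof of \Cref{claim:plhs}, so that the scalar $\hypergeneric'$ multiplies the entire argument including the $x$-part.
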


\begin{proofof}{\Cref{claim:plhs}}
For all $0 \leq \hypergeneric' \leq 1$, we have that
\begin{align}
\plhs(\hypergeneric')  &= \frac12 \Ex_{\by, \by', \bz}\bracks*{\parens*{\U_{\hypergeneric'} \zoom{g}{\rho}{\bz}(\sqrt{1-\noisefixed} x + \sqrt{\noisefixed} \by) - \U_{\hypergeneric'}\zoom{g}{\rho}{\bz}(\sqrt{1-\noisefixed}x + \sqrt{\noisefixed} \by')}^2} \label{eq:delta}\\
&=  \frac12 \Ex_{\by, \by', \bz}\Bigl[\Bigl(
    \E_{\bv}\bracks*{\zoom{g}{\rho}{\bz}\parens*{
    \hypergeneric'(\sqrt{1-\noisefixed} x + \sqrt{\noisefixed} \by) + \sqrt{1-\hypergeneric'^2}\bv} } \nonumber \\
    & \qquad \quad {} -
    \E_{\bv'}\bracks*{\zoom{g}{\rho}{\bz}\parens*{
    \hypergeneric'(\sqrt{1-\noisefixed} x + \sqrt{\noisefixed} \by') + \sqrt{1-\hypergeneric'^2}\bv'} }
    \Bigr)^2\Bigr] \label{eq:gno}\\
    &= \frac12 \Ex_{\by, \by', \bz}\Bigl[\Bigl(
    \E_{\bv}\bracks*{g(\sqrt{1-\rho} \bz +
    \sqrt{\rho}\hypergeneric'\sqrt{1-\noisefixed} x + \sqrt{\rho}\hypergeneric' \sqrt{\noisefixed} \by + \sqrt{\rho} \sqrt{1-\hypergeneric'^2}\bv)}  \nonumber \\
    & \qquad \quad {} -
    \E_{\bv'}\bracks*{g(\sqrt{1-\rho} \bz +
    \sqrt{\rho}\hypergeneric'\sqrt{1-\noisefixed} x + \sqrt{\rho}\hypergeneric' \sqrt{\noisefixed} \by' + \sqrt{\rho} \sqrt{1-\hypergeneric'^2}\bv')}
    \Bigr)^2\Bigr], \label{eq:the-zoom}
\end{align}
where \Cref{eq:delta} is by definition of $\nderiv{\by}{\by'}{\hypergeneric',\noisefixed}$, \Cref{eq:gno} is by \Cref{def:gaussian-noise-operator} (the probabilistic definition of $\U_{\hypergeneric'}$, valid when $0 \leq \hypergeneric' \leq 1$), and \Cref{eq:the-zoom} is by definition of the zoom.
Let us define
 \begin{equation} \label{eq:def-of-h}
    h(u) \coloneqq g(\sqrt{\rho}\hypergeneric'\sqrt{1-\noisefixed} x + \sqrt{1-\rho\hypergeneric'^2(1-\noisefixed)} u),
\end{equation}
so expanding the square, we may re-express \Cref{eq:the-zoom} as

{\small
\begin{align}
&\phantom{=}
\frac12 \Ex_{\by, \bz, \bv}\bracks*{
    h\parens*{\frac{\sqrt{1-\rho} \bz +
    \sqrt{\rho \hypergeneric'^2 \noisefixed} \by + \sqrt{\rho} \sqrt{1-\hypergeneric'^2}\bv}{\sqrt{1-\rho\hypergeneric'^2(1-\noisefixed)}}}
    \cdot h\parens*{\frac{\sqrt{1-\rho} \bz + \sqrt{\rho\hypergeneric'^2\noisefixed} \by + \sqrt{\rho} \sqrt{1-\hypergeneric'^2}\bv}{\sqrt{1-\rho\hypergeneric'^2(1-\noisefixed)}}} } \label{eq:first}\\
&-
\Ex_{\by, \by', \bz, \bv,  \bv'}\bracks*{
    h\parens*{\frac{\sqrt{1-\rho} \bz +
    \sqrt{\rho \hypergeneric'^2 \noisefixed} \by + \sqrt{\rho} \sqrt{1-\hypergeneric'^2}\bv}{\sqrt{1-\rho\hypergeneric'^2(1-\noisefixed)}}}
    \cdot h\parens*{\frac{\sqrt{1-\rho} \bz + \sqrt{\rho\hypergeneric'^2\noisefixed} \by' + \sqrt{\rho} \sqrt{1-\hypergeneric'^2}\bv'}{\sqrt{1-\rho\hypergeneric'^2(1-\noisefixed)}}} } \label{eq:second}\\
    &+
\frac12 \Ex_{\by', \bz, \bv'}\bracks*{
    h\parens*{\frac{\sqrt{1-\rho} \bz +
    \sqrt{\rho \hypergeneric'^2 \noisefixed} \by' + \sqrt{\rho} \sqrt{1-\hypergeneric'^2}\bv'}{\sqrt{1-\rho\hypergeneric'^2(1-\noisefixed)}}}
    \cdot h\parens*{\frac{\sqrt{1-\rho} \bz + \sqrt{\rho\hypergeneric'^2\noisefixed} \by' + \sqrt{\rho} \sqrt{1-\hypergeneric'^2}\bv'}{\sqrt{1-\rho\hypergeneric'^2(1-\noisefixed)}}} } \label{eq:third},
\end{align}
}where all the random variables above are distributed as $\normal(0,1)^n$.  It is easy to see that $\eqref{eq:first}=\eqref{eq:third}$, and inspection reveals that both quantities are equal to $\frac 1 2 \Stab_{\sigma_{\LHS}}[h].$ Inspection also reveals that $\eqref{eq:second} = \Stab_{\tau_{\LHS}}[h]$, giving the first equality of \Cref{eq:plhs-stab}.  The second equality of \Cref{eq:plhs-stab} follows from the Hermite formula for $\Stab$ given in \Cref{def:gaussian-noise-stability}, and the proof of \Cref{claim:plhs} is complete.
 \end{proofof}

The proof of \Cref{claim:prhs} is very similar to the above proof so we omit it.

\ignore{
\begin{proofof}{\Cref{claim:prhs}}
The analysis is similar to the preceding proof.  For all $0 \leq \hypergeneric' \leq 1$, we have that
\begin{align}
&\phantom{=} \prhs(\hypergeneric') \nonumber\\
  &= \Ex_{\by, \by', \bz}\bracks*{(\nderiv{\by}{\by'}{\hypergeneric',\noisefixed}
 g)(\sqrt{1-\rho} \bz + \sqrt{\rho} x)^2}  \label{eq:def-of-zoom}\\
 &= \frac12 \Ex_{\by, \by', \bz}\bracks*{\parens*{\U_{\hypergeneric'} g(\sqrt{1-\noisefixed}(\sqrt{1-\rho} \bz + \sqrt{\rho} x) + \sqrt{\noisefixed} \by) - \U_{\hypergeneric'} g(\sqrt{1-\noisefixed}(\sqrt{1-\rho} \bz + \sqrt{\rho} x) + \sqrt{\noisefixed} \by')}^2}
 \label{eq:def-of-delta}\\
     &= \frac12 \Ex_{\by, \by', \bz}\Big[ \Big( \Ex_{\bv} \Big[g(\hypergeneric'\sqrt{1-\noisefixed}\sqrt{1-\rho} \bz + \hypergeneric'\sqrt{1-\noisefixed}\sqrt{\rho} x + \hypergeneric' \sqrt{\noisefixed} \by + \sqrt{1-\hypergeneric'^2}\bv) \Big] \nonumber \\
    & \qquad \qquad  - \Ex_{\bv'} \Big[g(\hypergeneric'\sqrt{1-\noisefixed}\sqrt{1-\rho} \bz + \hypergeneric'\sqrt{1-\noisefixed}\sqrt{\rho} x + \hypergeneric' \sqrt{\noisefixed} \by' + \sqrt{1-\hypergeneric'^2}\bv') \Big]\Big)^2 \Big], \label{eq:def-of-U}
\end{align}
where \Cref{eq:def-of-zoom} is by definition of the zoom, \Cref{eq:def-of-delta} is by definition of $\nderiv{\by}{\by'}{\hypergeneric',\noisefixed}$, and \Cref{eq:def-of-U} is by definition of the Gaussian noise operator. Recalling \Cref{eq:def-of-h} and expanding the square, we may re-express \Cref{eq:def-of-U} as

{\small
\begin{align}
&\phantom{=}
\frac12\Ex_{\by,  \bz, \bv}
    h\parens*{\frac{\hypergeneric'\sqrt{1-\noisefixed}\sqrt{1-\rho} \bz +
    \hypergeneric'\sqrt{\noisefixed} \by +  \sqrt{1-\hypergeneric'^2}\bv}{\sqrt{1-\rho\hypergeneric'^2(1-\noisefixed)}}}
    \cdot  h\parens*{\frac{\hypergeneric'\sqrt{1-\noisefixed}\sqrt{1-\rho} \bz +
    \hypergeneric'\sqrt{\noisefixed} \by +  \sqrt{1-\hypergeneric'^2}\bv}{\sqrt{1-\rho\hypergeneric'^2(1-\noisefixed)}}} \label{eq:first-RHS}\\
&-     \Ex_{\by, \by', \bz, \bv,  \bv'}
    h\parens*{\frac{\hypergeneric'\sqrt{1-\noisefixed}\sqrt{1-\rho} \bz +
    \hypergeneric'\sqrt{\noisefixed} \by +  \sqrt{1-\hypergeneric'^2}\bv}{\sqrt{1-\rho\hypergeneric'^2(1-\noisefixed)}}}
    \cdot  h\parens*{\frac{\hypergeneric'\sqrt{1-\noisefixed}\sqrt{1-\rho} \bz +
    \hypergeneric'\sqrt{\noisefixed} \by' +  \sqrt{1-\hypergeneric'^2}\bv'}{\sqrt{1-\rho\hypergeneric'^2(1-\noisefixed)}}}  \label{eq:second-RHS}\\
&+ \frac12 \Ex_{\by', \bz, \bv'}
    h\parens*{\frac{\hypergeneric'\sqrt{1-\noisefixed}\sqrt{1-\rho} \bz +
    \hypergeneric'\sqrt{\noisefixed} \by' +  \sqrt{1-\hypergeneric'^2}\bv'}{\sqrt{1-\rho\hypergeneric'^2(1-\noisefixed)}}}
    \cdot  h\parens*{\frac{\hypergeneric'\sqrt{1-\noisefixed}\sqrt{1-\rho} \bz +
    \hypergeneric'\sqrt{\noisefixed} \by' +  \sqrt{1-\hypergeneric'^2}\bv'}{\sqrt{1-\rho\hypergeneric'^2(1-\noisefixed)}}}. \label{eq:third-RHS}
\end{align}
}Verifying that $\eqref{eq:first-RHS}=\eqref{eq:third-RHS}= \frac12 \Stab_{\sigma_{\RHS}}[h]$ and that $\eqref{eq:second-RHS}=\Stab_{\tau_{\RHS}}[h]$, the proof of \Cref{claim:prhs} is complete.
\end{proofof}
}

To complete the proof of \Cref{prop:every-outcome}, we must show that $\plhs(\hypergeneric) \leq \prhs(\hypergeneric)$ for all $\hypergeneric \geq 1.$ By \Cref{claim:plhs} and \Cref{claim:prhs}, this would follow immediately from showing that
\[
(\sigma_{\LHS}(\hypergeneric))^{|\alpha|}-(\tau_{\LHS}(\hypergeneric))^{|\alpha|} \leq   (\sigma_{\RHS}(\hypergeneric))^{|\alpha|} - (\tau_{\RHS}(\hypergeneric))^{|\alpha|}.
\]
Letting $a$ plays the role of $|\alpha|$ and clearing the common denominator of $1-\rho\hypergeneric^2(1-\noisefixed)$ that is present in all of $\sigma_{\LHS},\tau_{\LHS},\sigma_{\RHS},\tau_{\RHS}$, it remains to show the following: for all natural numbers $a$ and all real $\hypergeneric \geq 1$,
\begin{equation} \label{eqn:jigsaw}
(\hypergeneric^2\noisefixed\rho-\rho+1)^a-(1-\rho)^a
\leq
(\hypergeneric^2(1-\noisefixed)(1-\rho)+\hypergeneric^2\noisefixed)^a-(\hypergeneric^2(1-\noisefixed)(1-\rho))^a.
\end{equation}
\Cref{eqn:jigsaw} is a consequence of the following stronger inequality (obtained by replacing the quantity $(\hypergeneric^2(1-\noisefixed)(1-\rho))^a$ in \Cref{eqn:jigsaw} by the larger quantity $(\hypergeneric^2(1-\rho))^a$):
\begin{equation} \label{eqn:jigsaw2}
(\hypergeneric^2\noisefixed\rho-\rho+1)^a -(1-\rho)^a
\leq
(\hypergeneric^2(1-\noisefixed)(1-\rho)+\hypergeneric^2\noisefixed)^a - (\hypergeneric^2(1-\rho))^a .
\end{equation}
\Cref{eqn:jigsaw2} can be rewritten as
\begin{equation} \label{eqn:jigsaw3}
(\hypergeneric^2\noisefixed\rho + (1-\rho))^a -(1-\rho)^a
\leq
(\hypergeneric^2\noisefixed \rho + \hypergeneric^2(1-\rho))^a -(\hypergeneric^2(1-\rho))^a,
\end{equation}
which is of the form
\begin{equation} \label{eqn:jigsaw4}
(x + y_0)^a - y_0^a
\leq
(x + y_1)^a - y_1^a
\end{equation}
where $x=\hypergeneric^2 \noisefixed \rho \geq 0$,  $y_1 = \hypergeneric^2(1-\rho)$, and $y_0 = (1-\rho)$; recalling that $\hypergeneric \geq 1$, we have $y_0 \leq y_1$.  Expanding out both sides of \Cref{eqn:jigsaw4} using the binomial theorem, the right-hand side is at least as large as the left-hand side term by term, and the proof of \Cref{prop:every-outcome}, and hence also \Cref{lem:LHS-RHS}, is complete.
\end{proofof}


\section{Noise insensitivity extension lemma} \label{sec:magic-lemma}

For technical reasons our analysis will require a technical result which we state and prove below.  Intuitively, this result says that if $x \in \R^n$ is an input to a degree-$\degree$ polynomial $r(\cdot)$ at which $\poly(\degree)$ many successive ``noisifications'' of $r$, at increasing but all small noise rates, are all multiplicatively close to each other, then they are all multiplicatively close to the value $r(x)$.

Recall that $\maxcol=(2\degree+1)^2$ and that \violet{$\noisefixed =  \parens*{{\frac \epsprg \degree}}^{O(1)}.$}\ignore{\rasnote{Was ``$\noisefixed =  \parens*{{\frac \epsprg \degree}}^{O(\log \degree)}.$''}} The lemma is as follows (recall that the notation
``$a \approx_\gamma b$'' means that $\exp(-\gamma) \leq {\frac a b} \leq \exp(\gamma)$):

\begin{lemma} [Noise insensitivity extension lemma \kanenote{(analogue of Corollary~16 of \cite{kane11focs})}] \label{lem:magicker-lemma}
Let $r_0: \R^n \to \R$ be a non-negative degree-$(2\degree)$ polynomial.  Let $a+b=1$ and suppose $0 \leq b \leq {\frac 1 {C{\degree}^{10}}}$ for a suitable large absolute constant $C$. For $1 \leq j \leq \maxcol$ write $r_j(x)$ to denote $\U_{a^{j/2}} r_0$.

Suppose $x \in \R^n$ is a point such that for all $1 \leq j \leq \maxcol-1$ we have $r_j(x) \approx_\gamma r_{j+1}(x),$ where $\gamma \leq {\frac 1 {12\maxcol(2d+1)}}$.
Then $r_0(x) \approx_1 r_1(x)$, i.e. $e^{-1} \leq {\frac {r_0(x)}{r_1(x)}} \leq e.$
\end{lemma}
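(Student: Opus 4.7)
The plan is to reduce this to a one-variable polynomial extrapolation problem. Since $\U_\rho h_\alpha = \rho^{|\alpha|} h_\alpha$, the scalar-valued function
\[
P(\rho) \;\coloneqq\; (\U_\rho r_0)(x) \;=\; \sum_{k=0}^{2d} c_k(x)\, \rho^k, \qquad c_k(x) \;=\; \sum_{|\alpha|=k} \wh{r_0}(\alpha)\, h_\alpha(x),
\]
is a polynomial of degree at most $2d$ in $\rho$ whose coefficients depend on $x$, and $r_j(x) = P(\eta^j)$ with $\eta \coloneqq \sqrt{a}$. The hypothesis therefore reads $P(\eta^j) \approx_\gamma P(\eta^{j+1})$ for $1 \le j \le D-1$, and the goal becomes $P(1) \approx_1 P(\eta)$. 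I first dispatch the degenerate case $r_1(x) = 0$: because $r_0 \ge 0$ and $r_1(x) = \E_{\by}[r_0(\eta x + \sqrt{1-\eta^2}\by)] = 0$, the nonnegative polynomial $r_0$ vanishes on a positive-measure set, hence $r_0 \equiv 0$ and the lemma is trivial. Otherwise I normalize $P(\eta) = 1$ and telescope the chain of $D-1$ multiplicative approximations to get $|P(\eta^j) - 1| \le e^{(D-1)\gamma} - 1 \le 2D\gamma \le 1/(6(2d+1))$ for every $1 \le j \le D$; call this upper bound $\epsilon$.

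The core of the argument is polynomial extrapolation from the dense grid $\{\eta^j : 1 \le j \le D\} \subset [\eta^D, \eta]$ one step beyond the right endpoint to $\rho = 1$. Because $b \le 1/(Cd^{10})$ is minuscule, Taylor-expanding gives $\eta^j = 1 - j(1-\eta) + O((jb)^2)$ with $j \le D = (2d+1)^2$, so the positional error $O(D^2 b^2)$ is far smaller than the spacing $1-\eta = \Theta(b)$; the grid is arithmetic to very high precision. After affinely rescaling $[\eta^D, \eta]$ to $[-1,1]$, the rescaled polynomial $\tilde P$ still has degree at most $2d$, is bounded in magnitude by $\epsilon$ at $D$ nearly-equispaced points of $[-1,1]$, and the extrapolation target $\rho = 1$ maps to $y_0 = 1 + 2(1-\eta)/(\eta - \eta^D) \approx 1 + 2/(D-1)$. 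Two classical univariate polynomial inequalities now finish the job:
\begin{itemize}
\item An Ehlich--Zeller-type grid inequality for polynomials of degree $n = 2d$ at $D = (2d+1)^2$ (nearly) equispaced points of $[-1,1]$ yields $\|\tilde P\|_{L^\infty[-1,1]} \le \epsilon \sec(\pi n/(2(D-1))) = O(\epsilon)$, since $n/(D-1) = 1/(2(d+1)) = O(1/d)$.
\item Chebyshev's extremal inequality $|\tilde P(y_0)| \le T_{2d}(y_0)\cdot \|\tilde P\|_{L^\infty[-1,1]}$, combined with $T_n(\cosh\phi) = \cosh(n\phi)$ and $\operatorname{arccosh}(1+z) \le \sqrt{2z}$, gives $T_{2d}(y_0) \le \cosh(4d/\sqrt{D-1}) \le \cosh 2$ via the crucial identity $D - 1 \ge (2d)^2$.
\end{itemize}

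Composing these two $O(1)$ factors yields $|P(1) - 1| = |\tilde P(y_0)| \le O(\epsilon) = O(1/d) \ll e - 1$, hence $P(1)/P(\eta) \in [e^{-1}, e]$, which is exactly $r_0(x) \approx_1 r_1(x)$. The main obstacle is ensuring both extrapolation factors above really are bounded independently of $d$, and this is precisely what the scaling $D = (2d+1)^2$ in the hypothesis is tuned to buy: it forces $n/D = O(1/d)$ so the Ehlich--Zeller factor is near $1$, and it forces $n/\sqrt{D-1} \le 2$ so the Chebyshev extrapolation factor stays at $\cosh 2$. The entire argument is univariate and avoids invoking the $(2,q)$-hypercontractive inequality \Cref{thm:hypercon}, which is exactly where \cite{kane11focs}'s Corollary~16 pays a $2^{O(d)}$ tax. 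A minor technical nuisance, absorbed by the aggressive hypothesis $b \le 1/(Cd^{10})$, is propagating the tiny geometric-versus-arithmetic distortion of the grid through the Ehlich--Zeller step.
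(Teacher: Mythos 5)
Your proposal is correct in spirit but follows a genuinely different route from the paper. Both arguments reduce to the same univariate extrapolation problem — a degree-$2d$ polynomial $P(\rho)$ is known to within $O(\gamma D)$ multiplicative error at the $D$ nearly-arithmetic nodes $\eta^j$, $1\le j\le D$, and one wants to control it one step past the end at $\rho=1$ — but the machinery differs. The paper selects the sparse quadratically-spaced subset of nodes $\{(i^2)'\}_{i=1}^{2d+1}$ and runs raw Lagrange interpolation, proving by direct product manipulation (\Cref{claim:fraction}) that every Lagrange coefficient evaluated at the target satisfies $|\ell_j(0)|\le 2$; the clever choice of a quadratically-spaced grid is what makes those coefficients $O(1)$ rather than the generic exponentially large Lebesgue constants. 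You instead keep all $D\approx (2d)^2$ nodes and invoke two black-box classical inequalities: a Remez/Ehlich--Zeller/Coppersmith--Rivlin-type bound (a degree-$n$ polynomial pinned at $m\gg n^2$ near-equispaced nodes has $O(1)$ sup-norm on the interval) and the Chebyshev extremal growth bound (the extrapolation factor to $1+\Theta(1/D)$ is $T_{2d}(1+\Theta(1/D))\le\cosh(O(1))$ because $D=\Theta(d^2)$). Both routes crucially exploit that the number of nodes is $\Theta(d^2)$: in the paper via the square-number Lagrange product identity, in yours via the $n^2/m=O(1)$ regime of Coppersmith--Rivlin and $n/\sqrt{D}=O(1)$ in the Chebyshev step. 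Your approach is more modular but less self-contained, and two points deserve flagging: the exact $\sec(\pi n/(2(D-1)))$ formula you quote is the Ehlich--Zeller bound for the Chebyshev extremal grid $\cos(k\pi/m)$, not for equispaced points — for the near-arithmetic grid you actually have, the relevant result is the Coppersmith--Rivlin/Sch\"onhage bound $2^{O(n^2/m)}$, which still gives $O(1)$ here but isn't quite the same inequality; and the assertion that the $O(D^2b)$ geometric-vs-arithmetic distortion of the grid is "a minor technical nuisance" would need a stability argument for whichever grid inequality you use, which is where the paper's explicit Lagrange-coefficient computation (which tracks the $j'$-vs-$j$ perturbation directly through \Cref{eq:primeisclose}) is doing real work that a black-box invocation does not automatically absorb.
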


We note that later when we apply this lemma it will be with the polynomial $r_0$ instantiated to be a zeroth-column statistic $s_{i,0}$, of degree $2 \degree$, and with $a=1-\noisefixed$, so we will have that \violet{$b=\noisefixed=\parens*{{\frac \epsprg \degree}}^{O(1)}$}\ignore{\rasnote{Was ``$b=\noisefixed=\parens*{{\frac \epsprg \degree}}^{O(\log \degree)}$''}} satisfies $b \leq {\frac 1 {C \degree^{10}}}$ with room to spare. Recalling \Cref{rem:semigroup}, \Cref{lem:magicker-lemma} implies that if the statistics $s_{i,1}(x),\dots,s_{i,\maxcol}(x)$ are all multiplicatively close to each other then $s_{i,0}(x)$ is also multiplicatively (fairly) close to this common value.

It is interesting to contast \Cref{lem:magicker-lemma} with Corollary~16 of \cite{kane11focs}. That corollary gives a qualitatively similar result, also establishing constant-factor multiplicative closeness of $r_0(x)$ as its conclusion, but is quantitatively very different in the assumptions it uses to reach that conclusion.  In Corollary~16 of \cite{kane11focs} only $O(d)$ many noisifications $r_1,\dots,r_{O(d)}$ are considered, but they are assumed to be much closer to each other, multiplicatively $(1 \pm \exp(-d))$-close (and it can be shown that such a strong assumption is required if only $O(d)$ many noisifications are considered). In contrast, \Cref{lem:magicker-lemma} assumes closeness now of $\poly(d)$ rather than $O(d)$ many noisifications, but the closeness that we need to assume is much weaker, only multiplicative $(1 \pm {\frac 1 {\poly(d)}})$-closeness; this is crucial for our overarching goal of ``getting rid of all factors of $2^d$.''  \ignore{We further remark that while the proof of \cite{kane11focs}'s Corollary~16 is based on polynomial interpolation, the proof of \Cref{lem:magicker-lemma} has more of an approximation-theoretic flavor (though interpolation is also used in the argument as will be evident below).}

\begin{proof}[Proof of \Cref{lem:magicker-lemma}]
Recall from \Cref{eqn:U-formula} that for any fixed $x$ and varying $\rho$, the quantity
\[
\U_{\rho} r_0 (x)= \sum_{|\alpha| \leq \degree} \rho^{|\alpha|} \wh{r_0}(\alpha) h_\alpha(x) \coloneqq A(\rho)
\]
is a polynomial in $\rho$ of degree at most $2\degree$.  Let $A_\ast \coloneqq A((1-q)^{1/2})=r_1(x)$. The hypothesis of \Cref{lem:magicker-lemma} tells us that for all $j=1,\dots,\maxcol$, we have
$
A((1-q)^{j/2}) \approx_{\gamma \maxcol} A_\ast;
$ defining the polynomial $B(\rho) \coloneqq {\frac {A(\rho)}{A_\ast}} - 1$, we get that
\[
-\gamma \maxcol \leq e^{-\gamma \maxcol}-1 \leq B((1-q)^{j/2}) \leq e^{\gamma \maxcol} -1 \leq 2 \gamma \maxcol
\]
for all $j=1,\dots,\maxcol.$  Next, let us define the degree-$2\degree$ polynomial $C(\rho)$ by
\[
C((2/q) \rho) \coloneqq B(1-\rho), \quad \quad \text{so} \quad \quad
C((2/q)(1-(1-q)^{j/2})) = B((1-q)^{j/2}) \in [-\gamma \maxcol,2\gamma \maxcol]
\]
for $j=1,\dots,\maxcol$. For notational convenience, for $j =1,\dots,\maxcol$ we write ``$j'$'' to denote the value $(2/q)(1-(1-q)^{j/2})$, and we observe that
\begin{equation} \label{eq:primeisclose}
j\parens*{1 - {\frac 1 {d^5}}} \leq j' \leq j,
\end{equation}
where the upper bound is immediate and the lower bound holds (with room to spare) since by assumption we have $q \leq {\frac 1 {C\degree^{10}}}.$ So intuitively, we have that $C(1'),\dots,C(\maxcol')$ are all very close to zero --- between $-\gamma \maxcol$ and $2 \gamma \maxcol$ --- and to prove the lemma it suffices to show that $C(0) \in [-1/2, 1/2].$

We do this using Lagrange interpolation. Recall that the Lagrange interpolation formula tells us that for any degree-$2\degree$ polynomial $C$ and any $2d+1$ points $x_1,\dots,x_{d+1}$, we have
\begin{equation} \label{eq:lagrange}
C(x) =\sum_{j=1}^{2\degree+1} C(x_j)\ell_j(x),
\quad \quad \text{where} \quad \quad
\ell_j(x) = \prod_{m \in [1,\degree+1] \setminus j} {\frac {x-x_m}{x_j-x_m}}.
\end{equation}

We apply this formula at $x=0$ where we take the $2\degree+1$ values $x_i$ to be $x_i = (i^2)'.$  Fix a $j \in [1,\degree]$ and let us consider $\ell_j(0)$; it is equal to
\begin{align*}
\ell_j(0) &= {\frac {(1^2)' \cdot (2^2)' \cdots ((j-1)^2)'\cdot ((j+1)^2)' \cdots ((2\degree+1)^2)' }{((1^2)' - (j^2)')\cdot
((2^2)' - (j^2)') \cdots (((j-1)^2)' - (j^2)') \cdot
(((j+1)^2)'- (j^2)') \cdots
(((2\degree+1)^2)' - (j^2)')}}
\end{align*}
Note that in the preceding expression, every multiplicand in the numerator is of the form $a'$ for some integer $a \in [1,\maxcol]$ and every multiplicand in the denominator is of the form $(a' - b')$ for distinct integers $a,b \in [1,\maxcol].$  It follows straightforwardly from this and from \Cref{eq:primeisclose} that $\ell_j(0)$ is within a multiplicative $\bracks*{1 - {\frac 1 {d}}, 1 + {\frac 1 {d}}}$ factor of the above expression ``without the primes'', i.e. of
\begin{equation} \label{eq:clean}
{\frac {1^2 \cdot 2^2 \cdots (j-1)^2 \cdot (j+1)^2 \cdots (2\degree+1)^2 }{(1^2 - j^2)\cdot(2^2 - j^2) \cdots ((j-1)^2 - j^2) \cdot
((j+1)^2- j^2) \cdots
((2\degree+1)^2 - j^2)}}.
\end{equation}

Now we require the following bound on the above fraction, which we prove after using it to finish the proof of \Cref{lem:magicker-lemma}:
\begin{claim} \label{claim:fraction}
For all $j \in \{1,\dots,2d+1\}$ it holds that $|\eqref{eq:clean}| \leq
2.$
\end{claim}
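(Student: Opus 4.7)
The plan is to identify the expression in \eqref{eq:clean} in closed form and then read off the bound. Writing $N := 2d+1$, I rewrite \eqref{eq:clean} as
\[
\prod_{\substack{m=1 \\ m \neq j}}^{N} \frac{m^2}{(m-j)(m+j)} = \frac{\left(\tfrac{N!}{j}\right)^2}{\prod_{m \neq j}(m-j) \cdot \prod_{m \neq j}(m+j)}.
\]
The first product in the denominator runs over $m-j \in \{-(j-1),\dots,-1,1,\dots,N-j\}$, giving $(-1)^{j-1}(j-1)!\,(N-j)!$. The second product runs over $m+j \in \{j+1,\dots,N+j\}\setminus\{2j\}$, giving $\dfrac{(N+j)!}{j!\cdot 2j}$.

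Combining these simplifications yields
\[
\eqref{eq:clean} \;=\; (-1)^{j-1}\cdot \frac{2\,(N!)^{2}}{(N-j)!\,(N+j)!} \;=\; (-1)^{j-1}\cdot 2\cdot \frac{\binom{N}{j}}{\binom{N+j}{j}},
\]
where in the last step I used $\dfrac{(N!)^{2}}{(N-j)!(N+j)!} = \dbinom{N}{j}\cdot\dfrac{j!}{(N+1)(N+2)\cdots(N+j)} = \dbinom{N}{j}\big/\dbinom{N+j}{j}$.

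Taking absolute values, the bound reduces to the elementary inequality $\binom{N+j}{j} \geq \binom{N}{j}$, which holds because $\binom{a}{j}$ is nondecreasing in $a$ for $a \geq j$ (alternatively, $\binom{N+j}{j}/\binom{N}{j} = \prod_{i=1}^{j}\frac{N+i}{N-j+i} \geq 1$ since each factor is at least $1$). Hence $|\eqref{eq:clean}| \leq 2$, as claimed. I do not anticipate any real obstacle here; the only mild bookkeeping is tracking the sign $(-1)^{j-1}$ from the negative factors in $\prod_{m\neq j}(m-j)$, but the sign is irrelevant for the absolute-value bound we need.
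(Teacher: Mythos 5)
Your proof is correct, and it takes a genuinely different route from the paper's. The paper splits the product at $m = j$ into the factors with $m < j$ (call them $A$) and the factors with $m > j$ (call them $B$); it evaluates $A$ exactly as $\tfrac{1\cdots(j-1)}{(j+1)\cdots(2j-1)}$, then upper bounds $B$ by extending the (termwise $>1$) product to an infinite telescoping product $B' = \tfrac{(j+1)\cdots(2j)}{1\cdots j}$, and observes that $A\cdot B' = 2$. Your approach instead evaluates the whole product in closed form as $(-1)^{j-1}\cdot 2\binom{N}{j}\big/\binom{N+j}{j}$ with $N=2d+1$, after which the bound is the one-line observation $\binom{N+j}{j}\geq\binom{N}{j}$. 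The paper's route avoids most of the factorial bookkeeping and needs only the one elementary telescoping identity, at the cost of giving an inequality rather than an exact value; your route requires tracking three factorial blocks carefully (numerator, shifted-down denominator, shifted-up denominator) but yields an exact closed form, which in particular shows the quantity decreases in $j$ and approaches $2$ only as $j\to 0$. Both derivations are sound, and I verified your closed form agrees with direct computation at small cases (e.g.\ $d=1$, $j=1,2,3$ give $3/2$, $-3/5$, $1/10$ respectively).
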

It follows that for each $j$ we have $|\ell_j(0)| \leq 2(1 + {\frac 1 \degree})<3$, and hence by \Cref{eq:lagrange} we have that
$|C(0)| \leq 6(\degree+1)\gamma \maxcol < 1/2.$ This proves \Cref{lem:magicker-lemma}.
\end{proof}

\begin{proofof}{\Cref{claim:fraction}}
We have that
\begin{align*}
|\eqref{eq:clean}| &= \abs*{\prod_{i=1}^{2\degree+1} {\frac {i^2}{i^2 - j^2}}}
=
\overbrace{\parens*{\prod_{i=1}^{j-1} {\frac {i^2}{(j-i)(j+i)}}}}^{=A} \cdot
\overbrace{\parens*{\prod_{i=j+1}^{2\degree+1} {\frac {i^2}{(i-j)(i+j)}}}}^{=B},
\quad \text{where by inspection}\\
A &= {\frac {1 \cdot 2 \cdot \cdots \cdot (j-1)}{(j+1) \cdot (j+2) \cdot \cdots \cdot (2j-1)}}, \quad \text{and}\\
B &\leq B' \coloneqq \prod_{i=j+1}^\infty {\frac {i^2}{(i-j)(i+j)}} =
{\frac {(j+1)(j+2) \cdot \cdots \cdot (2j)}{1 \cdot 2 \cdot \cdots \cdot j}}, \quad \text{so}\\
|\eqref{eq:clean}| &= A \cdot B \leq A \cdot B' = {\frac {2j}{j}}=2. \qedhere
\end{align*}
\end{proofof}


\section{Proof of \Cref{lem:page13}:  if some analysis check fails, then with high probability some mollifier check fails} \label{sec:page13}

As per the assumptions of \Cref{lem:page13}, in this section we completely fix an $x = \ul{x} \in \R^n$ which is such that some check in $\AnalysisChecks$ does not hold at $x$, and we let~$\bz$ denote a $\wise$-wise independent $n$-dimensional Gaussian random vector.  We recall the notation from the start of \Cref{sec:hypervariance-bound},
\[
  \bw = \sqrt{1-\noisefixed}{\ul{x}} + \sqrt{\noisefixed}\bz, \qquad  \ul{\st}_{i,j} = \st_{i,j}(\ul{x}), \qquad \bst_{i,j} = \st_{i,j}(\bw) = \zoom{(\st_{i,j})}{\noisefixed}{\ul{x}}(\bz),
\]
and we remark that we will be making extensive use of \Cref{cor:usefulhvbound} in the arguments that follow.

Recalling the statement of \Cref{lem:page13}, we assume through the rest of \Cref{sec:page13} that $\ul{x}$ causes some analysis check to fail, and our goal is to show that
\[
    0 \leq \E[\Indplus(\sqrt{1-\noisefixed} \cdot \ul{x} + \sqrt{\noisefixed} \cdot \bz)] \leq \poly(\taylor \degree^\taylor) \cdot \anticoncgap^{\taylor/2}.
\]
Recalling that $\Indplus$ is the product of functions bounded in $[0,1]$ (namely the indicator of $\sign(p) = 1$ and all of the  $\softcheck_\chk$ functions as $\chk$ ranges over $\MollifierChecks$), to prove \Cref{lem:page13} it suffices to establish the following: 
\begin{multline}
    \text{There exists some } \chk \in \MollifierChecks \text{ ``$s_u \geq c s_v$ with softness~$\softness$''}, \\
     \text{such that } \softcheck_\chk(\sqrt{1-\noisefixed} \cdot \ul{x} + \sqrt{\noisefixed} \cdot \bz) = 0 \text{ --- equivalently, } \bst_{u} < \exp(-\softness) \cdot c \bst_v \text{ ---} \\
         \text{except with probability at most } \poly((\taylor \degree)^\taylor) \cdot( \anticoncgap)^{\taylor/2} \text{~over~}\bz. \label[ineq]{ineq:muffin}
\end{multline}

To establish \Cref{ineq:muffin}, we will consider the checks in $\AnalysisChecks$ in a careful order, specifically, the order shown below.  All subsequent references to the ``first'' analysis check that fails, ``earlier'' or ``later'' analysis checks, etc.~are with respect to this ordering.

\myfig{.75}{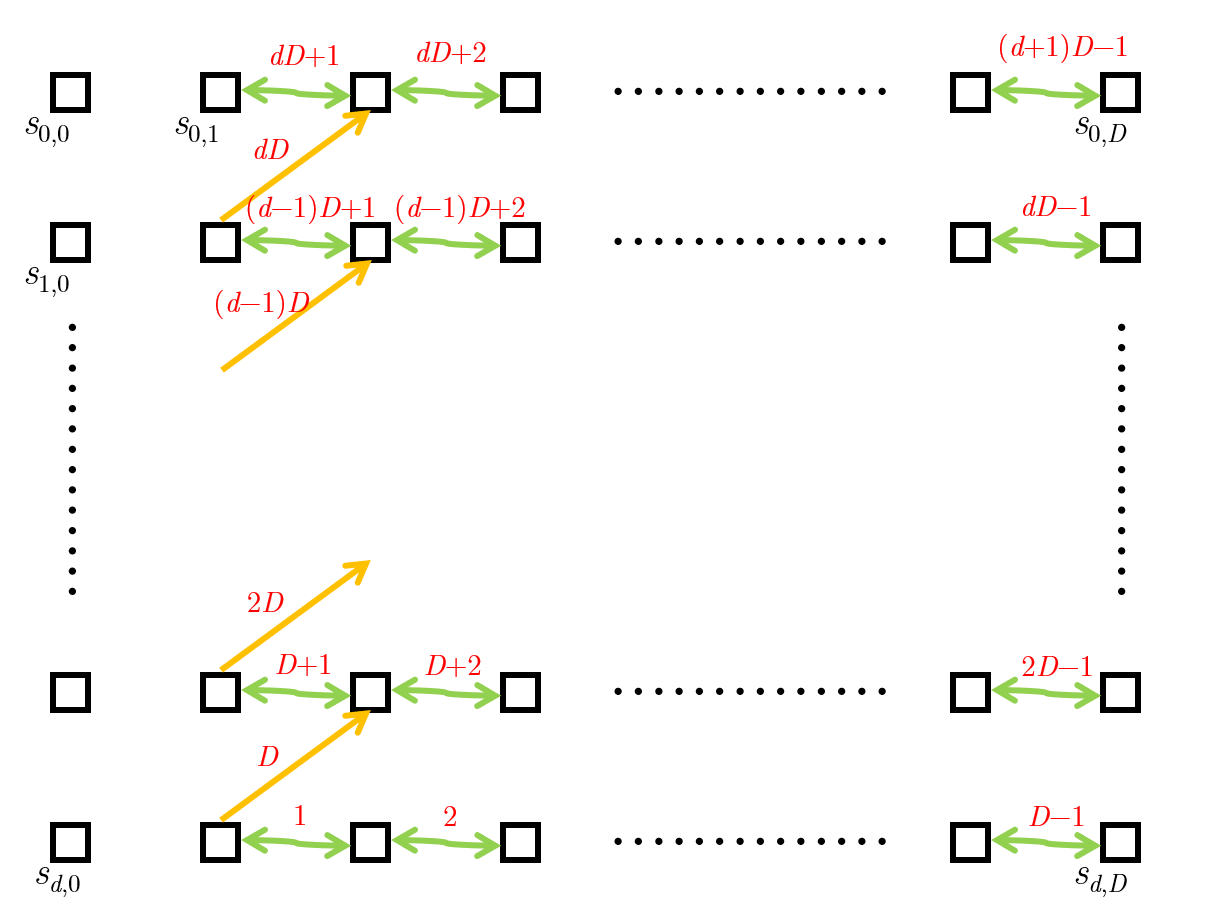}{Ordering of analysis checks (indicated in red)}{}


\medskip

The argument has three cases depending on where is the first analysis check that fails for~$\ul{x}$ (a horizontal check in the bottom row; a horizontal check in a higher row; or a diagonal check).
Before entering into the case analysis, which involves detailed and careful arguments, we stress two high level points.  First, the overall \emph{qualitative} structure of the following arguments follows \cite{kane11focs} quite closely (in particular page 13 of that paper).  Second, to obtain our \emph{quantitative} improvement over \cite{kane11focs} (essentially, getting a $\poly(\degree^\taylor)=\poly(\degree)$ factor in the failure probability of \Cref{ineq:muffin} rather than the $2^{(O(\degree)}$ factor that is present in \cite{kane11focs}), crucially requires the technical tools that we developed in \Cref{sec:hypervariance-bound} and \Cref{sec:magic-lemma}.

\subsection{The first failing analysis check is horizontal and is in the bottom row ($i=d$).}

Recall that the horizontal analysis checks in the bottom row are $\ul{\st}_{\degree,j} \approx_{\horizanal} \ul{\st}_{\degree,j+1}$ for all $1 \leq j \leq \maxcol-1$. But \Cref{def:si0} and \Cref{fact:deriv-deg} imply that $\st_{\degree,0}$ is a constant function, and it follows from \Cref{def:sij} that $\st_{\degree,j}$ is the same constant function for all~$j$.  Thus all numbers $\ul{\st}_{\degree,j}$ are equal to the same constant, and hence the analysis checks in the bottom row cannot actually fail.  So this case cannot occur.

\subsection{The first failing analysis check is horizontal and in some row $0\le i<\degree$}
Suppose that the first analysis check to fail is one of the two implicit in the statement
\begin{equation}    \label{eqn:horzfail}
    \ul{\st}_{i^*,j^*} \not \approx_{\horizanal} \ul{\st}_{i^*,j^*+1},
\end{equation}
for some $0 \leq i^* \leq \degree-1$ and $1 \leq j^* \leq \maxcol - 1$.
We first note that (similar to the beginning of the proof of \Cref{cor:usefulhvbound}) if one of the two quantities $\ul{\st}_{i^*,j^*}, \ul{\st}_{i^*,j^*+1}$ is zero then $\st_{i,0}$ must be the constant-0 polynomial and hence the other quantity must be zero as well. But since \Cref{eqn:horzfail} holds, it cannot be the case that  $\ul{\st}_{i^*,j^*}$ and $\ul{\st}_{i^*,j^*}$ are both zero.  Hence in the rest of the proof we assume that $\ul{\st}_{i^*,j^*}, \ul{\st}_{i^*,j^*+1}>0.$ 

In this case we will analyze the random variables $\bst_{i^*, j^*-1}$ and $\bst_{i^*, j^*}$. By \Cref{fact:themean} we have that
\begin{equation} \label{eqn:damean}
    \E[\bst_{i^*,j^*-1}] = \ul{\st}_{i^*,j^*}, \qquad \E[\bst_{i^*,j^*}] = \ul{\st}_{i^*,j^*+1}, \qquad \text{and let us write } M \coloneqq \max\{ \ul{\st}_{i^*,j^*}, \ul{\st}_{i^*,j^*+1} \}.
\end{equation}

Since \Cref{eqn:horzfail} is the first analysis check to fail, it must be the case that all horizontal analysis checks in the $(i^*+1)$-th row passed, i.e.~
\[
\text{for all~}1 \leq j \leq \maxcol-1, \text{~we have~}    \ul{\st}_{i^*+1,j} \approx_{\horizanal} \ul{\st}_{i^*+1,j+1},
\]
which immediately gives that
\[
\text{for all~}1 \leq j \leq \maxcol, \text{~we have~}   \ul{\st}_{i^*+1,j} \approx_{\maxcol \horizanal} \ul{\st}_{i^*+1,1}.
\]
Now we apply the noise insensitivity extension lemma, \Cref{lem:magicker-lemma}, to the degree-$(2\degree)$ polynomial $\st_{i^*,0}$ in place of ``$r_0$'', with $\horizanal$ in place of ``$\gamma$,'' and with $\noisefixed$ in place of ``$b$.'' Since $\horizanal$ is indeed less than ${\frac 1 {12\maxcol(2\degree+1)}}$ as required by that lemma, \Cref{lem:magicker-lemma} gives us that we can extend the above closeness (with slightly weaker parameters) even to the $j = 0$ case:
\begin{equation} \label{ineq:even-zero-is-close}
\text{for all~}0 \leq j \leq \maxcol, \text{~we have~}     \ul{\st}_{i^*+1,j} \approx_{1} \ul{\st}_{i^*+1,1} .
\end{equation}

As another consequence of the fact that  \Cref{eqn:horzfail} is the first analysis check to fail, we have that the diagonal analysis check relating the $(i^*+1)$-th to the $i^*$-th row passed, i.e.~we have that
\begin{equation} \label{iplusoneiscoolvisavisi}
    \ul{\st}_{i^*+1,1} \leq 100 \anticoncgap \ul{\st}_{i^*,2}.
\end{equation}
Combining \Cref{ineq:even-zero-is-close,iplusoneiscoolvisavisi}, we conclude that
\begin{equation} \label{eq:bounded-by-two}
\text{for all~}0 \leq j \leq \maxcol, \text{~we have~}    \ul{\st}_{i^*+1,j} \leq 100 e \anticoncgap \ul{\st}_{i^*,2}.
\end{equation}

Now, recalling \Cref{cor:usefulhvbound}, we have that
\begin{align*}
\bst_{i^*,j^*-1} \text{ is } (\taylor, 4\max\{\sqrt{\zeta_{i^*,j^*-1}}, \zeta_{i^*,j^*-1}\})\text{-hyperconcentrated, where~} \zeta_{i^*,j^*-1} = \frac{\ul{\st}_{i^*+1,j^*-1}}{\ul{\st}_{i^*,j^*}}
    & \leq \frac{100e \anticoncgap \ul{\st}_{i^*,2}}{\ul{\st}_{i^*,j^*}},\\
    \bst_{i^*,j^*} \text{ is } (\taylor, 4\max\{\sqrt{\zeta_{i^*,j^*}}, \zeta_{i^*,j^*}\})\text{-hyperconcentrated, where~} \zeta_{i^*,j^*} = \frac{\ul{\st}_{i^*+1,j^*}}{\ul{\st}_{i^*,j^*+1}}  &\leq \frac{100e \anticoncgap \ul{\st}_{i^*,2}}{\ul{\st}_{i^*,j^*+1}},
\end{align*}
where both inequalities are by \Cref{eq:bounded-by-two}. (Note that the above ratios are well-defined since $\ul{\st}_{i^*,j^*}, \ul{\st}_{i^*,j^*+1}>0.$)

To analyze the  $\ul{\st}_{i^*,2}$ factor which appears in both numerators above,   we consider two cases.  If $j^* = 1$, then $\ul{\st}_{i^*, 2} = \ul{\st}_{i^*, j^*+1} \leq M$ (recalling \Cref{eqn:damean}).  Otherwise, by virtue of the fact that all preceding analysis checks in the $i^*$-th row passed, we conclude that $\ul{\st}_{i^*,2} \approx_{\maxcol \horizanal} \ul{\st}_{i^*,j^*}$ and hence $\ul{\st}_{i^*,2} \leq eM$ (since $\maxcol \horizanal \leq e$).  Either way, we conclude that
\begin{equation} \label[ineq]{ineq:pumpkin}
    \zeta_{i^*,j^*-1}  \leq \frac{100e^2 \anticoncgap M}{\ul{\st}_{i^*,j^*}}, \quad \zeta_{i^*,j^*}  \leq \frac{100e^2 \anticoncgap M}{\ul{\st}_{i^*,j^*+1}}.
\end{equation}

For the rest of the analysis of this case, we will reason the exact same way about $\bst_{i^*,j^*-1}$ and about $\bst_{i^*,j^*}$.  Let us write $\bst$ to denote either $\bst_{i^*,j^*-1}$ or $\bst_{i^*,j^*}$, and we similarly write just $\zeta$ for either $\zeta_{i^*,j^*-1}$ or $\zeta_{i^*,j^*},$ and write $\mu = \E[\bst]$ (note that recalling \Cref{eqn:damean}, we have that $\mu$ is either $\ul{\st}_{i^*,j^*}$ or $\ul{\st}_{i^*,j^*+1}$).   \Cref{ineq:pumpkin} tells us that in either case we have
\begin{equation} \label{eq:zeta-ub}
\zeta \leq 100 e^2 \anticoncgap (M / \mu).
\end{equation}  Now we apply \Cref{prop:hypermarkov} to the random variable $\bst$ (which was shown above to be $(\taylor,4 \max\{\sqrt{\zeta},\zeta\})$-hyperconcentrated), taking its ``$t$'' parameter to be $\frac{M/\mu}{\degree^{C}}$ for a large absolute constant $C$, to deduce that
\begin{equation} \label{eq:both}
    |\bst - \mu| \leq M/\degree^{C}  \text{ except with probability at most } \parens*{\frac{4\max\{\sqrt{\zeta},\zeta\}\cdot \degree^{C}}{M/\mu}}^\taylor \leq \poly(\degree^\taylor) \cdot \anticoncgap^{\taylor/2}.
\end{equation}
(To justify the last inequality, we observe that if $\zeta \geq 1$ then $\max\{\sqrt{\zeta},\zeta\}/(M/\mu)  =
\zeta/(M/\mu) \leq 100e^2 \anticoncgap$ by \Cref{eq:zeta-ub}, which is at most $O(\sqrt{\anticoncgap})$ since $\anticoncgap \leq 1$.  On the other hand, if $\zeta < 1$ then $\max\{\sqrt{\zeta},\zeta\}/(M/\mu)  = \sqrt{\zeta}/(M/\mu)$, which is at most $10 e \sqrt{\anticoncgap}/\sqrt{M/\mu}$ by \Cref{eq:zeta-ub}, which is in turn at most $10e\sqrt{\anticoncgap}$ using $M= \max\{ \ul{\st}_{i^*,j^*}, \ul{\st}_{i^*,j^*+1} \} \geq \mu$.)

Thus except with a $\poly(\degree^\taylor) \cdot \anticoncgap^{\taylor/2}
$ failure probability, each of $\bst_{i^*,j^*-1}, \bst_{i^*,j^*}$ is within an additive $\pm M/\degree^{C}$ of its mean. Without loss of generality (the other case is entirely similar), let us assume that $\bst_{i^*,j^*}$ has the larger mean, so $\mu = M = \Ex[\bst_{i^*,j^*}] = \st_{i^*,j^*+1}$; given this, \Cref{eqn:horzfail} tells us that
\[
    \ul{\st}_{i^*,j^*} \leq \exp(-\horizanal) \ul{\st}_{i^*,j^*+1} = \exp(-\horizanal) M.
\]
Now, even if $\ul{\st}_{i^*,j^*} = \E[\bst_{i^*,j^*-1}]$ were as large as possible (by the above, this largest possible value is $\exp(-\horizanal) M$), except with overall failure probability at most $\poly(\degree^\taylor) \cdot \anticoncgap^{\taylor/2}$, we have that both
\[
\bst_{i^*,j^*-1} \leq \st_{i^*,j^*} + {\frac M {\degree^C}} \leq \exp(-\horizanal) M + {\frac M {\degree^C}}
\]
(by \Cref{eq:both} applied with its $\bst$ being $\bst_{i^*,j^*-1}$ and its $\mu$ being $\st_{i^*,j^*}$) and
\[
\bst_{i^*,j^*} \geq M - {\frac M {\degree^C}}
\]
(by \Cref{eq:both} applied with its $\bst$ being $\bst_{i^*,j^*}$ and its $\mu$ being $M$). Recalling the definitions of $\horizclose$ and $\horzanal$ from \Cref{eq:anticoncgap-horizclose} and \Cref{eq:horzanal}, and that the $C$ above is a large constant, the two preceding inequalities imply that \Cref{ineq:muffin} holds for the $(i^*,j^*-1)$~vs.~$(i^*,j^*)$ noise insensitivity mollifier check (see \Cref{sec:checking-insensitivity}), as desired.


\subsection{The first failing analysis check is diagonal}
Finally, the last case we must consider is that the first analysis check to fail is the diagonal check
\[
    \ul{\st}_{i^*+1,1} \not \leq 100 \anticoncgap \ul{\st}_{i^*,2}
\]
for some $0 \leq i^* \leq d-1$.  In this case we analyze the random variables $\bst_{i^*+1, 0}$ and $\bst_{i^*, 1}$.

We first observe that by \Cref{eqn:themean} and the above inequality we can lower bound the expectation of
$\bst_{i^*+1,0}$ by
\begin{equation}    \label[ineq]{ineq:firstdiag}
    \E[\bst_{i^*+1,0}] = \ul{\st}_{i^*+1,1} > 100 \anticoncgap \ul{\st}_{i^*,2} = 100 \anticoncgap \E[\bst_{i^*,1}],
\end{equation}
which will be useful for us later.  Next we give a high-probability lower bound on $\bst_{i^*+1,0}$:

\begin{claim}   \label{claim:imhungry}
    $\bst_{i^*+1,0} \geq \frac12 \ul{\st}_{i^*+1,1}$ except with probability at most $(80e)^{\taylor} \cdot \anticoncgap^{\taylor/2}$.
\end{claim}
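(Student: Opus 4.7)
The plan is to apply \Cref{cor:usefulhvbound} to $\bst_{i^*+1,0}$ and then invoke the hypermarkov inequality (\Cref{prop:hypermarkov}); to do so, we must show that the relevant ratio $\zeta_{i^*+1,0} = \ul{\st}_{i^*+2,0}/\ul{\st}_{i^*+1,1}$ is bounded by $O(\anticoncgap)$. (If $i^*+1 = \degree$, then $\st_{\degree+1,\cdot}$ is identically zero by \Cref{fact:deriv-deg}, $\zeta$ is $0$, and the argument is immediate; so assume $i^*+1 \leq \degree-1$.)

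To bound $\zeta_{i^*+1,0}$, I would use the hypothesis that the failing diagonal check is the \emph{first} failing analysis check in the specified ordering. In particular, the diagonal analysis check at row $i^*+1$ passed, giving $\ul{\st}_{i^*+2,1} \leq 100\anticoncgap \cdot \ul{\st}_{i^*+1,2}$, and the horizontal analysis checks in row $i^*+1$ all passed, so $\ul{\st}_{i^*+1,2} \approx_{\horzanal} \ul{\st}_{i^*+1,1}$. Combining these yields $\ul{\st}_{i^*+2,1} \leq 100e\anticoncgap \cdot \ul{\st}_{i^*+1,1}$. To pass from column $1$ to column $0$ in row $i^*+2$, I would apply the noise-insensitivity extension lemma (\Cref{lem:magicker-lemma}) to the degree-$2\degree$ polynomial $\st_{i^*+2,0}$, using the fact that all horizontal analysis checks in row $i^*+2$ also passed; this gives $\ul{\st}_{i^*+2,0} \approx_1 \ul{\st}_{i^*+2,1}$, and therefore
\[
    \zeta_{i^*+1,0} \;=\; \frac{\ul{\st}_{i^*+2,0}}{\ul{\st}_{i^*+1,1}} \;\leq\; 100e^2\anticoncgap.
\]

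Given this bound on $\zeta_{i^*+1,0}$, \Cref{cor:usefulhvbound} says $\bst_{i^*+1,0}$ is $(2\taylor,\,40e\sqrt{\anticoncgap})$-hyperconcentrated (using that $\anticoncgap \leq 1$ so the $\max$ is attained by the square root term). Applying \Cref{prop:hypermarkov} with $t = 1/2$ and recalling that $\E[\bst_{i^*+1,0}] = \ul{\st}_{i^*+1,1}$ by \Cref{fact:themean}, we obtain
\[
    \Pr\!\bracks*{\bst_{i^*+1,0} < \tfrac12 \ul{\st}_{i^*+1,1}} \;\leq\; (80e\sqrt{\anticoncgap})^{2\taylor} \;\leq\; (80e)^{\taylor}\cdot \anticoncgap^{\taylor/2},
\]
where the last inequality uses that $80e\sqrt{\anticoncgap} \leq 1$ (which is ensured by the setting of $\anticoncgap$ in \Cref{eq:anticoncgap-horizclose}, since $\epsprg,1/d$ are small).

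The only non-routine step is verifying that the specified ordering of analysis checks indeed implies that the diagonal check at row $i^*+1$ and the horizontal checks in rows $i^*+1$ and $i^*+2$ all passed; this is a bookkeeping exercise based on the figure describing the ordering. Assuming the ordering is designed so that diagonal checks at row $i$ are considered only after all analysis checks involving rows $i+1,\ldots,\degree$, this bookkeeping is immediate, and the main technical content of the claim is the chain of inequalities above that propagates the passed diagonal check one row up and then back into column~$0$ via \Cref{lem:magicker-lemma}.
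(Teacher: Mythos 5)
Your proof is correct and follows essentially the same route as the paper: establish $\ul{\st}_{i^*+2,0} \leq 100e^2\anticoncgap\,\ul{\st}_{i^*+1,1}$ by propagating the passed diagonal check at row $i^*+1$ via the passed horizontal checks in rows $i^*+1$ and $i^*+2$ together with \Cref{lem:magicker-lemma}, then invoke \Cref{cor:usefulhvbound} and \Cref{prop:hypermarkov} with $t=1/2$. The only cosmetic difference is that you carry the full $(2\taylor,40e\sqrt{\anticoncgap})$-hyperconcentration and then observe $80e\sqrt{\anticoncgap}\leq 1$, whereas the paper simply downgrades to $(\taylor,40e\sqrt{\anticoncgap})$-hyperconcentration (monotonicity of $L^q$-norms) and reads off the bound directly; both give the stated probability.
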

\begin{proof}
    The claim is immediate from \Cref{prop:hypermarkov}, taking its ``$t$'' to be $1/2$, once we establish the following:
    \begin{equation}    \label[ineq]{ineq:pizza}
        \bst_{i^*+1, 0} \text{ is } (\taylor, 40e\sqrt{\anticoncgap})\text{-hyperconcentrated.}
    \end{equation}
    To establish \Cref{ineq:pizza}, first suppose that $i^*+1 = d$. In this case $\st_{i^*+1,0}$ is a constant function, so $\bst_{i^*+1,0}$ is a constant random variable, and \Cref{ineq:pizza} is clearly true.  The other possibility is that $i^*+1 < d$.  In this case since $i^* < d-1$, there must have been at least one earlier  diagonal analysis check, and it succeeded, meaning that
    \[
        \ul{\st}_{i^*+2,1} \leq 100 \anticoncgap \ul{\st}_{i^*+1,2}.
    \]
We further have that the horizontal analysis checks in rows $i^*+1$ and $i^*+2$ all succeeded, and hence
    \[
        \ul{\st}_{i^*+1,2} \approx_{\horizanal} \ul{\st}_{i^*+1,1}, \qquad \ul{\st}_{i^*+2,1} \approx_{1} \ul{\st}_{i^*+2,0},
    \]
where the second inequality is by an application of \Cref{lem:magicker-lemma} (which we may apply because $\horizanal \leq {\frac 1 {12 \maxcol (2 \degree + 1)}}$) to the degree-$2\degree$ polynomial $\st_{i^\ast + 2,0}.$
Upper-bounding $e^{\horizanal}$ by $e$ for simplicity and combining these relations, we deduce that
    \[
        \ul{\st}_{i^*+2,0} \leq 100 e^2 \anticoncgap \ul{\st}_{i^*+1,1}.
    \]
    Now \Cref{ineq:pizza} follows from \Cref{cor:usefulhvbound} applied to $\bst_{i,j} = \bst_{i^*+1,0}$.\end{proof}

Next we establish a high-probability upper bound on $\bst_{i^*,1}$:

\begin{claim} \label{claim:imthirsty}
    $\bst_{i^*,1} \leq  \ul{\st}_{i^*+1,1}/(20\anticoncgap)$ except with probability at most $16^\taylor \cdot \anticoncgap^{\taylor/2}$.
\end{claim}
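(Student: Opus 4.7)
The plan is to mirror the structure of \Cref{claim:imhungry} but now going in the opposite direction: we want to upper bound $\bst_{i^*,1}$, whereas before we lower bounded $\bst_{i^*+1,0}$. The starting points are (i) \Cref{fact:themean}, which tells us $\E[\bst_{i^*,1}] = \ul{\st}_{i^*,2}$, and (ii) the failed diagonal analysis check at the given $\ul{x}$, which tells us
\[
\ul{\st}_{i^*+1,1} > 100 \anticoncgap \ul{\st}_{i^*,2}.
\]
Equivalently, the target bound $\ul{\st}_{i^*+1,1}/(20\anticoncgap)$ exceeds $\E[\bst_{i^*,1}]$ by a factor of at least $5$, so morally we just need to show that $\bst_{i^*,1}$ is concentrated around its mean at a suitable multiplicative scale, and then apply \Cref{prop:hypermarkov}.

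To get hyperconcentration, I would invoke \Cref{cor:usefulhvbound} with $(i,j) = (i^*,1)$; this gives that $\bst_{i^*,1}$ is $(2\taylor, 4\max\{\sqrt{\zeta_{i^*,1}}, \zeta_{i^*,1}\})$-hyperconcentrated around $\ul{\st}_{i^*,2}$, where $\zeta_{i^*,1} = \ul{\st}_{i^*+1,1}/\ul{\st}_{i^*,2}$. Note that in contrast to \Cref{claim:imhungry}, we have no upper bound on $\zeta_{i^*,1}$ from passing checks; in fact the failing diagonal check gives only the \emph{lower} bound $\zeta_{i^*,1} > 100\anticoncgap$. The key observation is that this lower bound is exactly what's needed for the argument, because the ``deviation scale'' we are willing to tolerate, namely $t\,\ul{\st}_{i^*,2}$ with $t = \zeta_{i^*,1}/(20\anticoncgap) - 1$, scales with $\zeta_{i^*,1}$ itself. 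Concretely, since $\zeta_{i^*,1} > 40\anticoncgap$, we have $t \geq \zeta_{i^*,1}/(40\anticoncgap)$.

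Now I split into two cases depending on whether $\zeta_{i^*,1} \geq 1$ or $\zeta_{i^*,1} < 1$ (this mirrors the $\max$ in \Cref{cor:usefulhvbound}):
\begin{itemize}
    \item If $\zeta_{i^*,1} \geq 1$, then $\eta/t \leq 4\zeta_{i^*,1}/(\zeta_{i^*,1}/(40\anticoncgap)) = 160\anticoncgap$.
    \item If $\zeta_{i^*,1} < 1$, then $\eta/t \leq 4\sqrt{\zeta_{i^*,1}}/(\zeta_{i^*,1}/(40\anticoncgap)) = 160\anticoncgap/\sqrt{\zeta_{i^*,1}} \leq 16\sqrt{\anticoncgap}$, where the last inequality uses $\zeta_{i^*,1} > 100\anticoncgap$.
\end{itemize}
In either case, $\eta/t \leq 16\sqrt{\anticoncgap}$ (taking $\anticoncgap \leq 1$ to absorb the first case). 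Applying \Cref{prop:hypermarkov} with exponent $\taylor$ (using that $(2\taylor,\eta)$-hyperconcentration implies $(\taylor,\eta)$-hyperconcentration by monotonicity of norms, which is how \Cref{claim:imhungry} was also phrased), the event $|\bst_{i^*,1} - \ul{\st}_{i^*,2}| \leq t\,\ul{\st}_{i^*,2}$ fails with probability at most $(\eta/t)^\taylor \leq (16\sqrt{\anticoncgap})^\taylor = 16^\taylor \anticoncgap^{\taylor/2}$. When this event holds, $\bst_{i^*,1} \leq (1+t)\ul{\st}_{i^*,2} = \ul{\st}_{i^*+1,1}/(20\anticoncgap)$, which is exactly the desired conclusion.

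I don't anticipate any real obstacle: unlike \Cref{claim:imhungry}, which relied on several preceding analysis checks having passed to get a useful upper bound on $\ul{\st}_{i^*+2,0}$, here the hyperconcentration parameter $\eta$ is free to depend on $\zeta_{i^*,1}$, and the failing check itself supplies the only quantitative input that is needed (the lower bound $\zeta_{i^*,1} > 100\anticoncgap$). The only thing worth being slightly careful about is confirming that $\ul{\st}_{i^*,2} > 0$ so that ratios and the application of \Cref{prop:hypermarkov} make sense, but this follows because $\ul{\st}_{i^*+1,1} > 100\anticoncgap \ul{\st}_{i^*,2}$ is vacuous unless $\ul{\st}_{i^*+1,1}> 0$, and the latter forces $\st_{i^*,0}$ not to be identically zero (whence $\ul{\st}_{i^*,2} > 0$), as observed in the beginning-of-case analysis earlier in \Cref{sec:page13}.
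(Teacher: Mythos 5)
Your proof is correct and follows essentially the same route as the paper's: apply \Cref{cor:usefulhvbound} at $(i^*,1)$, use the failed diagonal check to get $\zeta_{i^*,1} > 100\anticoncgap$, and apply \Cref{prop:hypermarkov} with $t = \zeta_{i^*,1}/(20\anticoncgap) - 1 \geq \zeta_{i^*,1}/(40\anticoncgap)$ to conclude. The only minor differences are cosmetic — you make explicit the downgrade from $2\taylor$ to $\taylor$ in the hyperconcentration exponent and the positivity check on $\ul{\st}_{i^*,2}$, both of which the paper leaves implicit.
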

\begin{proof}
    Applying \Cref{ineq:useme} to $\bst_{i^*,1}$, we get that $\bst_{i^*,1}$ is $(\taylor, 4\max\{\sqrt{\zeta_{i^*,1}},\zeta_{i^*,1}\})$-hyperconcentrated, and  \Cref{ineq:firstdiag} tells us that $\zeta_{i^*,1}/ \anticoncgap > 100$.
    Applying \Cref{prop:hypermarkov} to $\bst_{i^*,1}$ with its ``$t$'' parameter set to $\zeta_{i^*,1}/(20\anticoncgap) - 1 > \zeta_{i^*,1}/(40\anticoncgap)$, we get that except with failure probability at most

    \begin{equation} \label{eq:fpbound}
    \parens*{{\frac \eta t}}^\taylor \le
        \parens*{\frac{4\max\{\sqrt{\zeta_{i^*,1}},\zeta_{i^*,1}\}}{\zeta_{i^*,1}/(40\anticoncgap)}}^\taylor = (160\anticoncgap \max\{\zeta_{i^*,1}^{-1/2},1\})^\taylor,
    \end{equation}
    we have
    \[
        \bst_{i^*,1} \leq (t+1)  \E[\bst_{i^*,1}] = (\zeta_{i^*,1}/(20\anticoncgap)) \cdot \E[\bst_{i^*,1}]  =
         (\zeta_{i^*,1}/(20\anticoncgap)) \cdot \ul{\st}_{i^*,2}
        = \ul{\st}_{i^*+1,1}/(20\anticoncgap),
    \]
where the second equality is by \Cref{fact:themean} and the third is by the definition of $\zeta_{i^*,1}$ (recalling \Cref{ineq:useme}).    Using again $\zeta_{i^*,1} > 100 \anticoncgap$, the failure probability bound \Cref{eq:fpbound} is at most $16^\taylor \cdot \anticoncgap^{\taylor/2}$.
\end{proof}

Putting \Cref{claim:imhungry} and \Cref{claim:imthirsty} together, we conclude that
\[
    \bst_{i^*+1,0} \geq 10 \anticoncgap \bst_{i^*,1} \quad \text{except with probability } O(\anticoncgap)^{\taylor/2},
\]
which establishes \Cref{ineq:muffin} for the $i^*$-th local hyperconcentration mollifier check (see \Cref{sec:checking-local-hyperconcentration}), as desired.\ignore{\rasnote{This had said ``(using $10 \geq e^2$)'' but I'm not sure how $e^2$ comes into it?  It's fine though, this is minor.}
This concludes the analysis of the last case, so \Cref{ineq:muffin} is established and with it the proof of \Cref{lem:page13} is complete.
}


\section{Proof of \Cref{lem:page14}:  Using a Taylor-based argument if all analysis checks pass} \label{sec:page14}

As per the assumptions of \Cref{lem:page14}, in this section we completely fix an $x = \ul{x} \in \R^n$ such that all of the analysis checks pass, and we let~$\bz$ denote a $\wise$-wise independent $n$-dimensional Gaussian random vector.  We recall the notation from the start of \Cref{sec:hypervariance-bound},
\[
  \bw = \sqrt{1-\noisefixed}{\ul{x}} + \sqrt{\noisefixed}\bz, \qquad  \ul{\st}_{i,j} = \st_{i,j}(\ul{x}), \qquad \bst_{i,j} = \st_{i,j}(\bw) = \zoom{(\st_{i,j})}{\noisefixed}{\ul{x}}(\bz),
\]
and we note that we will again be making use of \Cref{cor:usefulhvbound} in the arguments that follow. We further introduce the notation
\[
\mu_p \coloneqq \E[p(\bw)].
\]

To prove \Cref{lem:page14} we must show that the expectation of the random variable
\begin{equation} \label{eq:indplusbw}
    \Indplus(\bw) = \allsoftchecks\parens*{\bw} \cdot \bone[\sign(\p(\bw))],
\end{equation}
is determined up to an additive $\pm \poly((\taylor \degree)^\taylor) \cdot \anticoncgap^{\taylor/2}$ just by virtue of $\bz$ being $\wise$-wise independent.
To do this it is useful to observe that $\allsoftchecks(\bw)$ depends only on the $\bst_{i,j}$ random variables; more precisely, we may rewrite \Cref{eq:indplusbw} as
\begin{equation} \label{eq:indplusbw-rewritten}
\Indplus(\bw) = \wt{\allsoftchecks}\parens*{\{\bst_{i,j}\}_{0 \leq i \leq \degree, 0 \leq j \leq \maxcol-1}} \cdot \bone[\sign(\p(\bw))],
\end{equation}
where the function $\wt{\allsoftchecks}: (\R^{\geq 0})^{(\degree+1) \times \maxcol} \to [0,1]$ is defined in the obvious way,\ignore{\rasnote{Is it okay to have the domain being with $\R^{\geq 0}$ rather than $\R^+$?  I guess if the ratio $\st_u/\st_v$ is $0/0$ we defined $0/0=+\infty$, and if $\st_u/\st_v =$ (something positive but tiny)/0 then again the ratio is $+\infty$ and we're okay. So I think it's well defined and smooth. Do you agree?}}
\begin{equation} \label{eq:wt-allsoftchecks}
\wt{\allsoftchecks}\parens*{\{\st_{i,j}\}_{0 \leq i \leq \degree, 0 \leq j \leq \maxcol-1}}
\coloneqq
\prod_{(\ineq = (c, \st_u, \st_v), \softness) \in \MollifierChecks} \sigma\parens*{\softness^{-1} \ln\parens*{\frac{\st_u}{c \st_v}}}.
\end{equation}

We begin to analyze \Cref{eq:indplusbw-rewritten} by first analyzing the simpler random variable $\bone[\sign(\p(\bw))]$ which is a part of $\Indplus(\bw)$:

\begin{claim} \label{claim:sign-p-almost-constant}
$\Pr[\sign(p(\bw))] \neq \sign(\mu_p)] \leq O(\anticoncgap)^{\taylor/2}$.
\end{claim}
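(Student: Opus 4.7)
The plan is to use the fact that all analysis checks pass at $\ul x$ to show that the zoomed polynomial $\zoom{p}{\noisefixed}{\ul x}$ is $(\hyperfixed, O(\anticoncgap))$-attenuated, and then to deduce via \Cref{lem:attenuated-hyperconcentrated} that $p(\bw) = \zoom{p}{\noisefixed}{\ul x}(\bz)$ is hyperconcentrated around $\mu_p$, so that its sign matches $\sign(\mu_p)$ except with the claimed probability $O(\anticoncgap)^{\taylor/2}$. We may assume $\mu_p \neq 0$, for otherwise the attenuation bound below forces $\zoom{p}{\noisefixed}{\ul x} \equiv 0$, giving $p(\bw) \equiv 0$ and making the claim trivial.

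First I would extract attenuation of $\zoom{p}{\noisefixed}{\ul x}$ from the passing checks. The $i = 0$ diagonal analysis check gives $\ul{\st}_{1,1} \leq 100\anticoncgap \cdot \ul{\st}_{0,2}$. Since all horizontal analysis checks in row $i = 1$ pass and $\horzanal \leq 1/(12\maxcol(2\degree + 1))$, the noise insensitivity extension lemma (\Cref{lem:magicker-lemma}) applied to $r_0 = \st_{1,0}$ with $a = 1 - \noisefixed$ yields $\ul{\st}_{1,0} \approx_1 \ul{\st}_{1,1}$; the $i = 0$, $j = 1$ horizontal check directly gives $\ul{\st}_{0,1} \approx_{\horzanal} \ul{\st}_{0,2}$. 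Chaining these three bounds, we obtain $\ul{\st}_{1,0} \leq 100 e^2 \anticoncgap \cdot \ul{\st}_{0,1}$. By \Cref{eq:alt-sij} and \Cref{eq:stats-as-zooms} applied to $\calF_{0,0}$ (the single-atom distribution on $p$), this reads exactly as
\[
\HV_{\hyperfixed}[\zoom{p}{\noisefixed}{\ul x}] \leq O(\anticoncgap) \cdot \|\zoom{p}{\noisefixed}{\ul x}\|_2^2,
\]
so $\zoom{p}{\noisefixed}{\ul x}$ is $(\hyperfixed, O(\anticoncgap))$-attenuated.

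Next, since $\hyperfixed = \Theta(\taylor)$ with the hidden constant chosen large enough that $1 + \tfrac{1}{2}\hyperfixed^2 \geq \taylor$, \Cref{lem:attenuated-hyperconcentrated} implies that for $\by \sim \normal(0,1)^n$ the random variable $\zoom{p}{\noisefixed}{\ul x}(\by)$ is $(\taylor, O(\sqrt{\anticoncgap}))$-hyperconcentrated around $\mu_p$, so
\[
\E_{\by}\bigl[|\zoom{p}{\noisefixed}{\ul x}(\by) - \mu_p|^{\taylor}\bigr] \leq O(\anticoncgap)^{\taylor/2} \cdot |\mu_p|^{\taylor}.
\]
The integrand is a polynomial in $\by$ of degree at most $\taylor\degree$, and by \Cref{eq:len-wise} we may take $\wise = \Theta(\degree) \geq \taylor\degree$, so the $\wise$-wise independence of $\bz$ preserves this expectation when $\by$ is replaced by $\bz$. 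Markov's inequality applied to $|p(\bw) - \mu_p|^\taylor$ then gives $\Pr[|p(\bw) - \mu_p| \geq |\mu_p|] \leq O(\anticoncgap)^{\taylor/2}$, and whenever $|p(\bw) - \mu_p| < |\mu_p|$ we automatically have $\sign(p(\bw)) = \sign(\mu_p)$, which proves the claim.

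The only nonroutine step is the derivation of attenuation: because the horizontal analysis checks omit the $j = 0$ column, bridging $\st_{1,0}$ and $\st_{1,1}$ requires the noise insensitivity extension lemma and hence the whole machinery of \Cref{sec:magic-lemma}. Everything afterward is a clean application of attenuation-implies-hyperconcentration together with the standard fact that $\wise$-wise independence matches polynomial moments up to degree $\wise$.
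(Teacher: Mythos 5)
Your proof is correct and follows essentially the same approach as the paper: derive $(\hyperfixed,O(\anticoncgap))$-attenuation of $\zoom{p}{\noisefixed}{\ul x}$ from the $i=0$ diagonal check, the $(i,j)=(0,1)$ horizontal check, and \Cref{lem:magicker-lemma} applied to $\st_{1,0}$; then pass to $(\taylor, O(\sqrt{\anticoncgap}))$-hyperconcentration via \Cref{lem:attenuated-hyperconcentrated}, transfer moments from $\by$ to $\bz$ using $\wise \geq \taylor\degree$, and finish with Markov. The only cosmetic differences are that you handle $\mu_p = 0$ up front via the attenuation inequality forcing $\zoom{p}{\noisefixed}{\ul x}\equiv 0$ (the paper defers this and uses hyperconcentration at $\mu = 0$ instead), and you express $\ul{\st}_{1,0}$ as a hypervariance via \Cref{eq:stats-as-zooms} rather than \Cref{fact:deriv-hypervar}; both routes are immediate from the definitions.
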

\begin{proof}
Recalling that $s_{0,0}=p^2$, by \Cref{def:sij} we have that
\[
    \ul{\st}_{0,1} = \Ex_{\by \sim \normal(0,1)^n}[\zoom{p}{\noisefixed}{\ul{x}}(\by)^2],
\]
and by \Cref{fact:deriv-hypervar} and \Cref{def:si0}\ignore{ and the $\wise$-wise independence of $\bz$,} we have that
\[
    \ul{\st}_{1,0} = \HyperVar_{\hyperfixed}[\zoom{p}{\noisefixed}{\ul{x}}\ignore{(\bw)}].
\]
Furthermore, since by the assumption of \Cref{lem:page14} we have that all the analysis checks pass at~$\ul{x}$, we may draw the following conclusions:
\[
   \ul{\st}_{1,1} \leq 100 \anticoncgap \ul{\st}_{0,2}, \quad \ul{\st}_{0,1} \approx_{\horzanal} \ul{\st}_{0,2}, \quad \ul{\st}_{1,0} \approx_{1} \ul{\st}_{1,1},
\]
where the first of these is by the $i=1$ diagonal analysis check passing, the second is by the $(i,j)=(0,1)$ horizontal analysis check passing, and the last of these follows from the noise insensitivity extension lemma \Cref{lem:magicker-lemma} and the passing of the horizontal analysis checks
$
    \st_{1,j}(\ul{x}) \approx_{\horzanal} \st_{1,j+1}(\ul{x})
$ for all $1 \leq j \leq \maxcol - 1.$
Combining these bounds we deduce that
\[
    \HyperVar_{\hyperfixed}[\zoom{p}{\noisefixed}{\ul{x}}\ignore{(\bw)}] = \ul{\st}_{1,0} \leq 100 e \anticoncgap \ul{\st}_{0,2} = 100 e^2 \anticoncgap  \Ex_{\by \sim \normal(0,1)^n}[\zoom{p}{\noisefixed}{\ul{x}}(\by)^2],
\]
or in other words $\zoom{p}{\noisefixed}{\ul{x}}$ is $(\hyperfixed, 100 e^2 \anticoncgap)$-attenuated. Since $\taylor \leq 1 + {\frac 1 2} \hyperfixed^2$, by \Cref{lem:attenuated-hyperconcentrated} we may conclude that for $\by \sim \normal(0,1)^n$, the random variable $\zoom{p}{\noisefixed}{\ul{x}}(\by)$ is $(\taylor, 10e \sqrt{\anticoncgap})$-hyperconcentrated.  Since the definition of $(\taylor, 10e \sqrt{\anticoncgap})$-hyperconcentration only uses $\taylor$-th moments, and $\zoom{p}{\noisefixed}{\ul{x}}$ has degree at most~$\degree$ and $\taylor \degree \leq \wise$, recalling that $\bz$ is $\wise$-wise independent we may conclude that the random variable $p(\bw)$ is also $(\taylor, 10e \sqrt{\anticoncgap})$-hyperconcentrated. (Recall \Cref{rem:why-wise}.)
Now applying \Cref{prop:hypermarkov} with its ``$t$'' parameter set to $1/2$, we get that \[
    \abs{p(\bw) - \mu_p} \leq (1/2) |\mu_p|, \quad \text{except with probability at most } O(\anticoncgap)^{\taylor/2},
\]
which is easily seen to imply the claim if $\mu_p \neq 0$.  (If $\mu_p=0$, then the above condition translates into $p(\bw)=0$ with probability at least (say) $0.9$, which can only be the case if $p$ is identically 0, in which case the claim holds trivially.)
\end{proof}

Since $\wt{\allsoftchecks}$ is bounded in $[0,1]$, as an immediate consequence of \Cref{claim:sign-p-almost-constant} we have that
$\Ex[\Indplus(\bw)]$ and $\Ex[\wt{\allsoftchecks}\parens*{\{\bst_{i,j}\}_{0 \leq i \leq \degree, 0 \leq j \leq \maxcol-1}}]$ can differ by at most an additive $O(\anticoncgap)^{\taylor/2}$.
Hence the remaining task, to prove \Cref{lem:page14}, is to show that
\begin{equation} \label{eq:end-of-suffering}
\wt{\allsoftchecks}\parens*{\{\bst_{i,j}\}_{0 \leq i \leq \degree, 0 \leq j \leq \maxcol-1}} \text{~is determined up to an additive~}\pm \poly((\taylor d)^\taylor) \cdot \anticoncgap^{\taylor/2}.
\end{equation}
In the rest of this section we do this as follows: first, in \Cref{sec:establishing-hyperconcentration}, we use  the assumption that all analysis checks pass at~$\ul{x}$ to show that the statistics $\bst_{i,j}$ are suitably hyperconcentrated. Next, in \Cref{sec:mollifier-fn-of-stats} we show that the $\wt{\allsoftchecks}$ function satisfies a certain technical ``relaxedness'' condition (essentially a bound on the magnitude of its derivatives). Finally, in \Cref{sec:taylor} we use hyperconcentration of the $\bst_{i,j}$ statistics and relaxedness of $\allsoftchecks$ in an argument based on Taylor's theorem to establish \Cref{eq:end-of-suffering}.

\subsection{Establishing hyperconcentration of the $\bst_{i,j}$'s} \label{sec:establishing-hyperconcentration}

The proof of the following lemma uses the assumption that all analysis checks pass at~$\ul{x}$:

\begin{lemma} \label{lem:we-got-hyperconcentration}
For $0 \leq i \leq \degree$, $0 \leq j \leq \maxcol-1$, the random variable $\bst_{i,j}$ is $(2\taylor, 40e \sqrt{\anticoncgap})$-hyperconcentrated.
\end{lemma}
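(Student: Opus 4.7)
The plan is to deduce \Cref{lem:we-got-hyperconcentration} from \Cref{cor:usefulhvbound} by using the passing analysis checks to give a uniform upper bound of $100 e^2 \anticoncgap$ on the key ratio
\[
    \zeta_{i,j} \;=\; \frac{\ul{\st}_{i+1,j}}{\ul{\st}_{i,j+1}}.
\]
Given such a bound, for our parameter regime $\anticoncgap \ll 1$ we have $\max\{\sqrt{\zeta_{i,j}}, \zeta_{i,j}\} = \sqrt{\zeta_{i,j}} \le 10 e\sqrt{\anticoncgap}$, whence \Cref{cor:usefulhvbound} immediately yields that $\bst_{i,j}$ is $(2\taylor,\,40 e \sqrt{\anticoncgap})$-hyperconcentrated.

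First, I would dispose of the boundary case $i = d$: by \Cref{fact:deriv-deg} and \Cref{def:si0}, $\st_{d,0}$ is a constant polynomial, and hence by \Cref{def:sij} so is $\st_{d,j}$ for every $j$. Therefore $\bst_{d,j}$ is a deterministic random variable, which is trivially $(2\taylor,0)$-hyperconcentrated. For the remainder of the argument we fix $0 \le i \le d-1$.

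Next, for $j \ge 1$ I would chain the passing horizontal checks in row $i+1$ (there are at most $j-1$ of them between columns $1$ and $j$) to get $\ul{\st}_{i+1,j} \le e^{(j-1)\horzanal}\,\ul{\st}_{i+1,1}$, apply the passing diagonal check $\ul{\st}_{i+1,1} \le 100\anticoncgap\,\ul{\st}_{i,2}$, and then chain the passing horizontal checks in row $i$ (between columns $2$ and $j+1$) to get $\ul{\st}_{i,2} \le e^{(j-1)\horzanal}\,\ul{\st}_{i,j+1}$. Since $j \le \maxcol-1$ and $\horzanal = 1/(100 d \maxcol)$, the exponent $(j-1)\horzanal$ is at most $1/(100d)$, comfortably less than $\tfrac12$. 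Multiplying the three inequalities yields $\zeta_{i,j} \le 100 e^2 \anticoncgap$.

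The $j=0$ case is the only genuinely delicate piece, since the $0$-th column is not touched by any horizontal analysis check. Here I would invoke the noise insensitivity extension lemma, \Cref{lem:magicker-lemma}, applied to the degree-$(2d)$ polynomial $\st_{i+1,0}$ with $a = 1 - \noisefixed$ and $\gamma = \horzanal$; the required hypotheses hold because $\noisefixed \le 1/(Cd^{10})$ by \Cref{eq:noisefixed} and because every passing horizontal analysis check in row $i+1$ supplies $\ul{\st}_{i+1,j} \approx_{\horzanal} \ul{\st}_{i+1,j+1}$ for $1 \le j \le \maxcol-1$. The lemma then gives $\ul{\st}_{i+1,0} \le e\cdot \ul{\st}_{i+1,1}$. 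Combining this with the diagonal check $\ul{\st}_{i+1,1} \le 100 \anticoncgap\,\ul{\st}_{i,2}$ and the single horizontal check $\ul{\st}_{i,2} \le e^{\horzanal}\,\ul{\st}_{i,1}$ yields $\ul{\st}_{i+1,0} \le 100 e^{1+\horzanal}\,\anticoncgap\,\ul{\st}_{i,1} \le 100 e^2 \anticoncgap\,\ul{\st}_{i,1}$, so $\zeta_{i,0} \le 100 e^2 \anticoncgap$, as desired. The main obstacle is really only this $j=0$ step, which relies crucially on the (quantitatively subtle) \Cref{lem:magicker-lemma} of \Cref{sec:magic-lemma}; everything else is routine chaining of multiplicative-closeness estimates, together with a straightforward invocation of \Cref{cor:usefulhvbound}.
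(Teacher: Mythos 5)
Your proof is correct and follows essentially the same route as the paper: bound each $\zeta_{i,j}=\ul{\st}_{i+1,j}/\ul{\st}_{i,j+1}$ by $100e^2\anticoncgap$ using the passing diagonal check, the chained horizontal checks, and \Cref{lem:magicker-lemma} for the column-$0$ numerator, then invoke \Cref{cor:usefulhvbound}. The only thing you omit is the degenerate case $\ul{\st}_{i,2}=0$ (in which $\st_{i,0}\equiv 0$, so $\bst_{i,j}$ is identically zero and the claim is trivial), which the paper dispatches explicitly before forming the ratios; also, the paper invokes \Cref{lem:magicker-lemma} for both rows $i$ and $i+1$ to assert a uniform $\approx_1$ bound over all columns, whereas you more economically note that only the numerator of $\zeta_{i,j}$ ever touches a column-$0$ statistic, so the extension lemma is needed only for $\st_{i+1,0}$.
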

\begin{proof}
Fix any $0 \leq i \leq \degree-1$.  Since the diagonal analysis check between the $i$-th and $(i+1)$-th row passes at~$\ul{x}$, we have (recalling \Cref{eqn:diag-checks}) that $\ul{\st}_{i+1,1} \leq 100 \anticoncgap \ul{\st}_{i,2}.$
Similar to the beginning of the proof of \Cref{cor:usefulhvbound}, if $\ul{\st}_{i,2}=0$ then $\bst_{i,j}$ must be the constant-0 random variable and the lemma holds. So we assume that $\ul{\st}_{i,2} > 0$ and we have that
\begin{equation} \label{eq:ulst}
    \frac{\ul{\st}_{i+1,1}}{\ul{\st}_{i,2}} \leq 100 \anticoncgap.
\end{equation}
Further, since the horizontal analysis checks in rows $i$ and $i+1$ all passed at~$\ul{x}$, we can apply \Cref{lem:magicker-lemma} to $\st_{i,0}$ and $\st_{i+1,0}$ to get that
\begin{equation} \label{eq:ulst2}
\text{for all~}0 \leq j \leq \maxcol, \quad \quad    \ul{\st}_{i,j} \approx_{1} \ul{\st}_{i,2}, \qquad \ul{\st}_{i+1,j} \approx_{1} \ul{\st}_{i+1,1}.
\end{equation}
Combining \Cref{eq:ulst} and \Cref{eq:ulst2}, and recalling the notation $\zeta_{i,j} = \frac{\ul{\st}_{i+1,j}}{\ul{\st}_{i,j+1}}$ from \Cref{ineq:useme}, we may conclude that
\[
  \text{for all~}0 \leq i \leq \degree - 1, 0 \leq j \leq \maxcol-1 \text{~we have that~} \zeta_{i,j} \leq 100 e^2 \anticoncgap.
\]
Putting this into \Cref{ineq:useme}, we get that
\[
    \text{for $0 \leq i \leq \degree-1$, $0 \leq j \leq \maxcol-1$,} \quad \bst_{i,j} \text{ is } ({2}\taylor, 40e \sqrt{\anticoncgap})\text{-hyperconcentrated}.
\]
This hyperconcentration trivially extends to $\bst_{\degree,j}$ because $\bst_{d,j}$ is a constant random variable, and \Cref{lem:we-got-hyperconcentration} is proved.
\end{proof}

\subsection{$\wt{\allsoftchecks}$ is relaxed} \label{sec:mollifier-fn-of-stats}

For notational simplicity in the remainder of this section, rather than using the $(i,j)$-indexing for the elements of $\bst = \parens*{\{\bst_{i,j}\}_{0 \leq i \leq \degree, 0 \leq j \leq \maxcol-1}}$, we will use generic indices $1 \leq u \leq m \coloneqq (\degree+1)\maxcol = \poly(\degree)$, so we write $\bst$ as $\bst = (\bst_1,\dots,\bst_m)$.

\begin{definition}
For $a \in \N$, $B \geq 1$, we say a function $\psi : (\R^{\geq 0})^m \to [0,1]$ is \emph{$(a,B)$-relaxed} if it is smooth and satisfies\ignore{\rodnote{We can either say that we take $1/0 = \infty$, an extended real, or we could annoyingly rewrite as $|s|^{\alpha} \cdot \abs{\partial_\alpha \psi(s)} \leq B$.}}
    \[
\text{for all $\alpha \in \N^n$ with~} 0 \leq |\alpha| \leq a \text{~and all~}s \in (\R^{\geq 0})^m, \quad \quad         |s|^{\alpha} \cdot \abs{\partial_\alpha \psi(s)} \leq B.
    \]
\end{definition}
\begin{lemma}
[$\wt{\allsoftchecks}$ is relaxed]
\label{lem:mollifier-is-relaxed}
  The function $\psi = \wt{\allsoftchecks}$ is $(2\taylor,\poly((\taylor\degree)^\taylor))$-relaxed.
\end{lemma}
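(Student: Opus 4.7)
The plan is to prove the relaxedness bound by a direct computation using Leibniz's rule for derivatives of products together with Faà di Bruno's formula for derivatives of compositions. The key structural observation is that $\wt{\allsoftchecks}$ is a product of $|\MollifierChecks| = O(d \maxcol) = \poly(d)$ factors, each of which is a function $\sigma(f_\chk(s))$ where $f_\chk(s) = \softness^{-1}(\ln s_u - \ln s_v - \ln c)$ depends on only \emph{two} coordinates of~$s$. Moreover $|\alpha| \leq 2\taylor = 8$ is a constant, so when we apply Leibniz, at most~$2\taylor$ of the product factors receive any derivative at all (and the undifferentiated ones contribute values in $[0,1]$).

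First I would show the ``one-factor'' bound: for any check $\chk$ with softness $\softness$ and any $|\beta| \leq 2\taylor$,
\[
    |s|^{\beta} \cdot |\partial_\beta \sigma(f_\chk(s))| \leq (2\taylor)^{O(\taylor)} \cdot \softness^{-|\beta|}.
\]
This follows from Faà di Bruno applied to $\sigma \circ f_\chk$: each resulting summand is a product of a term $\sigma^{(r)}(f_\chk(s))$ (bounded in magnitude by $r^{O(r)} \leq \taylor^{O(\taylor)}$ using the hypothesis on~$\sigma$ in \Cref{def:sigma}, and using $\taylor \leq d$), times a product of derivatives of $f_\chk$. Because $f_\chk$ is a linear combination of $\ln s_u$ and $\ln s_v$ scaled by $\softness^{-1}$, any $k$-th partial derivative of $f_\chk$ in $s_u$ has the form $(-1)^{k-1}(k-1)!\,\softness^{-1}\, s_u^{-k}$ (and analogously for $s_v$). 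Each such factor contributes exactly one $s_u^{-k}$ (or $s_v^{-k}$), and these are exactly cancelled by the $s^\beta$ prefactor; the total power of $\softness^{-1}$ picked up across the Faà di Bruno terms is at most $|\beta|$, giving the claimed bound.

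Next I would apply Leibniz across the product:
\[
    s^\alpha \partial_\alpha \wt{\allsoftchecks}(s)
    = \sum_{\alpha = \sum_\chk \alpha_\chk} \binom{\alpha}{(\alpha_\chk)_\chk} \prod_\chk s^{\alpha_\chk} \partial_{\alpha_\chk} \sigma(f_\chk(s)) \cdot \prod_{\chk : \alpha_\chk = 0} \sigma(f_\chk(s)).
\]
Since $|\alpha| \leq 2\taylor$, the number of terms in the Leibniz sum with nonzero $\alpha_\chk$ at most $2\taylor$ positions is bounded by $|\MollifierChecks|^{2\taylor} = \poly(d^\taylor)$, and each multinomial coefficient is at most $(2\taylor)!$. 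The undifferentiated factors are each in $[0,1]$. Applying the one-factor bound to each nonzero $\alpha_\chk$ gives a total contribution of $(2\taylor)^{O(\taylor)} \cdot \softness_{\min}^{-|\alpha|}$, where $\softness_{\min}$ is the smallest softness appearing in $\MollifierChecks$. By \Cref{sec:MollifierChecks-def}, all softness parameters are either~$1$ or $\horizclose = \Theta(1/(d\maxcol)) = \poly(1/d)$, so $\softness_{\min}^{-|\alpha|} \leq \poly(d^\taylor)$. Multiplying the bounds, $|s|^\alpha |\partial_\alpha \wt{\allsoftchecks}(s)| \leq \poly((\taylor d)^\taylor)$, which is the claim.

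The argument is essentially routine once the structure is laid out; the main thing to be careful about is bookkeeping the cancellation between $s^\alpha$ and the $s^{-k}$ factors arising from derivatives of $\ln$, and confirming that the total negative power of softness is at most $|\alpha|$ (rather than something larger stemming from the Faà di Bruno partition structure). The compactness of the product-plus-Leibniz expansion, combined with the fact that $f_\chk$ depends on only two coordinates, keeps the exponent of $\softness^{-1}$ equal to the order of the derivative taken on the given factor, so no blowup occurs beyond $\poly((\taylor d)^\taylor)$.
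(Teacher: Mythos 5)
Your proposal is correct and takes essentially the same approach as the paper's proof: both are direct derivative computations that track the factors of $\softness^{-1}/s_u$ picked up when differentiating $\sigma(\softness^{-1}\ln(s_u/s_v))$ and note that these are cancelled by the $s^\alpha$ prefactor. The paper organizes the bookkeeping as a single induction on $|\alpha|$ over ``$\psi$-like'' terms, whereas you separate it into a per-factor Fa\`a di Bruno bound followed by Leibniz across the product, but the underlying accounting and the resulting $\poly((\taylor d)^\taylor)$ bound are the same.
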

\begin{proof}
Let us write $B_0$ to denote the maximum of $1/\delta$ where $\delta$ ranges over all of the ``softness parameters'' involved in the definition of $\psi=\wt{\allsoftchecks}$ (recall \Cref{eq:wt-allsoftchecks}). Recalling \Cref{sec:checking-local-hyperconcentration} and \Cref{sec:checking-insensitivity} we have that $B_0 = 1/\horizclose = \poly(d).$

    Say that a function $\phi(s)$ is ``$\psi$-like'' if it takes the form of the right-hand side of \Cref{eq:wt-allsoftchecks}, except that some of the multiplicands may have derivatives $\sigma'$, $\sigma''$, $\sigma'''$, etc., in place of~$\sigma$.  We show by induction on $a = |\alpha|$ that $\partial_\alpha \psi(s)$ consists of a sum of at most $(m+1)^a$ terms, each being of the form $b \frac{\phi(s)}{s^\alpha}$, where $\phi(s)$ is $\psi$-like and~$b$ is a constant that is at most $(aB_0)^a$ in magnitude.
The base case $a = 0$ is immediate.  For the induction step, we consider differentiating a term $b \frac{\phi(s)}{s^\alpha}$ with respect to some $s_u$.  We view this term as a product of up to~$m+1$ factors involving~$s_u$, namely the multiplicands in~$\phi(s)$ involving~$s_u$, and also any power $s_u^{j}$ in the denominator.  Now we use the calculus product rule. Differentiating a multiplicand of the form $\sigma^{(i)}(\pm \softness^{-1} \ln(s_u/s_v) + \text{const.})$ with respect to~$s_u$ gives a similar factor, but with a higher derivative $\sigma^{(i+1)}$ and picking up a factor of $\pm \softness^{-1}/s_u$.  Thus we indeed get another $\psi$-like term, with an extra factor of~$s_u$ in the denominator and a constant factor increased in magnitude by at most~$B_0 \leq aB_0$, as is sufficient for the induction.  Similarly, differentiating the factor of $1/s_u^j$ picks up a constant factor of~$j \leq a \leq aB_0$ in magnitude, as well as an extra factor of $s_u$ in the denominator.  This completes the induction.

Next, we observe that in all the $\psi$-like terms that are present in $\partial_\alpha \psi(s)$, the maximum-order derivative on~$\sigma$ that arises is at most $a = |\alpha|$.  Recalling \Cref{def:sigma}, for $a>0$ all of these $\psi$-like terms are uniformly bounded in magnitude on $(\R^{\geq 0})^m$ by $a^{O(a)}$ (and when $a=0$ we have that the desired inequality holds since $|s|^0 \cdot |\psi(s)| = |\psi(s)| \leq 1$).
    Given the induction and these observations, it is clear that $\psi$ is $(2\taylor, B)$-relaxed, where $B = (m+1)^{2\taylor} \cdot (2\taylor B_0)^{2\taylor} \cdot O(2\taylor)^{2\taylor} = \poly((\taylor \degree)^\taylor)$.
\end{proof}

\begin{remark}
\Cref{lem:mollifier-is-relaxed} is analogous to Lemma~18 of \cite{kane11focs}, which similarly gives an upper bound on the partial derivatives of the mollifier of \cite{kane11focs}.  The upper bound given in \cite{kane11focs} is exponential in $d$ because of $\exp(d)$-type factors which are involved in the definition of the mollifier in that work.
\end{remark}

\subsection{The core Taylor's theorem argument} \label{sec:taylor}

In this subsection we prove the following lemma:
\begin{lemma}                                     \label{lem:taylor}
    Let $\bst = (\bst_1, \dots, \bst_m)$ be a vector of nonnegative random variables, each of which is $(2\taylor,\eta)$-hyperconcentrated. Let $\psi : (\R^{\geq 0})^m \to [0,1]$ be $(2\taylor,C)$-relaxed.  Assume $\eta \leq \frac{1}{4m}$.  Then up to an additive error of $C \cdot O(m)^\taylor \cdot \eta^\taylor$, the expectation $\E[\psi(\bst)]$ is  determined by the moments of $\bst$ of degree up to~$2\taylor$.
\end{lemma}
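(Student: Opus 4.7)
The idea is to Taylor-expand $\psi$ around $\mu \coloneqq \E[\bst]$ to degree $\taylor - 1$, writing $\psi(\bst) = T(\bst) + R$, where
\[
T(s) \coloneqq \sum_{|\alpha| < \taylor} \frac{\partial_\alpha \psi(\mu)}{\alpha!} (s - \mu)^\alpha
\]
is a polynomial of degree at most $\taylor - 1$ in~$s$ whose coefficients depend only on $\mu$, and $R$ is the Lagrange remainder. Since $\mu$ is a vector of first moments of $\bst$, the quantity $\E[T(\bst)]$ is determined by moments of $\bst$ of degree at most $\taylor - 1$, which is well within the allotted budget of $2\taylor$. It therefore suffices to show $|\E[R]| \leq C \cdot O(m)^\taylor \eta^\taylor$. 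To do so, I would split according to the ``good'' event $G = \{\forall i: |\bst_i - \mu_i| \leq \mu_i/2\}$ and its complement $B$; by \Cref{prop:hypermarkov} (applied at $t = 1/2$ in each coordinate) and a union bound, $\Pr[B] \leq m(2\eta)^{2\taylor}$.

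\textbf{Good event.} The Lagrange form of Taylor's theorem gives $R = \sum_{|\alpha| = \taylor} \tfrac{1}{\alpha!}\, \partial_\alpha \psi(\xi)\, (\bst - \mu)^\alpha$ for some $\xi$ on the segment between $\mu$ and $\bst$. On $G$ each coordinate $\xi_i$ lies in $[\mu_i/2, 3\mu_i/2]$, so the $(2\taylor, C)$-relaxedness of $\psi$ yields $|\partial_\alpha \psi(\xi)| \leq 2^\taylor C/\mu^\alpha$. Combining this with the bound $\E[|\bst - \mu|^\alpha] \leq \eta^\taylor \mu^\alpha$ from \Cref{lem:hyperholder} (applicable since $|\alpha| = \taylor \leq 2\taylor$), the multinomial identity $\sum_{|\alpha|=\taylor} 1/\alpha! = m^\taylor/\taylor!$, and Stirling's formula, one obtains $\E[|R|\bone_G] \leq C \cdot (2em\eta/\taylor)^\taylor$, comfortably within the target.

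\textbf{Bad event and obstacle.} On $B$ the intermediate point $\xi$ may lie outside any safe range, so the Taylor remainder form is useless. Instead I would bound $|R| \leq |\psi(\bst)| + |T(\bst)| \leq 1 + |T(\bst)|$ and apply Cauchy--Schwarz to get $\E[|R|\bone_B] \leq \Pr[B] + \|T(\bst)\|_2 \sqrt{\Pr[B]}$. The crucial computation is that $T(\bst)^2$ has degree at most $2\taylor - 2 \leq 2\taylor$; expanding $T(\bst)^2 = \sum_{\alpha,\beta} c_\alpha c_\beta (\bst - \mu)^{\alpha+\beta}$ (with $|c_\alpha| \leq C/(\alpha! \mu^\alpha)$ from relaxedness) and applying \Cref{lem:hyperholder} to each term yields $\E[T(\bst)^2] \leq C^2 \bigl( \sum_{|\alpha| < \taylor} \eta^{|\alpha|}/\alpha! \bigr)^2 = O(C^2)$, using $m\eta \leq 1/4$ to sum the exponential series. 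Hence $\E[|R|\bone_B] \leq O(C) \sqrt{m}(2\eta)^\taylor \leq C \cdot O(m)^\taylor \eta^\taylor$ since $\sqrt{m}\, 2^\taylor \leq O(m)^\taylor$. The main subtlety is precisely the choice of Taylor degree: stopping the expansion at $\taylor - 1$ rather than $2\taylor - 1$ sacrifices a sharper good-event bound, but is exactly what is needed to keep $\|T(\bst)\|_2$ within the range controlled by the $2\taylor$-moment hyperconcentration, so that the Cauchy--Schwarz step on the bad event goes through.
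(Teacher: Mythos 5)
Your proposal matches the paper's proof essentially line for line: Taylor expansion to degree $\taylor-1$ around $\mu=\E[\bst]$, splitting the remainder analysis into a ``good'' event where each $\bst_i$ is multiplicatively close to $\mu_i$ (handled via relaxedness plus \Cref{lem:hyperholder}) and a ``bad'' event handled by $|R|\le 1+|T(\bst)|$, Cauchy--Schwarz, and a second-moment bound on the Taylor polynomial. The only small point the paper attends to that you pass over is the degenerate case $\mu_i=0$: relaxedness bounds $|\partial_\alpha\psi|$ by $C/|s|^\alpha$, which is vacuous when $s_i=0$, so one needs the observation (immediate from the definition of hyperconcentration) that $\mu_i=0$ forces $\bst_i=0$ almost surely, making the corresponding terms vanish.
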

Given \Cref{lem:we-got-hyperconcentration} and \Cref{lem:mollifier-is-relaxed}, observing that for our mollfier $\psi = \allsoftchecks$ we have $m=\poly(d)$ and hence $40e \sqrt{\anticoncgap} \ll {\frac 1 {4m}}$, we can indeed combine these results with \Cref{lem:taylor}. Recalling that the moments of $\bst$ are determined by our assumptions and parameter settings (since each statistic $\st_{i,j}$ has degree at most $2\degree$, the random variable $\bz$ is a $\wise$-wise independent Gaussian, and $2 \degree \cdot 2 \taylor \leq \wise$), this establishes \Cref{eq:end-of-suffering} as desired.

\begin{proofof}{\Cref{lem:taylor}}
We write $\mu$ to denote $\E[\bst]$. For a generic $s \in (\R^\geq 0)^m$,
Taylor's theorem implies that:
    \[
        \psi(s) = P(s) + \text{Err}(s), \quad \text{where } P(s) = \sum_{0 \leq |\alpha| < \taylor} \tfrac{1}{\alpha!} \partial_{\alpha} \psi(\mu) \cdot (s-\mu)^\alpha, \quad \text{Err}(s) = \sum_{|\alpha| = \taylor} \tfrac{1}{\alpha!}  \partial_\alpha \psi(s^*) \cdot (s-\mu)^\alpha
    \]
    for some $s^*$ on the open line segment from $\mu$ to~$s$.
As $P$ is a polynomial in~$s$ of degree at most $\taylor-1$, we have that $\E[P(\bst)]$ is exactly determined by the moments of~$\bst$ of degree up to~$\taylor-1$.  It therefore suffices to bound
    \begin{equation}    \label[ineq]{ineq:1-2}
        \abs{\E\bracks*{\text{Err}(\bst)}}
        = \abs*{\E\bracks*{\text{Err}(\bst)} \cdot \parens*{\bone_{\{\bst \approx_1 \mu\}}+ \bone_{\{\bst \not \approx_1 \mu\}}}}
        \leq \E\bracks*{\abs{\text{Err}(\bst)} \cdot \bone_{\{\bst \approx_1 \mu\}}} + \E\bracks*{\abs*{\text{Err}(\bst)} \cdot \bone_{\{\bst \not \approx_1 \mu\}}},
    \end{equation}
    where $\bst \approx_1 \mu$ means that $\bst_i$ is within a multiplicative factor of~$e$ of~$\mu_i$ (i.e., $\bst_i \approx_1 \mu_i$) for each $1 \leq i \leq m$.   Note that when this event occurs we also have $\bst^* \approx_1 \mu$, and hence by the $(2\taylor,C)$-relaxed property of $\psi,$ for all $\alpha$ such that $|\alpha|=T$, we have
     \begin{equation} \label{eq:copacetic}
       |\mu|^{\alpha} \cdot \abs{\partial_\alpha \psi(\bst^*)} \leq e^{\taylor} C .
    \end{equation}
    We proceed to analyze $ \E\bracks*{\abs{\text{Err}(\bst)} \cdot \bone_{\{\bst \approx_1 \mu\}}}$ as follows: we have
    \begin{align}
     \E\bracks*{\abs{\text{Err}(\bst)} \cdot \bone_{\{\bst \approx_1 \mu\}}}
     \le \sum_{|\alpha| = \taylor}  \abs{\partial_\alpha \psi(\bst^*)}
     \E\bracks*{|(\bs - \mu)^\alpha| \cdot \bone_{\{\bst \approx_1 \mu\}}}. \label{eq:nifty}
     \end{align}
     Now, if an $\alpha$ with $|\alpha|=\taylor$ has $|\mu|^\alpha=0$, this means that there is an index $\alpha_i>0$ such that $\mu_i = 0$. Recalling the definition of hyperconcentration (\Cref{def:hyperconcentrated}), it must be the case that $\bs_i$ is zero with probability 1 and hence any such $\alpha$ contributes zero to the right-hand side of \Cref{eq:nifty}. Thus we have that
     \begin{align}
        \E\bracks*{\abs{\text{Err}(\bst)} \cdot \bone_{\{\bst \approx_1 \mu\}}} &\leq
        \sum_{|\alpha| \leq \taylor, |\mu|^\alpha > 0}  \abs{\partial_\alpha \psi(\bst^*)}
     \E\bracks*{|(\bs - \mu)^\alpha| \cdot \bone_{\{\bst \approx_1 \mu\}}} \nonumber\\
&\le
        e^\taylor C \sum_{|\alpha| \leq \taylor, |\mu|^\alpha > 0} |\mu|^{-\alpha} \E\bracks*{|\bst - \mu|^\alpha} \tag{using \Cref{eq:copacetic}} \nonumber\\
        & \leq e^\taylor C \cdot \#\{\alpha : |\alpha| = \taylor\}\cdot \eta^\taylor  \tag{\Cref{lem:hyperholder}}
        \nonumber\\
        &  \leq C (em)^\taylor  \cdot \eta^\taylor.
        \label[ineq]{ineq:yo1}
    \end{align}
For the second term in \Cref{ineq:1-2} we use Cauchy--Schwarz:
     \begin{multline}   \label[ineq]{ineq:yo2a}
        \E\bracks*{\abs*{\text{Err}(\bst)} \cdot \bone_{\{\bst \not \approx_1 \mu\}}} \leq \sqrt{\E\bracks*{\abs*{\text{Err}(\bst)}^2}}\sqrt{\E\bracks*{\bone_{\{\bst \not \approx_1 \mu\}}^2}}
        =  \sqrt{\E\bracks*{\abs*{\psi(\bst) - P(\bst)}^2}} \sqrt{\Pr[\bst \not \approx_1 \mu]} \\ \leq \parens*{1+\sqrt{\E\bracks*{P(\bst)^2}}} \cdot \sqrt{m} \cdot (2\eta)^\taylor,
     \end{multline}
where in the last step we used $|\psi(\bst)| \leq 1$ for the first factor, and \Cref{prop:hypermarkov} (plus a union bound over the $m$ coordinates of $\bst$) for the second factor.
Our handling of $\E\bracks*{P(\bst)^2}$ will be similar to \Cref{ineq:yo1}: we have that
 \begin{align}
        \E\bracks*{P(\bst)^2}  &\leq  \sum_{0 \leq |\alpha|,|\beta| < \taylor} \abs{\partial_\alpha\psi(\mu)}\cdot \abs{\partial_\beta\psi(\mu)} \cdot \E\bracks*{\abs{\bst-\mu}^{\alpha+\beta}}\nonumber \\
        &=  \sum_{0 \leq |\alpha|,|\beta| < \taylor: |\mu|^{\alpha+\beta}>0} \abs{\partial_\alpha\psi(\mu)}\cdot \abs{\partial_\beta\psi(\mu)} \cdot \E\bracks*{\abs{\bst-\mu}^{\alpha+\beta}}\nonumber \\
        & \leq  \sum_{0 \leq |\alpha|,|\beta| < \taylor: |\mu|^{\alpha+\beta}>0}  C^2 |\mu|^{-\alpha - \beta} \cdot \E\bracks*{\abs{\bst-\mu}^{\alpha+\beta}} \tag{$\psi$ is $(2\taylor,C)$-relaxed}\\
        & \leq C^2 \sum_{0 \leq |\alpha|,|\beta| < \taylor} \eta^{|\alpha+\beta|} \tag{\Cref{lem:hyperholder}} \\
        & \leq C^2 \sum_{k=0}^\taylor (2m\eta)^k \leq 2C^2,  \label[ineq]{ineq:yob}
    \end{align}
    where the last inequality used the assumption $\eta \leq \frac{1}{4m}$ and the equality uses reasoning similar to our earlier analysis of $|\alpha|=\taylor$ such that $|\mu|^\alpha=0.$  Putting \Cref{ineq:yo1,ineq:yo2a,ineq:yob} into \Cref{ineq:1-2} yields
    \[
        \abs{\E\bracks*{\text{Err}(\bst)}} \leq C (em)^\taylor \cdot \eta^\taylor + (1+\sqrt{2}C) \cdot \sqrt{m}  \cdot (2\eta)^\taylor = C \cdot O(m)^\taylor \cdot \eta^\taylor,
    \]
    as claimed.

    This concludes the proof of \Cref{lem:taylor} and thus also of \Cref{lem:page14}.
\end{proofof}

\section*{Acknowledgments}
We thank Avi Wigderson for a key conceptual suggestion toward the proof of \Cref{thm:HZL}. 

R.O.~is supported by NSF grant CCF-1717606.  R.A.S.~is supported by NSF grants CCF-1814873, IIS-1838154, CCF-1563155, and by the Simons Collaboration on Algorithms and Geometry.  L.-Y.T.~is supported by NSF grant CCF-1921795.  This material is based upon work supported by the National Science Foundation under grant numbers listed above. Any opinions, findings and conclusions or recommendations expressed in this material are those of the author and do not necessarily reflect the views of the National Science Foundation (NSF).

\bibliography{allrefs}{}
\bibliographystyle{alpha}

\appendix
\section{Omitted proofs}    \label{app:omitted}

\begin{proofof}{\Cref{prop:2}}
Our goal will be to establish the following identity:
\begin{equation}    \label{eqn:verif}
    \restr{g}{\noisegeneric}{x}(y) = g(\sqrt{1-\noisegeneric} x + \sqrtnoise y) = \sum_{\alpha,\beta \in \N^n} \wh{g}(\alpha+\beta) 
    \sqrt{\Pr[\textnormal{Bin}(\alpha+\beta,\noisegeneric) = \beta]}\,h_\alpha(x)\,h_\beta(y).
\end{equation}
From this we immediately deduce
\[
    \wh{\restr{g}{\noisegeneric}{x}}(\beta) = \sum_{\gamma \geq \beta} \wh{g}(\gamma) \sqrt{ \Pr[\textnormal{Bin}(\gamma,\noisegeneric) = \beta]}\, h_{\gamma - \beta}(x).
\]
Now Parseval's identity (i.e., taking the expected square over $\bx \sim \normal(0,1)^n$ and using orthonormality of the $h_{\gamma-\beta}$'s) yields \Cref{prop:2}.

It remains to verify \Cref{eqn:verif}.  This identity is a direct consequence of the following univariate special case:
\begin{equation}    \label{eqn:add-hermite}
    h_m(\sqrt{1-\noisegeneric} x + \sqrtnoise y) = \sum_{i+j=m} \sqrt{\Pr[\textnormal{Bin}(m,\noisegeneric) = j]}\,h_i(x)\,h_j(y).
\end{equation}
To obtain \Cref{eqn:verif}, one simply substitutes \Cref{eqn:add-hermite} into the multivariate Hermite expansion $g(z) = \sum_\gamma \wh{g}(\gamma) h_\gamma(z)$.

Finally, \Cref{eqn:add-hermite} is a variant of the standard identity~\cite[Ex.~11.11]{ODbook} concerning \mbox{$H_m(x+y)$}.  To prove \Cref{eqn:add-hermite}, we recall (\cite[Eq.~(11.8)]{ODbook})  the generating function definition of $h_m$:
\[
    \exp(tz - \tfrac12 t^2) = \sum_{m=0}^\infty \tfrac{1}{\sqrt{m!}} h_m(z) t^m.
\]
Substitute $z = \sqrt{1-\noisegeneric}x + \sqrtnoise y$, and use $-\tfrac12 = -\tfrac{1-\noisegeneric}{2} -\tfrac{\noisegeneric}{2}$ on the left-hand side.  This yields
\[
    \exp(t\sqrt{1-\noisegeneric} x - \tfrac{1-\noisegeneric}{2}t^2) \exp(t\sqrt{\noisegeneric}y - \tfrac{\noisegeneric}{2}t^2)  = \sum_{m=0}^\infty \tfrac{1}{\sqrt{m!}} h_m(\sqrt{1-\noisegeneric} x + \sqrtnoise y) t^m.
\]
Now on the left we use the generating function twice again (with $\sqrt{1-\noisegeneric} t$ and $\sqrtnoise t$ replacing~$t$), yielding
\[
    \parens*{\sum_{i=0}^\infty \tfrac{1}{\sqrt{i!}}h_i(y)(\sqrtnoise t)^i} \parens*{\sum_{j=0}^\infty \tfrac{1}{\sqrt{j!}}h_j(x)(\sqrt{1-\noisegeneric}t)^j} = \sum_{m=0}^\infty \tfrac{1}{\sqrt{m!}} h_m(\sqrtnoise y + \sqrt{1-\noisegeneric}x) t^m.
\]
\Cref{eqn:add-hermite} now follows by considering the coefficient on $t^m$ on both sides.
\end{proofof}

\begin{proofof}{\Cref{thm:kanes-prg}}
As mentioned, this result essentially appears in Section~6 of~\cite{kane11focs}.

The following construction of $k$-wise independent tuples is well known~\cite{Jof74,ABI85}:
\begin{theorem}                                     \label{thm:k-wise-basic}
    For any $k, n, M \in \N$, there is an efficient deterministic algorithm that takes in a uniformly random bit-string of length $O(k \max\{M,\log n\})$ and outputs a sequence $(\bX_1, \dots, \bX_n) \in \{0, 1, \dots, 2^M-1\}^n$ that is $k$-wise independent with respect to the uniform distribution on \mbox{$\{0, 1, \dots, 2^M-1\}$.}
\end{theorem}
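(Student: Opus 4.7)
The plan is to use the classical polynomial evaluation construction of Joffe \cite{Jof74} and Alon--Babai--Itai \cite{ABI85}. First I would set $L \coloneqq \max\{M,\lceil \log_2 n\rceil\}$ and work in the finite field $\mathbb{F}_q$ with $q = 2^L$, which has at least $n$ distinct elements; fix an injection $i \mapsto \alpha_i \in \mathbb{F}_q$ for $i = 1,\dots,n$. The seed will consist of $k$ independent uniform field elements $\ba_0,\dots,\ba_{k-1} \in \mathbb{F}_q$, requiring exactly $kL = O(k\max\{M,\log n\})$ uniformly random bits, as needed.

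Next, define the random degree-$(k-1)$ polynomial $\bp(T) \coloneqq \sum_{j=0}^{k-1} \ba_j T^j$ and let $\bY_i \coloneqq \bp(\alpha_i) \in \mathbb{F}_q$. The key step is the standard observation that for any $k$ distinct indices $i_1,\dots,i_k$, the linear map $(\ba_0,\dots,\ba_{k-1}) \mapsto (\bp(\alpha_{i_1}),\dots,\bp(\alpha_{i_k}))$ is given by the $k \times k$ Vandermonde matrix on the distinct points $\alpha_{i_1},\dots,\alpha_{i_k}$, which is invertible over $\mathbb{F}_q$. Hence this map is a bijection on $\mathbb{F}_q^k$, and since its input is uniform, so is its output; thus $\bY_{i_1},\dots,\bY_{i_k}$ are mutually independent and uniform on $\mathbb{F}_q$, establishing $k$-wise independence of $(\bY_1,\dots,\bY_n)$.

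Finally, to land in $\{0,1,\dots,2^M-1\}$ rather than $\mathbb{F}_{2^L}$, I would fix an $\mathbb{F}_2$-linear identification of $\mathbb{F}_{2^L}$ with $\{0,1\}^L$ and define $\bX_i$ to be the first $M$ coordinates of $\bY_i$, interpreted as an element of $\{0,1,\dots,2^M-1\}$ in binary. Since $L \geq M$ this projection takes the uniform distribution on $\mathbb{F}_{2^L}$ to the uniform distribution on $\{0,1,\dots,2^M-1\}$, and $k$-wise independence is preserved under applying any fixed function coordinate-wise, so $(\bX_1,\dots,\bX_n)$ has the required distribution. Efficiency of the generator reduces to efficient arithmetic in $\mathbb{F}_{2^L}$, which is standard once an explicit irreducible polynomial of degree $L$ over $\mathbb{F}_2$ is in hand; such polynomials are known to be constructible deterministically in time $\poly(L)$.

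There is no real obstacle here, as the result is textbook; the only thing to be careful about is choosing $L$ large enough to simultaneously accommodate $n$ distinct evaluation points and $M$-bit outputs, which is exactly why the seed length involves $\max\{M,\log n\}$ rather than their sum.
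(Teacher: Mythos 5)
Your proof is correct and is precisely the standard Joffe / Alon--Babai--Itai construction that the paper cites (\cite{Jof74,ABI85}) without reproving: a random degree-$(k-1)$ polynomial over $\mathbb{F}_{2^L}$ with $L=\max\{M,\lceil\log_2 n\rceil\}$, evaluated at $n$ distinct field points, with the Vandermonde argument giving $k$-wise independence and a linear projection to $M$ bits giving the target alphabet. The only point worth being explicit about is the deterministic construction of an irreducible degree-$L$ polynomial over $\mathbb{F}_2$, which you correctly note is available in $\poly(L)$ time (e.g., Shoup's algorithm), so the generator is indeed explicit.
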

We could convert each $M$-bit $\bX_i$ to a near-Gaussian using \Cref{eqn:bits-to-gaussians}; the result would be a $k$-wise independent vector in with respect to an ``$O(\frac{1}{\sqrt{M}})$-near-Gaussian'' distribution.  However one can convert $M$-bit strings to near-Gaussians with exponentially better accuracy, via the Box--Muller transform; as Kane~\cite[Proof of Cor.~2]{kane11focs} shows, one can deterministically and efficiently convert $\bX_i$ to a random variable $\by_i$ whose distribution can be coupled to a true Gaussian $\bz_i \sim \normal(0,1)$] such that $\Pr[|\by_i - \bz_i| \leq \delta] \geq 1-\delta$ for $\delta = \Theta(2^{-M/2})$.

Given this, we may then use the following lemma explicitly proven by Kane~\cite{kane11focs}, which is a relatively straightforward consequence of the Carbery--Wright theorem (\Cref{thm:CW}):
\begin{lemma} \label{lem:close-coords-fool}
    (\cite[Lem.~21]{kane11focs}.) Let $\bz$ be a $k$-wise independent $n$-dimensional Gaussian random vector.  Suppose that $\by$ is an $n$-dimensional random vector such that $\by$ and $\bz$ may be coupled so that $\Pr[|\by_i - \bz_i| \leq \delta] \geq 1-\delta$.  Finally, suppose one can show that for $\len = 1/\noisegeneric$ and $\bz^{(1)}, \dots, \bz^{(\len)}$ i.i.d.\ copies of $\bz$, the sum $\sqrt{\noisegeneric} \bz^{(1)} + \cdots + \sqrt{\noisegeneric} \bz^{(\len)}$  is $\eps/2$-fooling for degree-$d$ Gaussian PTFs.  Then for $\by^{(1)}, \dots, \by^{(\len)}$ being i.i.d.\ copies of $\by$, the sum $\sqrt{\noisegeneric} \by^{(1)} + \cdots + \sqrt{\noisegeneric} \by^{(\len)}$ is $\eps'$-fooling for degree-$d$ Gaussian PTFs, where
    \[
        \eps' = \eps/2 + O(\len n \delta) + O(\degree \sqrt{\len n} \delta^{1/\degree}\log(1/\delta)).
    \]
\end{lemma}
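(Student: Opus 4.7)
The plan is to compare the $\by$-sum $Y = \sqrt{\lambda}\sum_i \by^{(i)}$ to the $\bz$-sum $Z = \sqrt{\lambda}\sum_i \bz^{(i)}$ via a coordinatewise coupling, then invoke Carbery--Wright anticoncentration (transferred from true Gaussians to $Z$ by the $(\eps/2)$-fooling hypothesis) to argue that $\sign(p(Y))$ and $\sign(p(Z))$ rarely disagree. First I would set up the coupling: by hypothesis each pair $(\by_j^{(i)},\bz_j^{(i)})$ can be coupled so that $|\by_j^{(i)}-\bz_j^{(i)}|\leq \delta$ except with probability at most~$\delta$. Taking independent couplings across the $Ln$ coordinate pairs and union-bounding, I obtain an event $G$ with $\Pr[G]\geq 1-Ln\delta$ on which $\|Y-Z\|_\infty \leq \sqrt{\lambda}L\delta=\sqrt{L}\delta$ and hence $\|E\|_2 \leq \sqrt{Ln}\delta$, where $E\coloneqq Y-Z$.

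Fixing an arbitrary degree-$d$ PTF $f=\sign(p)$ with $\|p\|_2=1$ (WLOG by scale-invariance of $\sign$), the triangle inequality and the hypothesis on $Z$ reduce the problem to bounding $\Pr[f(Y)\ne f(Z), G]$, since
\[
|\E[f(Y)]-\E[f(\bx)]|\;\leq\;|\E[f(Y)]-\E[f(Z)]|+\eps/2 \;\leq\; 2Ln\delta+\Pr[f(Y)\ne f(Z),G]+\eps/2.
\]
The key observation is that on $G$, $f(Y)\ne f(Z)$ forces $p(Y)$ and $p(Z)$ to have opposite signs, so for any threshold~$\tau>0$,
\[
\{f(Y)\ne f(Z)\}\cap G \;\subseteq\; \{|p(Z)|\leq \tau\}\;\cup\;\bigl(\{|p(Y)-p(Z)|\geq \tau\}\cap G\bigr).
\]
For the first event, the indicator $\mathbf{1}[|p(Z)|\leq \tau]$ is the difference of two degree-$d$ PTF indicators $\mathbf{1}[p\leq\tau]-\mathbf{1}[p\leq -\tau]$, each $(\eps/2)$-fooled by $Z$; combining with Carbery--Wright (\Cref{thm:CW}) applied to a true Gaussian yields $\Pr[|p(Z)|\leq \tau]\leq \Pr[|p(\bx)|\leq \tau]+\eps\leq O(d\tau^{1/d})+\eps$.

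For the second event, I would write $q(t)=p(Z+tE)$ as a degree-$d$ polynomial in $t$, so that $|p(Y)-p(Z)|=|q(1)-q(0)|\leq \max_{s\in[0,1]}|E\cdot \nabla p(Z+sE)|$. Each inner product is a polynomial of degree $\leq d-1$ in~$Z$ whose $L_2$-norm under the true-Gaussian distribution is at most $\|E\|_2\cdot \sqrt{d}\leq \sqrt{Lnd}\,\delta$ (using $\sum_\alpha |\alpha|\hat p(\alpha)^2\leq d$). The Gaussian tail bound (\Cref{thm:tail-bound}), together with a Markov-brothers/covering argument over $s\in[0,1]$, gives that except with probability~$\delta$, $|p(Y)-p(Z)|\lesssim \sqrt{Lnd}\,\delta\cdot\log^{O(1)}(1/\delta)$. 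Choosing $\tau$ so that $d\tau^{1/d}$ balances against this bound yields the claimed error $O(d\sqrt{Ln}\,\delta^{1/d}\log(1/\delta))$. The main obstacle is to transfer the tail bound on $\max_s |E\cdot\nabla p(Z+sE)|$ from the true-Gaussian case to $Z$, which is only $k$-wise independent: this is where one must appeal once more to the fact that $Z$ fools low-degree PTFs (so tail events $\{|r(Z)|>t\}$ for low-degree $r$ are approximately fooled), or alternatively observe that the $L_2$-norm (and hence the relevant moments) of $E\cdot\nabla p(Z+sE)$ match their Gaussian counterparts provided $k$ is taken large enough compared to~$d$.
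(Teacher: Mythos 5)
The paper gives no proof of this lemma: it is imported verbatim as Lemma~21 of \cite{kane11focs} (see the sentence in the proof of \Cref{thm:kanes-prg} that introduces it as ``explicitly proven by Kane''), so there is no in-paper argument to compare yours against. What can be assessed is the blind attempt on its own terms.

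Your overall structure is the expected one: couple $Y$ and $Z$ coordinatewise, split the bad event $\{f(Y)\neq f(Z)\}$ into ``$p(Z)$ is small'' and ``$p(Y)-p(Z)$ is large,'' transfer the anticoncentration term from $\normal(0,1)^n$ to $Z$ via the fooling hypothesis and Carbery--Wright, and control the perturbation term by the closeness of $Y$ and $Z$. The first half of this is fine. The problem is in the perturbation term, and it is exactly the obstacle you flag at the end --- but neither of your two proposed repairs closes it. The event you would need to control is (on $G$) something like $\sup_{s\in[0,1]}|E\cdot\nabla p(Z+sE)|>\tau$. This is \emph{not} a threshold event $\{|r(Z)|>\tau\}$ for any fixed low-degree polynomial $r$ of $Z$: (i) $E$ is a random vector coupled to $Z$, not a constant, and it appears nonlinearly inside $\nabla p(Z+sE)$; (ii) even after replacing $E$ by a deterministic $\ell_\infty$-ball of radius $\sqrt{L}\delta$ (valid on $G$), you have a supremum over an $n$-dimensional family of perturbations, so a net/union bound over $E'$ would cost $\exp(\Omega(n))$, and the resulting sup is still not a polynomial in $Z$. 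Consequently the fooling hypothesis (which only applies to degree-$d$ threshold events of $Z$) does not transfer the tail bound, and neither does bare moment-matching (which controls $\E[r(Z)^m]$ for polynomials $r$ but says nothing about a supremum of a correlated family). To make this step go through one needs some deterministic, approximation-theoretic reduction that dominates the sup over the ball by a single low-degree polynomial in $Z$, at the cost of factors that are then responsible for the $\log(1/\delta)$ and the $\sqrt{\len n}$ in the stated error; your ``Markov-brothers/covering over $s\in[0,1]$'' handles the one-dimensional sup but not the sup over directions $E'$.

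A smaller point worth flagging: the error term you say you would obtain by ``balancing'' does not match the lemma's. If you balance $d\tau^{1/d}$ against $\sqrt{\len n\,d}\,\delta\log(1/\delta)$, you get a contribution scaling like $d(\len n)^{1/(2d)}\delta^{1/d}(\log(1/\delta))^{1/d}$, whereas the stated bound has $\sqrt{\len n}$ (not raised to $1/d$) and a full factor of $\log(1/\delta)$. This suggests the actual argument does not balance; it likely plugs in a fixed $\tau\sim\sqrt{\len n}\,\delta$ from a deterministic bound on the perturbation rather than a probabilistic one, which is also consistent with the reduction to a deterministic sup described above. None of this is fatal for a sketch, but it does indicate the balancing step as written would not reproduce the claimed form even if the tail-transfer gap were filled.
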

\Cref{thm:kanes-prg} now follows by taking  $M = O(d \log (d \len n / \eps))$ and thereby achieving $\delta = (d \len n / \eps)^{O(d)}$.
\end{proofof}

\section{Appendix by Daniel Kane}    \label{app:kane}

\newcommand{\var}{\mathbf{Var}}
\newcommand{\cov}{\mathbf{Cov}}
\renewcommand{\R}{\mathbb{R}}
\newcommand{\pr}{\mathbf{Pr}}
\newcommand{\hypvar}{\HV}

\newcommand{\D}{\mathcal{D}}

\newcommand{\bmu}{\boldsymbol{\mu}}

\begin{lemma}\label{anticoncentration lemma}
Let $p$ be a degree-$d$ polynomial with $p(x)\geq 0$ for all $x$. Then for $\bx \sim \normal(0,1)^n$ and $\eps > 0$ we have that
$$
\pr[p(\bx) < \eps \E_\bx[p(\bx)]] = O(d \eps^{1/d}).
$$
\end{lemma}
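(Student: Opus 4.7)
The plan is to derive this directly from the Carbery--Wright anticoncentration bound (\Cref{thm:CW}) by exploiting nonnegativity of $p$ to relate $\E[p(\bx)]$ and $\|p\|_2$.

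First I would observe that since $p(x) \geq 0$ everywhere, we have
\[
    \E_\bx[p(\bx)] = \E_\bx[|p(\bx)|] = \|p\|_1 \leq \|p\|_2,
\]
by monotonicity of $L^q$ norms under the Gaussian measure. Consequently, the event $\{p(\bx) < \eps \E_\bx[p(\bx)]\}$ is contained in the event $\{|p(\bx)| < \eps \|p\|_2\}$, so it suffices to bound the latter.

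Next, I would apply \Cref{thm:CW} to the degree-$d$ polynomial $p$, choosing its parameter ``$\delta$'' so that $(\delta/(Cd))^d = \eps$, i.e.\ $\delta = Cd\,\eps^{1/d}$. Plugging in directly gives
\[
    \Pr_{\bx \sim \normal(0,1)^n}\bracks*{|p(\bx)| < \eps\,\|p\|_2} \leq Cd\,\eps^{1/d}.
\]
Combining this with the containment of events above yields
\[
    \Pr_{\bx \sim \normal(0,1)^n}\bracks*{p(\bx) < \eps\,\E_\bx[p(\bx)]} \leq Cd\,\eps^{1/d} = O(d\,\eps^{1/d}),
\]
as desired.

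There is really no obstacle here: the whole content is the single observation that nonnegativity makes $\E[p(\bx)]$ a lower bound on $\|p\|_2$, after which Carbery--Wright does all the work. If for some reason the version of Carbery--Wright being invoked in the appendix is stated with a slightly different constant convention (e.g.\ $\Pr[|p(\bx)| \leq \eps \|p\|_2] \leq Cd\,\eps^{1/d}$ directly), the proof goes through with the same two-line argument.
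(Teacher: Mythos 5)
Your proof is correct and is exactly the argument the paper has in mind: the paper's one-line proof ("This follows from Carbery--Wright and the observation that $\|p\|_2 \geq \E_\bx[p(\bx)]$") is just a compressed version of what you wrote out. The only small point worth noting is that the choice $\delta = Cd\,\eps^{1/d}$ requires $\delta \leq 1$ to invoke \Cref{thm:CW}, but when $Cd\,\eps^{1/d} > 1$ the claimed bound is vacuous, so the conclusion holds in all cases.
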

\begin{proof}
This follows from Carbery-Wright and the observation that $\|p\|_2 \geq \E_\bx[p(\bx)]$.
\end{proof}

\begin{theorem} [Local Hyperconcentration Theorem, improved version] \label{main theorem}
There exists a constant $c>0$ so that for any $1>\eps,\beta>0$ and $R\geq 1$ sufficiently small if $(g_{\bup})_{\bup \sim \Upsilon}$ is a nice distribution over degree-$d$ polynomials in $n$ variables then for
$$
\lambda \leq \frac{c\eps \beta}{R d^{9/2}}
$$
then for $\bx \sim \normal(0,1)^n$ with probability at least $1-\beta$ we have that
$$
\E_\bup[\hypvar_R((g_\bup)_{\lambda|\bx})] \leq \eps^2 \E_\bup [ \| (g_\bup)_{\lambda|\bx}\|_2^2 ].
$$
\end{theorem}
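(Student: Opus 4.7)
The plan is to relate both sides of the desired inequality to quantities that are explicit in the Hermite coefficients of the $g_\bup$'s, and then combine a Markov-type upper bound on the hypervariance with an anticoncentration lower bound on the zoomed $2$-norm squared, analogously to how the retention/attrition decomposition in \Cref{sec:retention,sec:attrition} was used, but much more sharply.

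\medskip

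First, I would compute the $\bx$-expectation of both sides in closed form. Using \Cref{prop:2} (which is the analogue of the random-restriction formula for Gaussian zooms), one obtains
\[
\E_{\bx}\bigl[\E_\bup[\hypvar_R((g_\bup)_{\lambda\mid\bx})]\bigr]
= \sum_{j\geq 1}\bigl[(1-\lambda+\lambda R^2)^j-(1-\lambda)^j\bigr]\,\E_\bup\!\bracks{\W^{=j}[g_\bup]},
\]
which, in the regime $\lambda R^2\ll 1/d$, is bounded by $O(d\lambda R^2)\cdot\E_\bup[\|g_\bup\|_2^2]$. Meanwhile $\E_\bx\bigl[\E_\bup[\|(g_\bup)_{\lambda\mid\bx}\|_2^2]\bigr]=\E_\bup[\|g_\bup\|_2^2]$. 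Applying Markov's inequality to the nonnegative random variable $\E_\bup[\hypvar_R((g_\bup)_{\lambda\mid\bx})]$ then yields, with probability at least $1-\beta/2$,
\[
\E_\bup[\hypvar_R((g_\bup)_{\lambda\mid\bx})]\;\leq\;\frac{O(d\lambda R^2)}{\beta}\cdot\E_\bup[\|g_\bup\|_2^2].
\]

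\medskip

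Second, for the lower bound on $\E_\bup[\|(g_\bup)_{\lambda\mid\bx}\|_2^2]$, I would use Jensen's inequality with respect to the inner Gaussian (the one defining the zoom's $2$-norm squared) to reduce to a nonnegative polynomial in $\bx$:
\[
\E_\bup\bigl[\|(g_\bup)_{\lambda\mid\bx}\|_2^2\bigr]\;\geq\;\E_\bup\bigl[(\U_{\sqrt{1-\lambda}}g_\bup)(\bx)^2\bigr].
\]
Call this last quantity $P(\bx)$; it is a nonnegative polynomial of degree at most $2d$ whose $\bx$-expectation is $\E_\bup[\|\U_{\sqrt{1-\lambda}}g_\bup\|_2^2]\geq(1-\lambda)^d\E_\bup[\|g_\bup\|_2^2]$, which for the relevant range of $\lambda$ is at least $\tfrac12\E_\bup[\|g_\bup\|_2^2]$. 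Then \Cref{anticoncentration lemma} (Carbery--Wright for nonnegative polynomials) guarantees that with probability at least $1-\beta/2$, $P(\bx)\geq \tau\cdot\E_\bup[\|g_\bup\|_2^2]$ for a $\tau$ depending polynomially on $\beta$ and on $d$ in the exponent.

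\medskip

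Third, I would combine the two bounds via a union bound and solve for $\lambda$. Taking the ratio of the upper bound from Markov and the lower bound from anticoncentration, requiring the result to be at most $\eps^2$ gives a constraint of the form $\lambda\lesssim \eps^2 \beta\,\tau/(Rd)$, and the dependence on $d$ reduces to a polynomial factor $d^{9/2}$ only if the constant $\tau$ can be taken polynomial rather than exponential in $d$.

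\medskip

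The main obstacle — and the crux of Kane's improvement over the weaker \Cref{thm:HZL} in the body — is exactly this: the naive use of \Cref{anticoncentration lemma} on $P(\bx)$ yields $\tau\sim(\beta/d)^{2d}$, which destroys the polynomial-in-$d$ dependence. To obtain the claimed $\lambda\leq c\eps\beta/(Rd^{9/2})$ one must avoid paying the full Carbery--Wright exponent $1/(2d)$. The plan here is to apply \Cref{anticoncentration lemma} not to $P$ itself but to a carefully chosen nonnegative polynomial of much lower \emph{effective} degree (for instance, a small-power average over $\bup$ of $(\U_{\sqrt{1-\lambda}}g_\bup)(\bx)$, exploiting that under a nice distribution the $\bup$-average tames the $\bx$-degree), and to let the polynomial loss $d^{9/2}$ arise from balancing the $1/\deg$-type exponent in Carbery--Wright against the Markov slack of $1/\beta$ from step one. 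Getting this balance to work — and making essential use of the nice-distribution structure rather than bounding a single polynomial $g_\bup$ — is where the hard technical work lies.
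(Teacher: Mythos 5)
There is a genuine gap here, and it's worth being explicit about why your plan cannot reach the stated bound.

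Your first two steps — the Markov bound on $\E_\bx\bigl[\E_\bup[\hypvar_R((g_\bup)_{\lambda\mid\bx})]\bigr]=O(d\lambda R^2)\E_\bup[\|g_\bup\|_2^2]$, followed by a Carbery--Wright lower bound on the zoomed $2$-norm — are exactly the ``attrition plus retention'' decomposition underlying \Cref{thm:HZL} in the body of the paper. That framework is provably stuck at a quasipolynomial loss: the retention step applies Carbery--Wright to a nonnegative polynomial of degree $2d$, which forces $\tau \sim (\beta/d)^{2d}$, and the only known way to tame this is the ``halving'' trick that iterates $\lceil\log d\rceil$ one-stage zooms, yielding $\lambda \leq (\beta/d)^{O(\log d)}$. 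Your proposed remedy — applying Carbery--Wright to a ``lower effective degree'' polynomial obtained by ``small-power averaging over $\bup$'' — does not do what you hope: averaging over $\bup$ in a nice distribution cannot reduce the $\bx$-degree of $\E_\bup[(\U_{\sqrt{1-\lambda}}g_\bup)(\bx)^2]$, which remains $2d$. There is also no balancing trick against the $1/\beta$ Markov slack that rescues a polynomial dependence, because the obstruction is structural (a single degree-$d$ polynomial such as $g(x)=x_1^d$ genuinely makes the zoomed $2$-norm shrink by $(\beta/d)^{\Theta(d)}$ on a $\beta$-fraction of $\bx$'s).

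Kane's actual proof is not a retention/attrition argument at all; it is a tension argument built on \emph{random derivative sequences}. Define $\D^k_{\bx,\by_i}(g_\upsilon)=\E_\bup\bigl[|D_{\by_k}\cdots D_{\by_1}g_\bup(\bx)|^2\bigr]$. The heart of the proof is a pair of competing bounds. First, \Cref{derivative sequence corollary} (a consequence of \Cref{strong anticoncentration lemma}, which generalizes \cite[Lem.~9]{kane11focs} to nice distributions) says that with probability $1-\beta$ the sequence never jumps by more than $O(d^6/\beta^2)$ between consecutive indices. Second, \Cref{sequence increase proposition} says that if the hypervariance of the family is large relative to the squared $2$-norm, then with probability $\geq 1/2$ some consecutive pair \emph{does} jump by $\Omega(\eps^2/(d^3R^2))$. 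The decisive observation is then that zooming rescales the sequence: $\D^k$ of the zoomed family equals $\lambda^{2k}$ times $\D^k$ of the original family evaluated at a fresh Gaussian. So a modest jump in the zoomed family's sequence becomes a $\lambda^{-2}$-amplified jump in the original family's sequence, which contradicts the typical-upper-bound on jumps as soon as $\lambda^{-2}\eps^2/(d^3R^2)\gg d^6/\beta^2$, i.e., $\lambda \lesssim \eps\beta/(Rd^{9/2})$. This mechanism never applies Carbery--Wright to a degree-$2d$ polynomial in $\bx$ — it only ever needs the one-step anticoncentration of \Cref{anticoncentration lemma} and the derivative-comparison of \Cref{strong anticoncentration lemma}, each costing only polynomial factors — which is what makes the polynomial dependence on $d$ possible. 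If you want to salvage your plan, you would need to replace steps two and three entirely with something equivalent to this derivative-sequence contradiction; the Markov/Carbery--Wright route has an essential quasipolynomial barrier.
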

\begin{proof}
The proof of this theorem will depend on basic facts about the sequence of derivatives of $(g_\upsilon)_{\lambda|x}$. Firstly, we establish some notation, we let $D_y f(x)$ denote the directional derivative of $f$ at $x$ in the $y$ direction. We begin with the following Lemma relating the size of functions and their derivatives:

\begin{lemma}\label{strong anticoncentration lemma}
Let  $(g_{\bup})_{\bup \sim \Upsilon}$ be a nice distribution over degree-$d$ polynomials. Let $\bx$ and $\by$ be independent $\normal(0,1)^n$ Gaussian random variables. Then for $\eps>0$ we have that
$$
\pr_{\bx,\by}[\E_\bup [|g_{\bup}(\bx)|^2] \leq \eps^2 \E_{\bup} [|D_{\by} (g_{\bup}(\bx))|^2]] = O(d^2\eps).
$$
\end{lemma}
\begin{proof}
We begin with the case where $g_\upsilon$ is actually a constant family (i.e. is just a single function). This result then follows immediately from Lemma 9 of [D. Kane ``The Correct Exponent for the Gotsman-Linial Conjecture''].

From here we generalize to the case where $g_\upsilon$ is a linear polynomial in $\upsilon$. By the previous case, we have that
$$
\pr_{\bx,\by,\bup}[|g_{\bup}(\bx)|^2 \leq \eps^2 |D_{\by} (g_{\bup}(\bx))|^2] = O(d^2\eps).
$$
On the other hand since $g_\upsilon(x)$ and $D_y (g_\upsilon(x))$ are linear functions of $\upsilon$. Therefore, for any $x$ and $y$, with at least $50\%$ probability over the choice of $\bup$ we have that $\E_{\bup}[|g_{\bup}(x)|^2 ] \ll |g_{\bup}(x)|^2$ and $|D_y (g_{\bup}(x))|^2 \ll \E_{\bup}[|D_y (g_{\bup}(x))|^2]$. Therefore, whenever
$$
\E_{\bup} [|g_{\bup}(x)|^2] \leq \eps^2 \E_{\bup} [|D_y (g_{\bup}(x))|^2]
$$
there is at least a $50\%$ probability over the choice of ${\bup}$ that
$$
|g_{\bup}(x)|^2 \leq \eps^2 C |D_y (g_{\bup}(x))|^2
$$
for some positive constant $C$. However, the latter happens with probability $O(d^2 \eps)$ and is at least half of

$$
\pr_{\bx,\by}[\E_{\bup} [|g_{\bup}(\bx)|^2] \leq \eps^2 \E_{\bup} [|D_{\by} (g_{\bup}(\bx))|^2]].
$$
Therefore, this latter probability is $O(d\eps^2).$

Finally, we can handle the generic case. Let $h_a(\upsilon)$ be an orthonormal basis for the polynomials in $\upsilon$. We can write $g_\upsilon(x)$ as $\sum_a h_a(\upsilon)p_a(x)$ for some polynomials $p_a$. Define
$$
h_{\bmu}(x):= \sum_a {\bmu}_a p_a(x)
$$
where $\bmu$ is a Gaussian random variable with as many components as there are terms in the above decomposition of $g_\upsilon$. It is easy to see that for any $x$ and $y$ that
$$
\E_{\bup}[|g_{\bup}(x)|^2] = \sum_a p_a^2(x) = \E_{\bmu}[|h_{\bmu}(x)|^2]
$$
and
$$
\E_{\bup}[|D_y g_{\bup}(x)|^2] = \sum_a (D_y p_a(x))^2 = \E_\mu[|D_y h_\mu(x)|^2].
$$
Since our Lemma holds for $h$, it must therefore also hold for $g$. This completes our proof.
\end{proof}
Our theorem will rest upon the following notion of a random derivative sequence:
\begin{definition}
Given a nice distribution  $(g_{\bup})_{\bup \sim \Upsilon}$ of degree-$d$ polynomials, a \emph{random derivative sequence} for $g$ is a sequence of the form
$$
\E_{\bup}[|g_{\bup}(\bx)|^2], \E_{\bup}[|D_{\by_1}g_{\bup}(\bx)|^2], \E_{\bup}[|D_{\by_2}D_{\by_1}g_{\bup}(\bx)|^2], \ldots \E_{\bup}[|D_{\by_d}D_{\by_{d-1}}\cdots D_{\by_1}g_{\bup}(\bx)|^2],
$$
where $\bx,\by_1,\by_2,\ldots,\by_d$ are independent Gaussian random variables.

We will often denote the $k^{th}$ term by
$$
\D^k_{\bx,\by_i}(g_\upsilon) :=\E_{\bup}[|D_{\by_k}D_{\by_{k-1}}\cdots D_{\by_1}g_{\bup}(\bx)|^2].
$$
\end{definition}

\begin{corollary}\label{derivative sequence corollary}
Let $(g_{\bup})_{\bup \sim \Upsilon}$ be a nice distribution over degree-$d$ polynomials. Then with probability at least $1-\eps$ over the choice of $\bx,\by_i$ we have that:
$$
\D^{k+1}_{\bx,\by_i}(g_\upsilon) = O(d^6/\eps^2)\D^{k}_{\bx,\by_i}(g_\upsilon)
$$
for all $0\leq k \leq d$.
\end{corollary}
\begin{proof}
It follows immediately from \Cref{strong anticoncentration lemma} that for each $k$
$$
\D^{k+1}_{\bx,\by_i}(g_\upsilon) = O(d^6/\eps^2)\D^{k}_{\bx,\by_i}(g_\upsilon)
$$
with probability $1-\eps/d$. The full result follows from a union bound over $k$.
\end{proof}

This says that a typical random derivative sequence does not increase too rapidly. However, we show that there is a kind of converse for non-hyperconcentrated families:
\begin{proposition}\label{sequence increase proposition}
Let $(g_{\bup})_{\bup \sim \Upsilon}$ be a nice distribution of degree-$d$ polynomials so that for some $R,\eps>0$ we have that
$$
\E_{\bup}[\hypvar_R((g_{\bup}))] \geq \eps^2 \E_{\bup} [ \| (g_{\bup})\|_2^2 ].
$$
Then with at least $50\%$ probability over a choice of random Gaussians $\bx,\by_i$ we have that for some $0\leq k \leq d$ that
$$
\D^{k+1}_{\bx,\by_i}(g_\upsilon) > \Omega(\eps^2/(d^3 R^2))\D^{k}_{\bx,\by_i}(g_\upsilon).
$$
\end{proposition}

In order to prove this, we will need to talk about the Hermite parts of $g$. Recall that for $g$ a polynomial, the notation $g^{= k}$ denotes the degree-$k$ Hermite part of $g$.


We will make use of the following facts:
\begin{fact}
For any polynomial $g$ and $R>0$,
$$
\hypvar_R(g) = \sum_k k R^{2k} \|g^{= k}\|_2^2.
$$
\end{fact}
\begin{fact}
For any polynomial $g$ and vector $v$:
$$
(D_v g)^{=k} = D_v(g^{=k+1}).
$$
\end{fact}
\begin{fact}
For any $g$ with $g=g^{= k}$ and $\bx$ a random Gaussian
$$
\E_{\bx} [\| D_{\bx} g \|_2^2] = k\|g\|_2^2.
$$
\end{fact}

We will also need to know that with reasonable probability that some element of a random derivative sequence is not too  small.
\begin{lemma}\label{decrease lemma}
Let $(g_{\bup})_{\bup \sim \Upsilon}$ be a nice distribution of degree-$d$ polynomials and let $m\geq d$. Then with probability at least $1-d/m$ over a choice of random Gaussians $\bx,\by_i$ there exists a $0\leq k\leq d$ so that
$$
\D^{k}_{\bx,\by_i}(g_\upsilon) \geq \Omega(1/m)^{3k} \E_{\bup}[ \|g_{\bup}\|_2^2]/2.
$$
\end{lemma}
\begin{proof}
We proceed by induction on $d$. If $d=0$, then $\D^{0}_{\bx,\by_i}(g_\upsilon) = \E_{\bup}[ \|g_{\bup}\|_2^2]$ and we are done.

For the inductive step, we break into two cases. On the one hand, if
$$
\E_{\bup}[ \|g_{\bup}\|_2^2] \geq 100m \E_{\bup}[\var_{\bx}(g_{\bup}(\bx))],
$$
then we have that
\begin{align*}
& \E_{\bx}[\E_{{\bup}}[\|g_{\bup}\|_2^2 \mathbf{1}\{|g_{\bup}(\bx)| < (2/3)\|g_{\bup}\|_2\}]]\\
= &\E_{\bup}[\E_{\bx}[\|g_{\bup}\|_2^2 \mathbf{1} \{|g_{\bup}(\bx)| < (2/3)\|g_{\bup}\|_2\}]]\\
\leq & \E_{\bup}\left[9(\|g_{\bup}\|_2^2)\left(\frac{\var_{\bx}(g_{\bup}(\bx))}{\|g_{\bup}\|_2^2} \right)\right]\\
= & 9\E_{\bup}[ \|g_{\bup}\|_2^2]/2\\
\leq & (1/10m) \E_{\bup}[ \|g_{\bup}\|_2^2].
\end{align*}
However, if
$$
\D^{0}_{\bx,\by_i}(g_{\upsilon}) < \E_{\bup}[ \|g_{\bup}\|_2^2]/2,
$$
it must be the case that
$$
\E_{{\bup}}[\|g_{\bup}\|_2^2 \mathbf{1}\{|g_{\bup}(x)| < (2/3)\|g_{\bup}\|_2\}] \geq (1/6) \E_{{\bup}}[\|g_{\bup}\|_2^2],
$$
and by the Markov Inequality, this happens with probability at most $1/m$.

On the other hand, if
$$
\E_{\bup}[ \|g_{\bup}\|_2^2] \leq 100m \E_{\bup}[\var_{\bx}(g_{\bup}(\bx))],
$$
then
$$
\E_{\by_1}[\E_{\bup}[\|D_{\by_1} g_{\bup}\|_2^2]] = \E_{\bup}[\var_{\bx}(g_{\bup}(\bx))].
$$
Since $\E_{\bup}[\|D_{y_1} g_{\bup}\|_2^2]$ is a non-negative quadratic function of $y_1$, \Cref{anticoncentration lemma} implies that with probability at least $1-1/m$ we have that
$$
\E_{\bup}[\|D_{\by_1} g_{\bup}\|_2^2] \geq (1/m^2) \E_{\bup}[\var_{\bx}(g_{\bup}(\bx))] \geq \Omega(1/m^3)\E_{\bup}[ \|g_{\bup}\|_2^2].
$$
For such an outcome of $\by_1$, we can apply our inductive hypothesis to $(D_{\by_1}g_\upsilon)$.
\end{proof}

We are now prepared to prove \Cref{sequence increase proposition}.
\begin{proof}
Note that
$$
\E_{\bup}[\hypvar_R(g_{\bup})] = \sum_{k=1}^d k R^{2k} \E_{\bup}[\|g_{\bup}^{=k}\|_2^2].
$$
Therefore, under our hypothesis, there must be a $k$ so that
$$
\E_{\bup}[\|g_{\bup}^{=k}\|_2^2] \geq (\eps^2/(4R^2))^k\E_{\bup}[\|g_{\bup}\|_2^2].
$$
Notice that
$$
\E_{{\bup}}[\| D_{y_k}D_{y_{k-1}}\cdots D_{y_1}g_{\bup}^{=0}\|_2^2]
$$
is a non-negative degree $2k$ polynomial in $y_1,\ldots,y_k$ with average value at least $(\eps^2/(4R^2))^k\E_{\bup}[\|g_{\bup}\|_2^2]$. Therefore, \Cref{anticoncentration lemma} implies that with probability at least $5/6$ we have that
$$
\| D_{\by_k}D_{\by_{k-1}}\cdots D_{\by_1}g_{\bup}^{=0}\|_2^2 \geq \Omega(\eps^2/(d^2 R^2))^k \E_{\bup}[\|g_{\bup}\|_2^2].
$$
By \Cref{decrease lemma} with probability at least $5/6$ there is a $k' > k$ so that
$$
\D^{k'}_{\bx,\by_i}(g_\upsilon) \geq \Omega(1/d)^{3(k'-k)}\| D_{\by_k}D_{\by_{k-1}}\cdots D_{\by_1}g_\upsilon^{=0}\|_2^2/2 \geq \Omega(\eps^2/(d^3 R^2))^{k'}\E_{\bup}[\|g_{\bup}\|_2^2].
$$
Finally, with probability at least $5/6$ we have that
$$
\D^{0}_{\bx,\by_i}(g_\upsilon) = O(1)\E_{\bup}[\|g_{\bup}\|_2^2].
$$
Thus, if all three of these events hold (which happens with probability at least $1/2$), there will be some $d\geq k> 1$ so that
$$
\D^{k}_{\bx,\by_i}(g_\upsilon) \geq \Omega(\eps^2/(d^3 R^2))^{k}\D^{0}_{\bx,\by_i}(g_\upsilon).
$$
Therefore, there must also be a $k$ so that
$$
\D^{k+1}_{\bx,\by_i}(g_\upsilon)\geq \Omega(\eps^2/(d^3 R^2)) \D^{k}_{\bx,\by_i}(g_\upsilon). \qedhere
$$
\end{proof}

Our Theorem will now follow from the tension between \Cref{derivative sequence corollary} and \Cref{sequence increase proposition} along with the observation that
$$
\D^k_{\bx,\by_i}((g_\upsilon)_{\lambda|\bx}) = \lambda^{2k} \D^k_{\bz,\by_i}(g_\upsilon)
$$
where $\bz=\sqrt{1-\lambda}\bx'+\lambda \bx$. Note also that $\bz$ is a standard Gaussian if $\bx'$ and $\bx$ are. In particular,  \Cref{derivative sequence corollary} tells us that with probability $1-\beta/2$ that
\begin{equation}\label{derivative upper bound equation}
\D^{k+1}_{\bz,\by_i}(g_\upsilon) = O(d^6/\beta^2)\D^k_{\bz,\by_i}(g_\upsilon)
\end{equation}
for all $0\leq k \leq d$. On the other hand, if
$$
\E_{\bup}[\hypvar_R((g_{\bup})_{\lambda|\bx})] > \eps^2 \E_{\bup} [ \| (g_{\bup})_{\lambda|x}\|_2^2 ]
$$
then \Cref{sequence increase proposition} implies that with at least $50\%$ probability that there is a $0\leq k\leq d$ so that
$$
\D^{k+1}_{\bx,\by_i}((g_\upsilon)_{\lambda|x}) > \Omega(\eps^2/(d^3 R^2))\D^{k}_{\bx,\by_i}((g_\upsilon)_{\lambda|\bx}).
$$
But this is equivalent to saying that
$$
\D^{k+1}_{\bz,\by_i}(g_\upsilon) > \Omega(\lambda^{-2}\eps^2/(d^3 R^2))\D^k_{\bz,\by_i}(g_\upsilon).
$$
However, given our setting of $\lambda$ this would contradict Equation \eqref{derivative upper bound equation}. Therefore, the probability of Equation \eqref{derivative upper bound equation} being violated is at most $\beta/2$, but is at least half the probability that
$$
\E_\bup[\hypvar_R((g_\bup)_{\lambda|\bx})] > \eps \E_\bup [ \| (g_\bup)_{\lambda|\bx}\|_2^2 ].
$$
Hence we conclude that the latter probability is at most $\beta.$
\end{proof}
 
\end{document}